\documentclass[aps, prx, reprint, amsmath,amssymb,showpacs,floatfix,longbibliography, twocolumn, superscriptaddress,footinbib]{revtex4-2}
\PassOptionsToPackage{dvipsnames,table,xcdraw}{xcolor}
\usepackage{cmap}
\usepackage{simplewick}
\usepackage{algorithmic}
\usepackage{multirow} 
\usepackage{mathrsfs}
\usepackage{oplotsymbl}

\usepackage{graphicx}
\graphicspath{ {./figures/} }
\usepackage{caption}
\usepackage{dcolumn}
\usepackage{bm}
\usepackage{amssymb}
\usepackage{tabularx}
\usepackage{mathtools,halloweenmath}
\usepackage{physics}
\usepackage{graphicx}
\usepackage{amsmath}
\usepackage[version=4]{mhchem}
\usepackage{siunitx}
\usepackage{tabularx}

\usepackage{braket}
\usepackage{wick}
\usepackage{comment}
\usepackage{hyperref} 
\hypersetup{
    colorlinks,
    linkcolor={red!50!black},
    citecolor={blue!50!black},
    urlcolor={blue!80!black}
}
\usepackage{multirow}
\usepackage{lipsum}
\usepackage{array}
\usepackage{microtype}
\usepackage{appendix}
\usepackage{amsmath}
\usepackage{comment}
\usepackage{amsthm}
\usepackage{tikz}
\usepackage{tikz-cd}

\usepackage{enumitem}
\usepackage{eczoo}
\usepackage{footmisc}
\usetikzlibrary{quantikz}

\usetikzlibrary{external}

\usepackage{amscd} 
\usepackage{empheq}
\usepackage{amsthm}
\usepackage{bbm} 
\usepackage{bold-extra}
\usepackage{booktabs} 
\usepackage{circledsteps} 
\usepackage{scalerel,stackengine,relsize}
\usepackage[justification=Justified,singlelinecheck=false]{caption}

\newcommand{\Z}{{\mathbb{Z}}}
\newcommand{\R}{{\mathbb{R}}}
\newcommand{\C}{{\mathbb{C}}}
\newcommand{\T}{{\mathbb{T}}}
\newcommand{\N}{\mathbb{N}}

\stackMath
\providecommand\reallywidehat[1]{%
  \savestack{\tmpbox}{\stretchto{\scaleto{%
    \scalerel*[\widthof{\ensuremath{#1}}]{\kern-.6pt\bigwedge\kern-.6pt}%
      {\rule[-\textheight/2]{1ex}{\textheight}}}{\textheight}}{0.5ex}}%
  \stackon[1pt]{#1}{\tmpbox}%
}

\providecommand{\U}{\mathrm{U}}

\providecommand{\bn}{\noindent\begin{nummer}\rmfamily}
\providecommand{\en}{\end{nummer}}

\usepackage[color=yellow]{todonotes} 

\usepackage[ruled,lined]{algorithm2e}

\usepackage{mathtools}

\global\long\def\zl{\overline{Z}}%
\global\long\def\xl{\overline{X}}%
\definecolor{greenvva}{RGB}{0,128,0}%
\newcommand{\cv}{\textsc{cv}}
\newcommand{\dia}[1]{{\boldsymbol{#1}}}
\newcommand{\rowlab}[1]{\makebox[7.2em][l]{\small #1}}
\newcommand{\rowpairbrace}[1]{\smash{\lower.5\normalbaselineskip\hbox{%
 $\left.\vphantom{\begin{array}{c}1\\1\end{array}}\right\}\ \text{#1}$}}}
\newcommand{\rowquadbrace}[1]{\smash{\lower1.5\normalbaselineskip\hbox{%
 $\left.\vphantom{\begin{array}{c}1\\1\\1\\1\end{array}}\right\}\
 \text{#1}$}}}

\newtheorem{theorem}{Theorem}
\newtheorem{lemma}[theorem]{Lemma}
\newtheorem{proposition}[theorem]{Proposition}

\newtheorem{corollary}[theorem]{Corollary}

\newtheorem{definition}{Definition}

\theoremstyle{definition}\theoremstyle{plain}

\newtheorem{remark}{Remark}

\DeclareUnicodeCharacter{202F}{\,}

\makeatletter
\@ifpackagelater{array}{2024/06/01}{\let\switch@array\relax}{}
\makeatother

\begin{document}

\title{Stabilizer codes over general phase spaces}
\author{Sayan Chakraborty}
\affiliation{Institute for Advancing Intelligence, TCG CREST, Sector V, Salt Lake, Kolkata 700091, India}
\author{Victor V. Albert}
\affiliation{Joint Center for Quantum Information and Computer Science,
NIST/University of Maryland, College Park, Maryland 20742, USA}

\begin{abstract}
We develop a theory of stabilizer codes whose stabilizer groups consist of commuting qudit Paulis, oscillator displacements, and/or planar-rotor displacements.
We consider codes whose stabilizer groups form generalized lattices in quantum phase space, a condition that guarantees a finite logical dimension. 
We construct oscillator-rotor, rotor-qudit, and oscillator-rotor-qubit codes that cannot be decomposed into separate subsystems by generalized Clifford transformations.
We also introduce an oscillator-qubit code arising from a partially lifted Golay code, where Construction A is applied to half of the Golay code coordinates.
We
derive analytical forms of logical operators from the symplectic dual lattice,
determine Clifford-Gaussian gates from lattice symmetries, 
show that the code dimension equals the covolume of the stabilizer lattice, and 
organize the syndrome
subspaces into a vector bundle.
Our technique builds on earlier non-commutative geometric results by Rieffel, but it can often be interpreted simply as 
concatenating a given stabilizer code with a Gottesman-Kitaev-Preskill (GKP) code, applying conventional lattice-theoretic results, and unconcatenating. 
We formulate distances, optimal decoding, and quantum weight enumerators, recovering the Pauli, qudit, and GKP versions together with known MacWilliams identities.
We construct finite-energy codewords whose encoding is an
isometry up to an error exponentially small in the inverse damping
parameter.
\end{abstract}

\maketitle

\onecolumngrid
\raggedbottom

\begingroup
\makeatletter
\setcounter{tocdepth}{2}
\setlength{\parindent}{0pt}
\newlength{\glcatocnumwd}
\settowidth{\glcatocnumwd}{\textbf{VIII.}}
\newlength{\glcatocindent}
\setlength{\glcatocindent}{\dimexpr\glcatocnumwd+0.6em\relax}
\global\let\glcatocsubs\@empty
\newif\ifglcatocmain
\glcatocmaintrue
\newif\ifglcatocsection
\glcatocsectionfalse
\newif\ifglcatocexamples
\glcatocexamplesfalse
\newif\ifglcatocchildren
\glcatocchildrenfalse
\def\glcatocflush{%
  \ifx\glcatocsubs\@empty
  \else
    \nobreak\noindent\hspace*{\glcatocindent}%
    \begin{minipage}{\dimexpr\linewidth-\glcatocindent\relax}%
      \small\let\numberline\@gobble\glcatocsubs
    \end{minipage}\par\vspace{2.5ex}%
    \global\let\glcatocsubs\@empty
  \fi}
\def\glcatocsectionentry\numberline#1#2\@nil#3#4{%
  \glcatocflush
  \glcatocsectionfalse
  \glcatocexamplesfalse
  \ifglcatocmain
    \if\relax\detokenize{#1}\relax
    \else
      \glcatocsectiontrue
      \def\glcatoccurrenttitle{#2}%
      \def\glcatocexamplestitle{Examples}%
      \ifx\glcatoccurrenttitle\glcatocexamplestitle
        \glcatocexamplestrue
      \fi
      \noindent\makebox[\glcatocnumwd][r]{\textbf{#1.}}\hspace{0.6em}%
      \textbf{\hyperlink{#4}{#2}}\dotfill\textbf{\hyperlink{#4}{#3}}%
      \par\nobreak\vspace{1.1ex}%
    \fi
  \fi}
\def\glcatocsubsectionentry#1#2#3{%
  \ifglcatocmain\ifglcatocsection
    \ifglcatocexamples
      \glcatocchildrenfalse
      \ifx\glcatocsubs\@empty\else
        \g@addto@macro\glcatocsubs{\par\vspace{0.65ex}}%
      \fi
      \g@addto@macro\glcatocsubs{\noindent\small\upshape}%
    \else
      \ifx\glcatocsubs\@empty\else
        \g@addto@macro\glcatocsubs{\,\,$\bullet$\,\,}%
      \fi
    \fi
    \g@addto@macro\glcatocsubs{\hyperlink{#3}{#1}}%
  \fi\fi}
\def\glcatocsubsubsectionentry#1#2#3{%
  \ifglcatocmain\ifglcatocexamples
    \ifglcatocchildren
      \g@addto@macro\glcatocsubs{\,\,$\bullet$\,\,}%
    \else
      \glcatocchildrentrue
      \g@addto@macro\glcatocsubs{%
        \par\vspace{0.2ex}\noindent\footnotesize\itshape
        \hangindent=1.8em\hangafter=1\hspace*{1.8em}}%
    \fi
    \g@addto@macro\glcatocsubs{\hyperlink{#3}{#1}}%
  \fi\fi}
\def\contentsline#1#2#3#4{%
  \csname glcatocentry@#1\endcsname{#2}{#3}{#4}}
\def\glcatocentry@section#1#2#3{%
  \glcatocsectionentry#1\@nil{#2}{#3}}
\let\glcatocentry@subsection\glcatocsubsectionentry
\let\glcatocentry@subsubsection\glcatocsubsubsectionentry
\let\glcatocentry@paragraph\@gobblethree
\let\glcatocentry@subparagraph\@gobblethree
\let\glcatocentry@title\@gobblethree
\let\glcatocentry@abstract\@gobblethree
\let\tocdepth@munge\relax
\let\tocdepth@restore\relax
\def\appendix{\glcatocflush\glcatocmainfalse}
\noindent\rule{\linewidth}{0.8pt}\par\vspace{0.4ex}
{\large\textbf{Contents}}\par\vspace{1.0ex}
\@starttoc{toc}%
\glcatocflush
\noindent\rule{\linewidth}{0.8pt}\par
\makeatother
\endgroup
\newpage

\begin{figure}[t]
\centering
\usetikzlibrary{arrows.meta,calc}
\definecolor{hyStab}{RGB}{24,72,140}   
\definecolor{hyNew}{RGB}{196,78,28}    
\begin{tikzpicture}[
  line cap=round, line join=round,
  surf/.style={fill=black!11},
  rim/.style={draw=black!55, line width=0.4pt},
  mesh/.style={draw=black!38, line width=0.3pt},
  stab/.style={circle, fill=hyStab, draw=hyStab, line width=0.4pt,
    inner sep=0pt, minimum size=3.8pt},
  logi/.style={circle, fill=white, draw=hyStab, line width=0.4pt,
    inner sep=0pt, minimum size=3pt},
  ax/.style={-{Stealth[length=5pt,width=4pt]}, line width=0.9pt},
  axlab/.style={font=\small, inner sep=1.5pt},
  known/.style={draw=black!50, line width=1.6pt},
  new/.style={draw=hyNew, line width=1.6pt},
  lab/.style={font=\small, align=center, inner sep=2pt},
  declare function={
    torX(\u,\v) = (\torR + \torr*cos(\v))*cos(\u);
    torY(\u,\v) = (\torR + \torr*cos(\v))*sin(\u)*sin(\torE)
                  + \torr*sin(\v)*cos(\torE);
    torF(\u,\v) = sin(\v)*sin(\torE) - cos(\v)*sin(\u)*cos(\torE);
    torS(\u)    = atan2(sin(\u)*cos(\torE), sin(\torE));
  },
]
\def\torR{1}\def\torr{0.45}\def\torE{50}

\coordinate (qud) at (-5.9,0);
\coordinate (rot) at (0,4.2);
\coordinate (osc) at (5.9,0);

\fill[hyNew!9] (qud) -- (rot) -- (osc) -- cycle;
\draw[known] (qud) -- (osc);
\draw[new] (qud) -- (rot) -- (osc);

\foreach \tn in {qud,rot,osc}{\fill[white] (\tn) circle[radius=2.05];}

\begin{scope}[shift={(qud)}, scale=1.12]
  \fill[surf, even odd rule]
    plot[domain=0:360, samples=73, variable=\tu]
      ({torX(\tu,torS(\tu))},{torY(\tu,torS(\tu))}) -- cycle
    plot[domain=0:360, samples=73, variable=\tu]
      ({torX(\tu,torS(\tu)+180)},{torY(\tu,torS(\tu)+180)}) -- cycle;
  \foreach \ti in {0,...,17}{
    \pgfmathsetmacro\tu{20*\ti+10}
    \pgfmathsetmacro\tsil{torS(\tu)}
    \draw[mesh] plot[domain=\tsil:\tsil+180, samples=25, variable=\tv]
      ({torX(\tu,\tv)},{torY(\tu,\tv)});
  }
  \foreach \ti in {0,...,17}{
    \pgfmathsetmacro\tv{20*\ti+10}
    \pgfmathsetmacro\tp{cos(\tv)*cos(\torE)}
    \pgfmathsetmacro\tp{ifthenelse(abs(\tp)<0.001,0.001,\tp)}
    \pgfmathsetmacro\tas{asin(max(-1,min(1,sin(\tv)*sin(\torE)/\tp)))}
    \pgfmathsetmacro\tua{ifthenelse(\tp>0,180-\tas,\tas)}
    \pgfmathsetmacro\tub{ifthenelse(\tp>0,360+\tas,180-\tas)}
    \pgfmathtruncatemacro\tok{ifthenelse(\tub-\tua>1,1,0)}
    \ifnum\tok=1
      \draw[mesh] plot[domain=\tua:\tub, samples=61, variable=\tu]
        ({torX(\tu,\tv)},{torY(\tu,\tv)});
    \fi
  }
  \draw[rim]
    plot[domain=0:360, samples=73, variable=\tu]
      ({torX(\tu,torS(\tu))},{torY(\tu,torS(\tu))}) -- cycle
    plot[domain=0:360, samples=73, variable=\tu]
      ({torX(\tu,torS(\tu)+180)},{torY(\tu,torS(\tu)+180)}) -- cycle;
  \foreach \ta in {0,...,17}{
    \foreach \tb in {0,...,17}{
      \pgfmathsetmacro\tu{20*\ta+10}
      \pgfmathsetmacro\tv{20*\tb+10}
      \pgfmathtruncatemacro\tok{ifthenelse(torF(\tu,\tv)>0.1,1,0)}
      \pgfmathtruncatemacro\tlog{ifthenelse(mod(\ta,3)==1 && mod(\tb,3)==1,1,0)}
      \pgfmathtruncatemacro\tstb{ifthenelse(mod(\ta,6)==1 && mod(\tb,6)==1,1,0)}
      \ifnum\tok=1
        \ifnum\tstb=1
          \node[stab] at ({torX(\tu,\tv)},{torY(\tu,\tv)}) {};
        \else\ifnum\tlog=1
          \node[logi] at ({torX(\tu,\tv)},{torY(\tu,\tv)}) {};
        \fi\fi
      \fi
    }
  }
  \draw[ax] plot[domain=248:292, samples=15, variable=\tu]
    ({1.62*cos(\tu)},{1.62*sin(\tu)*sin(\torE)-0.18});
  \node[axlab] at (0,-1.66) {$a$};
  \draw[ax] (-1.62,-0.42) arc[start angle=230, end angle=130,
    x radius=0.3, y radius=0.62];
  \node[axlab] at (-1.93,0.05) {$b$};
\end{scope}

\begin{scope}[shift={($(rot)+(-0.05,0.45)$)}, rotate=-12]
  \def\cylR{0.72}  
  \def\cylE{0.26}  
  \def\cylS{0.235} 
  \def\cylL{1.62}  
  \fill[surf] (-\cylL,\cylR) -- (\cylL,\cylR)
    arc[start angle=90, end angle=-90, x radius=\cylE, y radius=\cylR]
    -- (-\cylL,-\cylR)
    arc[start angle=270, end angle=90, x radius=\cylE, y radius=\cylR]
    -- cycle;
  \foreach \ti in {-6,...,6}{
    \draw[mesh] (\ti*\cylS,\cylR)
      arc[start angle=90, end angle=270, x radius=\cylE, y radius=\cylR];
  }
  \fill[black!27] (\cylL,0) ellipse[x radius=\cylE, y radius=\cylR];
  \draw[rim] (\cylL,0) ellipse[x radius=\cylE, y radius=\cylR];
  \draw[rim] (\cylL,\cylR) -- (-\cylL,\cylR)
    arc[start angle=90, end angle=270, x radius=\cylE, y radius=\cylR]
    -- (\cylL,-\cylR);
  \foreach \ti in {-6,-3,0,3,6}{
    \foreach \tb in {-1,0,1}{
      \pgfmathtruncatemacro\tstb{ifthenelse(mod(\ti,6)==0 && \tb!=0,1,0)}
      \coordinate (pt) at ({\ti*\cylS-\cylE*cos(60*\tb)},{\cylR*sin(60*\tb)});
      \ifnum\tstb=1 \node[stab] at (pt) {};
      \else \node[logi] at (pt) {}; \fi
    }
  }
  \draw[ax] (-0.55,1.02) -- (0.55,1.02);
  \node[axlab] at (0,1.25) {$n$};
  \draw[ax] (-\cylL-0.5,0.5)
    arc[start angle=110, end angle=250, x radius=0.2, y radius=0.53];
  \node[axlab] at (-\cylL-0.95,0) {$\theta$};
\end{scope}

\begin{scope}[shift={($(osc)+(-0.1,-0.3)$)}, x={(0.95cm,-0.22cm)}, y={(0.56cm,0.56cm)}]
  \fill[surf] (-1.4,-1.4) -- (1.4,-1.4) -- (1.4,1.4) -- (-1.4,1.4) -- cycle;
  \foreach \ta in {-2,...,2}{
    \foreach \tb in {-2,...,2}{
      \pgfmathtruncatemacro\tstb{ifthenelse(mod(\ta,2)==0 && mod(\tb,2)==0,1,0)}
      \ifnum\tstb=1 \node[stab] at (0.5*\ta,0.5*\tb) {};
      \else \node[logi] at (0.5*\ta,0.5*\tb) {}; \fi
    }
  }
  \draw[ax] (-0.5,-1.75) -- (0.6,-1.75);
  \node[axlab] at (0.05,-2.08) {$x$};
  \draw[ax] (1.75,-0.55) -- (1.75,0.55);
  \node[axlab] at (2.08,0) {$\xi$};
\end{scope}

\node[lab, anchor=north] at ($(qud)+(0,-2.0)$)
  {qudit, $\Z_c\times\Z_c$\\[1pt]\textbf{qudit stabilizer codes}};
\node[lab, anchor=north] at ($(osc)+(0,-2.0)$)
  {oscillator, $\R\times\R$\\[1pt]\textbf{GKP codes}};
\node[lab, anchor=west] at ($(rot)+(2.0,0.55)$)
  {planar rotor, $\Z\times\T$\\[1pt]\textbf{rotor GKP codes}};

\path (qud) -- (rot) node[lab, pos=0.48, sloped, above=2pt]
  {\textbf{\color{hyNew}rotor--qudit codes}};
\path (rot) -- (osc) node[lab, pos=0.52, sloped, above=2pt]
  {\textbf{\color{hyNew}oscillator--rotor codes}};
\path (qud) -- (osc) node[lab, midway, below=3pt]
  {\textbf{oscillator--qudit (LCA) codes}};

\node[lab] at (0,1.05)
  {\textbf{\color{hyNew}GLCA codes}\\[2pt]
   stabilizer lattice $D\subseteq G=M\times\widehat M$\\[1pt]
   $M=\R^{p}\times\Z^{q}\times\Z_{\dia c}$};

\node[stab] (lg1) at (-6.9,4.25) {};
\node[lab, anchor=west] at ($(lg1)+(0.2,0)$) {stabilizers, $D$};
\node[logi] (lg2) at (-6.9,3.8) {};
\node[lab, anchor=west] at ($(lg2)+(0.2,0)$) {logical displacements, $D^\perp$};
\draw[new] (-7.07,3.35) -- (-6.73,3.35);
\node[lab, anchor=west] at (-6.7,3.35) {introduced in this work};
\end{tikzpicture}
\caption{Phase spaces of the three system types and the stabilizer codes
that they support.
Each phase space pairs a shift label with a phase label: $(a,b)$ for a
qudit of dimension $c$, $(n,\theta)$ for a planar rotor, and $(x,\xi)$ for
an oscillator~\cite{albert2017general}.
Mesh lines mark the discrete labels.
Filled dots form a stabilizer lattice $D$, and open dots complete its
symplectic dual $D^\perp$, which labels the logical displacements.
Each lattice shown encodes a qubit.
Vertices of the triangle represent codes on one type of system
(Sec.~\ref{subsec:examples_single}).
Edges represent codes whose stabilizer lattice couples two types
(Sec.~\ref{subsec:examples_two}), and the interior represents codes on
all three, such as the codes of Secs.~\ref{ex:torsion}
and~\ref{ex:hybrid-intrinsic}.
A stabilizer of a hybrid code can displace several phase spaces at once.
GLCA codes cover the entire triangle, with any numbers of oscillators,
rotors, and qudits.}
\label{fig:hybrid}
\end{figure}

\section{Introduction \& summary}\label{Sec:Intro}

Stabilizer codes store quantum information in the joint $+1$ eigenspace
of a commuting group of displacement operators, a.k.a.\ generalized Pauli
operators.
For qubits and qudits, the stabilizer formalism reduces to symplectic
linear algebra over finite
rings~\cite{gottesman1997stabilizer,calderbank1998quantum,10.1007/3-540-49208-9_27,nadella2012stabilizer}.
For oscillators, the Gottesman--Kitaev--Preskill (GKP) codes use
stabilizer groups labeled by lattices of position and momentum
displacements~\cite{10.1103/physreva.64.012310,Conrad2022gottesmankitaev,10.1103/prxquantum.3.010335}.
Planar rotors admit analogous codes, which already appear in the original
GKP construction~\cite{10.1103/physreva.64.012310} and have since been
developed further~\cite{rotorclifford,vuillotHomologicalQuantumRotor2024}.
Hybrid processors combine several of these system types, for example
oscillators with qubits~\cite{liu2024hybrid} or qudits~\cite{yr64-ypxt}
and qubits with rotors~\cite{luo2026hybridqubitrotor}.

All of these systems share one structure.
The configuration space of each system is a locally compact abelian (LCA)
group $M$, and its phase space is
$G=M\times\widehat M$~\cite{albert2017general,bermejo2016normalizer}.
Here $\widehat M$ is the Pontryagin dual of $M$, i.e., the group of its
continuous characters.
For example, $M=\mathbb Z_2^n$ describes $n$ qubits, $M=\mathbb R^p$
describes $p$ oscillators, and $M=\mathbb Z^q$ describes $q$ planar rotors
in their angular-momentum basis.
Displacement operators $\mathbf D(x)$, labeled by $x\in G$, commute up to
phases given by a skew bicharacter $\rho$ on $G$.
A stabilizer group consists of commuting displacements, so its labels form
an isotropic subgroup $D\subseteq G$.
We focus on the case in which $D$ is a lattice, i.e., a discrete subgroup
with compact quotient $G/D$.
Every such code has a finite-dimensional logical space, whose dimension we
compute below.
Discrete stabilizer groups without compact quotient can have
infinite-dimensional logical spaces, and we do not consider them here.

Code theory for these systems has largely developed one system type at a
time.
Qubit and qudit codes build on classical coding theory through the
Calderbank--Shor--Steane (CSS)
construction~\cite{PhysRevA.54.1098,PhysRevLett.77.793} and its symplectic
generalizations~\cite{css,calderbank1998quantum,gottesman1997stabilizer}.
GKP codes have been analyzed through their lattices and
symmetries~\cite{Conrad2022gottesmankitaev,10.48550/arxiv.2407.03270,fiber-bundle-ft}
and, more recently, through time-frequency
analysis~\cite{luef2026timefrequency} and complex abelian
varieties~\cite{mayrand2026complex}.
Rotor codes have been formulated
homologically~\cite{vuillotHomologicalQuantumRotor2024} and equipped with
Clifford operations~\cite{rotorclifford}.
Quantum weight enumerators likewise come in separate
qubit~\cite{shor1997macwilliams}, qudit~\cite{rains1998weight}, and
continuous-variable~\cite{burchards2025macwilliams} versions.

In previous work~\cite{yr64-ypxt}, we introduced LCA codes on the
oscillator--qudit configuration group $M=\mathbb R^p\times\mathbb Z_{\dia c}$,
where $\mathbb Z_{\dia c}=\mathbb Z_{c_1}\times\cdots\times\mathbb Z_{c_k}$.
Their stabilizer lattices are built from Rieffel's embedding
maps~\cite{rieffel,li2004strong}.
Each such lattice projects isomorphically onto a full-rank lattice in the
oscillator phase space $\mathbb R^{2p}$.
LCA codes therefore require at least one oscillator, and their stabilizer
lattices are torsion-free.
In this setting, the stabilizer and logical algebras are a commutative torus
and a dual noncommutative torus, and Rieffel's Morita equivalence relates
the two~\cite{yr64-ypxt}.
The oscillator requirement excludes many familiar codes.
Qubit and qudit stabilizer codes lie outside the LCA class, as do
planar-rotor GKP codes and all codes whose stabilizer lattices have torsion.

In this work, we develop the theory of stabilizer codes for all LCA
configuration groups of the form
$M=\mathbb R^p\times\mathbb Z^q\times\mathbb Z_{\dia c}$.
Here $p,q\geq0$, and $\mathbb Z_{\dia c}$ is the product of finite cyclic
groups defined above.
Every finite abelian group is isomorphic to such a product.
Oscillators, rotors, and qudits may therefore appear in any numbers,
including zero.
The stabilizer group is labeled by an arbitrary isotropic lattice
$D\subseteq G$, possibly with torsion.
We call the resulting codes generalized LCA (GLCA) codes.
They include qubit and qudit stabilizer codes, GKP codes, the LCA codes of
Ref.~\cite{yr64-ypxt}, and planar-rotor GKP codes.
They also include hybrid codes on all three system types and two new
families, namely oscillator--rotor and rotor--qudit codes.
Figure~\ref{fig:hybrid} organizes these families by the types of systems
that their stabilizer lattices couple.
Our main tool is Rieffel's module~\cite{rieffel}, a Morita-equivalence
bimodule between the twisted group algebras of $D$ and of its symplectic
dual $D^\perp$.
Rieffel's construction applies to every lattice in $G$, so it treats all
GLCA codes uniformly.
Table~\ref{tab:qec_ncg_dictionary} lists the resulting dictionary between
quantum error correction and noncommutative geometry.
It extends the oscillator--qudit dictionary of Ref.~\cite{yr64-ypxt} to all
GLCA codes.

\begin{table}[t]
\caption{Dictionary between stabilizer codes with a finite-dimensional
logical space and Rieffel's noncommutative-geometry framework.
The finite group
$\mathbb Z_{\dia c}=\mathbb Z_{c_1}\times\cdots\times\mathbb Z_{c_k}$
is labeled by positive integers $\dia c=(c_1,\ldots,c_k)$, and $\mu$ is
the self-Plancherel Haar measure on $G$.
When $D$ and $D^\perp$ are torsion-free, as for GKP and LCA codes, the
stabilizer and logical algebras are a commutative torus and a dual
noncommutative torus.
Torsion in either lattice adds a finite factor.
For Pauli codes ($p=q=0$), $\widehat D$ is finite and
$\mathscr E_D$ is a trivial bundle.
Rieffel's module $\Omega_D\cong\Gamma(\mathscr E_D)$ consists of the
continuous sections of the syndrome bundle.}
\label{tab:qec_ncg_dictionary}
\centering
\renewcommand{\arraystretch}{1.15}
\begin{tabular}{cc}
\toprule
Quantum error correction & Noncommutative geometry \\
\midrule
configuration space & $M=\mathbb R^p\times\mathbb Z^q\times\mathbb Z_{\dia c}$ \\
phase space & self-dual LCA group $G=M\times\widehat M$ \\
displacement operators & projective representation of $G$ with cocycle $\beta$ \\
commutation phase & skew bicharacter $\rho$ \\
\midrule
stabilizer group & isotropic lattice $D\subseteq D^\perp$ \\
consistent stabilizer phases & coboundary $c$ that trivializes $\beta|_{D\times D}$ \\
syndrome space & Pontryagin dual $\widehat D\cong G/D^\perp$ \\
stabilizer algebra & commutative algebra $\mathrm C^*(D)\cong C(\widehat D)$ \\
syndrome & character $\chi\in\widehat D$ \\
\midrule
logical Pauli group & symplectic dual lattice $D^\perp$ \\
logical Paulis modulo stabilizers and phases & finite group $D^\perp/D$ of order $K^2$ with bicharacter $\rho_D$ \\
logical algebra on the codespace & $\operatorname{End}(E_{D,1})\cong M_K(\mathbb C)$ \\
logical algebra over all syndromes & twisted group algebra $\mathrm C^*(D^\perp,\overline\beta)$ \\
stabilizer-to-logical mapping & Morita equivalence via Rieffel's module $\Omega_D$ \\
Clifford gates from lattice symmetries & symplectic automorphism group $\operatorname{Aut}_{\mathrm{sp}}(D)$ of $D$ \\
\midrule
physical wavefunctions & square-integrable sections $L^2(\widehat D,\mathscr E_D)$ \\
codespace & fiber $E_{D,1}$ at the trivial character \\
error/syndrome space & fiber $E_{D,\chi}$ of the syndrome bundle $\mathscr E_D\to\widehat D$ \\
syndrome decomposition & Bloch--Floquet direct integral over $\widehat D$ \\
stabilizer-group average & Rieffel's $C(\widehat D)$-valued inner product at $\chi=1$ \\
code dimension $K$ & trace pairing $\tau_D([\Omega_D])$ $=$ covolume $|G/D|_\mu$ \\
\bottomrule
\end{tabular}
\end{table}

\subsection{Summary of results}
\label{subsec:summary_of_results}
We summarize our results in the order in which they appear.

\paragraph{Examples.}
Section~\ref{sec:examples} collects our examples, old and new, and
describes each with the same three tools: displacement operators, a
stabilizer generator matrix $T$, and a logical matrix.
The columns of $T$ form a basis of all lifted labels of the stabilizers,
including the periods of rotor angles and qudit labels.
With the lifted symplectic form $J$, the integer matrix
$\varTheta=T^{\intercal}JT$ records the commutation phases of these
columns.
The code dimension is then $K=|\operatorname{Pf}\varTheta|$ for every such
system and every stabilizer lattice, with or without torsion
(Lemma~\ref{lem:full_lift_pfaffian}).
The columns of the logical matrix $T\varTheta^{-1}$ generate the logical
displacements.
Single-factor examples include qubit and qudit stabilizer codes such as
the toric code, GKP codes, and rotor GKP codes.
They also include an intrinsically non-CSS two-rotor code, which no
generalized Clifford operator maps to a CSS code, and homological rotor
codes with finite homology.
Two-factor examples include the oscillator--qudit LCA codes of
Ref.~\cite{yr64-ypxt}, a new hybrid Golay code, and two new families,
which couple a rotor to an oscillator and to a qudit, respectively.
We build the hybrid Golay code from partially lifted binary and
quaternary Golay codes, which we also introduce.
The resulting code couples twelve oscillators intrinsically to twelve qubits.
Its stabilizer lattice has torsion, so the code lies outside the LCA
construction.
A minimal oscillator--rotor code encodes a ququart in a lattice that no
symplectic automorphism of $G$ splits into separate oscillator and rotor
lattices.
We prove that every one-oscillator--one-rotor lattice code of square-free
logical dimension does split, so dimension four is the first possible
instance of intrinsic oscillator--rotor coupling.
In particular, every ideal lattice stabilizer state of this system is
symplectically equivalent to a product state.
In contrast, a new rotor--qudit code couples a rotor intrinsically to a
qudit of dimension $d$ already at $K=d$.
A related rotor--qudit stabilizer state is not mapped to any product
state by any generalized Clifford operator.
Examples with several factor types include mixed-qudit codes with unequal
or composite local dimensions, which we relate to concatenated GKP codes.
A torsion-extended code places a stabilizer lattice with torsion on an
oscillator, a rotor, and a qubit.
This lattice couples all three systems in its defining coordinates, but a
generalized Clifford operator maps the code to a product of a GKP code and
two stabilizer states.
In contrast, a torsion-free CSS lattice on the same three systems encodes a
ququart that no generalized Clifford operator decomposes along any
bipartition (Proposition~\ref{prop:hybrid-intrinsic-nondecoupling}).

\paragraph{Syndrome bundle and code dimension.}
We first identify the syndromes of a GLCA code with the characters $\chi$
of $D$.
These characters form the compact group $\widehat D\cong G/D^\perp$
(Sec.~\ref{sec:LSC_ncg}).
Rieffel's module is isomorphic to the module of continuous sections of a
Hermitian vector bundle over $\widehat D$, which we call the syndrome
bundle.
Its fiber over $\chi$ consists of the generalized states with syndrome
$\chi$ (Proposition~\ref{prop:rieffel_fiber_covariance}).
A Bloch--Floquet transform decomposes $L^2(M)$ into a direct integral of
these fibers (Theorem~\ref{thm:rieffel_bundle_decomposition}).
The codespace is the fiber at the trivial character
(Lemma~\ref{lem:trivial_fiber_codespace}).
Its elements may be non-normalizable tempered distributions, as are ideal
GKP codewords.
Displacements map fibers unitarily onto one another, so every fiber has
the dimension $K$ of the codespace (Corollary~\ref{cor:vector_bundle}).
We then express this dimension as a trace pairing and as a covolume
(Theorem~\ref{thm:code_dimension}),
\[
 K=\tau_D([\Omega_D])=|G/D|_\mu,
\]
where $\mu$ is the self-Plancherel Haar measure on $G$.
For example, $n$ qubits with $l$ independent stabilizer generators give
covolume $2^{n-l}$.
For the toric code on a closed surface of genus
$g$~\cite{kitaev2003fault,dennis2002topological}, the covolume reproduces
the homological count $K=2^{2g}$
(Sec.~\ref{ex:qudit}).
The same theorem shows that $D^\perp/D$ has order $K^2$.
Encoding each rotor and each qudit into an oscillator turns every GLCA
code into a concatenated GKP code with the same symplectic Gram matrix
$\varTheta$ (Remark~\ref{rem:full_lift_gkp_concatenation}).
In particular, $K=|\operatorname{Pf}\varTheta|$ is the usual GKP dimension
formula for this code.
This correspondence identifies ideal codespaces, stabilizer groups, and
logical displacement groups, but not error sets or Clifford groups.

\paragraph{Logical operators and Clifford gates.}
Displacements labeled by $D^\perp$ preserve every syndrome fiber and
therefore act on the codespace
(Sec.~\ref{sec:logical_displacement_operators}).
We show that the logical phase space $\mathsf L_D=D^\perp/D$ carries a
nondegenerate alternating bicharacter
(Theorem~\ref{thm:logical_displacement_algebra}).
The corresponding logical displacements form an orthogonal basis of the
full logical operator algebra $M_K(\mathbb C)$.
Next, every symplectic automorphism of $G$ that preserves $D$ induces a
symplectic automorphism of $\mathsf L_D$
(Proposition~\ref{prop:lattice_symmetry_logical_phase_space}).
Each such automorphism of $G$ is implemented by a generalized Clifford,
or metaplectic, operator (Proposition~\ref{prop:symplectic_implementer}).
After a displacement correction, this operator preserves the codespace
and acts as a logical Clifford gate
(Theorem~\ref{thm:corrected_clifford_implementer}).
This extends the GKP construction of Burchards, Flammia, and
Conrad~\cite{fiber-bundle-ft} to all GLCA codes.


\paragraph{Distance and decoding.}
Given a displacement cost $w$ on $G$, we define the code distance as the
least cost of a nontrivial logical displacement
(Sec.~\ref{sec:distance_error_correction}).
A Borel set of displacement errors is correctable by a syndrome decoder
exactly when no difference of two of its errors is a nontrivial logical
displacement (Theorem~\ref{thm:distance_correctable_decoder}).
For subadditive Borel costs, minimum-cost decoders correct every error
whose cost is below half the distance
(Theorem~\ref{thm:minimum_cost_half_distance}).
For a given error distribution, the optimal decoders are exactly the
maximum-likelihood decoders of error classes modulo $D$
(Theorem~\ref{thm:maximum_likelihood_decoder}).
Euclidean, symplectic Hamming, and Lee costs recover the GKP, Pauli, and
qudit Lee~\cite{guo_lee_codes} distances, respectively.

\paragraph{Weight enumerators.}
We define enumerators for pairs of operators as functions on the phase
space $G$ (Sec.~\ref{sec:weight_enumerators}).
We show that the two enumerators of each pair are symplectic Fourier
transforms of each other.
Pushing them forward along a displacement cost gives weight measures,
which organize error detection by cost.
This single duality recovers the classical MacWilliams
identity~\cite{macwilliams1977theory} as well as the quantum identities of
Shor and Laflamme~\cite{shor1997macwilliams}, Rains~\cite{rains1998weight},
and Burchards~\cite{burchards2025macwilliams}.
At the level of lattices, Poisson summation on $G$ relates the enumerators
of $D$ and $D^\perp$.
For GKP codes, this relation becomes a theta-series duality that recovers
the distance construction of Conrad et
al.~\cite{Conrad2022gottesmankitaev}.

\paragraph{Finite-energy codewords.}
When $M$ has a continuous or an infinite discrete factor, ideal codewords
can lie outside $L^2(M)$.
We regularize them by damping oscillator and rotor energies with Gaussian
operators (Sec.~\ref{sec:finite_energy}).
A single scalar, determined by $K$, the torsion of $D$, and a partition
function, renormalizes the entire regularized code basis
(Theorem~\ref{thm:fe_asymptotic_isometry}).
The resulting encoding is an isometry up to an error that is
exponentially small in the inverse damping parameter.
For oscillators alone, this recovers the asymptotic isometry of Mayrand
and Royer~\cite{mayrand2026complex}.
For planar rotors, the damping coincides, up to conventions, with the
regularization of Xu, Wang, and Albert~\cite{rotorclifford}.

\paragraph{Outline.}
Section~\ref{sec:examples} presents our examples with minimal notation.
Section~\ref{sec:general_structure} develops the general structure of GLCA
codes.
It collects our notation (Sec.~\ref{sec:notation}), reviews Pauli
stabilizer codes in the language of projective modules
(Sec.~\ref{sec:PSC_basic}), and defines GLCA codes (Sec.~\ref{sec:GSC}).
It then constructs the syndrome bundle (Sec.~\ref{sec:LSC_ncg}), computes
the code dimension (Sec.~\ref{subsec:code_dimension_covolume}), and derives
logical displacement operators (Sec.~\ref{sec:logical_displacement_operators})
and Clifford gates from lattice symmetries (Sec.~\ref{sec:clifford_gates}).
Sections~\ref{sec:distance_error_correction}--\ref{sec:finite_energy} treat
distance and decoding, weight enumerators, and finite-energy codewords,
returning to the examples along the way.
The appendices contain proofs and technical details, the graph-lattice
construction of LCA codes (Appendix~\ref{app:graph_lattices}), and
background on
$\mathrm C^*$-algebras, projective modules, and Morita equivalence.

\section{Examples}
\label{sec:examples}

This section collects stabilizer codes with a finite-dimensional logical
space, both known and new, and treats all of them with one small set of
tools.
Section~\ref{subsec:examples_notation} introduces these tools:
displacement operators, the stabilizer generator matrix, and the logical
matrix.
We then sort the examples by the types of systems on which they act.
Sections~\ref{subsec:examples_single}, \ref{subsec:examples_two},
and~\ref{subsec:examples_several} treat codes on one, two, and several
types of systems, respectively.
The single-type codes live on qudits, oscillators, or rotors and include
an intrinsically non-CSS rotor code.
The two-type codes include a hybrid Golay code and two new families.
The several-type codes include an intrinsically coupled
oscillator--rotor--qubit code.
The examples rely mainly on the notation of
Sec.~\ref{subsec:examples_notation}.
Section~\ref{sec:general_structure} and the appendices prove the general
statements quoted here.
Later sections return to the examples to illustrate Clifford gates
(Sec.~\ref{subsec:clifford_examples}), distances
(Sec.~\ref{sec:distance_error_correction}), weight enumerators
(Sec.~\ref{sec:weight_enumerators}), and finite-energy codewords
(Sec.~\ref{subsec:fe_examples}).

\subsection{Displacement operators and stabilizer lattices}
\label{subsec:examples_notation}
\label{subsec:phase-space-displacement-conventions}

\textit{Systems and displacements.}
We consider $p$ oscillators, $q$ planar rotors, and $k$ qudits of
dimensions $c_1,\ldots,c_k$.
We write $\dia c$ both for the vector $(c_1,\ldots,c_k)$ and for the
diagonal matrix $\operatorname{diag}(c_1,\ldots,c_k)$.
The configuration space of the combined system is
\begin{equation}
 M=\R^p\times\Z^q\times\Z_{\dia c},
 \qquad
 \Z_{\dia c}:=\Z_{c_1}\times\cdots\times\Z_{c_k}.
 \label{eq:examples-configuration-space}
\end{equation}
Thus an oscillator has a real position, a rotor has an integer angular
momentum, and a qudit has a label in $\Z_c$.
States of the physical Hilbert space $L^2(M)$ are expanded in the
generalized position kets listed below.
Each type of system carries shift operators $\mathbf T$ and phase
operators $\mathbf M$, also listed below.
\begin{center}
\small
\renewcommand{\arraystretch}{1.3}
\begin{tabular}{@{}llllll@{}}
\toprule
System & Basis kets & Label & Shift & Phase & Periods\\
\midrule
oscillator & $\ket{x'}_{\mathrm{osc}}$, $x'\in\R$ & $(x,\xi)\in\R^2$
 & $\mathbf T_x\ket{x'}=\ket{x'-x}$
 & $\mathbf M_\xi\ket{x'}=e^{2\pi i\xi x'}\ket{x'}$ & none\\
rotor & $\ket{n'}_{\mathrm{rot}}$, $n'\in\Z$ & $(n,\theta)\in\Z\times\R$
 & $\mathbf T_n\ket{n'}=\ket{n'-n}$
 & $\mathbf M_\theta\ket{n'}=e^{2\pi i\theta n'}\ket{n'}$
 & $\theta\sim\theta+1$\\
qudit & $\ket{j}_{\mathrm{fin}}$, $j\in\Z_c$ & $(a,b)\in\Z^2$
 & $\mathbf T_a\ket{j}=\ket{j-a}$
 & $\mathbf M_b\ket{j}=e^{2\pi ibj/c}\ket{j}$
 & $a\sim a+c$, $b\sim b+c$\\
\bottomrule
\end{tabular}
\end{center}
The displacement operator with label $(m,s)$ is
$\mathbf D(m,s):=\mathbf M_s\mathbf T_m$, where $m$ collects all shifts
and $s$ collects all phases.
On a qudit, $X:=\mathbf T_{-1}$ and $Z:=\mathbf M_1$ are the usual
generalized Pauli operators, with $X\ket j=\ket{j+1}$ and
$Z\ket j=e^{2\pi ij/c}\ket j$.
Thus $\mathbf D(a,b)=Z^bX^{-a}$, and for qubits $X$ and $Z$ are the Pauli
operators.
On a rotor, $\mathbf M_\theta=e^{2\pi i\theta\widehat{\mathbf L}}$, where
$\widehat{\mathbf L}\ket{n'}_{\mathrm{rot}}=n'\ket{n'}_{\mathrm{rot}}$
defines the angular-momentum operator.
On an oscillator, $\mathbf M_\xi=e^{2\pi i\xi\hat x}$ with the position
operator $\hat x$.
The rotor angle is defined modulo one, since angular momenta are integers.
Similarly, qudit labels are defined modulo $c$.

\textit{Lifted labels and commutation.}
We collect the labels of all systems into one lifted label vector
\begin{equation}
 v=(x,\xi\,|\,n,\theta\,|\,a,b)
 \in\widetilde G:=\R^{2p}\times\bigl(\Z^q\times\R^q\bigr)\times\Z^{2k},
 \label{eq:examples-lifted-label}
\end{equation}
where $x,\xi\in\R^p$, $n\in\Z^q$, $\theta\in\R^q$, and $a,b\in\Z^k$.
The entries are ordered by system type, and within each type the shifts
come before the phases.
We write $\pi(v)$ for the point of the phase space $G:=M\times\widehat M$
that $v$ labels, and $\mathbf D(v):=\mathbf D(\pi(v))$.
The map $\pi$ reduces rotor angles modulo one and the labels of the $j$th
qudit modulo $c_j$.
Its kernel is the lattice of \emph{period vectors}, generated by the unit
rotor-angle vectors $e_{\theta_i}$ and by the vectors $c_je_{a_j}$ and
$c_je_{b_j}$.
Displacements commute up to a phase,
\begin{equation}
 \mathbf D(v)\mathbf D(w)
 =e^{2\pi i\,v^{\intercal}Jw}\,\mathbf D(w)\mathbf D(v),
 \label{eq:examples-commutation}
\end{equation}
where $J$ is the block-diagonal matrix
\begin{equation}
 J=\begin{pmatrix}J_{\mathrm{osc}}&&\\&J_{\mathrm{rot}}&\\&&J_{\mathrm{fin}}
 \end{pmatrix},
 \qquad
 J_{\mathrm{osc}}=\begin{pmatrix}0&I_p\\-I_p&0\end{pmatrix},
 \quad
 J_{\mathrm{rot}}=\begin{pmatrix}0&I_q\\-I_q&0\end{pmatrix},
 \quad
 J_{\mathrm{fin}}=\begin{pmatrix}0&\dia c^{-1}\\-\dia c^{-1}&0\end{pmatrix}.
 \label{eq:symplectic-form}
\end{equation}
Its qudit block has entries $\pm1/c_j$, so $J$ is not an integer matrix.
Nevertheless, period vectors pair integrally with every lifted vector, so
the phase in Eq.~\eqref{eq:examples-commutation} does not depend on the
choice of lifts.

\textit{Stabilizer lattices.}
A stabilizer group consists of commuting displacement operators, possibly
multiplied by phases.
Its labels form a subgroup $D\subseteq G$.
We require $D$ to be a lattice, i.e., a discrete subgroup whose quotient
$G/D$ is compact, and call it the \emph{stabilizer lattice}.
Commutation means that $D$ is \emph{isotropic}: $v^{\intercal}Jw\in\Z$ for
all lifts $v$ and $w$ of elements of $D$.
The displacements $\mathbf D(w)$ with $w\in D$ then commute, but they need
not multiply as a group.
Lemma~\ref{lem:phase_triv} provides stabilizer phases that turn them into
a group.
The explicit lattices below need no phases, so their bare displacements
already form the stabilizer group.
The codespace $\mathcal C$ consists of the generalized states fixed by all
stabilizers.
Its dimension $K$ is finite and equals the covolume of $D$
(Theorem~\ref{thm:code_dimension}).

\textit{Stabilizer generator matrix.}
We encode $D$ in one square matrix.
The \emph{full lift} of $D$ is the set $\Lambda:=\pi^{-1}(D)$ of all lifted
labels of all elements of $D$.
It contains every period vector, and it is a full-rank lattice in
$\R^{2(p+q+k)}$ (Lemma~\ref{lem:full_lift_pfaffian}).
A \emph{stabilizer generator matrix} is a square matrix $T$ whose columns
form a basis of $\Lambda$.
We group its rows by system type,
\begin{equation}
 T=\begin{pmatrix}T_x\\ T_\xi\\ T_n\\ T_\theta\\ T_a\\ T_b\end{pmatrix}
 \begin{array}{@{}l@{}}
 \left.\begin{array}{@{}l@{}}\rowlab{positions}\\ \rowlab{momenta}
 \end{array}\right\}\ \text{$p$ oscillators}\\
 \left.\begin{array}{@{}l@{}}\rowlab{angular momenta}\\ \rowlab{angles}
 \end{array}\right\}\ \text{$q$ rotors}\\
 \left.\begin{array}{@{}l@{}}\rowlab{shifts}\\ \rowlab{phases}
 \end{array}\right\}\ \text{$k$ qudits,}
 \end{array}
 \label{eq:examples-generator-matrix}
\end{equation}
where $T_x$ collects the $p$ rows of oscillator positions, and similarly
for the other blocks.
Each column of $T$ labels a stabilizer, and a column that is a period
vector labels the identity operator.
Periods enter only because rotor angles and qudit labels are periodic.
For oscillators alone, $T$ is simply a basis of $D$.
To find $T$ in practice, we lift a generating set of $D$, adjoin the
period vectors, and extract a basis of the lattice that they generate.
If $D$ has \emph{torsion}, i.e., nonzero elements of finite order, then
the columns of $T$ cannot contain a basis of the period vectors.

\textit{Gram matrix and code dimension.}
The \emph{symplectic Gram matrix}
\begin{equation}
 \varTheta:=T^{\intercal}JT
 \label{eq:examples-gram-matrix}
\end{equation}
collects the commutation phases of the columns $t_i$ of $T$, in the sense
that
$\mathbf D(t_i)\mathbf D(t_j)=e^{2\pi i\varTheta_{ij}}
\mathbf D(t_j)\mathbf D(t_i)$.
Thus $D$ is isotropic exactly when $\varTheta$ is an integer matrix.
The code dimension is then a Pfaffian,
\begin{equation}
 K=|\operatorname{Pf}\varTheta|=\frac{|\det T|}{c_1\cdots c_k}.
 \label{eq:examples-dimension}
\end{equation}
Lemma~\ref{lem:full_lift_pfaffian} derives
Eq.~\eqref{eq:examples-dimension} from Theorem~\ref{thm:code_dimension}.
We keep the absolute value because the sign of
$\operatorname{Pf}\varTheta$ depends on the order of the rows and columns
of $T$.
For qudits alone, Eq.~\eqref{eq:examples-dimension} reduces to the
familiar count $K=c_1\cdots c_k/|D|$.
For oscillators alone, it is the usual GKP formula $K=|\det T|$.
With rotors or qudits, the period columns cannot be dropped.
For instance, the Pfaffian of the Gram matrix of the two stabilizer
generators of the LCA code in Sec.~\ref{ex:lca} can differ from its code dimension.
In general, it is the GKP formula for the lifted lattice $\Lambda$ with
each qudit coordinate rescaled by $c_j^{-1/2}$.
Section~\ref{ex:mixed-qudit} and
Remark~\ref{rem:full_lift_gkp_concatenation} make this concatenation
picture precise.

\textit{Logical matrix.}
Logical displacements commute with every stabilizer.
Their labels form the symplectic dual $D^\perp$, which contains $D$.
The finite group $D^\perp/D$ labels the logical displacements modulo
stabilizers and phases (Theorem~\ref{thm:logical_displacement_algebra}).
The \emph{logical matrix}
\begin{equation}
 T_{\mathrm{dual}}:=T\varTheta^{-1}
 \label{eq:examples-logical-matrix}
\end{equation}
has columns that generate all lifted labels of $D^\perp$
(Lemma~\ref{lem:full_lift_pfaffian}).
Every logical label is therefore $T_{\mathrm{dual}}\ell$ for an integer
vector $\ell$.
Two integer vectors give the same logical displacement, up to stabilizers
and phases, exactly when their difference lies in
$\varTheta\Z^{2(p+q+k)}$.
Hence
\begin{equation}
 D^\perp/D\cong\Z^{2(p+q+k)}/\varTheta\,\Z^{2(p+q+k)},
 \qquad
 |D^\perp/D|=\det\varTheta=K^2.
 \label{eq:examples-logical-group}
\end{equation}
The Gram matrix of $T_{\mathrm{dual}}$ is $-\varTheta^{-1}$, so the
logical displacements obey
\begin{equation}
 \mathbf D(T_{\mathrm{dual}}\ell)\mathbf D(T_{\mathrm{dual}}\ell')
 =e^{-2\pi i\,\ell^{\intercal}\varTheta^{-1}\ell'}\,
 \mathbf D(T_{\mathrm{dual}}\ell')\mathbf D(T_{\mathrm{dual}}\ell).
 \label{eq:examples-logical-commutation}
\end{equation}
Some columns of $T_{\mathrm{dual}}$ can be stabilizers, since $D^\perp$
contains $D$.
The alternating Smith normal form of $\varTheta$ further identifies the
logical system
as a collection of qudits whose dimensions multiply to $K$, as explained
after Lemma~\ref{lem:full_lift_pfaffian}.

\textit{Ideal codewords.}
Most of our examples use CSS lattices $D=A\times B$, with a shift part
$A\subseteq M$ and a phase part $B\subseteq\widehat M$.
The two parts commute when every phase in $B$ acts trivially on every
configuration in $A$.
CSS lattices need no stabilizer phases (Sec.~\ref{sec:GSC}).
Let $B^\perp\subseteq M$ be the set of configurations on which every phase
in $B$ acts trivially.
A codeword is fixed by every phase in $B$, so it is supported on
$B^\perp$.
It is also fixed by every shift in $A$, so it is constant on each coset
$r+A$ in $B^\perp$.
The ideal codewords are therefore spanned by the combs
\begin{equation}
 \sum_{a\in A}\ket{r+a},\qquad r\in B^\perp,
 \label{eq:examples-css-codewords}
\end{equation}
one for each coset of $A$ in $B^\perp$.
Distinct cosets give combs with disjoint supports, so the combs are
linearly independent.
Hence $K=|B^\perp/A|$, the number of such cosets.
When $A$ is infinite, these combs are non-normalizable distributions,
like ideal GKP states.
For CSS lattices, the Bloch--Floquet decomposition of
Theorem~\ref{thm:rieffel_bundle_decomposition} reduces to the Zak
transform along $A$~\cite{zak1967finite}.
Its basis states span every syndrome space and are the group coset states
of Refs.~\cite{albert2020robust,culf2022group}, i.e., combs over cosets of
$A$ with phases given by characters of $A$.
Section~\ref{sec:finite_energy} constructs normalizable finite-energy
versions.

\subsection{Stabilizer codes over a single factor}
\label{subsec:examples_single}

We first consider codes on a single type of system.
Most of these codes are known, and we use them to illustrate the
notation of Sec.~\ref{subsec:examples_notation}.

\subsubsection{Qudit stabilizer codes}
\label{ex:qudit}
Qubit and qudit stabilizer codes are the GLCA codes with $p=q=0$.
For clarity, we briefly indicate what the lattice $D$ is in this context.
If the stabilizer group is generated by $l$ independent Pauli operators
$g_1,g_2,\ldots,g_l$ in the $n$-qubit Pauli group $G_n$, then the subgroup
generated by the binary symplectic representations of the operators
$g_1,g_2,\ldots,g_l$ inside $\Z_2^n\times\Z_2^n$ is precisely the lattice
$D$ in this case.
More generally, for $n$ qudits of local dimension $d$, qudit stabilizer
codes are obtained by replacing the binary phase space with
$\Z_d^n\times\Z_d^n$ and taking $D$ to be the subgroup generated by the
symplectic representations of the generalized Pauli
stabilizers~\cite{10.1007/3-540-49208-9_27}.
Every subgroup of this finite phase space is a lattice, and
Eq.~\eqref{eq:examples-dimension} gives $K=d^n/|D|$ when it is isotropic.
For example, $n$ qubits with $l$ independent generators give $|D|=2^l$
and $K=2^{n-l}$.
Unequal local dimensions are treated in Sec.~\ref{ex:mixed-qudit},
where integer lifts give the stabilizer lattice, codespace dimension, and
logical displacement operators.

\textit{A two-qudit repetition code.}
As a small illustration, take two qudits of dimension $d$ and the
stabilizer $Z\otimes Z^{-1}$.
Its label is $w=(0,0\,|\,1,-1)$ in the row order $(a_1,a_2\,|\,b_1,b_2)$
of Sec.~\ref{subsec:examples_notation}.
The full lift is generated by $w$ and the periods $d\,e_{a_1}$,
$d\,e_{a_2}$, $d\,e_{b_1}$, and $d\,e_{b_2}$, where
$e_{a_1},\ldots,e_{b_2}$ are the standard unit vectors.
The last period equals $d\,e_{b_1}-d\,w$, so a stabilizer generator matrix
is
\begin{equation}
 T=\Bigl(\underbrace{\,w\,}_{Z\otimes Z^{-1}}\;\;
   \underbrace{\,d\,e_{a_1}\;\;d\,e_{a_2}\;\;d\,e_{b_1}\,}_{\text{periods}}
 \Bigr)
 =\begin{pmatrix}0&d&0&0\\0&0&d&0\\1&0&0&d\\-1&0&0&0\end{pmatrix}
 \;\begin{array}{@{}l@{\;}l@{}}
 a_1&\rowquadbrace{two qudits.}\\
 a_2&\\
 b_1&\\
 b_2&
 \end{array}
 \label{eq:examples-qudit-T}
\end{equation}
Its Gram matrix and Pfaffian are
\begin{equation}
 \varTheta=\begin{pmatrix}0&-1&1&0\\1&0&0&d\\-1&0&0&0\\0&-d&0&0
 \end{pmatrix},
 \qquad
 \operatorname{Pf}\varTheta=-d,
 \label{eq:examples-qudit-Theta}
\end{equation}
so $K=d$.
Directly, the $+1$ eigenspace of $Z\otimes Z^{-1}$ is spanned by the
states $\ket j\ket j$, so the codewords are $\ket{j_L}=\ket{j}_{\mathrm{fin}}\otimes\ket{j}_{\mathrm{fin}}$
with $j\in\Z_d$.
The logical matrix has the columns $(0,d\,|\,0,0)$, $(0,0\,|\,1,0)$,
$(0,0\,|\,0,1)$, and $(-1,-1\,|\,0,0)$.
The first is a period, and the others label $Z\otimes I$, $I\otimes Z$,
and $X\otimes X$.
Thus $\xl=X\otimes X$ and $\zl=Z\otimes I$ act as
$\xl\ket{j_L}=\ket{(j+1)_L}$ and $\zl\ket{j_L}=e^{2\pi ij/d}\ket{j_L}$.
For $d=2$, this is the two-qubit repetition code with stabilizer $ZZ$.

\textit{The toric code.}
The toric code is the basic example of a topological CSS stabilizer code
\cite{kitaev2003fault,dennis2002topological}.  Start with a finite cellulation
$\mathcal L$ of a closed, connected, orientable surface $\Sigma_g$ of genus
$g$.  Let $V$, $E$, and $F$ be the vertex, edge, and face sets of
$\mathcal L$. Its one-skeleton is the primal graph with vertex set $V$
and edge set $E$; the dual graph exchanges vertices and faces.  Place one
qubit on every edge, so the physical Hilbert space contains $n=|E|$ qubits.
The cellular chain complex over $\mathbb Z_2$ is
\begin{equation}
 C_2(\mathcal L;\mathbb Z_2)
 \xrightarrow{\;\partial_2\;}
 C_1(\mathcal L;\mathbb Z_2)
 \xrightarrow{\;\partial_1\;}
 C_0(\mathcal L;\mathbb Z_2),
 \qquad \partial_1\partial_2=0.
 \label{eq:toric_cellular_chain_complex}
\end{equation}
After identifying the cellular bases with $F$, $E$, and $V$, put
$\delta_0:=\partial_1^{\intercal}$.  For a vertex $v$ and a face $f$, the
usual star and plaquette stabilizers are
\begin{equation}
 \begin{aligned}
 \mathbf{A}_v&:=\prod_{e\in E}\mathbf X_e^{(\partial_1)_{ve}}
     =\mathbf X^{\delta_0 v},\\
 \mathbf{B}_f&:=\prod_{e\in E}\mathbf Z_e^{(\partial_2)_{ef}}
     =\mathbf Z^{\partial_2 f}.
 \end{aligned}
 \label{eq:toric_star_plaquette_stabilizers}
\end{equation}
The exponents are cellular incidence numbers modulo $2$, so repeated
incidences are counted with their multiplicities. On the usual square
lattice on the torus these are the familiar four-body star
and plaquette checks.  The same formulas on $\Sigma_g$ define the closed
surface-code generalization.

We now translate this graph picture into the lattice language of the present
paper.  Set
\begin{equation}
 M:=C_1(\mathcal L;\mathbb Z_2)\cong\mathbb Z_2^{E},
 \qquad
 \widehat M\cong C_1(\mathcal L;\mathbb Z_2),
 \label{eq:toric_configuration_group}
\end{equation}
using the standard pairing $\langle m,s\rangle=(-1)^{m\cdot s}$.  The binary
symplectic representatives of the star and plaquette stabilizers generate the
subgroup
\begin{equation}
 D_{\mathrm{TC}}
 :=\operatorname{im}(\delta_0)\times\operatorname{im}(\partial_2)
 \subseteq M\times\widehat M.
 \label{eq:toric_stabilizer_lattice}
\end{equation}
For $a\in C_0(\mathcal L;\mathbb Z_2)$
and $b\in C_2(\mathcal L;\mathbb Z_2)$,
\[
 (\delta_0 a)\cdot(\partial_2 b)
 =a^{\intercal}\partial_1\partial_2 b=0.
\]
Thus $\operatorname{im}(\partial_2)\subseteq
\operatorname{im}(\delta_0)^\perp$, so $D_{\mathrm{TC}}$ is an isotropic
split CSS lattice. As a CSS lattice, it needs no stabilizer phases.

We now start from the covolume formula and show that the usual homological
count follows from it.  Write
\begin{equation}
 r:=\dim_{\mathbb Z_2}D_{\mathrm{TC}}
 =\operatorname{rank}(\partial_1)+\operatorname{rank}(\partial_2),
 \label{eq:toric_stabilizer_rank}
\end{equation}
where $\operatorname{rank}(\delta_0)=\operatorname{rank}(\partial_1)$.
Since $|D_{\mathrm{TC}}|=2^r$, the qudit count $K=c_1\cdots c_k/|D|$
below Eq.~\eqref{eq:examples-dimension} gives
\begin{align}
 \dim\mathcal C
 =\frac{2^{|E|}}{|D_{\mathrm{TC}}|}
 &=2^{|E|-r}\notag\\
 &=2^{|E|-\operatorname{rank}(\partial_1)
          -\operatorname{rank}(\partial_2)}.
 \label{eq:toric_covolume_homology}
\end{align}
By Theorem~\ref{thm:code_dimension}, this number is also the covolume of
$D_{\mathrm{TC}}$.  Thus, if $\textbf{k}:=\log_2\dim\mathcal C$ is the number of
encoded qubits, the covolume computation first gives
\begin{equation}
 \textbf{k}=|E|-\operatorname{rank}(\partial_1)
       -\operatorname{rank}(\partial_2).
 \label{eq:toric_k_from_covolume}
\end{equation}
The homological formula now emerges directly from rank--nullity:
\begin{align}
 \textbf{k}
 &=\dim\ker(\partial_1)-\dim\operatorname{im}(\partial_2)\notag\\
 &=\dim\bigl(\ker(\partial_1)/\operatorname{im}(\partial_2)\bigr)
 =\dim H_1(\Sigma_g;\mathbb Z_2).
 \label{eq:toric_logical_homology}
\end{align}
For a connected closed cellulation,
$\operatorname{rank}(\partial_1)=|V|-1$ and
$\operatorname{rank}(\partial_2)=|F|-1$.  Hence the same covolume formula,
together with $|V|-|E|+|F|=2-2g$, gives
\begin{equation}
 \textbf{k}=|E|-|V|-|F|+2=2g,
 \qquad \dim\mathcal C=2^{\textbf{k}}=2^{2g}.
 \label{eq:toric_codespace_dimension}
\end{equation}
In other words, the standard homological count is not an independent input:
it is precisely the binary logarithm of the lattice covolume.  In particular,
the ordinary toric code on
$\Sigma_1=\mathbb T^2$ has a four-dimensional codespace and encodes two
logical qubits.
By Eq.~\eqref{eq:examples-css-codewords}, the ideal codewords are the
coset states $\sum_{a\in\operatorname{im}(\delta_0)}\ket{\gamma+a}$, where
$\gamma$ runs over representatives of
$\ker(\partial_2^{\intercal})/\operatorname{im}(\delta_0)$.

\textit{Further properties.}
Section~\ref{sec:PSC_basic} describes Pauli stabilizer codes through
projective modules.
As for the toric code, Eq.~\eqref{eq:examples-css-codewords} gives the
coset states of qudit CSS codes~\cite{albertfaist2026modularquditcss}.
The Pauli distance of qubit codes and the Lee distance of qudit
codes~\cite{guo_lee_codes} appear in Eqs.~\eqref{eq:pauli_hamming_distance}
and~\eqref{eq:qudit_lee_distance}, respectively.
Section~\ref{subsec:we_finite} recovers their weight enumerators.

\subsubsection{GKP codes}
\label{ex:gkp}
The usual GKP
codes~\cite{10.1103/physreva.64.012310,Conrad2022gottesmankitaev,10.1103/prxquantum.3.010335}
use $p$ oscillators, so that $M=\R^p$ and there are no period vectors.
A stabilizer generator matrix is then any basis $T$ of $D$.
The lattice is isotropic exactly when
$\varTheta=T^{\intercal}J_{\mathrm{osc}}T$ is an integer matrix.
Equation~\eqref{eq:examples-dimension} gives
$K=|\det T|=|\operatorname{Pf}\varTheta|$.
The columns of $T_{\mathrm{dual}}=T\varTheta^{-1}$ generate $D^\perp$,
which supplies the logical displacement operators.

A useful family of GKP codes is obtained from
$\varTheta=\lambda J_{\mathrm{osc}}$ with $\lambda\in\mathbb N$.
If $T_0\in\mathrm{Sp}(2p,\R)$, then $T=\sqrt\lambda\,T_0$ satisfies
$T^{\intercal}J_{\mathrm{osc}}T=\varTheta$ and defines a scaled GKP lattice
with logical dimension $K=|\det T|=\lambda^p$.
For a general $T_0$, the stabilizers may need the phases of
Lemma~\ref{lem:phase_triv}.
For one mode, the square-lattice GKP code~\cite{10.1103/physreva.64.012310}
is specified by
\begin{equation}
 \varTheta=\begin{pmatrix}0&2\\-2&0\end{pmatrix},
 \qquad T=\sqrt 2\,I_2,
\end{equation}
whereas the same symplectic Gram matrix with
\begin{equation}
 T=\frac{1}{3^{1/4}}\begin{pmatrix}2&1\\0&\sqrt 3\end{pmatrix}
\end{equation}
gives the hexagonal GKP lattice. These two embeddings illustrate that the
same logical commutation data can be realized by different lattice
geometries.

\textit{The rectangular GKP qubit.}
For $\alpha>0$, choose
\begin{equation}
 A=2\alpha\mathbb Z,
 \qquad
 B=\alpha^{-1}\mathbb Z.
 \label{eq:css_qubit_gkp_lattices}
\end{equation}
The split stabilizer lattice $D=A\times B$ has the stabilizer generator
matrix
\begin{equation}
 T=\begin{pmatrix}2\alpha&0\\0&\alpha^{-1}\end{pmatrix}
 \;\begin{array}{@{}l@{\;}l@{}}
 x&\rowpairbrace{oscillator}\\
 \xi\vphantom{\alpha^{-1}}&
 \end{array}\qquad\quad
 T^{\intercal}J_{\mathrm{osc}}T=\begin{pmatrix}0&2\\-2&0\end{pmatrix}.
 \label{eq:examples-rectangular-gkp-T}
\end{equation}
Thus its covolume, and hence its logical dimension, is $K=2$.
For $\alpha=1/\sqrt2$, this is the square embedding $T=\sqrt2\,I_2$ above.
Other positive values give rectangular GKP lattices related to it by
squeezing.
Here $B^\perp=\alpha\Z$, so Eq.~\eqref{eq:examples-css-codewords} gives the
two ideal codewords
\begin{align}
 \ket{0_{\mathrm{GKP}}}
 &=\sum_{m\in\mathbb Z}\ket{2\alpha m},
 \label{eq:css_qubit_gkp_zero}\\
 \ket{1_{\mathrm{GKP}}}
 &=\sum_{m\in\mathbb Z}\ket{\alpha+2\alpha m}.
 \label{eq:css_qubit_gkp_one}
\end{align}
These combs form the ideal GKP logical
basis~\cite{10.1103/physreva.64.012310}.
The logical matrix
$T_{\mathrm{dual}}=\bigl(\begin{smallmatrix}0&-\alpha\\
1/(2\alpha)&0\end{smallmatrix}\bigr)$ gives the logical operators
$\xl=\mathbf D(-\alpha,0)=\mathbf T_{-\alpha}$ and
$\zl=\mathbf D(0,\tfrac1{2\alpha})=\mathbf M_{1/(2\alpha)}$.
They act as $\xl\ket{j_{\mathrm{GKP}}}=\ket{(j+1)_{\mathrm{GKP}}}$ and
$\zl\ket{j_{\mathrm{GKP}}}=(-1)^j\ket{j_{\mathrm{GKP}}}$, with $j$ read
modulo two.

\textit{Clifford gates.}
Lattice symmetries give logical Clifford
gates~\cite[Secs.~4.2 and~6]{fiber-bundle-ft}.
An integer matrix $W$ with $W^{\intercal}\varTheta W=\varTheta$ yields the
symplectic matrix $S_W:=TWT^{-1}$, which maps $D$ onto itself.
A Gaussian unitary implementing $S_W$, corrected by a displacement when
necessary, preserves the codespace and acts as a logical Clifford gate
(Theorem~\ref{thm:corrected_clifford_implementer}).
On logical labels $T_{\mathrm{dual}}\ell$, this gate acts as
$[\ell]\mapsto[W^{-\intercal}\ell]$, as detailed in
Sec.~\ref{subsec:clifford_examples}.

\textit{Further properties.}
Multimode GKP codes on two-dimensional arrays of modes can realize the
anyons of $\U(1)_{2n}\times\U(1)_{-2m}$ Chern--Simons
theory~\cite{magdalena2024topological}.
These Chern--Simons GKP codes store $4nm$ states on a torus.
Section~\ref{sec:distance_error_correction} recovers the Euclidean GKP
distance and closest-vector decoding,
Eqs.~\eqref{eq:gkp_euclidean_distance}
and~\eqref{eq:gkp_decoder_closest_vector}.
Section~\ref{sec:weight_enumerators} relates the enumerators of $D$ and
$D^\perp$ through theta series (Sec.~\ref{subsec:we_lattice}).
The Gaussian regularization of Sec.~\ref{sec:finite_energy} applies to the
codewords~\eqref{eq:css_qubit_gkp_zero}--\eqref{eq:css_qubit_gkp_one}.
For GKP codes, Theorem~\ref{thm:fe_asymptotic_isometry} recovers the
asymptotic isometry of Mayrand and Royer~\cite{mayrand2026complex}.

\subsubsection{Rotor GKP codes}
\label{ex:rotor-gkp}
The oscillator GKP codes of Sec.~\ref{ex:gkp} have a planar-rotor
counterpart, already described in the original GKP construction
\cite[Sec.~IV, Eqs.~(19)--(20)]{10.1103/physreva.64.012310}.
In this framework, the relevant phase space is
$\Z^q\times\T^q$, replacing $\R^p\times\widehat{\R^p}$.
We focus on CSS lattices and their codewords and end with an
intrinsically non-CSS two-rotor code.

\textit{Rotor GKP codes as CSS lattices.}
For $q$ planar rotors, take
\begin{equation}
 M=\Z^q,
 \qquad
 \widehat M=\T^q,
 \qquad
 G=\Z^q\times\T^q,
 \label{eq:rotor-phase-space}
\end{equation}
with $\T=\R/\Z$ and
$\mathcal H=\ell^2(\Z^q)\cong L^2(\T^q)$.
The displacements $\mathbf T_n$ and
$\mathbf M_\theta=e^{2\pi i\theta\cdot\widehat{\mathbf L}}$ act on the
angular-momentum kets as in the rotor row of the table in
Sec.~\ref{subsec:examples_notation}.
A split rotor GKP lattice has the form
\begin{equation}
 D=A\times B,
 \qquad A\subseteq\Z^q,
 \qquad B\subseteq A^\perp\subseteq\T^q,
 \label{eq:rotor-css-general}
\end{equation}
where $A$ and $B$ are subgroups, $A$ has finite index in $\Z^q$,
$B$ is finite, and $A^\perp\subseteq\T^q$ is the set of phases that act
trivially on $A$.
Thus $D$ is a lattice, and as a CSS lattice it needs no stabilizer phases.
Let $Q:=B^\perp/A$, so that $K=|Q|$ by Sec.~\ref{subsec:examples_notation}.
A displacement $\mathbf D(n,\theta)$ commutes with every shift in $A$
exactly when $\theta\in A^\perp$, and with every phase in $B$ exactly when
$n\in B^\perp$.
Hence $D^\perp=B^\perp\times A^\perp$.
Pontryagin duality identifies $A^\perp/B$ with the dual group $\widehat Q$.
In summary,
\begin{equation}
 \begin{aligned}
 Q&=B^\perp/A,\qquad K=|Q|,\\
 D^\perp&=B^\perp\times A^\perp,\qquad
 D^\perp/D\cong Q\times\widehat Q.
 \end{aligned}
 \label{eq:rotor-css-logical-group}
\end{equation}
In particular, representatives $r_0,\ldots,r_{K-1}\in B^\perp$ of the $K$
cosets give the ideal logical basis
\begin{equation}
 \ket{j_L}=\sum_{a\in A}\ket{r_j+a}_{\mathrm{rot}},
 \qquad j=0,\ldots,K-1.
 \label{eq:rotor-css-general-codewords}
\end{equation}
These are the combs of Eq.~\eqref{eq:examples-css-codewords}.
When $A$ is infinite, they are not square-summable states in
$\ell^2(\Z^q)$.

\textit{From oscillator GKP to rotor GKP.}
The connection with the oscillator GKP codes of Sec.~\ref{ex:gkp} can be
made explicit for any
split oscillator lattice
$D_{\mathrm{osc}}=A_{\mathrm{osc}}\times B_{\mathrm{osc}}$, where
$A_{\mathrm{osc}}\subseteq\R^p$ and
$B_{\mathrm{osc}}\subseteq\widehat{\R^p}$ are full lattices with
$B_{\mathrm{osc}}\subseteq A_{\mathrm{osc}}^\perp$.
Choose a finite-index sublattice $B_0\subseteq B_{\mathrm{osc}}$ and set
\begin{equation}
 \begin{aligned}
 M_{\mathrm{rot}}&:=B_0^\perp\subseteq\R^p,
 \qquad M_{\mathrm{rot}}\cong\Z^p,\\
 \widehat{M_{\mathrm{rot}}}&\cong\widehat{\R^p}/B_0\cong\T^p.
 \end{aligned}
 \label{eq:oscillator-to-rotor-configuration}
\end{equation}
Here $B_0^\perp$ is taken in the oscillator configuration group, so
the construction produces $p$ rotors ($q=p$).  The dual identification
uses the pairing $\langle x,[s]\rangle=\langle x,s\rangle$, which is
well defined because $x\in B_0^\perp$.
The modulation subgroup descends to the finite group
$B_{\mathrm{rot}}:=B_{\mathrm{osc}}/B_0$, and
\begin{equation}
 \begin{aligned}
 A_{\mathrm{osc}}&\subseteq B_{\mathrm{osc}}^\perp
                    \subseteq M_{\mathrm{rot}},\\
 D_{\mathrm{rot}}&:=A_{\mathrm{osc}}\times B_{\mathrm{rot}}
       \subseteq M_{\mathrm{rot}}\times\widehat{M_{\mathrm{rot}}}.
 \end{aligned}
 \label{eq:oscillator-to-rotor-lattice}
\end{equation}
Since $A_{\mathrm{osc}}$ has finite index in $M_{\mathrm{rot}}$ and
$B_{\mathrm{rot}}$ is finite, this is a rotor CSS lattice.
Taking the annihilator of
$B_{\mathrm{rot}}$ inside $M_{\mathrm{rot}}$ gives
\begin{equation}
 B_{\mathrm{rot}}^\perp=B_{\mathrm{osc}}^\perp,
 \qquad
 \frac{B_{\mathrm{rot}}^\perp}{A_{\mathrm{osc}}}
 =\frac{B_{\mathrm{osc}}^\perp}{A_{\mathrm{osc}}}.
 \label{eq:oscillator-to-rotor-logical-quotient}
\end{equation}
Thus discretizing the configuration coordinate and compactifying its
Pontryagin-dual coordinate preserves $Q$ and the logical dimension $K$.
This gives a lattice formulation of the connection; the original GKP
paper obtains the rotor code as a large-dimensional qudit limit.

\textit{The single-rotor CSS family and its codewords.}
Generalize the rectangular oscillator GKP qubit of
Eq.~\eqref{eq:css_qubit_gkp_lattices} to any positive integer
logical dimension $K$ by taking
\begin{equation}
 A_{\mathrm{osc}}=K\alpha\Z,
 \qquad B_{\mathrm{osc}}=\alpha^{-1}\Z,
 \qquad \alpha>0.
 \label{eq:rotor-parent-gkp-lattice}
\end{equation}
Its logical quotient is $\alpha\Z/(K\alpha\Z)\cong\Z_K$.
For an integer $N\geq1$, choose $B_0=N\alpha^{-1}\Z$.
Then $M_{\mathrm{rot}}=(\alpha/N)\Z$, and the identification
$(\alpha/N)n\leftrightarrow n$ puts the resulting rotor lattice in the
standard coordinates
\begin{equation}
 A=KN\Z,
 \qquad B=\tfrac1N\Z/\Z,
 \qquad D=A\times B\subseteq\Z\times\T.
 \label{eq:rotor-css-lattice}
\end{equation}
Its stabilizer generator matrix is
\begin{equation}
 T=\begin{pmatrix}KN&0\\0&\tfrac1N\end{pmatrix}
 \;\begin{array}{@{}l@{\;}l@{}}
 n&\rowpairbrace{rotor}\\
 \theta\vphantom{\tfrac1N}&
 \end{array}\qquad\quad
 \varTheta=\begin{pmatrix}0&K\\-K&0\end{pmatrix},
 \label{eq:rotor-css-full-lift}
\end{equation}
because the angle period $(0,1)=N\,(0,\tfrac1N)$ is already an integer
combination of the columns.
The symplectic rescaling $\operatorname{diag}(N/\alpha,\alpha/N)$ maps the
parent matrix $\operatorname{diag}(K\alpha,\alpha^{-1})$ to this $T$, and
the generator of $B_0$ becomes the angle period.
The stabilizer generators are
\begin{equation}
 \mathbf D(KN,0)=\mathbf T_{KN},
 \qquad
 \mathbf D(0,\tfrac1N)=\mathbf M_{1/N}
 =e^{2\pi i\widehat{\mathbf L}/N}.
 \label{eq:rotor-css-generators}
\end{equation}
The $+1$ modulation constraint restricts angular momentum to $N\Z$,
and the translation makes the amplitudes periodic with period $KN$.
Consequently,
\begin{equation}
 B^\perp=N\Z,
 \qquad Q=N\Z/(KN\Z)\cong\Z_K.
 \label{eq:rotor-css-dimension}
\end{equation}
This agrees with $K=|\operatorname{Pf}\varTheta|$ in
Eq.~\eqref{eq:rotor-css-full-lift}.
The $K$ logical sectors have representatives $r_j=jN$, giving
\begin{equation}
 \ket{j_L}=\sum_{k\in\Z}\ket{N(j+Kk)}_{\mathrm{rot}},
 \quad j=0,\ldots,K-1.
 \label{eq:rotor-css-codewords}
\end{equation}
For $K=2$, these are the rotor counterparts of the two oscillator combs
in Eqs.~\eqref{eq:css_qubit_gkp_zero}--\eqref{eq:css_qubit_gkp_one}.
These $K$ combs have disjoint supports and exhaust the codespace by
Eq.~\eqref{eq:examples-css-codewords}.  The symplectic dual and
logical displacement group are
\begin{equation}
 \begin{aligned}
 D^\perp&=N\Z\times\bigl(\tfrac1{KN}\Z/\Z\bigr),\\
 D^\perp/D&\cong\Z_K\times\Z_K.
 \end{aligned}
 \label{eq:rotor-css-dual}
\end{equation}
Logical generators may therefore be chosen as
$\mathbf D(N,0)=\mathbf T_N$ and
$\mathbf D(0,1/(KN))=\mathbf M_{1/(KN)}$, with action
\begin{equation}
 \begin{aligned}
 \mathbf T_N\ket{j_L}&=\ket{(j-1)_L},\\
 \mathbf M_{1/(KN)}\ket{j_L}&=e^{2\pi ij/K}\ket{j_L},
 \end{aligned}
 \label{eq:rotor-css-logical-action}
\end{equation}
where $j$ is read modulo $K$.
The logical matrix
$T_{\mathrm{dual}}=\bigl(\begin{smallmatrix}0&-N\\1/(KN)&0
\end{smallmatrix}\bigr)$ contains these generators, up to inversion, as
its columns $(0,\tfrac1{KN})$ and $(-N,0)$.
Setting $m=KN$ and $n=K$ in the notation of
\cite[Sec.~IV, Eqs.~(19)--(20)]{10.1103/physreva.64.012310}
gives their stabilizers as $\mathbf T_{-KN}$ and $\mathbf M_{-1/N}$
in our convention.  Their logical $\overline Z$ and $\overline X$
are respectively $\mathbf T_{-N}$ and $\mathbf M_{-1/(KN)}$.
These are the inverses of the generators above and generate the same
stabilizer and logical displacement groups.
Section~\ref{subsec:fe_examples} regularizes these combs into
finite-energy rotor states, Eq.~\eqref{eq:fe_rotor_finite_energy_codeword}.
For $N=1$, the code stores a qudit of dimension $K$ in the angular
momentum modulo $K$.
Reference~\cite{luo2026hybridqubitrotor} uses finite-energy versions of
this code, in its angular-momentum and angle bases, as registers for rotor
quantum Fourier transforms.

\textit{An intrinsically non-CSS rotor code.}
We end with a two-rotor code that no generalized Clifford operator maps
to a CSS code.
Its stabilizer generators are
\begin{equation}
 \mathbf S_1=\mathbf T_2\otimes\mathbf M_{1/2},
 \qquad
 \mathbf S_2=I\otimes\mathbf T_2.
 \label{eq:rotor-noncss-stabilizers}
\end{equation}
They commute because the parity $\mathbf M_{1/2}=(-1)^{\widehat{\mathbf L}}$
of the second rotor is invariant under shifts by two.
Their labels generate a lattice $D\cong\Z^2$ without torsion.
Adjoining the two angle periods gives
\begin{equation}
 T=\begin{pmatrix}2&0&0&0\\0&2&0&0\\0&0&1&0\\\tfrac12&0&0&1\end{pmatrix}
 \;\begin{array}{@{}l@{\;}l@{}}
 n_1&\rowquadbrace{two rotors}\\
 n_2&\\
 \theta_1&\\
 \theta_2\vphantom{\tfrac12}&
 \end{array}\qquad\quad
 \varTheta=\begin{pmatrix}0&-1&2&0\\1&0&0&2\\-2&0&0&0\\0&-2&0&0
 \end{pmatrix}.
 \label{eq:rotor-noncss-T}
\end{equation}
Thus $K=|\operatorname{Pf}\varTheta|=4$.
The Smith normal form $\operatorname{diag}(1,1,4,4)$ of $\varTheta$ shows
that the code stores one ququart, with $D^\perp/D\cong\Z_4^2$.
The ideal codewords are
\begin{equation}
 \ket{r,s}_L=\sum_{a,b\in\Z}(-1)^{as}\,
 \ket{r+2a}_{\mathrm{rot}}\otimes\ket{s+2b}_{\mathrm{rot}},
 \qquad r,s\in\{0,1\}.
 \label{eq:rotor-noncss-codewords}
\end{equation}
The signs $(-1)^{as}$ absorb the phase $(-1)^s$ that $\mathbf S_1$
produces on the second rotor.
The logical displacements $\mathbf T_1\otimes I$ and
$\mathbf M_{1/4}\otimes\mathbf T_1$ have order four modulo stabilizers
and commute up to the phase $i$.

To show that no generalized Clifford operator makes the code CSS, note
that these operators act on labels by symplectic automorphisms of $G$
(Proposition~\ref{prop:symplectic_implementer}).
Each such automorphism preserves the identity component
$\{0\}\times\T^2$ of $G$, which consists of the pure phases.
It therefore acts on angular momenta by some $P\in\mathrm{GL}(2,\Z)$, in
agreement with the rotor Clifford group of Ref.~\cite{rotorclifford}.
The image of $D$ thus has the angular-momentum projection
$P(2\Z^2)=2\Z^2$.
If this image were a CSS lattice $A\times B$, it would be torsion free
like $D$, so its finite phase part $B$ would be trivial.
It would then equal $2\Z^2\times\{0\}$, whose logical group is $\Z_2^4$
by Eq.~\eqref{eq:rotor-css-logical-group}.
This contradicts $D^\perp/D\cong\Z_4^2$, which symplectic automorphisms
preserve.
More generally, the stabilizers $\mathbf T_p\otimes\mathbf M_{1/p}$ and
$I\otimes\mathbf T_p$ with an integer $p\geq2$ give a torsion-free code
with $D^\perp/D\cong\Z_{p^2}^2$.
The same argument applies, since the CSS candidate $p\Z^2\times\{0\}$ has
logical group $\Z_p^4$.

The code also couples the two rotors intrinsically: no symplectic
automorphism of $G$ maps $D$ onto a direct sum of two single-rotor
lattices (Sec.~\ref{subsec:examples_coupling}).
Indeed, a single-rotor lattice is generated by its finite phase part and
one label $(a,\varphi)$ with $a\neq0$.
The symplectic shear $(n,\theta)\mapsto(n,\theta-\varphi n/a)$ fixes the
phase part and maps $(a,\varphi)$ to $(a,0)$.
Every symplectically decomposable two-rotor lattice is therefore
symplectically equivalent to a CSS lattice, unlike $D$.

This obstruction is specific to rotors.
The CSS lattice $(\Z\times4\Z)\times\{0\}$, a trivial rotor next to a
rotor ququart, also has $K=4$ and no torsion.
Its Gram matrix and $\varTheta$ have the same alternating Smith normal
form (see after Lemma~\ref{lem:full_lift_pfaffian}), so the two full lifts
have bases with equal Gram matrices.
For oscillators, bases $T$ and $T'$ with equal Gram matrices are related
by the symplectic matrix $T'T^{-1}$.
Hence the GKP codes obtained from the two lattices by
Remark~\ref{rem:full_lift_gkp_concatenation} are symplectically
equivalent.
Rotor Clifford operations cannot relate the two rotor codes, because they
never add angles to angular momenta.

\subsubsection{Homological rotor codes}
\label{ex:homological-rotor}
Homological rotor codes fit into our framework after one reduction step,
provided that their homology is finite.
The CSS rotor codes of Vuillot, Ciani, and Terhal
\cite{vuillotHomologicalQuantumRotor2024} use integer matrices
$H_X\in\Z^{r_X\times q}$ and $H_Z\in\Z^{r_Z\times q}$ satisfying
$H_XH_Z^{\intercal}=0$.  Their stabilizer subgroup is
$A_C\times B_C\subseteq\Z^q\times\T^q$, where
\begin{equation}
 A_C=H_X^{\intercal}\Z^{r_X},
 \qquad B_C=H_Z^{\intercal}\T^{r_Z}.
 \label{eq:homological-rotor-stabilizers}
\end{equation}
The authors exclude choosing nontrivial finite subgroups as the modulation
stabilizer group, corresponding to modular angular-momentum measurements
\cite[p.~6, remarks after Eq.~(24)]{vuillotHomologicalQuantumRotor2024}.
Here $B_C$ is connected.
If $\dim B_C>0$, then $A_C\times B_C$ is not discrete and hence not a
lattice.
Imposing its modulation constraints restricts the physical
space to $\ell^2(M_C)$, where $M_C:=B_C^\perp=\ker H_Z$.
The remaining stabilizers and logical labels are
\begin{equation}
 \begin{aligned}
 D_{\mathrm{red}}&=A_C\times\{0\}
       \subseteq M_C\times\widehat{M_C},\\
 Q&=M_C/A_C=\ker H_Z/\operatorname{im}H_X^{\intercal},
 \end{aligned}
 \label{eq:homological-rotor-reduction}
\end{equation}
with $Q$ the first homology of the integer complex with boundary maps
$\partial_2=H_X^{\intercal}$ and $\partial_1=H_Z$.
Since $M_C$ is free abelian, $D_{\mathrm{red}}$ is a lattice
exactly when $Q$ is finite.
Then $K=|Q|$, and the codewords are the
combs of Eq.~\eqref{eq:rotor-css-general-codewords} on $M_C$
with $A=A_C$ and representatives in $M_C/A_C$.
A nonzero free part of $Q$ describes logical rotors and prevents
co-compactness.  The torsion encoding logical qudits therefore lies in
$Q$, whereas the lattice $D_{\mathrm{red}}$ in
Eq.~\eqref{eq:homological-rotor-reduction} is torsion free.

\textit{A two-rotor instance.}
The two-rotor code behind the $0$--$\pi$
qubit~\cite[Sec.~5.3]{vuillotHomologicalQuantumRotor2024} illustrates
the finite case.
It has $H_X=(2\;\;{-2})$ and $H_Z=(1\;\;1)$, so the modulation
constraint imposes $n_1+n_2=0$.
Thus $M_C=\{(n,-n):n\in\Z\}\cong\Z$, and the remaining stabilizer
$\mathbf T_2\otimes\mathbf T_{-2}$ shifts along $M_C$ by two units.
Hence $A_C=\langle(2,-2)\rangle$ and $Q=M_C/A_C\cong\Z_2$.
In the coordinate $n$ of $M_C$, the reduced lattice $2\Z\times\{0\}$ has
the stabilizer generator matrix
\begin{equation}
 T=\begin{pmatrix}2&0\\0&1\end{pmatrix}
 \;\begin{array}{@{}l@{\;}l@{}}
 n&\rowpairbrace{rotor}\\
 \theta&
 \end{array}\qquad\quad
 \varTheta=\begin{pmatrix}0&2\\-2&0\end{pmatrix},
 \label{eq:homological-rotor-zero-pi}
\end{equation}
so $K=2$ and the code encodes a qubit.
Its second column is the angle period, which the reduced lattice needs
explicitly.
The ideal codewords are
$\sum_{m\in\Z}\ket{2m}_{\mathrm{rot}}\otimes\ket{-2m}_{\mathrm{rot}}$ and
$\sum_{m\in\Z}\ket{2m+1}_{\mathrm{rot}}\otimes\ket{-2m-1}_{\mathrm{rot}}$.
The logical operators $\xl=\mathbf T_{-1}\otimes\mathbf T_1$ and
$\zl=(-1)^{\widehat{\mathbf L}_1}$ exchange and distinguish these combs.
The single-rotor code of the real projective
plane~\cite{vuillotHomologicalQuantumRotor2024}, with $H_X=2$ and
$H_Z=0$, needs no reduction.
It is the lattice $D=2\Z\times\{0\}$ of Sec.~\ref{ex:rotor-gkp} with
$K=2$ and $N=1$.

\textit{Modular constraints.}
Our framework also allows the finite modulation subgroups excluded in
Ref.~\cite{vuillotHomologicalQuantumRotor2024}.
For instance, the stabilizer $(-1)^{\widehat{\mathbf L}_1}$ imposes the
parity constraint that the first angular momentum is even.
Let $B_{\mathrm{mod}}\subseteq A_C^\perp$ be a finite group of such phases,
and let $B_{\mathrm{mod}}|_{M_C}$ denote their restrictions to $M_C$.
Then $D_{\mathrm{red}}=A_C\times B_{\mathrm{mod}}|_{M_C}$ is a split lattice
of the form~\eqref{eq:rotor-css-general} on $M_C$ whenever $Q$ is finite.
Equation~\eqref{eq:rotor-css-logical-group} then gives
$K=|(\ker H_Z\cap B_{\mathrm{mod}}^\perp)/\operatorname{im}H_X^{\intercal}|$.
Bosonic tiger codes~\cite{xu2024lettingtigercagebosonic} are built from
the same pairs $(H_X,H_Z)$, with $H_Z$ constraining occupation numbers
instead of angular momenta.
The parity-type number constraints added in
Ref.~\cite{poirson2026algebraic}, such as that of the four-legged cat
code~\cite{leghtas2013hardware,mirrahimi2014dynamically}, are the bosonic
analogue of $B_{\mathrm{mod}}$.

\textit{Oscillator realizations.}
Reference~\cite{magdalena2024topological} realizes homological rotor
codes on oscillators by condensing bosons in an $\R$ gauge theory.
Each oscillator carries one GKP stabilizer, which encodes a rotor as in
Remark~\ref{rem:full_lift_gkp_concatenation}.

\subsection{Hybrid stabilizer codes over two factor types}
\label{subsec:examples_two}

We now couple two types of systems through one stabilizer lattice.
The first example is known, and the other three are new.

\subsubsection{LCA codes}
\label{ex:lca}
The LCA codes of Ref.~\cite{yr64-ypxt} combine $p$ oscillators with
qudits, so that $M=\R^p\times\Z_{\dia c}$.
Their stabilizer lattices are graph lattices: each is the graph of a
homomorphism from a full-rank lattice in $\R^{2p}$ to the finite phase
space of the qudits (Appendix~\ref{app:graph_lattices}).
For these lattices $D\cong\Z^{2p}$, so their syndrome spaces are tori
$\T^{2p}$ (Sec.~\ref{sec:LSC_ncg}).
When the finite factor is absent, LCA codes reduce to GKP
codes~\cite{10.1103/physreva.64.012310}.

The simplest genuinely hybrid example has
$M=\R\times\Z_c$~\cite[Appendix~F(2)]{yr64-ypxt}.
For a much broader range of examples, see~\cite{yr64-ypxt}.
Fix integers $c>0$, $d$, and $\theta$ with $\gcd(c,d)=1$ and
$c\theta+d\neq0$.
Here $\theta$ and $d$ are integer parameters, not a rotor angle or a qudit
dimension.
The code has the two stabilizer generators
\begin{equation}
 \mathbf S_1=\mathbf D(1,0\,|\,{-d},0)=\mathbf T_1\otimes X^{d},
 \qquad
 \mathbf S_2=\mathbf D(0,\theta+\tfrac dc\,|\,0,1)
 =e^{2\pi i(\theta+d/c)\hat x}\otimes Z,
 \label{eq:examples-lca-stabilizers}
\end{equation}
which act as $(\mathbf S_1\psi)(x,j)=\psi(x+1,j-d)$ and
$(\mathbf S_2\psi)(x,j)=e^{2\pi i[(\theta+d/c)x+j/c]}\psi(x,j)$.
Their labels are the columns of the graph-lattice embedding of
Ref.~\cite{yr64-ypxt}, with oscillator part
$T_{\cv}=\operatorname{diag}(1,\theta+d/c)$ and qudit part
$T_{\mathrm{fin}}=\operatorname{diag}(-d,1)$.
Any other oscillator part with the same Gram matrix
$T_{\cv}^{\intercal}J_{\mathrm{osc}}T_{\cv}$ also works, possibly with
the stabilizer phases of Lemma~\ref{lem:phase_triv}.
It gives the same Gram matrix and code dimension below.
Adjoining the two qudit periods $c\,e_a$ and $c\,e_b$ gives the square
stabilizer generator matrix
\begin{equation}
 T=\begin{pmatrix}T_{\cv}&0\\T_{\mathrm{fin}}&c\,I_2\end{pmatrix}
 =\begin{pmatrix}1&0&0&0\\0&\theta+\tfrac dc&0&0\\
 -d&0&c&0\\0&1&0&c\end{pmatrix}
 \;\begin{array}{@{}l@{\;}l@{}}
 x&\rowpairbrace{oscillator}\\
 \xi\vphantom{\tfrac dc}&\\
 a&\rowpairbrace{qudit.}\\
 b&
 \end{array}
 \label{eq:examples-lca-T}
\end{equation}
Its Gram matrix is
\begin{equation}
 \varTheta=\begin{pmatrix}0&\theta&0&-d\\-\theta&0&-1&0\\
 0&1&0&c\\d&0&-c&0\end{pmatrix},
 \qquad
 \operatorname{Pf}\varTheta=c\theta+d,
 \label{eq:examples-lca-Theta}
\end{equation}
so a direct computation gives $K=|c\theta+d|$.
Its upper-left block is the Gram matrix of the graph lattice in
Ref.~\cite{yr64-ypxt}.
The Pfaffian $\theta$ of this block generally differs from $K$, so the
qudit period columns cannot be dropped.
The lattice is CSS, with shift part generated by $(1,-d)\in\R\times\Z_c$
and phase part generated by $(\theta+d/c,1)$.
Its ideal codewords are
\begin{equation}
 \ket{\ell_L}=\sum_{s\in\ell+K\Z}
 \ket{s/(c\theta+d)}_{\mathrm{osc}}\otimes\ket{-s}_{\mathrm{fin}},
 \qquad \ell=0,\ldots,K-1,
 \label{eq:examples-lca-codewords}
\end{equation}
with the qudit label read modulo $c$.
The phase $\mathbf S_2$ is trivial on their supports, $\mathbf S_1$
reindexes the sum by $s\mapsto s-(c\theta+d)$, and the $K$ supports are
disjoint.
The logical matrix yields the logical operators
$\xl=\mathbf D(-\tfrac1{c\theta+d},0\,|\,1,0)$ and
$\zl=\mathbf D(0,1\,|\,0,0)=e^{2\pi i\hat x}$.
The first shifts $s$ by one and hence maps $\ket{\ell_L}$ to
$\ket{(\ell+1)_L}$.
The second multiplies $\ket{\ell_L}$ by $e^{2\pi i\ell/(c\theta+d)}$.
As in the purely continuous case, the symplectic dual $D^\perp$
determines the logical displacement operators.
Thus the graph-lattice construction extends ordinary GKP lattices to codes
with coupled oscillator and finite-dimensional degrees of freedom.

The lattice also has hybrid Clifford symmetries.
For example, let $d\neq0$, and let $W\in\mathrm{SL}(2,\Z)$ have a
lower-left entry divisible by $d$.
Then $W$ yields a symplectic map that acts separately on the oscillator
and on the qudit and preserves $D$.
Its implementer is a product of an oscillator Gaussian unitary and a
qudit Clifford operator, followed by a displacement correction
(Sec.~\ref{subsec:clifford_examples}).
Appendix~\ref{app:graph_lattices} reviews the graph-lattice construction
of general LCA codes, and Theorem~\ref{thm:main_dual} there describes
their symplectic duals~\cite{yr64-ypxt}.

\subsubsection{Torsion and intrinsic coupling}
\label{subsec:examples_coupling}
The next three examples are new.
The first couples twelve oscillators to twelve qubits, and the other two
couple a rotor to another system.
Their stabilizer lattices can have torsion, i.e., nonzero displacements of
finite order.
The qubit stabilizer $Z^{\otimes12}$ of Sec.~\ref{ex:hybrid-golay} and
the rotor phase $e^{2\pi i\widehat{\mathbf L}/d}$ of
Sec.~\ref{ex:rotor-qudit} are such displacements.
In the oscillator--rotor family below, torsion appears when its integer
parameter $b$ is at least two.
The LCA construction of Ref.~\cite{yr64-ypxt} excludes lattices with
torsion, since its lattices are torsion-free and project isomorphically
onto oscillator phase space.
Lattices on systems with rotors lie outside that construction for a
second reason.
With $q\geq1$ rotors, the free part of every lattice in $G$ has rank
$2p+q$, which exceeds the rank $2p$ of a lattice in oscillator phase space.
To our knowledge, no earlier work constructs oscillator--rotor or
rotor--qudit stabilizer codes.
The normalizer formalism of Ref.~\cite{bermejo2016normalizer} admits
rotor--qudit stabilizer groups, but for infinite groups it treats
stabilizer states rather than codes.

Hybrid codes raise a structural question: is the coupling intrinsic, or
can a change of frame remove it?
Generalized Clifford operators implement the symplectic automorphisms of
$G$ (Proposition~\ref{prop:symplectic_implementer}).
These are the continuous automorphisms of $G$ that preserve the
commutation phases of Eq.~\eqref{eq:examples-commutation}.
They form the group $\operatorname{Sp}(G,\rho)$ of
Eq.~\eqref{eq:ambient_symplectic_automorphism}, where $\rho$ denotes the
commutation phase.
Two lattices are \emph{symplectically equivalent} if some
$S\in\operatorname{Sp}(G,\rho)$ maps one onto the other.
For a phase space $G=G_1\oplus G_2$ of two systems, an isotropic lattice
$D$ is \emph{symplectically decomposable} if some
$S\in\operatorname{Sp}(G,\rho)$ maps $D$ onto a direct sum $D_1\oplus D_2$
with $D_i\subseteq G_i$.
Such a code can be transformed into independent codes on the two systems
by a symplectic change of coordinates.
Otherwise, $D$ is \emph{indecomposable}, and we say that it couples the
two systems \emph{intrinsically}.

For oscillators and qudits, the question is trivial.
Continuous homomorphisms between $\R^{2p}$ and a finite group vanish, so
every $S\in\operatorname{Sp}(G,\rho)$ acts separately on the two factors.
An oscillator--qudit lattice is therefore decomposable only if it is
already a direct sum.
The torsion-free lattice of Sec.~\ref{ex:lca} is not a direct sum
when $c\geq2$.
In a direct sum, its qudit part would be a finite subgroup of a
torsion-free lattice, hence trivial.
The qudit would then be unencoded, which forces $c\mid K$.
However, $K=|c\theta+d|$ is coprime to $c$.
A similar count shows that the hybrid Golay code below, whose lattice has
torsion, is not a direct sum either.
The question becomes nontrivial when symplectic maps mix the factors.
For a rotor and a qudit of dimension $d$, the controlled shift
$\ket{n}_{\mathrm{rot}}\otimes\ket{j}_{\mathrm{fin}}\mapsto
\ket{n}_{\mathrm{rot}}\otimes\ket{j+n}_{\mathrm{fin}}$ implements such a
map.
On labels, it acts as
$(n,\theta\,|\,a,b)\mapsto(n,\theta-b/d\,|\,a+n,b)$
(Remark~\ref{rem:rotor-qudit-symplectic-maps}).
For qubits, Ref.~\cite{luo2026hybridqubitrotor} classifies all symplectic
automorphisms of this phase space.
In particular, angular-momentum parity can control qubit Paulis, while no
qubit-controlled shift of the angular momentum is Clifford.
For an oscillator and a rotor, the shears of
Eq.~\eqref{eq:oscillator-rotor-shear} in
Appendix~\ref{appx:proof_oscillator_rotor_normal_form} play the same role.
The oscillator--rotor and rotor--qudit examples below nevertheless
couple their systems intrinsically.

\subsubsection{Hybrid Golay code}
\label{ex:hybrid-golay}
Our first new two-type example encodes one logical qubit and couples
twelve oscillators intrinsically to twelve qubits.
It combines the binary and quaternary Golay
codes~\cite{golay1949notes,bonnecaze1995quaternary}.
Its stabilizer lattice has torsion, so the code lies outside the LCA
construction of Sec.~\ref{ex:lca}.
We first introduce the underlying classical objects and then build the
quantum code from them.

\textit{Partially lifted Golay codes.}
The extended Golay code $\mathcal G_{24}\subset\Z_2^{24}$ is a
self-dual binary code of length $24$, dimension $12$, and minimum
distance $8$~\cite{golay1949notes,conway2013sphere}.
Its codewords of weight $12$ are called dodecads, and the complement of
a dodecad is again a dodecad.
Fix a dodecad $I$ and its complement $J$, and write each codeword as
$c=(c_I,c_J)$ with $c_I,c_J\in\Z_2^{12}$.
We define the partially lifted binary Golay code
\begin{equation}
 \mathcal P:=\Bigl\{\bigl(\tfrac{m}{\sqrt2},\,u\bigr)\;:\;
 m\in\Z^{12},\ u\in\Z_2^{12},\ (m\bmod2,\,u)\in\mathcal G_{24}\Bigr\}
 \subset\R^{12}\times\Z_2^{12}.
 \label{eq:golay-partial-lift}
\end{equation}
It applies Construction~A~\cite{conway2013sphere} to the coordinates in
$I$ and keeps the coordinates in $J$ binary.
Lifting all $24$ coordinates instead gives Construction~A of
$\mathcal G_{24}$, which after rescaling by $1/\sqrt2$ is the Niemeier
lattice $N(A_1^{24})$ with root system $A_1^{24}$~\cite{conway2013sphere}.
We pair elements of $\R^{12}\times\Z_2^{12}$ through
$\langle(x,u),(y,u')\rangle:=x\cdot y+\tfrac12u\cdot u'$ modulo one.
Because $\mathcal G_{24}$ is self-dual, $\mathcal P$ is self-dual under
this pairing, i.e., it consists exactly of the vectors that pair
integrally with all of $\mathcal P$.
Two further facts about $\mathcal G_{24}$ shape $\mathcal P$.
First, the restrictions $c_I$ form the even-weight code of length $12$,
so $\sum_im_i$ is even on $\mathcal P$.
Second, the only codewords supported in $J$ are $0$ and the dodecad $J$
itself.

The quaternary Golay code $\mathcal Q_{24}\subset\Z_4^{24}$ is the
self-dual Hensel lift of $\mathcal G_{24}$ to $\Z_4$~\cite{bonnecaze1995quaternary}.
Its codewords reduce modulo two to codewords of $\mathcal G_{24}$.
Their Euclidean weights, i.e., the sums of the squares of their entries
taken in $\{0,\pm1,2\}$, are multiples of eight.
Construction~$A_4$ turns $\mathcal Q_{24}$ into the lattice
$A_4(\mathcal Q_{24})\subset\Z^{24}$ of integer vectors that reduce modulo
four to its codewords.
The rescaled lattice $\Lambda_{24}=A_4(\mathcal Q_{24})/2$ is the Leech
lattice~\cite{bonnecaze1995quaternary}.
We define the partially lifted quaternary Golay code
\begin{equation}
 \mathcal P_0:=\Bigl\{\bigl(\tfrac{m}{\sqrt2},\,u\bigr)\;:\;
 k_I\equiv m\ (\mathrm{mod}\ 4)\ \text{and}\ k_J\equiv u\
 (\mathrm{mod}\ 2)\ \text{for some}\ k\in\mathcal Q_{24}\Bigr\}.
 \label{eq:golay-partial-lift-quaternary}
\end{equation}
It applies Construction~$A_4$ to the coordinates in $I$ and reduces
those in $J$ modulo two.
Equivalently, $\mathcal P_0$ and $\mathcal P$ are the images of
$\Lambda_{24}$ and $N(A_1^{24})$ under
$v\mapsto(\sqrt2v_I,\,2v_J\bmod2)$ and
$w\mapsto(w_I,\,\sqrt2w_J\bmod2)$, respectively.
Since $\mathcal Q_{24}$ reduces to $\mathcal G_{24}$, $\mathcal P_0$ is a
sublattice of $\mathcal P$, and its index is two.
Explicitly, fix $j\in J$, and write $e_j$ for the unit vector at $j$
(similarly $e_i$ for $i\in I$).
Then $\mathcal Q_{24}$ contains a codeword $(s,2e_j)$ with
$s\in\{\pm1\}^{12}$, and
\begin{equation}
 \mathcal P_0=\Bigl\{\bigl(\tfrac m{\sqrt2},u\bigr)\in\mathcal P\;:\;
 \textstyle\sum_is_im_i\equiv2u_j\ (\mathrm{mod}\ 4)\Bigr\}.
 \label{eq:golay-twisted-cut}
\end{equation}
Every element of $\mathcal P_0$ satisfies this congruence, because the
codeword $(s,2e_j)$ is orthogonal to all of $\mathcal Q_{24}$ by self-duality.
By the first fact above, both sides of the congruence are even.
Every self-dual $\Z_4$ lift of $\mathcal G_{24}$ whose Euclidean weights
are multiples of eight gives the same $\mathcal P_0$ up to sign flips
$m_i\mapsto-m_i$.

\textit{The code.}
In the notation of Sec.~\ref{subsec:examples_notation}, take
$M=\R^{12}\times\Z_2^{12}$, with the oscillators labeled by $I$ and the
qubits labeled by $J$.
The hybrid Golay code is the CSS lattice $D:=\mathcal P_0\times\mathcal P$,
whose shift part is $A=\mathcal P_0$ and whose phase part is $B=\mathcal P$.
Its stabilizer group is generated by
\begin{equation}
 \mathbf T_{m/\sqrt2}\otimes X^{u}\ \ \text{for }
 \bigl(\tfrac m{\sqrt2},u\bigr)\in\mathcal P_0,
 \qquad
 \mathbf M_{m/\sqrt2}\otimes Z^{u}\ \ \text{for }
 \bigl(\tfrac m{\sqrt2},u\bigr)\in\mathcal P,
 \label{eq:golay-stabilizers}
\end{equation}
where $X^u=\bigotimes_jX^{u_j}$ and similarly for $Z^u$.
Since $X^{-1}=X$ on a qubit, $X^u$ is the qubit shift $\mathbf T_u$ of
Sec.~\ref{subsec:examples_notation}.
A shift with labels $(m/\sqrt2,u)$ and a phase with labels
$(m'/\sqrt2,u')$ commute up to the sign $(-1)^{m\cdot m'+u\cdot u'}$.
This sign is $+1$, because $(m\bmod2,u)$ and $(m'\bmod2,u')$ are
orthogonal codewords of $\mathcal G_{24}$.
Thus $D$ is isotropic, and as a CSS lattice it needs no stabilizer phases.
By the second fact above, the only purely qubit stabilizers are the
identity and $Z^{\otimes12}$, since the shift $X^{\otimes12}$ violates
Eq.~\eqref{eq:golay-twisted-cut}.
Hence $D\cong\Z^{24}\oplus\Z_2$, with torsion generated by $Z^{\otimes12}$.
The self-duality of $\mathcal P$ gives $B^\perp=\mathcal P$, so the count
of Sec.~\ref{subsec:examples_notation} yields
\begin{equation}
 K=|B^\perp/A|=[\mathcal P:\mathcal P_0]=2.
 \label{eq:golay-dimension}
\end{equation}
The full lift agrees: a stabilizer generator matrix has
$|\det T|=2^{13}$, and Eq.~\eqref{eq:examples-dimension} gives
$K=2^{13}/2^{12}=2$.
Using $\mathcal P$ for the shifts as well would instead give a stabilizer
state.

\textit{Ideal logical basis.}
The two cosets of $\mathcal P_0$ in $\mathcal P$ give the ideal codewords
\begin{equation}
 \ket{\ell_L}=\sum_{\substack{(m/\sqrt2,\,u)\in\mathcal P\\
 \sum_is_im_i\equiv2u_j+2\ell\ (\mathrm{mod}\ 4)}}
 \ket{m/\sqrt2}_{\mathrm{osc}}\otimes\ket u_{\mathrm{fin}},
 \qquad \ell\in\{0,1\},
 \label{eq:golay-codewords}
\end{equation}
which are the combs of Eq.~\eqref{eq:examples-css-codewords}.
Flipping all twelve qubits exchanges them, so
$\ket{1_L}=X^{\otimes12}\ket{0_L}$.
Like ideal GKP states, they are not normalizable.

\textit{Logical Pauli operators.}
The logical $X$ can act on the qubits alone or on a single oscillator
alone,
\begin{equation}
 \overline X:=X^{\otimes12}\simeq\mathbf T_{\sqrt2e_i},
 \qquad
 \overline Z:=\mathbf M_{s/(2\sqrt2)}\otimes Z_j.
 \label{eq:golay-logicals}
\end{equation}
Here $\mathbf T_{\sqrt2e_i}\otimes X^{\otimes12}$ is a stabilizer for every
oscillator $i$, so the two forms of $\overline X$ agree on the codespace.
The phase $\overline Z$ multiplies $\ket{\ell_L}$ by
$i^{s\cdot m+2u_j}=(-1)^\ell$.
Both operators commute with all stabilizers and anticommute with each
other.
Every representative of $\overline Z$ acts on all twelve oscillators, and
its phase part acts on an odd number of qubits.

\textit{Symmetry.}
Every automorphism of $\mathcal G_{24}$ that fixes $I$ setwise permutes
the oscillators and the qubits separately and preserves $\mathcal P$.
These permutations form the Mathieu group $M_{12}$~\cite{conway2013sphere}.
Each of them preserves $\mathcal P_0$ when combined with parity flips
$x_i\mapsto-x_i$ of suitable oscillators, unique up to the parity of all
twelve oscillators.
The symmetries of $D$ of this signed-permutation type therefore form a
double cover of $M_{12}$, whose center is generated by that global parity.
On the oscillators, this cover acts as the monomial automorphism group of
an extended ternary Golay code.
As $j$ ranges over $J$, the sign vectors $s$ of the codewords
$(s,2e_j)\in\mathcal Q_{24}$ span this code over $\Z_3$.
Its monomial automorphism group is the nonsplit double cover
$2.M_{12}$~\cite{conway2013sphere}, so no choice of oscillator signs lets
all of $M_{12}$ act by permutations alone.
The symmetries fix both codewords~\eqref{eq:golay-codewords} and hence
act trivially on the logical qubit.

\textit{Intrinsic coupling.}
Generalized Clifford operators act separately on oscillators and qudits
(Sec.~\ref{subsec:examples_coupling}).
Hence the code decomposes only if $D$ is already the direct sum of its
purely oscillator part and its purely qubit part.
The latter is generated by $Z^{\otimes12}$, so a separate qubit factor
would encode eleven logical qubits.
This would force $2^{11}\mid K$, which contradicts $K=2$.
The coupling between the oscillators and the qubits is therefore
intrinsic.

\textit{Concatenation.}
Each oscillator nevertheless carries the local stabilizers
$\mathbf T_{2\sqrt2}$ and $\mathbf M_{\sqrt2}$, with labels
$(2\sqrt2e_i,0)\in\mathcal P_0$ and $(\sqrt2e_i,0)\in\mathcal P$.
They generate the lattice $2\sqrt2\,\Z\times\sqrt2\,\Z$ of the rectangular
GKP ququart in Eq.~\eqref{eq:rotor-parent-gkp-lattice}, with
$\alpha=1/\sqrt2$.
The hybrid Golay code is therefore a concatenated code.
Its inner codes are these ququarts, and its outer code is a CSS code on
twelve ququarts and twelve qubits built from both Golay codes.
The outer $X$-type checks are the codewords of $\mathcal Q_{24}$ with their
$J$-coordinates reduced modulo two.
The outer $Z$-type checks are the codewords of $\mathcal G_{24}$ with their
$I$-coordinates doubled.
Encoding each qubit in the square GKP code of
Remark~\ref{rem:full_lift_gkp_concatenation} gives a $24$-mode GKP code
with $K=2$.
Its phase lattice is $N(A_1^{24})$, and its shift lattice contains
$\sqrt2\,\Lambda_{24}$ with index $2^{11}$.
The logical operators $\mathbf T_{\sqrt2e_i}$ and $\overline Z$ then
have labels $v/\sqrt2$ with $v$ a minimal vector of $\Lambda_{24}$.

\subsubsection{Oscillator--rotor codes}
\label{ex:osc-rotor}
For simplicity, we consider only one oscillator and one rotor.
We show that they can couple intrinsically, but only in logical dimension
four or higher.
Let
\begin{equation}
 M=\R\times\Z,
 \qquad \widehat M=\R\times\T,
 \qquad \mathcal H\cong L^2(\R)\otimes\ell^2(\Z).
 \label{eq:nondecoupling-configuration-space}
\end{equation}
Write $z=(x,n;\xi,\theta)\in G=M\times\widehat M$, with
$x,\xi\in\R$, $n\in\Z$, and $\theta\in\T=\R/\Z$.
These tuples list both positions before both momenta.
In contrast, the rows of $T$ below use the paired order
$(x,\xi\,|\,n,\theta)$ of Sec.~\ref{subsec:examples_notation}.
In this system, the decomposition defined above asks for some
$S\in\operatorname{Sp}(G,\rho)$ with
\begin{equation}
 S(D)=D_{\mathrm{osc}}\oplus D_{\mathrm{rot}},
 \quad D_{\mathrm{osc}}\subset\R^2,
 \quad D_{\mathrm{rot}}\subset\Z\times\T.
 \label{eq:oscillator-rotor-decoupling}
\end{equation}

\textit{When such codes decompose.}
Theorem~\ref{thm:squarefree-oscillator-rotor-decoupling} below shows that
decomposition is the rule for small or square-free dimensions.
A symplectic change of frame brings every isotropic lattice to a normal
form with $K=mN$ (Lemma~\ref{lem:oscillator-rotor-normal-form}).
Here $N$ is the covolume of an oscillator lattice, and $m=r/t$ is the
angular-momentum spacing $r$ of $D$ divided by its torsion order $t$.
If $\gcd(m,N)=1$, then $D$ decomposes
(Lemma~\ref{lem:oscillator-rotor-coprime-splitting}).
Consequently, every lattice with square-free $K$ decomposes, including all
stabilizer states ($K=1$) and all codes with $K=2$ or $K=3$.

\textit{A minimal indecomposable ququart.}
We now attain the lower bound $K=4$ in
Theorem~\ref{thm:squarefree-oscillator-rotor-decoupling}.  In the phase
space~\eqref{eq:nondecoupling-configuration-space}, fix $\alpha>0$ and
$b\in\Z_{\geq1}$, and set
\begin{equation}
 \begin{gathered}
 w_1=(2\alpha,0;0,0),\qquad w_2=(0,0;\alpha^{-1},0),\\
 w_3=(\alpha,2b;0,0),\qquad w_4=(0,0;0,b^{-1}),\\
 D_{\alpha,b}:=\langle w_1,w_2,w_3,w_4\rangle.
 \end{gathered}
 \label{eq:nondecoupling-stabilizer-lattice}
\end{equation}
Here the parameter $b$ is a positive integer, not a qudit phase.
The shift part $\langle w_1,w_3\rangle$ and the phase part
$\langle w_2,w_4\rangle$ of $D_{\alpha,b}$ commute.
Thus $D_{\alpha,b}$ is an isotropic CSS lattice, which needs no
stabilizer phases.
Writing $\mathbf S_i=\mathbf D(w_i)$, the convention
$\mathbf D=\mathbf M\mathbf T$ gives
\begin{subequations}
\label{eq:nondecoupling-stabilizer-actions}
\begin{align}
 (\mathbf S_1\psi)(x,n)&=\psi(x+2\alpha,n),\\
 (\mathbf S_2\psi)(x,n)&=e^{2\pi ix/\alpha}\psi(x,n),\\
 (\mathbf S_3\psi)(x,n)&=\psi(x+\alpha,n+2b),\\
 (\mathbf S_4\psi)(x,n)&=e^{2\pi in/b}\psi(x,n).
\end{align}
\end{subequations}
The third stabilizer simultaneously translates the oscillator and shifts
the rotor angular momentum.
In the paired row order, the stabilizer generator matrix and its Gram
matrix are
\begin{equation}
 T=\begin{pmatrix}2\alpha&0&\alpha&0\\0&\alpha^{-1}&0&0\\
 0&0&2b&0\\0&0&0&b^{-1}\end{pmatrix}
 \;\begin{array}{@{}l@{\;}l@{}}
 x&\rowpairbrace{oscillator}\\
 \xi\vphantom{\alpha^{-1}}&\\
 n&\rowpairbrace{rotor}\\
 \theta\vphantom{b^{-1}}&
 \end{array}\qquad\quad
 \varTheta=\begin{pmatrix}0&2&0&0\\-2&0&-1&0\\0&1&0&2\\0&0&-2&0
 \end{pmatrix}.
 \label{eq:nondecoupling-T}
\end{equation}
The columns of $T$ are $w_1,\ldots,w_4$.
No period columns are needed, because the angle period $(0,0;0,1)=b\,w_4$
is already generated by the columns.
For all $\alpha$ and $b$, $\operatorname{Pf}\varTheta=4$, so
Eq.~\eqref{eq:examples-dimension} gives
\begin{equation}
 \dim\mathcal C_{\alpha,b}=4.
 \label{eq:nondecoupling-code-dimension}
\end{equation}
A general lattice element is
\begin{equation}
 \bigl(\alpha(2m_1+m_3),2bm_3;m_2/\alpha,j/b\bigr),
 \quad m_1,m_2,m_3\in\Z,\quad j\in\Z_b.
 \label{eq:nondecoupling-general-lattice-element}
\end{equation}
Consequently $D_{\alpha,b}\cong\Z^3\oplus\Z_b$.
Its torsion has order $b$, and the angular momenta of its elements form
$2b\Z$.

\textit{Ideal codewords.}
An ideal basis is
\begin{equation}
 \ket{j_L}\propto\sum_{k,\ell\in\Z}
 \ket{\alpha k}_{\mathrm{osc}}
 \otimes\ket{b(2k+j+4\ell)}_{\mathrm{rot}},
 \quad j\in\Z_4.
 \label{eq:nondecoupling-logical-codewords}
\end{equation}
The phase stabilizers act trivially on these supports, and the
translation stabilizers reindex the sums.  The four supports are
disjoint, so the four states are independent and exhaust the dimension
in Eq.~\eqref{eq:nondecoupling-code-dimension}.
They are non-normalizable combs, and the regularization of
Sec.~\ref{sec:finite_energy} gives finite-energy representatives.
The exact statements below concern the ideal code.
For $b\geq2$, the phase stabilizer $\mathbf S_4$ has order $b$ and
generates the torsion of $D_{\alpha,b}$.
Torsion is not needed for the obstruction below.
For $b=1$, the stabilizer $\mathbf S_4$ is the identity and
$D_{\alpha,1}\cong\Z^3$ is torsion free, yet
Proposition~\ref{prop:oscillator-rotor-nondecoupling} still applies.

\textit{Symplectic dual and logical ququart.}
Pairing with the four generators gives
\begin{equation}
 \begin{split}
 D_{\alpha,b}^{\perp}=\bigl\{(x,n;\xi,\theta):\ &x/\alpha\in\Z,
       \ n/b\in\Z,\\
       &2\alpha\xi\in\Z,\quad
       \alpha\xi+2b\theta=0\text{ in }\T\bigr\}.
 \end{split}
 \label{eq:nondecoupling-symplectic-dual}
\end{equation}
Choose the logical displacement labels and operators
\begin{equation}
 \begin{gathered}
 z_X=(0,-b;0,0),\qquad
 z_Z=(0,0;-\tfrac1{2\alpha},\tfrac1{4b}),\\
 \overline X_4:=\mathbf D(z_X),\qquad
 \overline Z_4:=\mathbf D(z_Z).
 \end{gathered}
 \label{eq:nondecoupling-logical-ququart-operators}
\end{equation}
Both labels lie in $D_{\alpha,b}^{\perp}$.
The logical matrix $T_{\mathrm{dual}}=T\varTheta^{-1}$ contains $z_X$ and
$-z_Z$ among its columns.
Their actions are
\begin{equation}
 \overline X_4\ket{j_L}=\ket{(j+1)_L},\qquad
 \overline Z_4\ket{j_L}=i^j\ket{j_L},\qquad j\in\Z_4,
 \label{eq:nondecoupling-logical-ququart-action}
\end{equation}
where the indices are modulo four.  In particular,
\begin{equation}
 \begin{gathered}
 \overline Z_4\overline X_4=i\,\overline X_4\overline Z_4,\\
 \overline X_4^{\,4}=\mathbf S_1\mathbf S_3^{-2},\qquad
 \overline Z_4^{\,4}=\mathbf S_2^{-2}\mathbf S_4.
 \end{gathered}
 \label{eq:nondecoupling-logical-ququart-relations}
\end{equation}
The classes $[z_X]$ and $[z_Z]$ have order four.
Since $\overline Z_4\overline X_4=i\,\overline X_4\overline Z_4$, their
classes generate a subgroup of order sixteen.
Equation~\eqref{eq:examples-logical-group} gives the same order $K^2=16$
for the full quotient, so
\begin{equation}
 \mathsf L_{D_{\alpha,b}}
 =D_{\alpha,b}^{\perp}/D_{\alpha,b}
 \cong\Z_4\oplus\Z_4.
 \label{eq:nondecoupling-logical-group}
\end{equation}
Thus the code encodes a ququart rather than two qubits.

\begin{theorem}
\label{thm:squarefree-oscillator-rotor-decoupling}
Every isotropic lattice for one oscillator and one rotor whose logical
dimension $K$ is square-free is symplectically decomposable.
Equivalently, the logical dimension $K$ of an isotropic lattice in this
system that is not symplectically decomposable is divisible by the square
of a prime.
In particular, $K\geq4$.
\end{theorem}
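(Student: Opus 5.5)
The plan is to combine the two lemmas announced just before the statement: the normal form of Lemma~\ref{lem:oscillator-rotor-normal-form} and the coprime splitting of Lemma~\ref{lem:oscillator-rotor-coprime-splitting}. First, given an isotropic lattice $D\subseteq G=\R^2\times(\Z\times\T)$, I would apply a symplectic automorphism $S\in\operatorname{Sp}(G,\rho)$ (built from the oscillator--rotor shears of Eq.~\eqref{eq:oscillator-rotor-shear}, oscillator symplectic matrices, and rotor $\mathrm{GL}(1,\Z)$ sign flips and angle shears) to bring $D$ to the normal form. In that form the rotor angular-momentum projection of $D$ is $r\Z$, the torsion of $D$ is the cyclic group of rotor phases of order $t$ dividing $r$, and the remaining free part is an oscillator lattice of covolume $N$, twisted by a coupling of the type seen in $w_3$ of Eq.~\eqref{eq:nondecoupling-stabilizer-lattice}. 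The lemma then gives $K=mN$ with $m=r/t$. Symplectic equivalence preserves $K$, by Theorem~\ref{thm:code_dimension}, since $\mu$ is invariant under symplectic automorphisms, and it preserves decomposability. So it suffices to treat normal-form lattices.

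Second, assume $K$ is square-free. Then $K=mN$ forces $\gcd(m,N)=1$: any prime dividing both would give $p^2\mid K$. Lemma~\ref{lem:oscillator-rotor-coprime-splitting} then produces a further symplectic map sending $D$ onto $D_{\mathrm{osc}}\oplus D_{\mathrm{rot}}$, as in Eq.~\eqref{eq:oscillator-rotor-decoupling}. Intuitively, the coupling vector can be absorbed by a shear whose coefficients come from a Bézout identity $um+vN=1$.

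The equivalent formulation is the contrapositive. The bound $K\geq4$ follows because the least positive integer that is not square-free is $4$; the value $K=1$ is square-free and is covered. Sharpness is witnessed by $D_{\alpha,b}$ with Proposition~\ref{prop:oscillator-rotor-nondecoupling}, though the theorem itself only needs the bound.

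The main obstacle lies inside the lemmas, especially in establishing the normal form when $D$ has torsion and when the oscillator coupling of the generator carrying the minimal angular momentum $r$ is arbitrary. My approach would be as follows. Take the projection of $D$ to the rotor shift coordinate and choose $w\in D$ attaining $r$. Its kernel $D_0=\{n=0\}$ is a rank-two-plus-torsion group in $\R^2\times\T$. Discreteness and the requirement that $\widehat M$ be compact modulo $D$ force $D_0$ to project onto a full lattice in $\R^2$, with finite torsion in $\T$. Isotropy pins the torsion to $\frac1t\Z/\Z$ with $t\mid r$. I would then shear the angle of $w$ away using the oscillator coordinates. If time is short, I would take both lemmas as given, since they are stated earlier, and present only the square-free arithmetic above.
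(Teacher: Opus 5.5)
Your proposal matches the paper's proof: the normal-form lemma gives $K=mN$, square-freeness forces $\gcd(m,N)=1$, and the coprime-splitting lemma then yields the decomposition. The bound $K\geq4$ follows because $4$ is the smallest integer that is not square-free. Your extra remark that symplectic automorphisms preserve the self-Plancherel measure, and hence $K$, is correct but not needed, since the lemma's $K=mN$ is already the code dimension of the original lattice.
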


\begin{proof}
By Lemma~\ref{lem:oscillator-rotor-normal-form}, $K=mN$ with
$m,N\in\Z_{>0}$.  A prime dividing both $m$ and $N$ would have its
square dividing $K$.  Square-freeness therefore implies
$\gcd(m,N)=1$, and
Lemma~\ref{lem:oscillator-rotor-coprime-splitting} applies.
\end{proof}

Appendix~\ref{appx:proof_oscillator_rotor_normal_form} states and proves
Lemma~\ref{lem:oscillator-rotor-normal-form}, and
Appendix~\ref{appx:proof_oscillator_rotor_coprime_splitting} does the same
for Lemma~\ref{lem:oscillator-rotor-coprime-splitting}.
An \emph{ideal lattice stabilizer state} is a nonzero vector, up to
overall scalar, spanning the ideal codespace $\mathcal C$ of an
isotropic lattice $D$ with $K=\dim\mathcal C=1$.  Thus it is the unique
generalized state, up to scalar, fixed by all stabilizers
(Eq.~\eqref{eq:trivial_fiber_codespace}).

\begin{corollary}
\label{cor:oscillator-rotor-stabilizer-states}
Every ideal lattice stabilizer state of one oscillator and one rotor is
symplectically decomposable.  More precisely, a generalized Clifford
operator maps it to a tensor product of an ideal oscillator
stabilizer state and an ideal rotor stabilizer state.  All lattice codes
with $K=2$ or $K=3$ are also symplectically decomposable.
\end{corollary}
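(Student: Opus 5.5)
The plan is to derive the corollary directly from Theorem~\ref{thm:squarefree-oscillator-rotor-decoupling} and then translate decomposability of the lattice into a statement about states. First, the three values $K=1$, $K=2$ and $K=3$ are square-free. (For $K=1$ this holds vacuously, since no prime square divides it.) Hence the theorem, through Lemma~\ref{lem:oscillator-rotor-normal-form} and Lemma~\ref{lem:oscillator-rotor-coprime-splitting}, gives some $S\in\operatorname{Sp}(G,\rho)$ with $S(D)=D_{\mathrm{osc}}\oplus D_{\mathrm{rot}}$. This already settles the last sentence of the corollary, about codes with $K=2$ or $K=3$.

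For stabilizer states, I would next pass from lattices to codespaces. Proposition~\ref{prop:symplectic_implementer} supplies a generalized Clifford operator $U_S$ that implements $S$. Conjugation by $U_S$ sends the displacement $\mathbf D(z)$ to a phase times $\mathbf D(Sz)$. After the displacement correction of Theorem~\ref{thm:corrected_clifford_implementer}, or equivalently after adjusting the stabilizer phases via Lemma~\ref{lem:phase_triv}, $U_S$ maps the stabilizer group of $D$ onto a stabilizer group labeled by $S(D)$. It therefore maps the ideal codespace $\mathcal C_D$ onto $\mathcal C_{S(D)}$. Because $S(D)$ is a direct sum, its stabilizer group is generated by operators of the form $\mathbf D(z_1)\otimes I$ and $I\otimes\mathbf D(z_2)$, each with a consistent phase. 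Choosing the phases separately on each factor is allowed, because any coboundary on $D_{\mathrm{osc}}\oplus D_{\mathrm{rot}}$ can be altered by a character, that is, by a displacement. The joint fixed space is then the tensor product $\mathcal C_{D_{\mathrm{osc}}}\otimes\mathcal C_{D_{\mathrm{rot}}}$, taken in the distributional sense of Eq.~\eqref{eq:trivial_fiber_codespace}.

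To finish, I would compare dimensions. Since $\operatorname{Sp}(G,\rho)$ preserves the Haar measure $\mu$, Theorem~\ref{thm:code_dimension} gives
\[
1=K_D=|G/S(D)|_\mu=|G_{\mathrm{osc}}/D_{\mathrm{osc}}|\cdot|G_{\mathrm{rot}}/D_{\mathrm{rot}}|.
\]
Each factor is a positive integer, being the dimension of a factor codespace, so both equal $1$. Hence each factor code is an ideal stabilizer state, and $U_S$ times the correcting displacement maps the original state to their tensor product. The image of the state is determined up to scalar because the codespace is one-dimensional.

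The step I expect to be the main obstacle is making the tensor-factorization of the ideal codespace rigorous. The codewords are tempered distributions, and the stabilizer phases on the split lattice must factor. I would handle this with the syndrome-bundle description. The fiber $E_{D,1}$ for a direct-sum lattice should be the algebraic tensor product of the two factor fibers, since the Bloch--Floquet transform over $\widehat{D_{\mathrm{osc}}}\times\widehat{D_{\mathrm{rot}}}$ factorizes. If that route proves cumbersome, I would instead argue with dimensions alone. The tensor product of the two factor codespaces is contained in the joint fixed space, and both have dimension $K=1$, so they coincide.
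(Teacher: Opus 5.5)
Your proposal is correct and follows the paper's route. Square-freeness of $K=1,2,3$ feeds Theorem~\ref{thm:squarefree-oscillator-rotor-decoupling}, the implementer of Proposition~\ref{prop:symplectic_implementer} followed by a correcting displacement carries the state to the product code, and you pin that code down correctly with the count $K_{\mathrm{osc}}K_{\mathrm{rot}}=1$. There are two small slips: Theorem~\ref{thm:corrected_clifford_implementer} assumes $S(D)=D$, so the phase correction must come from your alternative argument (two phase functions on $S(D)$ differ by a character, which surjectivity of $\kappa$ realizes as a displacement), and $K_{S(D)}=K_D$ is most simply justified by $[S(D)^\perp:S(D)]=[D^\perp:D]$ rather than by Haar invariance of $\operatorname{Sp}(G,\rho)$, which you assert without proof.
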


\begin{proof}
This follows from Theorem~\ref{thm:squarefree-oscillator-rotor-decoupling},
since $K=1,2,3$ are square-free, together with the generalized Clifford
implementation of Proposition~\ref{prop:symplectic_implementer}.
\end{proof}

The conclusion concerns equivalence under a
global symplectic change of frame.  Finite-energy regularizations are
treated separately in Sec.~\ref{sec:finite_energy}.

\begin{proposition}
\label{prop:oscillator-rotor-nondecoupling}
For every $\alpha>0$ and $b\in\Z_{\geq1}$, the lattice
$D_{\alpha,b}$ of Sec.~\ref{ex:osc-rotor} is not symplectically
decomposable.  Thus four is the smallest logical dimension permitting
intrinsic coupling in this system.
\end{proposition}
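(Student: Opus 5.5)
We argue by contradiction, using a symplectic invariant that every decomposable lattice must satisfy but $D_{\alpha,b}$ violates. Suppose $S\in\operatorname{Sp}(G,\rho)$ maps $D:=D_{\alpha,b}$ onto $D_{\mathrm{osc}}\oplus D_{\mathrm{rot}}$. Then $S$ induces an isomorphism of logical phase spaces $\mathsf L_D\cong\mathsf L_{D_{\mathrm{osc}}}\oplus\mathsf L_{D_{\mathrm{rot}}}$ preserving the bicharacters. Each summand is a finite group with nondegenerate alternating bicharacter, of orders $N^2$ and $K_{\mathrm{rot}}^2$ with $NK_{\mathrm{rot}}=4$.

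By Eq.~\eqref{eq:nondecoupling-logical-group}, $\mathsf L_D\cong\Z_4\oplus\Z_4$ is indecomposable as a symplectic group unless one factor is trivial. A nondegenerate symplectic finite abelian group of order $4$ is $\Z_2^2$, which cannot be an orthogonal summand of $\Z_4^2$. Indeed, $\Z_4^2$ is homocyclic, so any direct summand isomorphic to $\Z_2^2$ is impossible, since all direct summands of $\Z_4^2$ have the form $\Z_4^j$. Hence either $N=1$ and $K_{\mathrm{rot}}=4$, or $N=4$ and $K_{\mathrm{rot}}=1$.

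The main work is excluding these two cases with finer invariants that $S$ preserves. By the argument for the non-CSS rotor code, $S$ preserves the identity component $\R^2\times\T$ of $G$, so it acts on the component group $G/G_0\cong\Z$ (the angular momentum) by $\pm1$. Two invariants follow:
\begin{itemize}[leftmargin=*]
\item the image of $D$ in $G/G_0$, namely $2b\Z$, so the spacing is $r=2b$;
\item the torsion subgroup of $D$, of order $t=b$.
\end{itemize}
Their ratio $m=r/t=2$ is the invariant used in Lemma~\ref{lem:oscillator-rotor-normal-form}.

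For a direct sum, the oscillator part contributes nothing to $G/G_0$ and is torsion-free. The rotor part $D_{\mathrm{rot}}\subseteq\Z\times\T$ therefore has spacing $2b$ and torsion of order $b$. A rotor lattice is generated by its torsion, a cyclic group of phases of order $t$, together with one element $(r,\varphi)$. Its covolume is then $K_{\mathrm{rot}}=r/t=2$.

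This forces $N=K/K_{\mathrm{rot}}=2$, contradicting the first step. Alternatively, and bypassing the group-theory step, $K_{\mathrm{rot}}=2$ and $N=2$ give $\mathsf L\cong\Z_2^2\oplus\Z_2^2$, which is not isomorphic to $\Z_4^2$. This contradiction proves the proposition.

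The step I expect to need care is the rotor covolume computation $K_{\mathrm{rot}}=r/t$. I would verify it by applying the shear to make $D_{\mathrm{rot}}$ CSS: $A=r\Z$ and $B=\tfrac1t\Z/\Z$. Equation~\eqref{eq:rotor-css-logical-group} then gives $K=|t\Z/r\Z|=r/t$, which is consistent with Eq.~\eqref{eq:rotor-css-dimension}. The argument also uses that $S$ maps torsion to torsion and preserves $G_0$, both automatic for continuous automorphisms.

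Minimality then follows from Theorem~\ref{thm:squarefree-oscillator-rotor-decoupling}, since $1,2,3$ are square-free.
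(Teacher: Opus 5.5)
Your argument is correct and is essentially the paper's proof. The preserved angular-momentum spacing $r=2b$ and torsion order $t=b$ force $K_{\mathrm{rot}}=r/t=2$, hence $K_{\mathrm{osc}}=2$, so the decomposed code would have logical group $\Z_2^2\oplus\Z_2^2\not\cong\Z_4\oplus\Z_4$; your preliminary homocyclicity case analysis only reorders this final comparison. The one step you leave implicit, which the paper states, is why an order-four group with a nondegenerate alternating bicharacter must be $\Z_2^2$: on a cyclic group $\langle g\rangle$, alternation gives $\rho(ag,cg)=\rho(g,g)^{ac}=1$, so the bicharacter is degenerate.
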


The proof in Appendix~\ref{appx:proof_oscillator_rotor_nondecoupling}
compares logical groups.
A decomposed version of $D_{\alpha,b}$ would have the logical group
$(\Z_2)^4$, which has no element of order four, unlike $[z_X]$.
Moreover, no continuous symplectic automorphism preserving $D_{\alpha,b}$
implements a logical ququart Fourier gate
(Sec.~\ref{subsec:clifford_examples}).

\subsubsection{Rotor--qudit codes}
\label{ex:rotor-qudit}
This example couples a planar rotor intrinsically to a qudit of dimension
$d\geq2$, already at logical dimension $d$.
Reference~\cite{luo2026hybridqubitrotor} develops the Clifford theory and
universal control of hybrid qubit--rotor systems.
For one rotor and one qudit,
\begin{equation}
 M=\Z\times\Z_d,\qquad \widehat M=\T\times\Z_d,\qquad
 \mathcal H=\ell^2(\Z)\otimes\C^d .
 \label{eq:rq-config}
\end{equation}
As in Sec.~\ref{subsec:examples_notation}, we label a displacement by
$(n,\theta\,|\,a,b)$.
Its entries are a rotor angular-momentum shift $n\in\Z$, a rotor angle
$\theta\in\T$, and a qudit shift $a$ and phase $b$, read modulo $d$.
The stabilizer group is generated by
\begin{equation}
 \begin{aligned}
 \mathbf S_1&=\mathbf D(d,0\,|\,1,0)=\mathbf T_d\otimes X^{-1},\\
 \mathbf S_2&=\mathbf D(0,\tfrac1d\,|\,0,0)
   =e^{2\pi i\widehat{\mathbf L}/d}\otimes I ,
 \end{aligned}
 \label{eq:rq-stabilizers}
\end{equation}
that is, $(\mathbf S_1\psi)(n,j)=\psi(n+d,j+1)$ and
$(\mathbf S_2\psi)(n,j)=e^{2\pi i n/d}\psi(n,j)$.
The first generator shifts the rotor angular momentum by $d$ together
with the qudit.
The second restricts the angular momentum to multiples of $d$.
Their labels $w_1=(d,0|1,0)$ and $w_2=(0,\tfrac1d|0,0)$ have
$w_1^{\intercal}Jw_2=d\cdot\tfrac1d=1$, so the two generators commute.
The lattice $D=\langle w_1,w_2\rangle$ is CSS, $D=A\times B$ with
$A=\langle(d,1)\rangle\subset M$ and
$B=\langle(\tfrac1d,0)\rangle\subset\widehat M$.
Its bare displacements therefore already form the stabilizer group.
As a group, $D\cong\Z\oplus\Z_d$, with torsion generated by $w_2$.

Adjoining the qudit periods $d\,e_a=(0,0|d,0)$ and $d\,e_b=(0,0|0,d)$
gives the stabilizer generator matrix, since the rotor-angle period
$(0,1|0,0)=d\,w_2$ is already included:
\begin{equation}
 T=\Bigl(\underbrace{\,w_1\,}_{\mathbf S_1}\;\;
         \underbrace{\,w_2\,}_{\mathbf S_2}\;\;
         \underbrace{\,d\,e_a\;\;d\,e_b\,}_{\text{qudit periods}}\Bigr)
 =\begin{pmatrix} d&0&0&0\\ 0&\tfrac1d&0&0\\ 1&0&d&0\\ 0&0&0&d
 \end{pmatrix}
 \;\begin{array}{@{}l@{\;}l@{}}
 n&\rowpairbrace{rotor}\\
 \theta\vphantom{\tfrac1d}&\\
 a&\rowpairbrace{qudit.}\\
 b&
 \end{array}
 \label{eq:rq-T}
\end{equation}
The Gram matrix is integral,
\begin{equation}
 \varTheta=T^{\intercal}JT=
 \begin{pmatrix}0&1&0&1\\-1&0&0&0\\0&0&0&d\\-1&0&-d&0\end{pmatrix},
 \qquad
 \operatorname{Pf}\varTheta=d=\frac{|\det T|}{d},
 \label{eq:rq-Theta}
\end{equation}
so the code encodes a qudit, $K=d$.

The ideal codewords are
\begin{equation}
 \ket{\ell_L}=\sum_{m\in\Z}\ket{dm}_{\mathrm{rot}}
 \otimes\ket{m+\ell}_{\mathrm{fin}},
 \qquad \ell\in\Z_d ,
 \label{eq:rq-codewords}
\end{equation}
with the qudit label read modulo $d$.
The phase $\mathbf S_2$ is trivial on their supports, $\mathbf S_1$
reindexes the sum, and the $d$ supports are disjoint.
The logical matrix $T_{\mathrm{dual}}=T\varTheta^{-1}$ has the columns
$(0,\tfrac1d|0,0)$, $(-d,0|0,0)$, $(0,-\tfrac1{d^2}|0,1)$, and
$(0,0|{-1},0)$.
The first column is the stabilizer $w_2$, and
$D^\perp/D\cong\Z^4/\varTheta\Z^4\cong\Z_d\times\Z_d$.
Choosing
\begin{equation}
 \xl=\mathbf D(0,0\,|\,{-1},0)=I\otimes X,\qquad
 \zl=\mathbf D(0,-\tfrac1{d^2}\,|\,0,1)
    =e^{-2\pi i\widehat{\mathbf L}/d^2}\otimes Z
 \label{eq:rq-logicals}
\end{equation}
gives
\begin{equation}
 \xl\ket{\ell_L}=\ket{(\ell+1)_L},\qquad
 \zl\ket{\ell_L}=e^{2\pi i\ell/d}\ket{\ell_L},\qquad
 \zl\,\xl=e^{2\pi i/d}\,\xl\,\zl ,
 \label{eq:rq-logical-action}
\end{equation}
with $\xl^{\,d}=I$ and $\zl^{\,d}=\mathbf S_2^{-1}$.
The rotor translation $\mathbf T_{-d}$ also acts as $\xl^{-1}$.
A qudit-only displacement $\mathbf D(0,0\,|\,a,b)$ commutes with
$\mathbf S_1$ only if $b=0$, so a logical $Z$ cannot be supported on the
qudit alone.

Writing $m=j+dk$ shows that
$\ket{\ell_L}=\sum_{j\in\Z_d}\ket{j_L}_{\mathrm{rGKP}}
\otimes\ket{j+\ell}_{\mathrm{fin}}$, where
$\ket{j_L}_{\mathrm{rGKP}}=\sum_{k\in\Z}\ket{d(j+dk)}_{\mathrm{rot}}$
are the rotor GKP codewords of Eq.~\eqref{eq:rotor-css-codewords} with
$K=N=d$.
The code is thus a rotor GKP qudit concatenated into the two-qudit code
stabilized by a joint shift.
Indeed, $\mathbf S_1^{\,d}=\mathbf T_{d^2}$ and $\mathbf S_2$ are the inner
rotor GKP stabilizers, and $\mathbf S_1$ is the product of the inner
logical shift $\mathbf T_d$ with a physical qudit shift.
For $d=2$, the code is a logical qubit in a rotor and a qubit, with
$\mathbf S_1=\mathbf T_2\otimes X$ and
$\mathbf S_2=(-1)^{\widehat{\mathbf L}}$.
Its codewords are
\[
 \ket{0_L}=\sum_{k\in\Z}\bigl(\ket{4k}\ket0+\ket{4k+2}\ket1\bigr),\qquad
 \ket{1_L}=\sum_{k\in\Z}\bigl(\ket{4k}\ket1+\ket{4k+2}\ket0\bigr),
\]
and its logical operators are $\xl=I\otimes X$ and
$\zl=(-i)^{\widehat{\mathbf L}}\otimes Z$.
Proposition~\ref{prop:rotor-qudit-nondecoupling} below shows that no
symplectic automorphism splits $D$ into a rotor lattice and a qudit
lattice.

\begin{proposition}
\label{prop:rotor-qudit-nondecoupling}
For every $d\geq2$, the lattice $D=\langle w_1,w_2\rangle$ of
Sec.~\ref{ex:rotor-qudit} is not symplectically decomposable.
\end{proposition}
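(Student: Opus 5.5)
The plan is to argue by contradiction using invariants of $D$ that every $S\in\operatorname{Sp}(G,\rho)$ must preserve. Suppose some symplectic automorphism $S$ satisfies $S(D)=D_{\mathrm{rot}}\oplus D_{\mathrm{fin}}$ with $D_{\mathrm{rot}}\subseteq\Z\times\T$ and $D_{\mathrm{fin}}\subseteq\Z_d\times\Z_d$. The structural input is that $S$ is a topological group automorphism of $G=\Z\times\T\times\Z_d^2$. It therefore preserves the identity component $G_0=\{0\}\times\T\times\{0\}$ (the rotor angles), the torsion subgroup, and the discrete quotient $G/G_0\cong\Z\times\Z_d^2$, on which it induces a group automorphism $\bar S$. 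This is the same reasoning used for the non-CSS two-rotor code in Sec.~\ref{ex:rotor-gkp}.

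The first step pins down $D_{\mathrm{fin}}$. The torsion of $D\cong\Z\oplus\Z_d$ is $\langle w_2\rangle$, which has order $d$ and lies in $G_0$. Hence $\operatorname{tors}(S(D))=\operatorname{tors}(D_{\mathrm{rot}})\oplus D_{\mathrm{fin}}$ must also lie in $G_0$, since $S$ preserves both torsion and $G_0$. A nonzero element of $D_{\mathrm{fin}}$ is torsion but has nonzero qudit coordinates, so it is not in $G_0$. Therefore $D_{\mathrm{fin}}=0$ and $S(D)=D_{\mathrm{rot}}\oplus0$. A consistency check with the code dimension $K=d$ is not enough to conclude, because $D_{\mathrm{rot}}=\langle(d,\varphi),(0,\tfrac1d)\rangle$ together with an unencoded qudit also gives $K=d$ and $D^\perp/D\cong\Z_d^2$. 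So the logical group alone does not obstruct decomposition, and I need a finer invariant.

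The key step uses the image of $D$ in $G/G_0$. Originally $D$ maps onto $\langle(d,1,0)\rangle\subseteq\Z\times\Z_d^2$, an infinite cyclic group. After the decomposition, $S(D)$ maps onto $\langle(r,0,0)\rangle$, where $r\Z$ is the angular-momentum projection of $D_{\mathrm{rot}}$. Since $\bar S$ is a group automorphism carrying the first image onto the second, it sends the generator $(d,1,0)$ to $\pm(r,0,0)$.

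The obstacle is determining $r$. I would use isotropy: the torsion of $D_{\mathrm{rot}}$ has order $d$ (from the first step), and its generator, the phase $\tfrac1d$, must commute with the shift $r$, so $d\mid r$. Then $(r,0,0)=d\cdot(r/d,0,0)$ is divisible by $d$ in $\Z\times\Z_d^2$. On the other hand, $d\cdot(n,a,b)=(dn,0,0)$ for every $(n,a,b)$, so $(d,1,0)$ is not divisible by $d$. Divisibility by $d$ is preserved by group automorphisms, so this is a contradiction, which proves the proposition. As a double check I would also verify that $r=d$ via $K=r/t$ for single-rotor lattices, using the count of Sec.~\ref{ex:rotor-gkp}.
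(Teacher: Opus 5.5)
Your proof is correct, and its first half matches the paper's. The paper also shows $D_{\mathrm{fin}}=0$ because the elements of $D$ killed by $d$ (here exactly the torsion $\langle w_2\rangle$) lie in the circle, which every continuous automorphism preserves.

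The second half takes a different route. The paper proves a block-structure lemma for continuous automorphisms of $G_R\oplus G_Q$: the cross term is $A_{QR}(n,\theta)=nv$, which vanishes when $d\mid n$, and $A_{QQ}$ is an automorphism of $\Z_d^2$. It concludes that the qudit component of $S(w_1)$ is $A_{QQ}(1,0)\neq0$, contradicting $D_{\mathrm{fin}}=0$. You instead pass to the discrete quotient $G/G_0\cong\Z\times\Z_d^2$ and use divisibility by $d$ as an automorphism invariant. The image $(d,1,0)$ of $w_1$ is not divisible by $d$. The generator $(r,0,0)$ of the image of $S(D)$ is divisible by $d$, once isotropy with the order-$d$ torsion element $(0,\tfrac1d)$ forces $d\mid r$. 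This is the coordinate-free content of the paper's observation that a homomorphism $\Z\to\Z_d^2$ kills $d\Z$, and it spares you the block decomposition.

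Two trade-offs are worth noting. First, your appeal to isotropy is unnecessary. The induced map $\bar S$ preserves the torsion $\{0\}\times\Z_d^2$ of $G/G_0$ and therefore acts by $\pm1$ on the free quotient $\Z$. This gives $r=d$ directly, so your argument works for every continuous automorphism, as the paper's lemma does. Second, the paper phrases its hypothesis through elements annihilated by $d$ rather than through the full torsion subgroup. That lets the same lemma cover the $K=1$ lattice $D'=\langle w_1,z\rangle$ of Corollary~\ref{cor:rotor-qudit-state}, whose torsion $\langle z\rangle$ has order $d^2$ and leaves the circle. It also covers the family $D_{r,t}$ of Remark~\ref{rem:rotor-qudit-symplectic-maps}. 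Your first step, as written with the full torsion subgroup, would need that refinement to extend to those cases.
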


We prove Proposition~\ref{prop:rotor-qudit-nondecoupling} in
Appendix~\ref{appx:proof_rotor_qudit}, using
Lemmas~\ref{lem:rotor-qudit-block-structure}
and~\ref{lem:rotor-qudit-obstruction}.
The proof uses only the continuity of the automorphisms.
Intuitively, continuous couplings between a rotor and a qudit only see
the angular momentum modulo $d$.
This residue vanishes for the rotor shift by $d$ in $\mathbf S_1$, so no
such coupling removes the accompanying qudit shift.
The unitary $\ket n_{\mathrm{rot}}\ket j_{\mathrm{fin}}\mapsto
\ket{n-dj}_{\mathrm{rot}}\ket j_{\mathrm{fin}}$, with $j\in\{0,\ldots,d-1\}$,
does map the code to a rotor GKP code times a fixed qudit state.
However, it shifts the angular momentum conditioned on the qudit.
No generalized Clifford operator does so
(Lemma~\ref{lem:rotor-qudit-block-structure} and, for qubits,
Ref.~\cite{luo2026hybridqubitrotor}).
A logical Fourier gate on this code likewise requires non-Clifford
operations (Sec.~\ref{subsec:clifford_examples}).

\begin{corollary}
\label{cor:rotor-qudit-state}
For every $d\geq2$, let $z=(0,-\tfrac1{d^2}|0,1)$ be the label of the
logical operator $\zl$ of Sec.~\ref{ex:rotor-qudit}.
The lattice $D'=\langle w_1,z\rangle$, obtained by adding $\zl$ to the
stabilizers, is isotropic with $K=1$.
It is not symplectically decomposable, and no generalized
Clifford operator maps its ideal stabilizer state
\begin{equation}
 \ket{\psi_d}=\sum_{m\in\Z}\ket{dm}_{\mathrm{rot}}\otimes\ket{m}_{\mathrm{fin}}
 \label{eq:rq-state}
\end{equation}
to a product $u_{\mathrm{rot}}\otimes u_{\mathrm{fin}}$.
\end{corollary}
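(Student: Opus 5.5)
First I will check isotropy and the dimension. The generators are $w_1=(d,0\,|\,1,0)$ and $z=(0,-\tfrac1{d^2}\,|\,0,1)$, with $w_1^{\intercal}Jz=d\cdot(-\tfrac1{d^2})+1\cdot\tfrac1d=0$, so $D'$ is isotropic. Adjoining to $w_1,z$ the qudit periods $d\,e_a$, $d\,e_b$ and the rotor-angle period gives the full lift. Since $d^2z=(0,-1\,|\,0,d^2)$, the angle period is already an integer combination of $d^2z$ and $d\,e_b$. The full lift is therefore spanned by $w_1,z,d\,e_a,d\,e_b$, and
\[
 |\det T|=\Bigl|\det\begin{pmatrix}d&0&0&0\\0&-\tfrac1{d^2}&0&0\\1&0&d&0\\0&1&0&d\end{pmatrix}\Bigr|=\frac{d^3}{d^2}=d .
\]
Equation~\eqref{eq:examples-dimension} then gives $K=d/d=1$. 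Equivalently, $D'\supset D$ with $[D'{:}D]=d$ (the class of $\zl$ has order $d$ modulo $D$), so $K$ drops from $d$ to $1$. A direct check confirms that $\ket{\psi_d}=\ket{0_L}$ is fixed by $\mathbf S_1$ and by $\zl$: on its support $n=dm$, $j=m$, so $\zl$ contributes $e^{-2\pi i m/d}e^{2\pi i m/d}=1$.

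For indecomposability I will reuse the machinery behind Proposition~\ref{prop:rotor-qudit-nondecoupling}, namely the block structure of continuous symplectic automorphisms (Lemma~\ref{lem:rotor-qudit-block-structure}). Under such an $S$, the angular-momentum component of the image of any label is $\pm n$ (the $\mathrm{GL}(1,\Z)$ part). The qudit part of the image is the original qudit part plus something depending only on $n \bmod d$. The rotor angle may pick up contributions from the qudit labels.

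Suppose $S(D')=D_{\mathrm{rot}}\oplus D_{\mathrm{fin}}$. Since $K=1$ and the covolume is multiplicative, both summands are Lagrangian. Thus $D_{\mathrm{fin}}$ is a subgroup of order $d$ of $\Z_d^2$, and $D_{\mathrm{rot}}$ is a rotor stabilizer-state lattice.

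The key step is to track $S(w_1)$. The rotor component has $n=\pm d\equiv 0 \pmod d$, so the qudit component is unchanged, i.e.\ it is $(1,0)$ after the appropriate shift-type action. Decomposability requires $S(w_1)=r+f$ with $r\in D_{\mathrm{rot}}$ and $f\in D_{\mathrm{fin}}$. Hence $(1,0)$, or its image under the qudit-only symplectic part, lies in $D_{\mathrm{fin}}$. The same element must then be realized by some $y\in D'$ with $S(y)=(0\,|\,f)$, a pure qudit element.

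Elements of $D'$ are $k w_1+\ell z$ plus periods. Since $n=kd$ is preserved up to sign, $S(y)$ having zero rotor part forces $k=0$. So $y=\ell z$, whose qudit shift component is $0$. Lemma~\ref{lem:rotor-qudit-block-structure} only adds to the qudit shift terms depending on $n \bmod d$, which is $0$ here. So $S(y)$ has qudit shift $0$, while $f$ has shift $\pm1$ (invertible qudit-only maps preserve that the shift part is nonzero only through the $\mathrm{Sp}(2,\Z_d)$ action). This is the main obstacle. I must handle carefully the qudit-only symplectic part $S_{\mathrm{fin}}\in\mathrm{Sp}(2,\Z_d)$, which can mix shift and phase. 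The cleaner route is to apply $S_{\mathrm{fin}}$ to both $y$ and $w_1$ and compare them directly. Write $S(y)=S_{\mathrm{fin}}(0,\ell)$ and let $S(w_1)$ have qudit part $S_{\mathrm{fin}}(1,0)+(\text{term from } n\equiv0)=S_{\mathrm{fin}}(1,0)$. Equality $S_{\mathrm{fin}}(0,\ell)=S_{\mathrm{fin}}(1,0)$ would force $(0,\ell)=(1,0)$, a contradiction. This is exactly Lemma~\ref{lem:rotor-qudit-obstruction} applied to $D'$, and I will try to invoke it verbatim, since its hypotheses involve only $w_1$.

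Finally, a generalized Clifford operator maps stabilizer states to stabilizer states and acts on stabilizer lattices by its symplectic automorphism (Proposition~\ref{prop:symplectic_implementer}). A product $u_{\mathrm{rot}}\otimes u_{\mathrm{fin}}$ of ideal stabilizer states has a decomposable lattice. Suppose the image of $\ket{\psi_d}$ is a product of arbitrary generalized vectors. It is still the unique state fixed by $S(D')$, so its stabilizer group consists of displacements. I would argue that the displacements fixing a product vector form the product of the local stabilizer groups, and that uniqueness forces each factor to be Lagrangian. This yields a decomposable $S(D')$ and contradicts the previous step.
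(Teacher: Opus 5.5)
Your proposal is correct and follows the paper's proof: $K=1$ comes from the full lift $(w_1,z,d\,e_a,d\,e_b)$, indecomposability from the block structure of Lemma~\ref{lem:rotor-qudit-block-structure}, and the state claim from the fact that the group of displacements fixing a product vector up to scalar splits, with equality to $S(D')$ coming from $D'^{\perp}=D'$ (your ``uniqueness'' step). One correction: hypothesis~(i) of Lemma~\ref{lem:rotor-qudit-obstruction} does not involve only $w_1$, so you cannot invoke the lemma verbatim without checking it for the extra torsion element $z$ of $D'$ (here $d(kw_1+\ell z)=0$ forces $k=0$ and $d\mid\ell$, so the element lies in $\langle w_2\rangle$); your direct comparison $A_{QQ}(0,\ell)\neq A_{QQ}(1,0)$ already finishes the argument without that hypothesis.
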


Appendix~\ref{appx:proof_rotor_qudit} proves
Corollary~\ref{cor:rotor-qudit-state} with the same two lemmas.
The state $\ket{\psi_d}$ is a stabilizer state in the sense of
Ref.~\cite{bermejo2016normalizer}.
What is new, to our knowledge, is that no generalized Clifford operator
disentangles it.
Adding $\xl$ instead of $\zl$ gives
$\langle(d,0|0,0),(0,\tfrac1d|0,0),(0,0|1,0)\rangle$, which is already
the product of the rotor state $\sum_k\ket{dk}_{\mathrm{rot}}$ and the
qudit state $\sum_j\ket{j}_{\mathrm{fin}}$.

These results contrast with those of Sec.~\ref{ex:osc-rotor}.
For one oscillator and one rotor, square-free $K$ and every stabilizer
state decompose, so intrinsic coupling needs $K\geq4$.
For one rotor and one qudit, intrinsic coupling already occurs at $K=d$,
e.g., at $K=2$ for a rotor and a qubit.
By Corollary~\ref{cor:rotor-qudit-state}, it even occurs at $K=1$.
The factor $d$ in $\mathbf S_1$ is essential.
The lattice $\langle\mathbf T_1\otimes X^{-1}\rangle$ has the stabilizer
generator matrix with columns $(1,0\,|\,1,0)$, $e_\theta$, $d\,e_a$, and
$d\,e_b$.
Its Gram matrix is again $\varTheta$, so again $K=d$.
Yet the rotor-controlled qudit shift
$(n,\theta\,|\,a,b)\mapsto(n,\theta+b/d\,|\,a-n,b)$ maps it to a rotor
stabilizer state times an unencoded qudit.
Torsion is not needed either, since the torsion-free lattice
$\langle\mathbf T_d\otimes X^{-1}\rangle$, with $K=d^2$, is not
decomposable.
Our proofs need only the coarser block structure of continuous
automorphisms in Lemma~\ref{lem:rotor-qudit-block-structure}, which holds
for every $d$, rather than the full qubit classification of
Ref.~\cite{luo2026hybridqubitrotor}.

\subsection{Hybrid stabilizer codes over several factor types}
\label{subsec:examples_several}

Our last three examples combine several types of systems.
We treat qudits of unequal dimensions as different types of systems,
because their labels have different periods.
Integer lifts handle such mixed qudits uniformly, and concatenation with
GKP codes turns them into oscillator codes.
The last two codes combine an oscillator, a rotor, and a qubit in one
lattice.
The first decomposes after a change of frame, whereas the second couples
all three systems intrinsically.

\subsubsection{Mixed-qudit codes from integer lifts}
\label{ex:mixed-qudit}
This example specializes Sec.~\ref{subsec:examples_notation} to $p=q=0$
and fixes notation that Secs.~\ref{sec:distance_error_correction}
and~\ref{sec:weight_enumerators} use again.
For $k$ qudits with possibly unequal or composite local dimensions
$c_1,\ldots,c_k\geq2$, write
\[
 M=\Z_{\dia c}:=\prod_{i=1}^k\Z_{c_i},\qquad
 \dia c=\operatorname{diag}(c_1,\ldots,c_k),\qquad
 C_{\dia c}:=\prod_{i=1}^k c_i=\dim L^2(M).
\]
Systems of qudits with unequal dimensions, also called mixed-dimensional
or heterogeneous systems, arise in constructions of $k$-uniform
states~\cite{feng2023kuniform} and in software for simulating and
compiling qudit circuits~\cite{mato2023mixed,mato2024mqtqudits}.
Our aim is to describe mixed-qudit stabilizer codes by integer matrices,
so that their dimensions and logical operators can be computed using
lattice methods. Identify $G=M\times\widehat M$ with
$\Z_{\dia c}\times\Z_{\dia c}$. A physical displacement label in $G$
can be represented by an integer vector in $\Z^{2k}$, with the two
coordinates associated with qudit $i$ understood modulo $c_i$.
Let $\pi_{\dia c}:\Z^{2k}\to G$ denote this reduction map.
Its kernel $\Lambda_{\dia c}$ is the period lattice
$\Lambda_{\mathrm{per}}$ of Lemma~\ref{lem:full_lift_pfaffian} for
$p=q=0$.
This lattice of periods and the matrix
that expresses the qudit symplectic pairing in integer coordinates are
\begin{equation}
 \Lambda_{\dia c}=
 \begin{pmatrix}\dia c&0\\0&\dia c\end{pmatrix}\Z^{2k},
 \qquad
 J_{\mathrm{fin}}=\begin{pmatrix}0&\dia c^{-1}\\-\dia c^{-1}&0\end{pmatrix}.
 \label{eq:mixed-qudit-periods-form}
\end{equation}
For a subgroup $D\subseteq G$, the \emph{full lift}
$\Lambda:=\pi_{\dia c}^{-1}(D)$ is a full-rank integer lattice
containing $\Lambda_{\dia c}$, and
$D\cong\Lambda/\Lambda_{\dia c}$.
Choose an integer basis matrix $T$ for $\Lambda$ and form its symplectic
Gram matrix:
\begin{equation}
 \Lambda_{\dia c}\subseteq\Lambda=T\Z^{2k}\subseteq\Z^{2k},
 \qquad D:=\pi_{\dia c}(\Lambda),\qquad
 \varTheta:=T^{\intercal}J_{\mathrm{fin}}T.
 \label{eq:mixed-qudit-lift-gram}
\end{equation}
As in Sec.~\ref{subsec:examples_notation}, $D$ is isotropic exactly when
$\varTheta$ has integer entries.
After the phase correction of Lemma~\ref{lem:phase_triv}, such a subgroup
defines a stabilizer code $\mathcal C$.
Equivalently, to construct a code, choose a full-rank integer lattice
$\Lambda_{\dia c}\subseteq\Lambda\subseteq\Z^{2k}$ whose Gram matrix
is integral, and reduce it modulo the local dimensions. Every mixed-qudit
stabilizer subgroup arises in this way.
Since $|\det T|=[\Z^{2k}:\Lambda]=C_{\dia c}^2/|D|$,
Eq.~\eqref{eq:examples-dimension} gives the codespace dimension
\begin{equation}
 K:=\dim\mathcal C
 =\frac{C_{\dia c}}{|D|}
 =\frac{|\det T|}{C_{\dia c}}
 =|\operatorname{Pf}(\varTheta)|.
 \label{eq:mixed-qudit-pfaffian-dimension}
\end{equation}

Logical displacement operators are obtained from the symplectic dual
$\Lambda^\perp:=\{v\in\R^{2k}:v^{\intercal}J_{\mathrm{fin}}\Lambda\subseteq\Z\}$.
Its basis matrix and Gram matrix are
\begin{equation}
 T_{\mathrm{dual}}=T\varTheta^{-1},\qquad
 \Lambda^\perp=T_{\mathrm{dual}}\Z^{2k}=\pi_{\dia c}^{-1}(D^\perp),\qquad
 T_{\mathrm{dual}}^{\intercal}J_{\mathrm{fin}}T_{\mathrm{dual}}=-\varTheta^{-1}.
 \label{eq:mixed-qudit-dual-basis}
\end{equation}
Since $\Lambda$ contains $\Lambda_{\dia c}$, its symplectic dual is
contained in $\Lambda_{\dia c}^{\perp}=\Z^{2k}$.  In particular,
$T_{\mathrm{dual}}$ is integral, and reducing its columns by
$\pi_{\dia c}$ gives generators of $D^\perp$.
More precisely,
\begin{equation}
 D^\perp/D\cong\Lambda^\perp/\Lambda,
 \qquad |D^\perp/D|=\det\varTheta=K^2.
 \label{eq:mixed-qudit-logical-quotient}
\end{equation}
For mixed dimensions, this finite abelian group is generally not a
vector space over a single finite field.
The passage from $\varTheta$ to the dual Gram matrix $-\varTheta^{-1}$ in
Eq.~\eqref{eq:mixed-qudit-dual-basis} is the finite integer-lift
counterpart of the split-orthogonal duality for GKP--LCA codes in
Appendix~\ref{app:general-lca}.

\textit{A qubit and a ququart.}
For a concrete instance with unequal dimensions, take a qubit and a
ququart, $\dia c=(2,4)$, and the stabilizer $X\otimes X$.
Its label $w=(-1,-1\,|\,0,0)$ generates
$D=\{I,\,X\otimes X,\,I\otimes X^2,\,X\otimes X^3\}$, so $K=8/4=2$.
With the row order $(a_1,a_2\,|\,b_1,b_2)$, a stabilizer generator matrix
and its Gram matrix are
\begin{equation}
 T=\Bigl(\underbrace{\,w\,}_{X\otimes X}\;\;
   \underbrace{\,2e_{a_1}\;\;2e_{b_1}\;\;4e_{b_2}\,}_{\text{periods}}\Bigr)
 =\begin{pmatrix}-1&2&0&0\\-1&0&0&0\\0&0&2&0\\0&0&0&4\end{pmatrix}
 \;\begin{array}{@{}l@{\;}l@{}}
 a_1&\rowquadbrace{qubit and ququart}\\
 a_2&\\
 b_1&\\
 b_2&
 \end{array}\qquad\qquad\quad
 \varTheta=\begin{pmatrix}0&0&-1&-1\\0&0&2&0\\1&-2&0&0\\1&0&0&0
 \end{pmatrix}.
 \label{eq:mixed-qudit-example-T}
\end{equation}
The ququart shift period $4e_{a_2}=-4w-4e_{a_1}$ is already generated,
and $\operatorname{Pf}\varTheta=-2$ confirms $K=2$.
The codewords are the uniform superpositions over the two orbits of
$X\otimes X$,
\begin{equation}
 \ket{0_L}=\sum_{t\in\Z_4}\ket{t\bmod2}\ket{t},
 \qquad
 \ket{1_L}=\sum_{t\in\Z_4}\ket{t\bmod2}\ket{t+1}.
 \label{eq:mixed-qudit-example-codewords}
\end{equation}
The logical matrix has the columns $(0,0\,|\,0,-4)$, $(0,0\,|\,1,-2)$,
$(-1,0\,|\,0,0)$, and $(0,-1\,|\,0,0)$.
The first is a period, and the second labels $Z\otimes Z^2$.
The last two label $X\otimes I$ and $I\otimes X$, which agree up to a
stabilizer.
Thus $\xl=I\otimes X$ exchanges the two codewords, and $\zl=Z\otimes Z^2$
multiplies $\ket{\ell_L}$ by $(-1)^\ell$.

\textit{Qukit-to-qudit stabilizer codes.}
As a single-qudit instance of the construction above,
let $K,r_1,r_2\in\mathbb N$ and $d=Kr_1r_2>1$. Take $M=\Z_d$, so
$G=\Z_d\times\Z_d$, and let $\pi_d:\Z^2\to G$ denote reduction modulo
$d$. The integer lift and stabilizer subgroup are
\begin{equation}
 T=\begin{pmatrix}Kr_1&0\\0&Kr_2\end{pmatrix}
 \;\begin{array}{@{}l@{\;}l@{}}
 a&\rowpairbrace{qudit}\\
 b&
 \end{array}\qquad\quad \Lambda=T\Z^2,\qquad D=\pi_d(\Lambda).
 \label{eq:qukit-qudit-embedding}
\end{equation}
Using the finite symplectic form~\eqref{eq:mixed-qudit-periods-form}
with $\dia c=(d)$ gives
\begin{equation}
 \varTheta=T^{\intercal}J_{\mathrm{fin}}T
 =\begin{pmatrix}0&K\\-K&0\end{pmatrix}.
 \label{eq:qukit-qudit-gram-matrix}
\end{equation}
Thus $D$ is isotropic. Since $D$ is CSS, the bare
displacements $\mathbf D(Kr_1,0)$ and $\mathbf D(0,Kr_2)$ are commuting
stabilizer generators of orders $r_2$ and $r_1$, respectively. Hence
$D\cong\Z_{r_2}\times\Z_{r_1}$, and
Eq.~\eqref{eq:mixed-qudit-pfaffian-dimension} yields
\begin{equation}
 \dim\mathcal C=|\operatorname{Pf}(\varTheta)|=K.
 \label{eq:qukit-qudit-covolume-dimension}
\end{equation}
The code therefore encodes a $K$-dimensional logical system
(a \emph{qukit}) into one physical qudit of dimension $d$.
This code appears in the original GKP
paper~\cite[Sec.~II]{10.1103/physreva.64.012310}, whose stabilizers
$X^{r_1K}$ and $Z^{r_2K}$ generate the same group.
Its codewords are
$\ket{\ell_L}=\sum_{m=0}^{r_2-1}\ket{r_1\ell+Kr_1m}$ with $\ell\in\Z_K$.
The logical matrix
$T_{\mathrm{dual}}=\bigl(\begin{smallmatrix}0&-r_1\\r_2&0
\end{smallmatrix}\bigr)$ gives $\xl=\mathbf D(-r_1,0)=X^{r_1}$ and
$\zl=\mathbf D(0,r_2)=Z^{r_2}$.
They act as $\xl\ket{\ell_L}=\ket{(\ell+1)_L}$ and
$\zl\ket{\ell_L}=e^{2\pi i\ell/K}\ket{\ell_L}$.

\textit{Concatenation with GKP codes.}
The following construction extends to mixed qudit dimensions the
Construction~A description of concatenated square GKP codes
in~\cite[Sec.~2, Eqs.~(43)--(44)]{Conrad2022gottesmankitaev}.
A related rescaled Construction~A lift of qudit CSS codes appears in
the quantum low-density lattice codes of
\cite[Sec.~III.D, Eqs.~(34)--(35)]{hillmann2026quantumldlc}, connecting the integer-lift
description here to GKP constructions with sparse generators.
An integer lift also gives a concatenated GKP code: encode each physical
qudit of the finite code in an ideal single-mode square GKP code of
logical dimension $c_i$, then impose the outer stabilizers indexed by $D$.
For $\Lambda=\pi_{\dia c}^{-1}(D)=T\Z^{2k}$, the resulting oscillator
stabilizer lattice is
\begin{equation}
 R_{\dia c}:=\begin{pmatrix}\dia c^{-1/2}&0\\0&\dia c^{-1/2}\end{pmatrix},
 \qquad D_{\mathrm{GKP}}:=R_{\dia c}\Lambda.
 \label{eq:integer_lift_real_gkp_lattice}
\end{equation}
It contains the inner GKP stabilizer lattice
\begin{equation}
 D_{\mathrm{in}}:=R_{\dia c}\Lambda_{\dia c}
 =\begin{pmatrix}\dia c^{1/2}&0\\0&\dia c^{1/2}\end{pmatrix}\Z^{2k},
 \label{eq:integer_lift_inner_gkp_lattice}
\end{equation}
and $D_{\mathrm{GKP}}/D_{\mathrm{in}}\cong D$ identifies the additional
stabilizers with those of the outer code, with matching stabilizer phases.
Since $R_{\dia c}^{\intercal}J_{\mathrm{osc}}R_{\dia c}=J_{\mathrm{fin}}$,
this lattice has Gram matrix $\varTheta$ and encodes the same logical
dimension $K=|\det T|/C_{\dia c}$.

Remark~\ref{rem:full_lift_gkp_concatenation} extends this construction to
every configuration space of the form~\eqref{eq:examples-configuration-space}.
There, each rotor becomes an oscillator whose position is confined to the
integers by the stabilizer $e^{2\pi i\hat x}$.
The rescaled full lift is then a GKP lattice with the same Gram matrix
$\varTheta$.
Thus Eq.~\eqref{eq:examples-dimension} is the GKP dimension formula
applied to the rescaled full lift.
The correspondence identifies ideal codespaces, stabilizer groups, and
logical displacement groups, but it does not identify error sets or
Clifford groups.
For instance, the rotor--qudit code of Sec.~\ref{ex:rotor-qudit}
becomes a two-mode GKP code with the Gram matrix~\eqref{eq:rq-Theta}.
Integer lifts also yield the Pauli and Lee distances of
Sec.~\ref{sec:distance_error_correction}, in
Eqs.~\eqref{eq:pauli_A2_lift_distance}
and~\eqref{eq:qudit_lee_lift_distance}.
Remark~\ref{rem:we_integer_lifts} uses them for weight enumerators.

\subsubsection{Torsion-extended hybrid code}
\label{ex:torsion}
Stabilizer lattices with both free and torsion components already appear
in Secs.~\ref{ex:rotor-gkp}, \ref{ex:hybrid-golay}, \ref{ex:osc-rotor},
and~\ref{ex:rotor-qudit}.
The following example places such a lattice on an oscillator, a rotor,
and a physical qubit at once.

Fix $K\in\mathbb N$ and let
\begin{equation}
M=\R\times\Z\times\Z_2,
\qquad
\widehat M=\R\times\T\times\Z_2,
\qquad
\mathcal H=L^2(M)
\cong L^2(\R)\otimes\ell^2(\Z)\otimes\C^2.
\label{eq:hybrid-torsion-configuration-space}
\end{equation}
Write a point of $G=M\times\widehat M$ as
$z=(x,n,j;\xi,\theta,b)$, listing all positions before all momenta.
Define
\begin{equation}
\begin{aligned}
w_1&=(1,1,1;0,0,0),
&w_2&=(0,0,0;\tfrac12,0,1),\\
w_3&=(0,0,0;0,\tfrac12,1),
&w_4&=(2K,0,0;0,0,0),
\end{aligned}
\qquad
D_K:=\langle w_1,w_2,w_3,w_4\rangle.
\label{eq:hybrid-torsion-lattice}
\end{equation}
The shift part $\langle w_1,w_4\rangle$ and the phase part
$\langle w_2,w_3\rangle$ of $D_K$ commute.
Thus $D_K$ is an isotropic CSS lattice, and its bare displacements
already form the stabilizer group.
Since $2w_3=0$, while $w_1,w_2,$ and
$w_4$ are independent elements of infinite order,
\begin{equation}
D_K\cong\Z^3\oplus\Z_2.
\label{eq:hybrid-torsion-group-structure}
\end{equation}

To display the hybrid character of the stabilizers, set
$\mathbf S_i:=\mathbf D(w_i)$. With the convention $\mathbf D=\mathbf M\mathbf T$
fixed in Sec.~\ref{subsec:examples_notation}, their action on $\psi\in L^2(M)$ is
\begin{subequations}
\label{eq:hybrid-torsion-stabilizer-actions}
\begin{align}
(\mathbf S_1\psi)(x,n,j)&=\psi(x+1,n+1,j+1),\\
(\mathbf S_2\psi)(x,n,j)&=e^{\pi i x}(-1)^j\psi(x,n,j),\\
(\mathbf S_3\psi)(x,n,j)&=(-1)^{n+j}\psi(x,n,j),\\
(\mathbf S_4\psi)(x,n,j)&=\psi(x+2K,n,j),
\end{align}
\end{subequations}
where the qubit coordinate is understood modulo $2$.

In the paired row order $(x,\xi\,|\,n,\theta\,|\,a,b)$ of
Sec.~\ref{subsec:examples_notation}, where the qubit shift $a$ is the
coordinate $j$ of $z$, a stabilizer generator matrix is
\begin{equation}
 T=\Bigl(\underbrace{\,w_1\;\;w_2\;\;w_3\;\;w_4\,}_{\text{stabilizers}}\;\;
   \underbrace{\,e_\theta\;\;2e_a\,}_{\text{periods}}\Bigr)
 =\begin{pmatrix}1&0&0&2K&0&0\\0&\tfrac12&0&0&0&0\\1&0&0&0&0&0\\
 0&0&\tfrac12&0&1&0\\1&0&0&0&0&2\\0&1&1&0&0&0\end{pmatrix}
 \;\begin{array}{@{}l@{\;}l@{}}
 x&\rowpairbrace{oscillator}\\
 \xi\vphantom{\tfrac12}&\\
 n&\rowpairbrace{rotor}\\
 \theta\vphantom{\tfrac12}&\\
 a&\rowpairbrace{qubit.}\\
 b&
 \end{array}
 \label{eq:hybrid-torsion-T}
\end{equation}
Here $e_\theta$ and $2e_a$ are the rotor-angle and qubit-shift periods.
The qubit-phase period $2e_b=2w_3-e_\theta$ is already generated.
Because of the torsion element $w_3$, the periods $e_\theta$ and $2e_b$
cannot both be columns of $T$.
The Gram matrix
\begin{equation}
 \varTheta=\begin{pmatrix}0&1&1&0&1&0\\-1&0&0&-K&0&-1\\-1&0&0&0&0&-1\\
 0&K&0&0&0&0\\-1&0&0&0&0&0\\0&1&1&0&0&0\end{pmatrix},
 \qquad
 \operatorname{Pf}\varTheta=K,
 \label{eq:hybrid-torsion-Theta}
\end{equation}
gives, through Eq.~\eqref{eq:examples-dimension}, the code dimension
\begin{equation}
 \dim\mathcal C_K=|\operatorname{Pf}\varTheta|=K,
 \label{eq:hybrid-torsion-code-dimension}
\end{equation}
and Eq.~\eqref{eq:examples-logical-group} gives $|D_K^\perp/D_K|=K^2$.

\textit{Ideal logical basis.}
Let $\ket{x,n,j}$ denote the generalized basis associated with $M$.  The
ideal logical codewords are
\begin{equation}
\ket{k_L}\propto
\sum_{\ell,n\in\Z}
\ket{2(k+K\ell)+n}_{\mathrm{osc}}
\otimes\ket n_{\mathrm{rot}}
\otimes\ket{n\bmod2}_{\mathrm{fin}},
\qquad k\in\Z_K.
\label{eq:hybrid-torsion-logical-basis}
\end{equation}
This can be checked directly from
Eq.~\eqref{eq:hybrid-torsion-stabilizer-actions}: every $\mathbf S_i$ fixes
each state in~\eqref{eq:hybrid-torsion-logical-basis}.  Moreover, states with
distinct $k\in\Z_K$ have disjoint comb supports and are therefore linearly
independent.  There are exactly $K$ such states, which agrees with
Eq.~\eqref{eq:hybrid-torsion-code-dimension}.
Consequently, they form a basis of the codespace.
Like ideal GKP states, these combs are not normalizable. For $K=2$,
Eq.~\eqref{eq:hybrid-torsion-logical-basis} gives an explicit logical qubit
with basis
\begin{subequations}
\begin{align}
\ket{0_L}&\propto\sum_{\ell,n\in\Z}
\ket{4\ell+n}_{\mathrm{osc}}\otimes\ket n_{\mathrm{rot}}
\otimes\ket{n\bmod2}_{\mathrm{fin}},\\
\ket{1_L}&\propto\sum_{\ell,n\in\Z}
\ket{4\ell+2+n}_{\mathrm{osc}}\otimes\ket n_{\mathrm{rot}}
\otimes\ket{n\bmod2}_{\mathrm{fin}}.
\end{align}
\label{eq:hybrid-torsion-logical-qubit}
\end{subequations}

\textit{Logical Pauli operators.}
Up to periods, the logical matrix $T_{\mathrm{dual}}=T\varTheta^{-1}$
contains the labels $(2,0,0;0,0,0)$ and
$(0,0,0;\tfrac1{2K},-\tfrac1{2K},0)$ among its columns.
For every $K$, their displacements generate
$D_K^\perp/D_K\cong\Z_K\times\Z_K$.
For $K=2$, choose the physical displacement representatives
\begin{equation}
 \overline X:=\mathbf D(2,0,0;0,0,0),\qquad
 \overline Z:=\mathbf D(0,0,0;\tfrac14,-\tfrac14,0).
 \label{eq:hybrid-torsion-logical-paulis}
\end{equation}
Their actions on wavefunctions are
\[
 (\overline X\psi)(x,n,j)=\psi(x+2,n,j),\qquad
 (\overline Z\psi)(x,n,j)=e^{\pi i(x-n)/2}\psi(x,n,j).
\]
Both commute with all four stabilizers, while
$\overline X\,\overline Z=-\overline Z\,\overline X$,
$\overline X^2=\mathbf S_4$, and
$\overline Z^2=\mathbf S_2\mathbf S_3$.
Thus their restrictions to the codespace are logical Pauli operators:
\begin{equation}
 \overline X\ket{k_L}=\ket{(k+1\bmod2)_L},\qquad
 \overline Z\ket{k_L}=(-1)^k\ket{k_L},\qquad k\in\Z_2.
 \label{eq:hybrid-torsion-logical-pauli-action}
\end{equation}

\textit{Logical Clifford gates.}
For $K=2$, the combination $y=x-n$ supports the rectangular GKP comb
$y=4\ell+2k$.  A scaled Fourier transform in this coordinate and a
quadratic phase implement logical Hadamard and phase gates:
\begin{align}
 (U_H\psi)(x,n,j)
 &=\frac1{\sqrt8}\int_{\R}
 e^{-2\pi i(x-n)(x'-n)/8}\psi(x',n,j)\,dx',\notag\\
 (U_S\psi)(x,n,j)&=e^{i\pi(x-n)^2/8}\psi(x,n,j).
 \label{eq:hybrid-torsion-clifford-gates}
\end{align}
Section~\ref{subsec:clifford_examples} shows that their symplectic
automorphisms preserve $D_2$ and that no displacement correction is
needed.
On the logical basis they act as
\begin{equation}
 U_H\ket{k_L}=\frac{\ket{0_L}+(-1)^k\ket{1_L}}{\sqrt2},\qquad
 U_S\ket{k_L}=i^k\ket{k_L},\qquad k\in\Z_2,
 \label{eq:hybrid-torsion-clifford-action}
\end{equation}
and hence generate the single-qubit Clifford group up to overall phase.

\textit{A change of frame.}
The lattice $D_K$ couples all three systems in the coordinates above.
However, the symplectic automorphism
\begin{equation}
 S(x,n,j;\xi,\theta,b)=\bigl(x-n,\,n,\,j-n;\,\xi,\,\theta+\xi+\tfrac b2,\,b\bigr)
 \label{eq:hybrid-torsion-decoupling}
\end{equation}
maps $D_K$ onto a direct sum of three lattices.
They are the rectangular GKP lattice with $A=2K\Z$ and $B=\tfrac12\Z$, the
lattice of the rotor stabilizer $\mathbf T_1$, and the lattice of the
qubit stabilizer $Z$.
Thus a generalized Clifford operator maps the code onto a GKP code with
$K$ logical states times stabilizer states of the rotor and the qubit.
The example therefore illustrates a hybrid stabilizer lattice with
torsion rather than intrinsic coupling.
Section~\ref{subsec:fe_examples} constructs finite-energy versions of the
codewords~\eqref{eq:hybrid-torsion-logical-qubit}.


\subsubsection{Intrinsically coupled hybrid code}
\label{ex:hybrid-intrinsic}
The lattice $D_K$ of Sec.~\ref{ex:torsion} couples its three systems
only in its defining coordinates.
We now construct a torsion-free CSS lattice on the same systems that no
generalized Clifford operator decomposes.

We keep the configuration space~\eqref{eq:hybrid-torsion-configuration-space}
and the coordinates $z=(x,n,j;\xi,\theta,b)$ of Sec.~\ref{ex:torsion}.
Define
\begin{equation}
w_1=(1,0,0;0,0,0),
\qquad
w_2=(0,0,0;1,\tfrac14,1),
\qquad
w_3=(0,2,1;0,0,0),
\qquad
D_{\mathrm{int}}:=\langle w_1,w_2,w_3\rangle.
\label{eq:hybrid-intrinsic-lattice}
\end{equation}
The corresponding stabilizers $\mathbf S_i:=\mathbf D(w_i)$ are
\begin{equation}
 \mathbf S_1=\mathbf T_1\otimes I\otimes I,
 \qquad
 \mathbf S_2=\mathbf M_1\otimes\mathbf M_{1/4}\otimes Z,
 \qquad
 \mathbf S_3=I\otimes\mathbf T_2\otimes X,
 \label{eq:hybrid-intrinsic-stabilizers}
\end{equation}
where $\mathbf M_{1/4}=i^{\widehat{\mathbf L}}$ acts on the rotor.
The shift part $\langle w_1,w_3\rangle$ and the phase part
$\langle w_2\rangle$ of $D_{\mathrm{int}}$ commute.
Indeed, $\mathbf S_1$ and $\mathbf S_2$ have oscillator pairing $1$,
while $\mathbf S_2$ and $\mathbf S_3$ anticommute on both the rotor and
the qubit.
Thus $D_{\mathrm{int}}$ is an isotropic CSS lattice, and its bare
displacements already form the stabilizer group.
The element $m_1w_1+m_2w_2+m_3w_3$ has oscillator labels $(m_1,m_2)$ and
rotor shift $2m_3$, so it vanishes only when $m_1=m_2=m_3=0$.
Hence
\begin{equation}
D_{\mathrm{int}}\cong\Z^3,
\label{eq:hybrid-intrinsic-group-structure}
\end{equation}
and the lattice has no torsion.

With the convention $\mathbf D=\mathbf M\mathbf T$, the stabilizers act
on $\psi\in L^2(M)$ as
\begin{subequations}
\label{eq:hybrid-intrinsic-stabilizer-actions}
\begin{align}
(\mathbf S_1\psi)(x,n,j)&=\psi(x+1,n,j),\\
(\mathbf S_2\psi)(x,n,j)&=e^{2\pi ix}\,i^n(-1)^j\psi(x,n,j),\\
(\mathbf S_3\psi)(x,n,j)&=\psi(x,n+2,j+1),
\end{align}
\end{subequations}
where the qubit coordinate is understood modulo $2$.

In the paired row order $(x,\xi\,|\,n,\theta\,|\,a,b)$ of
Sec.~\ref{subsec:examples_notation}, where the qubit shift $a$ is the
coordinate $j$ of $z$, a stabilizer generator matrix is
\begin{equation}
 T=\Bigl(\underbrace{\,w_1\;\;w_2\;\;w_3\,}_{\text{stabilizers}}\;\;
   \underbrace{\,e_\theta\;\;2e_a\;\;2e_b\,}_{\text{periods}}\Bigr)
 =\begin{pmatrix}1&0&0&0&0&0\\0&1&0&0&0&0\\0&0&2&0&0&0\\
 0&\tfrac14&0&1&0&0\\0&0&1&0&2&0\\0&1&0&0&0&2\end{pmatrix}
 \;\begin{array}{@{}l@{\;}l@{}}
 x&\rowpairbrace{oscillator}\\
 \xi&\\
 n&\rowpairbrace{rotor}\\
 \theta\vphantom{\tfrac14}&\\
 a&\rowpairbrace{qubit.}\\
 b&
 \end{array}
 \label{eq:hybrid-intrinsic-T}
\end{equation}
Because $D_{\mathrm{int}}$ has no torsion, $T$ can contain all three
periods $e_\theta$, $2e_a$, and $2e_b$ as columns.
The Gram matrix
\begin{equation}
 \varTheta=\begin{pmatrix}0&1&0&0&0&0\\-1&0&-1&0&-1&0\\0&1&0&2&0&1\\
 0&0&-2&0&0&0\\0&1&0&0&0&2\\0&0&-1&0&-2&0\end{pmatrix},
 \qquad
 \operatorname{Pf}\varTheta=4,
 \label{eq:hybrid-intrinsic-Theta}
\end{equation}
gives, through Eq.~\eqref{eq:examples-dimension}, the code dimension
\begin{equation}
 \dim\mathcal C=|\operatorname{Pf}\varTheta|=4.
 \label{eq:hybrid-intrinsic-code-dimension}
\end{equation}
The Smith normal form of $\varTheta$ is $\operatorname{diag}(1,1,1,1,4,4)$,
so Eq.~\eqref{eq:examples-logical-group} gives
$D_{\mathrm{int}}^\perp/D_{\mathrm{int}}\cong\Z_4^2$.

\textit{Ideal logical basis.}
The ideal logical codewords are
\begin{equation}
\ket{k_L}\propto
\sum_{\ell,m\in\Z}
\ket{\ell-\tfrac k4}_{\mathrm{osc}}
\otimes\ket{k+2m}_{\mathrm{rot}}
\otimes\ket{m\bmod2}_{\mathrm{fin}},
\qquad k\in\Z_4.
\label{eq:hybrid-intrinsic-logical-basis}
\end{equation}
Equation~\eqref{eq:hybrid-intrinsic-stabilizer-actions} shows that every
$\mathbf S_i$ fixes each of these states.
For $\mathbf S_2$, the oscillator phase $e^{2\pi ix}=i^{-k}$ cancels the
rotor and qubit phases $i^{k+2m}(-1)^m=i^k$.
States with distinct $k$ have oscillator positions in distinct cosets of
$\Z$, so their supports are disjoint.
These four states therefore form a basis of the codespace, in agreement
with Eq.~\eqref{eq:hybrid-intrinsic-code-dimension}.
Like ideal GKP states, they are not normalizable.
Each codeword is a product of an oscillator comb and a rotor--qubit state,
but both factors depend on $k$, so logical superpositions are entangled
across this cut.
The rotor- and qubit-controlled oscillator displacement $\ket{x,n,j}\mapsto\ket{x+\tfrac{n-2j}4,n,j}$, with
$j\in\{0,1\}$, removes $k$ from the oscillator, but it is not a generalized Clifford operator
(Proposition~\ref{prop:hybrid-intrinsic-nondecoupling}).

\textit{Logical Pauli operators.}
Up to periods and inverses, the logical matrix
$T_{\mathrm{dual}}=T\varTheta^{-1}$ contains the labels
$(\tfrac14,-1,0;0,0,0)$ and $(0,0,0;0,\tfrac14,1)$ among its columns.
We choose the physical displacement representatives
\begin{equation}
 \overline X:=\mathbf D(\tfrac14,-1,0;0,0,0)
 =\mathbf T_{1/4}\otimes\mathbf T_{-1}\otimes I,
 \qquad
 \overline Z:=\mathbf D(0,0,0;0,\tfrac14,1)
 =I\otimes\mathbf M_{1/4}\otimes Z.
 \label{eq:hybrid-intrinsic-logical-paulis}
\end{equation}
Their actions on wavefunctions are
\[
 (\overline X\psi)(x,n,j)=\psi(x+\tfrac14,n-1,j),\qquad
 (\overline Z\psi)(x,n,j)=i^n(-1)^j\psi(x,n,j).
\]
Both commute with all three stabilizers, while
$\overline X\,\overline Z=-i\,\overline Z\,\overline X$,
$\overline X^4=\mathbf S_1\mathbf S_3^{-2}$, and $\overline Z^4=I$.
Their squares are not stabilizers, so both have order four modulo
stabilizers.
They act on the logical basis as
\begin{equation}
 \overline X\ket{k_L}=\ket{(k+1)_L},\qquad
 \overline Z\ket{k_L}=i^k\ket{k_L},\qquad k\in\Z_4,
 \label{eq:hybrid-intrinsic-logical-pauli-action}
\end{equation}
and thus generate the Pauli group of a logical ququart.

\textit{Indecomposability.}
Unlike $D_K$, the lattice $D_{\mathrm{int}}$ couples its three systems
intrinsically.
The oscillator--rotor shear
$(x,n;\xi,\theta)\mapsto(x+\tfrac n4,n;\xi,\theta-\tfrac\xi4)$ of
Eq.~\eqref{eq:oscillator-rotor-shear} maps $D_{\mathrm{int}}$ to the
torsion-free rotor--qubit code $\langle\mathbf T_2\otimes X\rangle$ of
Sec.~\ref{ex:rotor-qudit} with its qubit encoded in the oscillator--qubit
LCA code $\langle\mathbf T_1,\mathbf M_1\otimes Z\rangle$.
Encoding that qubit in the rectangular GKP qubit of Sec.~\ref{ex:gkp}
instead gives the ququart $D_{\alpha,1}$ of Sec.~\ref{ex:osc-rotor}.

\begin{proposition}
\label{prop:hybrid-intrinsic-nondecoupling}
The lattice $D_{\mathrm{int}}$ is not symplectically decomposable for any
bipartition of the oscillator, the rotor, and the qubit.
In particular, no generalized Clifford operator maps the code to a
tensor product of codes on complementary sets of these systems.
\end{proposition}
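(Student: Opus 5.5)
The plan is to reduce each of the three bipartitions to a statement about finite abelian groups. Then we exclude the remaining possibilities with invariants that every $S\in\operatorname{Sp}(G,\rho)$ must preserve.

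\emph{Step 1: reduction to $K=1$ on one side.} Suppose some $S\in\operatorname{Sp}(G,\rho)$ maps $D_{\mathrm{int}}$ onto $D_1\oplus D_2$ with $D_i\subseteq G_i$. Both $D_i$ are isotropic lattices, because $D_{\mathrm{int}}$ is one. They are torsion free by Eq.~\eqref{eq:hybrid-intrinsic-group-structure}. The symplectic dual of a direct sum is the direct sum of the duals, so
\[
D_{\mathrm{int}}^\perp/D_{\mathrm{int}}\cong\mathsf L_{D_1}\oplus\mathsf L_{D_2},\qquad |\mathsf L_{D_i}|=K_i^2,\qquad K_1K_2=4,
\]
by Theorem~\ref{thm:code_dimension} and Theorem~\ref{thm:logical_displacement_algebra}. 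We already know $D_{\mathrm{int}}^\perp/D_{\mathrm{int}}\cong\Z_4^2$. If $K_1=K_2=2$, each $\mathsf L_{D_i}$ would have order four. Each carries a nondegenerate alternating bicharacter, so neither is cyclic, and then $\mathsf L_{D_i}\cong\Z_2^2$. The sum would be $\Z_2^4$, which is not $\Z_4^2$. Hence one side is an ideal stabilizer state with $K_i=1$, and the other side carries a full logical ququart with $\mathsf L\cong\Z_4^2$. Comparing free ranks also fixes how the rank of $D_{\mathrm{int}}$ is shared. A lattice in the phase space of $p_i$ oscillators and $q_i$ rotors has free rank $2p_i+q_i$, and the total is $3$.

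\emph{Step 2: an invariant from the identity component.} Let $G_0\cong\R^2\times\T$ be the identity component of $G$: oscillator labels together with the rotor angle, with zero rotor shift and zero qubit labels. Every continuous automorphism preserves $G_0$. It also preserves the subgroup $\{0\}\times\T$ of compact elements of $G_0$. It therefore induces a linear automorphism $\bar S$ of $G_0/\T\cong\R^2$. This map is symplectic, since rotor angles pair trivially with oscillator labels, so it preserves covolume. A direct computation gives
\[
D_{\mathrm{int}}\cap G_0=\langle w_1,2w_2\rangle,
\]
whose image in $\R^2$ is $\Z\times2\Z$, of covolume $2$. On the split side, compute the image of $(D_1\oplus D_2)\cap G_0$ in $\R^2$. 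A torsion-free lattice has no nonzero pure-angle elements. In each cut, the intersection with $G_0$ comes from the oscillator factor only, together with whatever part of the rotor--qubit or qubit lattice lies in $G_0$.

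\emph{Step 3: the three cuts.} Each cut should reduce to a count that contradicts either the covolume $2$ or the logical group $\Z_4^2$.
\begin{itemize}
\item Oscillator versus rotor plus qubit. Here the rotor--qubit side is torsion free of rank $1$, so its intersection with $G_0$ is trivial.
\begin{itemize}
\item If the oscillator carries $K=1$, then $(D_1\oplus D_2)\cap G_0=D_1$ has covolume $1$, which contradicts the covolume $2$ from Step 2.
\item If the oscillator carries $K=4$, the rotor--qubit side is a torsion-free rank-one lattice with $K=1$. Its generator then has a bounded set of possible angular-momentum projections. We would compare these with the projection of $D_{\mathrm{int}}$ onto $G/G_0\cong\Z\times\Z_2^2$, namely $\langle(2,1,0),(0,0,1)\rangle$. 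The induced action on this discrete quotient is block-triangular, by Lemma~\ref{lem:rotor-qudit-block-structure}.
\end{itemize}
\item Rotor versus oscillator plus qubit. Continuous homomorphisms between $\R^2$ and $\Z_2^2$ vanish, so the oscillator--qubit side is acted on as in Sec.~\ref{subsec:examples_coupling}. The rotor lattice would be a single-rotor lattice of rank $1$.
\item Qubit versus oscillator plus rotor. A torsion-free qubit lattice is trivial. The qubit would then be unencoded, and $K$ would count with a factor $2$ that is absent from the projection computed in Step 2.
\end{itemize}
In the last two cuts I expect the same two invariants, the projection of $D_{\mathrm{int}}$ to $G/G_0$ and the covolume of $D_{\mathrm{int}}\cap G_0$ modulo $\T$, to give the contradiction.

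The main obstacle is the sub-case in which the rotor--qubit side carries the full ququart, as the torsion-free code $\langle\mathbf T_2\otimes X\rangle$ does. There the sizes of the logical groups alone do not decide. The argument has to use precisely which continuous maps can couple the rotor shift to the oscillator and to the qubit; this is the role of the shear~\eqref{eq:oscillator-rotor-shear} and of Lemma~\ref{lem:rotor-qudit-block-structure}.

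The final sentence of the proposition then follows from Proposition~\ref{prop:symplectic_implementer}. A generalized Clifford operator that mapped the code to a tensor product of codes on complementary sets of systems would induce a symplectic automorphism mapping $D_{\mathrm{int}}$ onto a direct sum, which Steps 1--3 exclude.
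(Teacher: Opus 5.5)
Your Steps~1 and~2 are correct. A split forces $K_1K_2=4$ with $K_i\neq2$, since two factors $\Z_2^2$ would give $\Z_2^4\not\cong\Z_4^2$. The image $\Z\times2\Z$ of $D_{\mathrm{int}}\cap G_0$ in $G_0/\T\cong\R^2$ has covolume $2$, and this is preserved because the induced map lies in $\mathrm{SL}(2,\R)$. However, the proposal leaves cases unproved and misplaces the difficulty.

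In the oscillator cut, the sub-case $K_{\mathrm{osc}}=4$ needs no angular-momentum comparison. The group $D_2\cap G_0$ is finite (it sits in the circle) and torsion free, hence trivial, so $S(D_{\mathrm{int}})\cap G_0=D_1$. Your invariant then forces $K_{\mathrm{osc}}=|\R^2/D_1|=2$, which Step~1 excludes. The ``main obstacle'' you name, the rotor--qubit side carrying the ququart, is the case $K_{\mathrm{osc}}=1$, which you had already excluded. In the qubit cut, the right reason is simply that the trivial qubit lattice has covolume $\mu(\Z_2\times\Z_2)=2$, i.e.\ $K_{\mathrm{qubit}}=2$, again excluded by Step~1.

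The genuine gap is the rotor cut. You only assert it, and Steps~1 and~2 cannot decide it. For example, take the rotor lattice $4\Z\times\{0\}$, with $K=4$ and logical group $\Z_4^2$. Place it next to the LCA lattice of Sec.~\ref{ex:lca} with $c=2$, $\theta=0$, $d=1$, which has $K=1$ and pure-oscillator part $2\Z\times\Z$ of covolume $2$. The sum is torsion free of rank three and passes both of your tests.

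The missing invariant is the rotor-shift projection. Show directly that $S$ acts on $G/G_0\cong\Z\times\Z_2^2$ by $(n,f)\mapsto(\pm n,Ff+nh)$. This holds because $S$ preserves the torsion subgroup and every homomorphism $\Z_2^2\to\Z$ vanishes. Lemma~\ref{lem:rotor-qudit-block-structure} concerns only a rotor and a qudit, so it does not literally apply here. Hence the rotor shifts of $S(D_{\mathrm{int}})$ form $2\Z$, and all of them come from the torsion-free rotor factor $D_1=\langle(2,\varphi)\rangle$. That factor has covolume $2$, so $K_{\mathrm{rot}}=2$, which Step~1 excludes.

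With these additions your route is complete, and it differs from the paper's. The paper uses the nonzero qubit labels of $S(w_2)\in S(D_{\mathrm{int}})\cap\{n=0\}$ to rule out the oscillator and qubit cuts, and it uses the logical-group count only for the rotor cut. You front-load that count instead, so every cut reduces uniformly to showing that the single-system factor would have $K=2$.
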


The proof in Appendix~\ref{appx:proof_hybrid_intrinsic} rests on two
invariants.
First, every generalized Clifford operator maps the stabilizers without
rotor shift to stabilizers without rotor shift.
Among these, $\mathbf S_2$ acts on the qubit, and no change of frame
removes this qubit action.
Because $D_{\mathrm{int}}$ has no torsion, splitting off the oscillator or
the qubit would leave the image of $\mathbf S_2$ without qubit labels.
This rules out both splits.
Second, the rotor shifts $2\Z$ of the stabilizers are invariant.
A separate rotor factor would therefore encode a qubit.
The logical group would then be $\Z_2^4$, which has no element of order
four, unlike $\overline X$.
As for the ququart $D_{\alpha,1}$ of Sec.~\ref{ex:osc-rotor}, this
intrinsic coupling needs neither stabilizer torsion nor a non-CSS lattice.
The Gaussian regularization of Sec.~\ref{sec:finite_energy} applies to
the combs~\eqref{eq:hybrid-intrinsic-logical-basis} as well.

\section{General structure of GLCA codes}
\label{sec:general_structure}

We now develop the general structure of GLCA codes for every configuration
group $M=\mathbb R^p\times\mathbb Z^q\times\mathbb Z_{\dia c}$, where
$p,q\geq0$ and $\mathbb Z_{\dia c}$ is a finite abelian group written as a
product of cyclic groups.
Sections~\ref{sec:notation}--\ref{sec:GSC} fix notation, review the
projective-module picture of Pauli stabilizer codes, and define GLCA codes
for every isotropic lattice $D\subseteq G$, including lattices with
torsion.

Sections~\ref{sec:LSC_ncg}--\ref{sec:clifford_gates} then derive four
general results.
First, the syndrome subspaces form a Hermitian vector bundle over the
syndrome space $\widehat D$, with Rieffel's module as its module of
continuous sections (Sec.~\ref{sec:LSC_ncg}).
Second, the code dimension equals the covolume of $D$
(Theorem~\ref{thm:code_dimension}) and hence the Pfaffian
$|\operatorname{Pf}\varTheta|$ used throughout Sec.~\ref{sec:examples}
(Lemma~\ref{lem:full_lift_pfaffian}).
Third, the symplectic dual lattice $D^\perp$ labels logical displacement
operators that generate the full logical operator algebra
(Theorem~\ref{thm:logical_displacement_algebra}).
Fourth, lattice symmetries, i.e., symplectic automorphisms of $G$ that
preserve $D$, yield logical Clifford gates
(Theorem~\ref{thm:corrected_clifford_implementer}).
The examples of Sec.~\ref{sec:examples} illustrate these results.
Most proofs appear in
Appendices~\ref{appx:proof_rieffel_bundle_decomposition}--\ref{appx:proof_corrected_clifford_implementer}.
Sections~\ref{sec:distance_error_correction}--\ref{sec:finite_energy}
build on this section, treating distance and decoding, weight enumerators,
and finite-energy codewords in turn.

\subsection{Notation}
\label{sec:notation}
Table~\ref{tab:notation_normalization} collects the main notation and
normalization conventions. The indicated sections and equations give the
definitions and constructions.
Table~\ref{tab:qec_ncg_dictionary} matches stabilizer-code notions with
their counterparts in Rieffel's noncommutative-geometry framework.

\begin{table}[htbp]
\caption{Main notation and normalization conventions for an isotropic
lattice $D\subseteq G=M\times\widehat M$.}
\label{tab:notation_normalization}
\centering
\small
\renewcommand{\arraystretch}{1.18}
\begin{tabularx}{\linewidth}{@{}p{0.20\linewidth}Xl@{}}
\toprule
Symbol & Meaning or convention & Reference \\
\midrule
$M$ & Configuration group $\mathbb R^p\times\mathbb Z^q\times\mathbb Z_{\dia c}$;
physical Hilbert space $L^2(M)$. & Sec.~\ref{sec:GSC} \\
$G$ & Phase space $M\times\widehat M$, with commutator bicharacter $\rho$.
& Sec.~\ref{sec:GSC} \\
$D$, $\mathcal S_D$ & Isotropic stabilizer lattice and its operator group
$\mathcal S_D=\{c(w)\mathbf D(w):w\in D\}$. & \eqref{eq:stabgroup} \\
$D^\perp$ & Symplectic dual of $D$ with respect to $\rho$;
its displacements preserve every syndrome fiber. & Sec.~\ref{sec:logical_displacement_operators} \\
$\widehat D$, $\chi$ & Syndrome parameter space and a fiber label.
On $E_{D,\chi}$, the stabilizer indexed by $w$ has eigenvalue
$\overline{\chi(w)}$. & \eqref{eq:rieffel_fiber_generalized_eigenvalue} \\
$\kappa_z$ & Syndrome shift caused by displacement $z$:
$\kappa_z(w)=\overline{\rho(w,z)}$, sending $\chi$ to $\kappa_z\chi$.
& \eqref{eq:rieffel_fiber_covariance} \\
$E_{D,\chi}$ & Syndrome fiber; $\iota_1(E_{D,1})=\mathcal C$
is the ideal codespace. & \eqref{eq:trivial_fiber_codespace} \\
$\mathscr E_D$, $\Omega_D$ & Syndrome bundle and Rieffel's module,
with $\Omega_D\cong\Gamma(\mathscr E_D)$. & \eqref{eq:serre_swan_evaluation_map} \\
$\mathsf L_D$ & Logical phase space $D^\perp/D$, of order $K^2$.
& \eqref{eq:logical_phase_space} \\
$K$ & Code dimension $\dim\mathcal C=\dim E_{D,\chi}$,
independent of $\chi$. & \eqref{eq:rank_covolume_formula} \\
$s_D$ & Covolume $|G/D|_\mu$; for an isotropic lattice, $s_D=K$. & \eqref{eq:covolume_definition} \\
$\sigma_D$ & Order $|\operatorname{Tor}(D)|$ of the torsion subgroup;
$\sigma_D=1$ when $D$ is torsion-free. & \eqref{eq:fe_torsion_order} \\
$\mu$ & Self-Plancherel Haar measure $\mu_M\otimes\mu_{\widehat M}$
on $G$; for finite $M$, it is $|M|^{-1}$ times counting measure.
& Sec.~\ref{subsec:code_dimension_covolume} \\
$d\chi$ & Normalized Haar measure on $\widehat D$;
Haar measure on $D$ is counting measure. & \eqref{eq:canonical_trace_haar} \\
$\tau_D$, $\mathrm{Tr}_D$ & Corresponding normalized traces:
$\tau_D(f)=\int_{\widehat D}f(\chi)\,d\chi$ and
$\mathrm{Tr}_D(a)=a(0)$. Matrix extensions use the unnormalized matrix trace.
& \eqref{eq:canonical_trace_haar} \\
$J$ & Lifted symplectic form
$\operatorname{diag}(J_{\mathrm{osc}},J_{\mathrm{rot}},J_{\mathrm{fin}})$ in the
paired coordinates $(x,\xi\,|\,n,\theta\,|\,a,b)$. & \eqref{eq:symplectic-form} \\
$T$ & Stabilizer generator matrix, whose columns form a basis of the full
lift $\Lambda$ of $D$, period vectors included. & Sec.~\ref{subsec:examples_notation} \\
$\varTheta$ & Symplectic Gram matrix $T^{\intercal}JT$, integral exactly
when $D$ is isotropic. Then $K=|\operatorname{Pf}\varTheta|$. & \eqref{eq:examples-dimension} \\
$T_{\mathrm{dual}}$ & Logical matrix $T\varTheta^{-1}$, whose columns
generate the lifted labels of $D^\perp$. & \eqref{eq:examples-logical-matrix} \\
\bottomrule
\end{tabularx}
\end{table}

\subsection{Pauli stabilizer codes and projective modules}\label{sec:PSC_basic}
Before introducing \textit{generalized LCA codes}, we first review Pauli stabilizer
codes (see Gottesman's thesis~\cite[Sections~3.2 and~3.4]{gottesman1997stabilizer}
and Nielsen and Chuang~\cite[Section~10.5]{nielsenchuang2010})
and their projective-module description, which will be useful in developing
the general framework. Definitions and background concerning $\mathrm C^*$-algebras are
collected in Appendix~\ref{appx:c_star}, while vector bundles, finitely generated
projective modules, and trace pairings are reviewed in
Appendix~\ref{sec:app_prelim}.

In the check-matrix formalism, a binary Pauli stabilizer code is specified by
an injective linear map $T\colon\mathbb Z_2^l\to G$, where
$G=\mathbb Z_2^n\times\mathbb Z_2^n$ carries the symplectic form
\[
 J=\begin{pmatrix}0&I\\I&0\end{pmatrix},
 \qquad \langle x,y\rangle_S=x^{\mathsf T}Jy,
\]
and $D:=T(\mathbb Z_2^l)$ is isotropic:
\[
 \langle x,y\rangle_S=0\quad(x,y\in D),
 \qquad\text{equivalently,}\quad T^{\mathsf T}JT=0.
\]
In this subsection, $T$ and $J$ denote these binary objects rather than the
lifted matrices of Sec.~\ref{subsec:examples_notation}.

On the Hilbert space  $\mathcal H=\ell^2(\mathbb Z_2^n)\cong(\mathbb C^2)^{\otimes n}$,
the space of complex functions on $\mathbb Z_2^n$, let
$m=(m_1,\ldots,m_n)$ and $s=(s_1,\ldots,s_n)$ be binary vectors in
$\mathbb Z_2^n$.
Define the corresponding shift and phase operators on $\mathcal H$ by
\[
 (\mathbf T_m\xi)(t)=\xi(t+m),\qquad
 (\mathbf M_s\xi)(t)=(-1)^{s\cdot t}\xi(t).
\]
Here $t\in\mathbb Z_2^n$, addition is componentwise modulo $2$, and
$s\cdot t=\sum_{i=1}^n s_i t_i$ is evaluated modulo $2$.
These are the usual Pauli operators
$\mathbf X^m=\bigotimes_{i=1}^n\mathbf X_i^{m_i}$ and
$\mathbf Z^s=\bigotimes_{i=1}^n\mathbf Z_i^{s_i}$, respectively.
The displacement operators
\begin{equation}\label{eq:displacement_operator}
 \mathbf D(m,s):=\mathbf M_s\mathbf T_m
\end{equation}
satisfy the Heisenberg commutation relation
\begin{equation}\label{eq:ccr_qubit}
 \mathbf D(x)\mathbf D(y)
 =(-1)^{\langle x,y\rangle_S}\mathbf D(y)\mathbf D(x).
\end{equation}
Thus isotropy makes $\{\mathbf D(w):w\in D\}$ a commuting family.
One can choose phases $c(w)$ such that  $u_w=c(w)\mathbf D(w)$ (cf. Lemma~\ref{lem:phase_triv}) so that
$u\colon D\to\mathscr U(\mathcal H)$ is a unitary representation,
$u_wu_{w'}=u_{w+w'}$. The \textit{codespace} is its common $+1$ eigenspace,
\[
 \mathcal C=\{\xi\in\mathcal H:u_w\xi=\xi\text{ for all }w\in D\}.
\]
Every classical binary linear code $C=\ker H$ arises in this formalism.
Taking a full-row-rank $l\times n$ parity-check matrix $H$ over
$\mathbb Z_2$, viewed as a linear map
$H\colon\mathbb Z_2^n\to\mathbb Z_2^l$, set
\[
 T_{\mathrm{cl}}=H^{\mathsf T},\qquad
 T(u)=(0,T_{\mathrm{cl}}u),\qquad
 D=\{0\}\times\operatorname{Im}(T_{\mathrm{cl}}).
\]
Here $D$ is isotropic and $u_{(0,s)}=\mathbf M_s$ needs no phase adjustment.
In the following, $\operatorname{Im}(T_{\mathrm{cl}})^\perp$ denotes the
orthogonal complement with respect to the ordinary binary dot product on
$\mathbb Z_2^n$.
Since $\mathbf M_s\ket{t}=\ket{t}$ for all
$s\in\operatorname{Im}(T_{\mathrm{cl}})$ exactly when
$t\in\operatorname{Im}(T_{\mathrm{cl}})^\perp=\ker H$, the quantum codespace is
\[
 \mathcal C=\operatorname{span}\{\ket{t}:t\in C\}.
\]

\subsubsection{Syndrome fibers and the finite projective-module picture}
\label{subsec:pauli_fiber_model}

The simultaneous spectral decomposition of the stabilizers $u_w$ gives a finite
version of the geometric picture used in Section~\ref{sec:LSC_ncg}. 
Using the coordinates supplied by the isomorphism
$T\colon\mathbb Z_2^l\to D$, we identify $D$ with its Pontryagin dual group
$\widehat D:=\operatorname{Hom}(D,\mathbb T)$, the group of homomorphisms (characters)
from $D$ to the unit circle $\mathbb T=\{z\in\mathbb C:|z|=1\}$ under
pointwise multiplication, via $t\mapsto\chi_t$, where
\[
 t\cdot s:=T^{-1}(t)\cdot T^{-1}(s),\qquad
 \chi_t(s):=(-1)^{t\cdot s},\quad t,s\in D.
\]
The dot product on the right is the standard binary dot product on
$\mathbb Z_2^l$; thus the pairing on $D$ is nondegenerate and depends on
the chosen coordinates, rather than being the restriction of the ambient
dot product on $G$.
The syndrome projections and fibers are
\begin{equation}
 P_t:=\frac{1}{|D|}\sum_{s\in D}(-1)^{t\cdot s}u_s,
 \qquad
 \mathcal H(t):=\operatorname{Im}(P_t),\quad t\in D.
 \label{eq:pauli_syndrome_projection}
\end{equation}
It is easy to verify that $P_t^\dagger=P_t$,
$P_tP_{t'}=\delta_{t,t'}P_t$, and $\sum_{t\in D}P_t=I$.
Since $u_sP_t=(-1)^{t\cdot s}P_t$, the map
\begin{equation}
 \mathcal U_D\colon\mathcal H
 \longrightarrow\bigoplus_{t\in D}\mathcal H(t),
 \qquad
 (\mathcal U_D\varphi)(t):=P_t\varphi,
 \label{eq:pauli_syndrome_decomposition}
\end{equation}
is unitary, with
$\mathcal C=\mathcal H(0)=\operatorname{Im}(P_0)$ the codespace.

These fibers $\mathcal H(t)$ form a vector bundle
$\mathscr E_D:=\bigsqcup_{t\in D}\mathcal H(t)\to D$
over the finite discrete set $D$, called the \emph{syndrome bundle}.
Since nonidentity Pauli operators have trace zero, every fiber has dimension
\begin{equation}
 \dim\mathcal H(t)=\operatorname{Tr}(P_t)
 =\frac{1}{2^l}\operatorname{Tr}(I)=2^{n-l}.
 \label{eq:pauli_fiber_dimension}
\end{equation}
Therefore $\mathscr E_D$ is a trivial
rank-$2^{n-l}$ bundle. Its sections form a module $\Gamma(\mathscr E_D)=\bigoplus_{t\in D}\mathcal H(t)$ over the commutative $\mathrm C^*$-algebra $C(D)$:
\[
(f\cdot\xi)(t)=f(t)\xi(t), \quad f\in C(D),\ \xi\in\Gamma(\mathscr E_D).
\]
The projection $p=(P_t)_{t\in D}\in M_{2^n}(C(D))$ realizes this as a
finitely generated projective module, with
\[
 \Gamma(\mathscr E_D)\cong p\,C(D)^{2^n}.
\]

The normalized trace $\tau(f)=|D|^{-1}\sum_{t\in D}f(t)$ on $C(D)$
extends by the ordinary, unnormalized matrix trace to
\[
 \tau_*\colon M_{2^n}(C(D))\longrightarrow\mathbb C,
 \qquad
 \tau_*((a_{ij})):=\sum_{i=1}^{2^n}\tau(a_{ii}).
\]
Its value on $p$ depends only on the corresponding module isomorphism
class $[p]$; writing $\tau_*([p]):=\tau_*(p)$, this pairing recovers the code dimension:
\begin{equation}
 \tau_*([p])
 =\frac{1}{|D|}\sum_{t\in D}\operatorname{rank}(P_t)
 =2^{n-l}
 =\dim\mathcal C.
 \label{eq:pauli_trace_pairing_dimension}
\end{equation}
The syndrome bundle thus encodes the codespace in its zero fiber, the other
syndrome spaces in the remaining fibers, and the code dimension in its rank.
The generalized construction of Secs.~\ref{sec:GSC} and~\ref{sec:LSC_ncg}
retains the Pontryagin dual
$\widehat D$ as the syndrome space, but allows $D$ to be a general isotropic
lattice. Unlike the finite case, the resulting bundle need not be trivial,
and its section module need not be free.

\subsection{Generalized LCA codes (GLCA codes)}\label{sec:GSC}
\label{a_sec:glca}

Let $M=\mathbb R^p\times\mathbb Z^q\times\mathbb Z_{\dia c}$, where $p,q\geq0$
and $\mathbb Z_{\dia c}=\mathbb Z_{c_1}\times\cdots\times\mathbb Z_{c_k}$ is a
finite abelian group written as a product of cyclic groups.
This form covers every finite abelian group, up to isomorphism.
The physical Hilbert space is $L^2(M)$,
with respect to a fixed Haar measure $\mu_M$, and the phase space is
$G=M\times\widehat M$. Here
$\widehat M=\operatorname{Hom}(M,\mathbb T)$ is the Pontryagin dual of
$M$, consisting of its continuous characters, and
$\langle m,s\rangle:=s(m)$ denotes the canonical pairing between
$M$ and $\widehat M$, where $m \in M$ and $s \in \widehat M$. Pauli stabilizer codes correspond to
$M=\mathbb Z_2^n$. The group operation combines addition in $M$ with
pointwise multiplication in $\widehat M$; for $x=(m,s)$ and $y=(l,t)$,
we denote it by $x+y:=(m+l,st)$.

The Heisenberg cocycle and its associated skew bicharacter are
\[
 \beta((m,s),(l,t)):=\langle m,t\rangle,
 \qquad \rho(x,y):=\beta(x,y)\overline{\beta(y,x)}.
\]
On $L^2(M)$, the shift and modulation operators
$(\mathbf T_m\xi)(r)=\xi(r+m)$ and
$(\mathbf M_s\xi)(r)=\langle r,s\rangle\xi(r)$ satisfy
\begin{equation}\label{eq:ccr_time_modu}
 \mathbf M_s\mathbf T_m
 =\overline{\langle m,s\rangle}\,\mathbf T_m\mathbf M_s.
\end{equation}
The displacement operators $\mathbf D(m,s):=\mathbf M_s\mathbf T_m$
therefore obey
\begin{align}
 \mathbf D(x)\mathbf D(y)&=\beta(x,y)\mathbf D(x+y),
 \label{eq:ccr_semi_displacement_general}\\
 \mathbf D(x)\mathbf D(y)&=\rho(x,y)\mathbf D(y)\mathbf D(x).
 \label{eq:ccr_displacement_general}
\end{align}

A \emph{lattice} $D\subset G$ is a discrete subgroup with compact quotient
$G/D$. Its symplectic dual
\[
 D^\perp:=\{x\in G:\rho(x,w)=1\text{ for all }w\in D\}
\]
is also a lattice, by the duality identifications
$D^\perp\cong\widehat{G/D}$ and
$D\cong\widehat{G/D^\perp}$~\cite[p.~270]{rieffel}.
The bicharacter $\rho$ identifies $G$ with $\widehat G$, under which this
symplectic dual is the ordinary Pontryagin annihilator of $D$
\cite[pp.~264--265, 270]{rieffel}.  The double-annihilator theorem therefore
gives the symplectic double-dual identity
\begin{equation}\label{eq:symplectic_double_annihilator}
 (D^\perp)^\perp=D,
\end{equation}
since the lattice $D$ is closed in $G$.
The lattice is \emph{isotropic} if $D\subseteq D^\perp$.
Its displacements commute, but their multiplier $\beta|_{D\times D}$
must be removed to obtain a representation.

\begin{lemma}\label{lem:phase_triv}
For every isotropic lattice $D\subset G$, there is a function
$c\colon D\to\mathbb T$, with $c(0)=1$, such that
\begin{equation}\label{eq:phase_coboundary}
 \beta(w_1,w_2)=\frac{c(w_1+w_2)}{c(w_1)c(w_2)},
 \qquad w_1,w_2\in D.
\end{equation}
A multiplier of the form~\eqref{eq:phase_coboundary}, for a phase function
$c\colon D\to\mathbb T$ with $c(0)=1$, is called a \emph{coboundary}.
Consequently,
\begin{equation}\label{eq:tilde_displacement}
 \widetilde{\mathbf D}(w):=c(w)\mathbf D(w)
\end{equation}
defines a unitary representation of $D$ on $L^2(M)$.
\end{lemma}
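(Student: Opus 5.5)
The plan is to show that the restricted Heisenberg cocycle $\beta|_{D\times D}$ is a symmetric $2$-cocycle on the abelian group $D$, and then to use the structure of $D$ as a finitely generated abelian group to write it as a coboundary. Both properties are quick. First, $\beta$ is a bicharacter, so $\beta|_{D\times D}$ is a $2$-cocycle. Second, isotropy gives $\rho(w_1,w_2)=\beta(w_1,w_2)\overline{\beta(w_2,w_1)}=1$ on $D$, so $\beta|_{D\times D}$ is symmetric.

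Next, $D$ is a discrete subgroup of $G$ with compact quotient. For our $G$, it is therefore a finitely generated abelian group, and we write
\[
 D\cong\mathbb Z^r\oplus\mathbb Z_{m_1}\oplus\cdots\oplus\mathbb Z_{m_s}
\]
with generators $e_1,\ldots,e_r,f_1,\ldots,f_s$. Finite generation follows because discreteness forces $D\cap(\{0\}\times\widehat M)^\circ$ and the relevant projections to be lattices in Euclidean pieces. Alternatively, it follows from $D\cong\widehat{G/D^\perp}$ with $G/D^\perp$ a compact Lie group, namely a torus times a finite group, whose dual is finitely generated.

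A symmetric bicharacter $b$ on a finitely generated abelian group is a coboundary exactly when it admits a "quadratic refinement". That is, we need $c$ with
\[
 c(w_1+w_2)=c(w_1)c(w_2)b(w_1,w_2).
\]
We construct $c$ explicitly. Fix coordinates $w=\sum_i n_ig_i$ with $0\le n_i<m_i$ on the torsion generators, and set
\[
 c(w):=\prod_{i<j}b(g_i,g_j)^{n_in_j}\prod_i q_i^{\,n_i^2}\,\lambda_i^{\,n_i},
\]
where $q_i$ is a square root of $b(g_i,g_i)$ chosen as below. The $\lambda_i$ are characters correcting the linear terms, and they can be absorbed.

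The main obstacle is well-definedness on torsion generators. We need $c$ to be compatible with $m_ig_i=0$, which requires choosing $q_i$ with $q_i^{\,m_i^2}\lambda_i^{m_i}=1$ together with the cross-term consistency. For free generators any square root works, and this is the usual GKP argument. For a torsion generator $g$ of order $m$, $b(g,g)$ is an $m$th root of unity. We need $q$ with $q^2=b(g,g)$ and $q^{m}$ compatible, adjusting $q$ by the linear character $\lambda(w)=\zeta^{n}$ when $m$ is even. I would first try verifying the identity $c(w_1+w_2)=c(w_1)c(w_2)b(w_1,w_2)$ directly via the standard fact that the symmetric $2$-cocycle classes are $\mathrm{Ext}(D,\mathbb T)$, which vanishes because $\mathbb T$ is divisible. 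Symmetric cocycles on abelian groups classify abelian extensions of $D$ by $\mathbb T$, and $\mathrm{Ext}^1_{\mathbb Z}(D,\mathbb T)=0$ by injectivity of $\mathbb T$. That gives the existence of $c$ cleanly without hand-tuning roots.

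Concretely, the extension is $E=\mathbb T\times D$ with product $(z,w)(z',w')=(zz'\beta(w,w'),w+w')$. It is abelian by symmetry, so it splits, and a splitting $w\mapsto(c(w)^{-1},w)$ gives Eq.~\eqref{eq:phase_coboundary}. We may take $c(0)=1$ because the splitting is a homomorphism.

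Finally, Eq.~\eqref{eq:ccr_semi_displacement_general} gives
\[
 \widetilde{\mathbf D}(w_1)\widetilde{\mathbf D}(w_2)=c(w_1)c(w_2)\beta(w_1,w_2)\mathbf D(w_1+w_2)=\widetilde{\mathbf D}(w_1+w_2),
\]
and each $\widetilde{\mathbf D}(w)$ is unitary. Strong continuity is automatic since $D$ is discrete.
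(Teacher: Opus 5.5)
Your argument is correct and is essentially the paper's own: isotropy makes $\beta|_{D\times D}$ a symmetric bicharacter, and the paper just cites Kleppner's Lemma~7.2 to conclude that such a multiplier is a coboundary. Your splitting of the abelian extension $\mathbb T\times D$ with $\beta$-twisted product, using the divisibility (injectivity) of $\mathbb T$, is exactly the content of that citation, so the finite-generation discussion and the hand-built quadratic refinement can be dropped. There is one convention slip: the homomorphic section must be $w\mapsto(c(w),w)$, since $w\mapsto(c(w)^{-1},w)$ yields $\beta(w_1,w_2)=c(w_1)c(w_2)/c(w_1+w_2)$, which is the inverse of Eq.~\eqref{eq:phase_coboundary} and not the identity $c(w_1)c(w_2)\beta(w_1,w_2)=c(w_1+w_2)$ that your final displacement computation uses (equivalently, replace $c$ by $\overline c$).
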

\begin{proof}
Isotropy makes $\beta|_{D\times D}$ symmetric, hence a coboundary
by~\cite[Lemma 7.2]{kleppner1965multipliers}.
Then Equation~\eqref{eq:ccr_semi_displacement_general} gives
$\widetilde{\mathbf D}(w_1)\widetilde{\mathbf D}(w_2)
=\widetilde{\mathbf D}(w_1+w_2)$; unitarity follows from
$|c(w)|=1$.
\end{proof}

\begin{definition}\label{def:stab_group}
For an isotropic lattice $D$ and a phase function $c$ satisfying
\eqref{eq:phase_coboundary}, the \emph{lattice stabilizer group} is
\begin{equation}\label{eq:stabgroup}
 \mathcal S_D:=\{\widetilde{\mathbf D}(w):w\in D\}.
\end{equation}
\end{definition}
The phase function is unique up to multiplication by a character of $D$.
For a subgroup $A\subseteq M$, its \emph{annihilator} is
\[
 A^\perp:=\{s\in\widehat M:\langle a,s\rangle=1
 \text{ for all }a\in A\}\subseteq\widehat M.
\]
This differs from the symplectic dual $D^\perp\subseteq M\times\widehat M$
defined above. A split lattice $D=A\times B$, with $B\subseteq A^\perp$,
is isotropic: indeed, $\beta|_{D\times D}=1$. One may therefore take
$c=1$, recovering the CSS construction.

Ideal codewords of this generalized construction may be distributions rather than elements of $L^2(M)$.
We therefore use the rigged Hilbert space
$\mathscr S(M)\subseteq L^2(M)\subseteq\mathscr S'(M)$.
The Schwartz--Bruhat space $\mathscr S(M)$ consists of functions smooth
in $x\in\mathbb R^p$ with finite seminorms
\begin{equation}\label{eq:schwartz_seminorms}
 q_R(\varphi):=
 \max_{|\alpha|\leq R}
 \sup_{\substack{x\in\mathbb R^p,\ n\in\mathbb Z^q\\f\in\mathbb Z_{\dia c}}}
 (1+|x|+|n|)^R|\partial_x^\alpha\varphi(x,n,f)|,
 \qquad R\in\mathbb Z_{\geq0},
\end{equation}
which define its Fr\'echet topology; $\alpha$ is a multi-index and
$|x|,|n|$ are Euclidean norms. We take $\mathscr S'(M)$ to be the
continuous anti-dual, with the embedding of $\psi\in L^2(M)$ given by
$\varphi\mapsto\langle\varphi,\psi\rangle_{L^2(M)}
=\int_M\overline{\varphi(r)}\psi(r)\,d\mu_M(r)$.
We write $\langle u,\varphi\rangle:=u(\varphi)$ for anti-dual evaluation,
which is linear in $u\in\mathscr S'(M)$ and conjugate-linear in
$\varphi\in\mathscr S(M)$.

If a bounded operator $\mathbf{A}$ and its adjoint preserve $\mathscr S(M)$
continuously, its distributional extension is
\begin{equation}\label{eq:distributional_extension}
 \langle\widehat{\mathbf{A}}u,\varphi\rangle
 :=\langle u,\mathbf{A}^\dagger\varphi\rangle,
 \qquad u\in\mathscr S'(M),\quad\varphi\in\mathscr S(M).
\end{equation}
This applies to the displacement operators and their phase corrections,
with $\widetilde{\mathbf D}(w)^\dagger=\widetilde{\mathbf D}(-w)$.
The \emph{generalized LCA code (GLCA code)} is
the invariant subspace
\[
 \mathcal C:=\{u\in\mathscr S'(M):
 \widehat{\widetilde{\mathbf D}(w)}u=u\text{ for all }w\in D\}.
\]
We often suppress the extension hat when writing stabilizer conditions as
$g\ket\psi=\ket\psi$.

For $r=(x,n,f)\in M=\mathbb R^p\times\mathbb Z^q\times\mathbb Z_{\dia c}$,
with $x\in\mathbb R^p$, $n\in\mathbb Z^q$, and $f\in\mathbb Z_{\dia c}$, the generalized
position ket $\ket r\in\mathscr S'(M)$ is defined by
\begin{equation}\label{eq:canonical_position_kets}
 \langle\ket r,\varphi\rangle:=\overline{\varphi(r)},
 \qquad \varphi\in\mathscr S(M).
\end{equation}
The pairing denotes anti-dual evaluation. We write
\begin{equation}\label{eq:position_subsystem_notation}
 \ket{x,n,f}
 =\ket x_{\mathrm{osc}}\otimes\ket n_{\mathrm{rot}}
 \otimes\ket f_{\mathrm{fin}},
\end{equation}
where the factors denote oscillator position, rotor angular momentum,
and the finite-system computational basis.
The corresponding bra $\bra r$ is the continuous linear functional on
$\mathscr S(M)$ defined by $\bra r\varphi\rangle:=\varphi(r)$.
The position expansion and resolution of the identity are
\begin{equation}\label{eq:position_resolution}
 \ket\psi=\int_M\psi(r)\ket r\,d\mu_M(r),
 \qquad \mathbf1=\int_M\ket r\!\bra r\,d\mu_M(r),
\end{equation}
where $\mathbf1$ denotes the identity operator on $\mathcal H=L^2(M)$,
and the identities are understood weakly on Schwartz test functions.
For $p>0$ the kets are distributions;
for discrete $M$ with counting measure they form the usual orthonormal basis.

\paragraph*{Coordinates.}
Explicit computations use real lifts of the labels in $G$.
Recall that the finite factor is
$\Z_{\dia c}=\Z_{c_1}\times\cdots\times\Z_{c_k}$, with
$\dia c=\operatorname{diag}(c_1,\ldots,c_k)$.
We identify $\widehat{\Z^q}\cong\T^q$ and
$\widehat{\Z_{\dia c}}\cong\Z_{\dia c}$ as in
Sec.~\ref{subsec:examples_notation}.
A lift of a label is a column vector $v=(x,\xi\,|\,n,\theta\,|\,a,b)$.
Its entries list the oscillator positions and momenta, the rotor angular
momenta and angles, and the qudit shifts and phases.
The entries $n$, $a$, and $b$ are integers, so the lifts form the
covering group
\begin{equation}
 \pi:\widetilde G:=\R^{2p}\times\bigl(\Z^q\times\R^q\bigr)\times\Z^{2k}
 \longrightarrow G.
 \label{eq:full_lift_covering}
\end{equation}
The covering map $\pi$ reduces the rotor angles modulo $\Z^q$ and the
shift and phase of the $j$th qudit modulo $c_j$.
The group $\widetilde G$ lies in the real vector space
\begin{equation}
 H^*:=\R^p\times\R^{*p}\times\R^q\times\R^{*q}\times\R^k\times\R^{*k},
 \label{eq:h}
\end{equation}
where $\R^{*r}$ denotes the real dual vector space of $\R^r$.
The matrix $J$ of Eq.~\eqref{eq:symplectic-form} has size $2(p+q+k)$ and
defines an alternating $2$-form on $H^*$.
Recall that $J=\operatorname{diag}(J_{\cv},J_{\mathrm{fin}})$, where
$J_{\cv}=\operatorname{diag}(J_{\mathrm{osc}},J_{\mathrm{rot}})$ is the
oscillator--rotor block and
$J_{\mathrm{fin}}=\left(\begin{smallmatrix}0&\dia c^{-1}\\-\dia c^{-1}&0\end{smallmatrix}\right)$
is the qudit block.
Let $J'$ be obtained by replacing the negative entries of $J$ by zero, so
that $J=J'-(J')^{\intercal}$.

For $v=(x,\xi\,|\,n,\theta\,|\,a,b)$ and
$w=(x',\xi'\,|\,n',\theta'\,|\,a',b')$ in $\widetilde G$, write
$\pi(v)=(m,s)$ and $\pi(w)=(l,t)$.
Then $m$ is represented by $(x,n,a)$, and the character $t$ is
represented by $(\xi',\theta',b')$.
The canonical character pairing is therefore
\begin{equation}
 \langle m,t\rangle
 =\exp\!\left[2\pi i\left(
 x\cdot\xi'+n\cdot\theta'+a^{\intercal}\dia c^{-1}b'
 \right)\right].
\end{equation}
The expression in parentheses is $v^{\intercal}J'w$, so the definitions
of $\beta$ and $\rho$ at the start of this subsection give
\begin{equation}
 \beta(\pi(v),\pi(w))=e^{2\pi i\,v^{\intercal}J'w},
 \qquad
 \rho(\pi(v),\pi(w))=e^{2\pi i\,v^{\intercal}Jw}.
 \label{eq:lifted_cocycle_bicharacter}
\end{equation}
Changing either lift adds only integers to these bilinear expressions:
rotor angles change by integers and qudit coordinates by multiples of
$c_j$, which cancel the denominators in $\dia c^{-1}$. Hence both phases
are independent of the chosen lifts, although $J'w$ itself need not be.
These are precisely the multiplier and commutation
phase of the displacement operators $\mathbf D(m,s)=\mathbf M_s\mathbf T_m$
in Eqs.~\eqref{eq:ccr_semi_displacement_general}
and~\eqref{eq:ccr_displacement_general}.
Lemma~\ref{lem:full_lift_pfaffian} below uses these lifts to compute code
dimensions.

\subsection{Rieffel's module, syndrome bundle, and code fibers}
\label{sec:LSC_ncg}
\label{a_sec:glca_ncg}

\subsubsection{Rieffel's module and syndrome bundle}

For a lattice $D\subset G=M\times\widehat M$, set
$\mathfrak{A}_D:=\mathrm C^*(D,\beta)$ and $\mathfrak{B}_D:=\mathrm C^*(D^\perp,\overline\beta)$.
These algebras contain the dense Schwartz $*$-subalgebras
$\mathscr S(D,\beta)\subset\mathfrak{A}_D$ and
$\mathscr S(D^\perp,\overline\beta)\subset\mathfrak{B}_D$,
respectively~\cite[Section~2]{rieffel}.

Let $u_w\in\mathfrak{A}_D$ and $u_z\in\mathfrak{B}_D$ denote the
canonical unitaries associated with $w\in D$ and $z\in D^\perp$ in
these twisted group $\mathrm C^*$-algebras, respectively
(see Appendix~\ref{appx:c_star}).
The actions $u_w\cdot\xi=\mathbf D(w)\xi$ and
$\xi\cdot u_z=\mathbf D(z)^\dagger\xi$, for $w\in D$ and $z\in D^\perp$,
make $\mathscr S(M)$ a pre-equivalence bimodule over
$\mathscr S(D,\beta)$ and $\mathscr S(D^\perp,\overline\beta)$
(see Appendix~\ref{appx:morita} for Morita equivalence and inner-product
conventions) with inner products
\begin{align}
 {}_{\mathfrak{A}_D}\!\langle\xi,\eta\rangle(w)
 &=\langle\mathbf D(w)\eta,\xi\rangle_{L^2(M)},
 \label{eq:rieffel_left_inner}\\
 \langle\xi,\eta\rangle_{\mathfrak{B}_D}(z)
 &=\langle\xi,\mathbf D(z)\eta\rangle_{L^2(M)},
 \label{eq:rieffel_right_inner}
\end{align}
using compatible Haar measures~\cite[Theorem~2.15]{rieffel}.
We take the Haar measure on $D$ to be counting measure.
Its completion $\Omega_D$, called \emph{Rieffel's module}, is a strong $\mathfrak{A}_D$--$\mathfrak{B}_D$ Morita equivalence bimodule,
finitely generated and projective on both sides since the algebras are
unital~\cite[Section~3]{rieffel}.

Henceforth assume $D$ is isotropic. By Lemma~\ref{lem:phase_triv}, the
phase-corrected operators $\widetilde{\mathbf D}(w)=c(w)\mathbf D(w)$
form a genuine unitary representation of $D$. Transporting the left
module structure through this phase trivialization gives
\[
 \mathfrak{A}_D\cong \mathrm C^*(D)\cong C(\widehat D).
\]
This identification carries $\mathscr S(D,\beta)$ onto the untwisted
Schwartz algebra $\mathscr S(D)\subset\mathrm C^*(D)$, whose action on
$\mathscr S(M)$ is understood in the phase-corrected sense.
We continue to write $u_w$
for the canonical unitary associated with $w\in D$ after passing to the
untwisted algebra $\mathrm C^*(D)$. Then
\begin{equation}\label{eq:phase_corrected_AD_action}
 u_w\cdot\xi:=\widetilde{\mathbf D}(w)\xi=c(w)\mathbf D(w)\xi,
 \qquad \xi\in\mathscr S(M).
\end{equation}
We also transport the left inner product, retaining its notation; henceforth
\[
 {}_{\mathfrak{A}_D}\langle\xi,\eta\rangle(w)
 =\langle\widetilde{\mathbf D}(w)\eta,\xi\rangle_{L^2(M)}
 =\overline{c(w)}\langle\mathbf D(w)\eta,\xi\rangle_{L^2(M)}.
\]
For $a=\sum_{w\in D}a(w)u_w\in\mathscr S(D)$, its left action is
\[
 a\cdot\xi=\sum_{w\in D}a(w)c(w)\mathbf D(w)\xi.
\]
For $\xi\in\mathscr S(M)$, this series converges in the Schwartz topology.
For such $a$ and $\chi\in\widehat D$, we use the Fourier convention
\begin{equation}\label{eq:fourier_transform_convention}
 \widehat a(\chi):=(\mathcal F_Da)(\chi)
 =\sum_{w\in D}a(w)\overline{\chi(w)},
 \qquad \widehat u_w(\chi)=\overline{\chi(w)},
\end{equation}
This transform extends uniquely to the $\mathrm C^*$-algebra isomorphism
$\mathcal F_D:\mathrm C^*(D)\to C(\widehat D)$. Transporting the left module
structure through this Fourier transform makes Rieffel's module
$\Omega_D$ a finitely generated projective module over $C(\widehat D)$,
with $C(\widehat D)$-valued inner product
\begin{equation}\label{eq:fourier_transformed_rieffel_inner_product}
 {}_{\mathrm R}\langle\xi,\eta\rangle
 :=\mathcal F_D\!\left({}_{\mathfrak{A}_D}\langle\xi,\eta\rangle\right).
\end{equation}

We now describe the vector bundle over $\widehat D$ associated with the
projective module $\Omega_D$, beginning with its fibers.\\
For $\chi\in\widehat D$, let $I_\chi:=\{f\in C(\widehat D):f(\chi)=0\}$.
The fiber is
\begin{equation}\label{eq:rieffel_bundle_fiber}
 E_{D,\chi}:=\Omega_D/N_\chi,
 \qquad
 N_\chi:=\overline{I_\chi\Omega_D}
 =\{\xi\in\Omega_D: {}_{\mathrm R}\langle\xi,\xi\rangle(\chi)=0\},
\end{equation}
with scalar inner product and quotient norm
\begin{align}
 \langle[\xi]_\chi,[\eta]_\chi\rangle_\chi
 &:={}_{\mathrm R}\langle\eta,\xi\rangle(\chi),
 \label{eq:rieffel_fiber_scalar_inner_product}\\
 \|[\xi]_\chi\|
 &=\inf_{\zeta\in N_\chi}\|\xi+\zeta\|_{\Omega_D}
 ={}_{\mathrm R}\langle\xi,\xi\rangle(\chi)^{1/2}.
 \label{eq:rieffel_fiber_norm}
\end{align}
The argument reversal makes the scalar inner product conjugate-linear in
its first argument.
By Serre--Swan, the fibers form a finite-rank Hermitian \emph{syndrome bundle}
\begin{equation}\label{eq:rieffel_syndrome_bundle}
 \mathscr E_D:=\bigsqcup_{\chi\in\widehat D}E_{D,\chi}\longrightarrow\widehat D,
\end{equation}
whose continuous-section module is
\begin{equation}\label{eq:serre_swan_evaluation_map}
 \Omega_D\cong\Gamma(\mathscr E_D),
 \qquad \xi\longmapsto\bigl(\chi\longmapsto[\xi]_\chi\bigr).
\end{equation}

\subsubsection{Bloch--Floquet decomposition}

For $a=\sum_{w\in D}a(w)u_w\in\mathbb C[D]$ and
$f\in C(\widehat D)$, define the traces using normalized Haar measure
on $\widehat D$ by
\begin{equation}\label{eq:canonical_trace_haar}
 \mathrm{Tr}_D(a):=a(0),
 \qquad \tau_D(f):=\int_{\widehat D}f(\chi)\,d\chi.
\end{equation}
$\mathrm{Tr}_D$
extends continuously to $\mathrm C^*(D)$, and the traces satisfy
$\mathrm{Tr}_D=\tau_D\circ\mathcal F_D$. In particular,
\begin{equation}\label{eq:rieffel_trace_L2_norm}
 \|\xi\|_{L^2(M)}^2
 =\mathrm{Tr}_D\!\left({}_{\mathfrak{A}_D}\langle\xi,\xi\rangle\right)
 =\tau_D\!\left({}_{\mathrm R}\langle\xi,\xi\rangle\right),
 \qquad \xi\in\mathscr S(M).
\end{equation}

\begin{theorem}\label{thm:rieffel_bundle_decomposition}
The evaluation map
\begin{equation}\label{eq:rieffel_fiberization_map}
 \mathcal U_D^0:\mathscr S(M)\to\Gamma(\mathscr E_D),
 \qquad (\mathcal U_D^0\xi)(\chi)=[\xi]_\chi,
\end{equation}
extends to a unitary Bloch--Floquet transform
\begin{equation}\label{eq:rieffel_direct_integral}
 \mathcal U_D:L^2(M)\xrightarrow{\;\cong\;}
 \int_{\widehat D}^{\oplus}E_{D,\chi}\,d\chi.
\end{equation}
For $a\in \mathrm C^*(D)$, $w\in D$, and $\xi\in L^2(M)$,
\begin{align}
 (\mathcal U_D(a\cdot\xi))(\chi)
 &=\widehat a(\chi)(\mathcal U_D\xi)(\chi),
 \label{eq:rieffel_fiberwise_action}\\
 (\mathcal U_D\widetilde{\mathbf D}(w)\xi)(\chi)
 &=\overline{\chi(w)}(\mathcal U_D\xi)(\chi)
 \label{eq:rieffel_stabilizer_diagonalization}
\end{align}
almost everywhere. For $\xi\in\Omega_D$, these identities hold everywhere
for the canonical continuous representatives.
\end{theorem}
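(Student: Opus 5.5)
Our plan is to reduce everything to an explicit isometry on the dense subspace $\mathscr S(M)$ and then extend by continuity. The first step is to show that $\mathcal U_D^0$ is isometric into the direct integral. For $\xi\in\mathscr S(M)$, the fiber norm~\eqref{eq:rieffel_fiber_norm} gives $\|[\xi]_\chi\|^2={}_{\mathrm R}\langle\xi,\xi\rangle(\chi)$. Integrating over $\widehat D$ and using~\eqref{eq:rieffel_trace_L2_norm} yields
\[
 \int_{\widehat D}\|[\xi]_\chi\|^2\,d\chi=\tau_D\bigl({}_{\mathrm R}\langle\xi,\xi\rangle\bigr)=\|\xi\|_{L^2(M)}^2 .
\]
Measurability of $\chi\mapsto[\xi]_\chi$ is automatic, since these are continuous sections of $\mathscr E_D$ via~\eqref{eq:serre_swan_evaluation_map}. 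We then equip the direct integral with the measurable structure generated by $\Gamma(\mathscr E_D)$. Because $\mathscr S(M)$ is dense in $L^2(M)$, $\mathcal U_D^0$ extends uniquely to an isometry $\mathcal U_D$.

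The second step is the covariance identities on the dense core, from which the a.e.\ statements follow by continuity. For $a\in\mathscr S(D)$ and $\xi\in\mathscr S(M)$, we use that the $C(\widehat D)$-module structure on $\Omega_D$ is transported through $\mathcal F_D$. Then $a\cdot\xi=\widehat a\cdot\xi$ as module elements. Since $\widehat a-\widehat a(\chi)\in I_\chi$, we get $[a\cdot\xi]_\chi=\widehat a(\chi)[\xi]_\chi$ exactly, for every $\chi$. Taking $a=u_w$ and using $\widehat u_w(\chi)=\overline{\chi(w)}$ together with~\eqref{eq:phase_corrected_AD_action} gives~\eqref{eq:rieffel_stabilizer_diagonalization}. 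For general $a\in\mathrm C^*(D)$, both sides of~\eqref{eq:rieffel_fiberwise_action} are bounded in $\xi$ and continuous in $a$: the left by $\|a\|$, the right by $\|\widehat a\|_\infty=\|a\|$. The identity therefore extends to all of $L^2(M)$ almost everywhere. For $\xi\in\Omega_D$ it holds pointwise, since $\Omega_D$ is the completion on which evaluation at $\chi$ is continuous.

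The main obstacle is surjectivity. We will show that the range of $\mathcal U_D$ is a closed subspace that is invariant under multiplication by every $f\in C(\widehat D)$: by~\eqref{eq:rieffel_fiberwise_action} with $\widehat a=f$, this multiplication equals $\mathcal U_D$ composed with the left action of $a$. Its fiberwise span is also dense, because the image of $\Omega_D$ is all of $\Gamma(\mathscr E_D)$ by Serre--Swan. A closed $C(\widehat D)$-invariant subspace of $\int^\oplus E_{D,\chi}\,d\chi$ that contains all continuous sections is the whole space. Indeed, continuous sections times $L^\infty$ functions are dense, and $C(\widehat D)$ is dense in $L^2(\widehat D)$. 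Since $\widehat D$ is a compact group and the bundle has finite rank (a finite local frame exists), this density is standard.

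We must also check that the image of $\mathscr S(M)$, rather than of all of $\Omega_D$, already has this density. For this we would use that $\mathscr S(M)$ is dense in $\Omega_D$ in the module norm, and that the module norm dominates the $L^2$ direct-integral norm. The inequality $\tau_D(f)\le\|f\|_\infty$ gives that domination. This completes the proof that $\mathcal U_D$ is unitary.
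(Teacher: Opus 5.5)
Your proposal is correct and follows essentially the same route as the paper. Both arguments prove the isometry on $\mathscr S(M)$ from the trace identity and extend it by density. Surjectivity then follows because the range is closed and $\mathcal U_D^0(\mathscr S(M))$ is uniformly, hence $L^2$-, dense in the dense space $\Gamma(\mathscr E_D)$. The intertwining comes from $\widehat a-\widehat a(\chi)\in I_\chi$, extended by boundedness in $\xi$ and in $a$.

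The $C(\widehat D)$-invariance of the range that you invoke is unnecessary, since a closed subspace containing a dense set is already the whole space. For $\xi\in\Omega_D$, state explicitly that the domination $\|\xi\|_2\le\|\xi\|_{\Omega_D}$ identifies $\mathcal U_D\xi$ with the canonical continuous section, as the paper does, before upgrading the identities from almost everywhere to everywhere.
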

The proof is given in Appendix~\ref{appx:proof_rieffel_bundle_decomposition}.

\subsubsection{Codespace and syndrome fibers}

The continuous inclusion $\mathscr S(M)\to\Omega_D$
(Appendix~\ref{appx:schwartz_heisenberg_continuity}) induces
\begin{equation}\label{eq:rieffel_fiber_distribution_embedding}
 \iota_\chi:E_{D,\chi}\hookrightarrow\mathscr S'(M),
 \qquad
 \langle\iota_\chi(v),\xi\rangle:=\langle[\xi]_\chi,v\rangle_\chi.
\end{equation}
Continuity follows from
$|\langle\iota_\chi(v),\xi\rangle|\leq\|v\|\,\|\xi\|_{\Omega_D}$,
and injectivity from density of the Schwartz images in each fiber.
Using~\eqref{eq:distributional_extension},
\eqref{eq:rieffel_stabilizer_diagonalization}, and
$\widetilde{\mathbf D}(w)^\dagger=\widetilde{\mathbf D}(-w)$ gives
\begin{equation}\label{eq:rieffel_fiber_generalized_eigenvalue}
 \widehat{\widetilde{\mathbf D}(w)}\,\iota_\chi(v)
 =\overline{\chi(w)}\,\iota_\chi(v).
\end{equation}

\begin{lemma}\label{lem:trivial_fiber_codespace}
The trivial-character fiber is the codespace:
\begin{equation}\label{eq:trivial_fiber_codespace}
 \iota_1(E_{D,1})=\mathcal C
 =\{u\in\mathscr S'(M):\widehat{\widetilde{\mathbf D}(w)}u=u
 \text{ for all }w\in D\}.
\end{equation}
\end{lemma}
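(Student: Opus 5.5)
The proof has two inclusions. The inclusion $\iota_1(E_{D,1})\subseteq\mathcal C$ is immediate. At the trivial character, Eq.~\eqref{eq:rieffel_fiber_generalized_eigenvalue} gives $\widehat{\widetilde{\mathbf D}(w)}\,\iota_1(v)=\overline{1(w)}\,\iota_1(v)=\iota_1(v)$ for every $w\in D$. Hence every element of $\iota_1(E_{D,1})$ is a stabilizer-invariant tempered distribution. The substance lies in the reverse inclusion: every invariant $u\in\mathcal C$ must be shown to come from a vector of the fiber $E_{D,1}$.

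For the reverse inclusion we construct the preimage through the left inner product at $\chi=1$. Fix $u\in\mathcal C$ and define a conjugate-linear functional on $\mathscr S(M)$ by $\xi\mapsto\langle u,\xi\rangle$. We want to show that it factors through the quotient map $\xi\mapsto[\xi]_1$ and is bounded in the fiber norm ${}_{\mathrm R}\langle\xi,\xi\rangle(1)^{1/2}$. By Riesz in the finite-dimensional (or at least Hilbert) space $E_{D,1}$, it is then represented as $\langle[\xi]_1,v\rangle_1$ for a unique $v$, which gives $u=\iota_1(v)$.

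The key identity is that, at the trivial character, the fiber inner product is the stabilizer-group average. By Eq.~\eqref{eq:fourier_transform_convention} and the transported left inner product, ${}_{\mathrm R}\langle\eta,\xi\rangle(1)=\sum_{w\in D}\langle\widetilde{\mathbf D}(w)\xi,\eta\rangle_{L^2(M)}$. This series converges absolutely for Schwartz vectors, because $\widehat D$-Fourier coefficients of Schwartz elements of $\mathscr S(D)$ decay rapidly (Appendix~\ref{appx:schwartz_heisenberg_continuity}). Thus the periodization $P\xi:=\sum_{w\in D}\widetilde{\mathbf D}(w)\xi$ defines a continuous map $\mathscr S(M)\to\mathscr S'(M)$ whose image lies in $\mathcal C$, and $\langle[\xi]_1,[\eta]_1\rangle_1=\langle P\eta,\xi\rangle$.

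The main obstacle is to show that every $u\in\mathcal C$ is controlled by this periodization, i.e.\ $|\langle u,\xi\rangle|\le C_u\,\|[\xi]_1\|$. We would choose a fundamental-domain cutoff. Pick $\phi\in\mathscr S(M)$ (or a smooth compactly supported bump in the continuous directions) together with a dual function $\psi$ such that $\sum_{w\in D}\widetilde{\mathbf D}(w)\,\phi\,\widetilde{\mathbf D}(w)^{\dagger}$ acts as a partition of unity on the support of invariant distributions. In the split case this is the Zak-transform localization along $A$. Invariance of $u$ then lets us rewrite $\langle u,\xi\rangle=\langle u,\phi P\xi\rangle$, which depends on $\xi$ only through $P\xi$. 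It therefore vanishes on $N_1$, since $[\xi]_1=0$ forces $P\xi=0$ by positivity of the fiber inner product.

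Because the fibers have finite rank $K$ (finitely generated projective module), the induced functional on the finite-dimensional space $E_{D,1}$ is automatically bounded. This removes the need for an explicit estimate once factorization is shown. Finally, injectivity of $\iota_1$, already noted after Eq.~\eqref{eq:rieffel_fiber_distribution_embedding}, gives uniqueness of $v$.

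If the cutoff construction proves awkward for general non-CSS $D$, we would instead deduce the fact from finite-dimensionality. We would show that $\mathcal C$ injects into $E_{D,1}^*$ by pairing with the images $\{[\xi]_1\}$ and use density of these images. Alternatively we would reduce to the CSS case via the full-lift GKP concatenation of Remark~\ref{rem:full_lift_gkp_concatenation}, where Eq.~\eqref{eq:examples-css-codewords} identifies the invariant distributions explicitly as combs.
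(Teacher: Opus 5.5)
\textbf{Verdict: there is a genuine gap.} The step that $\xi\mapsto\langle u,\xi\rangle$ depends only on $[\xi]_1$ is not established, and that step is the whole content of the reverse inclusion.

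Your first inclusion is right. So is your remark that boundedness comes for free once the factorization is known: the image of $\mathscr S(M)$ is dense in, hence equal to, the finite-dimensional fiber $E_{D,1}$.

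The cutoff argument does not supply the factorization, for two reasons.
\begin{enumerate}
\item As written, $\langle u,\phi P\xi\rangle$ is not defined. $P\xi$ is usually singular: for the rectangular GKP lattice $2\alpha\Z\times\alpha^{-1}\Z$ it is a weighted Dirac comb on $\alpha\Z$, the same set that carries $u$. So you would be pairing two distributions.
\item More seriously, no cutoff of the kind you need exists for general $D$. Conjugation by $\widetilde{\mathbf D}(w)$ moves a multiplication operator $\phi$ only by the shift part of $w$. Pure modulations in $D$ therefore make $\sum_{w\in D}\widetilde{\mathbf D}(w)\phi\widetilde{\mathbf D}(w)^\dagger$ diverge. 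In the non-split case there is no shift subgroup to sum over instead. For example, $D=\langle(1,0),(\sqrt2,K)\rangle\subset\R^2$ is isotropic with $\varTheta_{12}=K$, but its position projection $\Z+\sqrt2\,\Z$ is dense. Hence no nonzero continuous $\phi$ has a convergent sum of $D$-translates.
\end{enumerate}

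Your fallbacks do not repair this.
\begin{itemize}
\item Pairing $u$ with the classes $[\xi]_1$ presupposes the very factorization you are trying to prove.
\item Remark~\ref{rem:full_lift_gkp_concatenation} obtains the equality of codes from Theorem~\ref{thm:code_dimension}, whose identity $K=\dim\mathcal C$ rests on this lemma. The reduction is therefore circular unless you prove the oscillator case independently. That would require a metaplectic reduction of $R\Lambda$ to CSS form and a genuine distributional version of the comb count, neither of which you supply.
\end{itemize}

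The missing ingredient is Rieffel's smooth projectivity \cite[Proposition~3.7]{rieffel}, applied with the module actions interchanged. It provides finitely many $x_j,y_j\in\mathscr S(M)$ such that every $\xi\in\mathscr S(M)$ satisfies $\xi=\sum_j a_j(\xi)\cdot x_j$, with $a_j(\xi):={}_{\mathfrak A_D}\langle\xi,y_j\rangle\in\mathscr S(D)$.

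This is the lattice-independent replacement for your partition of unity. It resolves $\xi$ through module coefficients rather than multiplication operators. Each $a\cdot x$ is a Schwartz function, since the series converges in the Schwartz topology, so $u$ may legitimately be paired with it. Invariance of $u$ gives $\langle u,a\cdot x\rangle=\overline{\widehat a(1)}\langle u,x\rangle$ for $a\in\mathscr S(D)$. Moreover $\overline{\widehat{a_j(\xi)}(1)}=\langle[\xi]_1,[y_j]_1\rangle_1$, so
\[
 \langle u,\xi\rangle=\sum_j\langle[\xi]_1,[y_j]_1\rangle_1\,\langle u,x_j\rangle .
\]
This depends only on $[\xi]_1$ and is bounded by $C_u\|[\xi]_1\|$. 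Your Riesz step then gives $u=\iota_1(v)$ exactly as you planned.
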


The proof is given in Appendix~\ref{appx:proof_trivial_fiber_codespace}.

Define
\begin{equation}\label{eq:kappa_definition}
 \kappa_z(w):=\overline{\rho(w,z)},\qquad z\in G,\quad w\in D.
\end{equation}
Pontryagin duality gives~\cite[p.~264]{rieffel} the following exact sequence of abelian groups:
\begin{equation}\label{eq:kappa_exact_sequence}
 0\longrightarrow D^\perp\longrightarrow G
 \xrightarrow{\;\kappa\;}\widehat D\longrightarrow0,
 \qquad G/D^\perp\cong\widehat D.
\end{equation}

\begin{proposition}\label{prop:rieffel_fiber_covariance}
Displacement by $z\in G$ induces a unitary
\begin{equation}\label{eq:rieffel_fiber_covariance}
 V_{z,\chi}:E_{D,\chi}\xrightarrow{\;\cong\;}E_{D,\kappa_z\chi},
 \qquad [\xi]_\chi\longmapsto[\mathbf D(z)\xi]_{\kappa_z\chi},
\end{equation}
intertwining the distributional embeddings:
\begin{equation}\label{eq:fiber_embedding_covariance}
 \iota_{\kappa_z\chi}V_{z,\chi}
 =\widehat{\mathbf D(z)}\,\iota_\chi.
\end{equation}
Each fiber is the full generalized syndrome eigenspace,
\begin{equation}\label{eq:all_fibers_full_eigenspaces}
 \iota_\chi(E_{D,\chi})
 =\{u\in\mathscr S'(M):\widehat{\widetilde{\mathbf D}(w)}u
 =\overline{\chi(w)}u\text{ for all }w\in D\}.
\end{equation}
\end{proposition}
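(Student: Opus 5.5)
The plan is to build $V_{z,\chi}$ at the level of Rieffel's module and then descend to fibers. First I will compute how a displacement interacts with the stabilizers. From the commutation relation~\eqref{eq:ccr_displacement_general}, $\widetilde{\mathbf D}(w)\mathbf D(z)=\rho(w,z)\mathbf D(z)\widetilde{\mathbf D}(w)$. Inserting this into the transported left inner product gives, for $\xi,\eta\in\mathscr S(M)$,
\[
 {}_{\mathfrak A_D}\langle\mathbf D(z)\xi,\mathbf D(z)\eta\rangle(w)
 =\overline{c(w)}\langle\mathbf D(z)^\dagger\mathbf D(w)\mathbf D(z)\eta,\xi\rangle
 =\rho(w,z)\,{}_{\mathfrak A_D}\langle\xi,\eta\rangle(w),
\]
using $\mathbf D(z)^\dagger\mathbf D(w)\mathbf D(z)=\rho(w,z)\mathbf D(w)$; I will fix the conjugation convention carefully. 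Applying the Fourier convention~\eqref{eq:fourier_transform_convention}, multiplication by $\rho(w,z)=\overline{\kappa_z(w)}$ becomes translation on $\widehat D$:
\[
 {}_{\mathrm R}\langle\mathbf D(z)\xi,\mathbf D(z)\eta\rangle(\kappa_z\chi)
 ={}_{\mathrm R}\langle\xi,\eta\rangle(\chi).
\]
If the sign of the shift comes out as $\kappa_z^{-1}$, that indicates an error in the conjugation bookkeeping rather than a problem with the statement, and I will recheck it against the definition of $\kappa_z$. The same identity also shows that $\mathbf D(z)$ satisfies $\|\mathbf D(z)\xi\|_{\Omega_D}=\|\xi\|_{\Omega_D}$, because the norm is the sup over $\chi$ and $\chi\mapsto\kappa_z\chi$ is a bijection. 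Hence $\mathbf D(z)$ extends isometrically to $\Omega_D$, and it is semilinear over the translation automorphism of $C(\widehat D)$.

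Next I will descend to fibers. By the fiber-norm formula~\eqref{eq:rieffel_fiber_norm}, $\xi\in N_\chi$ if and only if $\mathbf D(z)\xi\in N_{\kappa_z\chi}$. So $V_{z,\chi}$ is well defined and isometric, and with~\eqref{eq:rieffel_fiber_scalar_inner_product} it preserves inner products. Its inverse is induced by $\mathbf D(z)^{-1}=\overline{\beta(z,-z)}\mathbf D(-z)$ together with $\kappa_{-z}=\kappa_z^{-1}$, so $V_{z,\chi}$ is unitary. The intertwining relation~\eqref{eq:fiber_embedding_covariance} follows by unwinding~\eqref{eq:rieffel_fiber_distribution_embedding} and~\eqref{eq:distributional_extension}:
\[
 \langle\widehat{\mathbf D(z)}\iota_\chi v,\xi\rangle
 =\langle[\mathbf D(z)^\dagger\xi]_\chi,v\rangle_\chi,
\]
and the fiber unitarity applied to $\mathbf D(z)^\dagger$ rewrites this as $\langle[\xi]_{\kappa_z\chi},V_{z,\chi}v\rangle_{\kappa_z\chi}$.

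Finally I will prove~\eqref{eq:all_fibers_full_eigenspaces}. The inclusion "$\subseteq$" is exactly~\eqref{eq:rieffel_fiber_generalized_eigenvalue}. For "$\supseteq$", take $u$ in the $\chi$-eigenspace and use the exact sequence~\eqref{eq:kappa_exact_sequence} to choose $z$ with $\kappa_z=\chi^{-1}$. Then $\widehat{\mathbf D(z)}u$ is stabilizer-invariant, since displacing multiplies the eigenvalue by the appropriate $\kappa$. By Lemma~\ref{lem:trivial_fiber_codespace}, it equals $\iota_1(v)$ for some $v$. Applying the intertwining relation with $-z$ then gives $u\in\iota_\chi(E_{D,\chi})$.

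The main obstacle is the continuity step: showing that $\mathbf D(z)$ preserves $\mathscr S(M)$ continuously, and that the twisted action on $\Omega_D$ is compatible with the completion. This is needed so that the $\Omega_D$ extension and the distributional extension agree. I expect to handle it by citing the Schwartz continuity of displacements (Appendix~\ref{appx:schwartz_heisenberg_continuity}).
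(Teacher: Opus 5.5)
Your proposal is correct and follows essentially the same route as the paper's proof. The paper likewise uses covariance of the $C(\widehat D)$-valued inner product under $\mathbf D(z)$, descent via $\mathbf D(z)N_\chi=N_{\kappa_z\chi}$ with inverse induced by $\mathbf D(z)^\dagger$, intertwining by unwinding the definitions, and the eigenspace identification via surjectivity of $\kappa$ and Lemma~\ref{lem:trivial_fiber_codespace}. Your choice $\kappa_z=\chi^{-1}$ to move the eigenspace into the codespace is equivalent to the paper's choice $\kappa_z=\chi$ to move the codespace out.

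The only slip is the one you anticipated. Since $\langle\cdot,\cdot\rangle_{L^2(M)}$ is conjugate-linear in its first slot, pulling out $\rho(w,z)$ gives $\overline{\rho(w,z)}=\kappa_z(w)$ rather than $\rho(w,z)$. It is this factor that, with the convention~\eqref{eq:fourier_transform_convention}, yields the shift by $\kappa_z$ in your displayed identity.
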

The proof is given in Appendix~\ref{appx:proof_rieffel_fiber_covariance}.

\begin{corollary}\label{cor:vector_bundle}
The syndrome bundle has constant rank
$K=\dim E_{D,\chi}=\dim\mathcal C$ for every $\chi\in\widehat D$.
\end{corollary}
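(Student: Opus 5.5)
The plan is to deduce Corollary~\ref{cor:vector_bundle} directly from Proposition~\ref{prop:rieffel_fiber_covariance} together with the exact sequence~\eqref{eq:kappa_exact_sequence} and Lemma~\ref{lem:trivial_fiber_codespace}. The first step is transitivity. The map $\kappa\colon G\to\widehat D$, $z\mapsto\kappa_z$, is surjective by~\eqref{eq:kappa_exact_sequence}. Fix $\chi\in\widehat D$ and choose $z\in G$ with $\kappa_z=\chi$. Then $\kappa_z\cdot 1=\chi$, and Proposition~\ref{prop:rieffel_fiber_covariance} gives a unitary $V_{z,1}\colon E_{D,1}\to E_{D,\chi}$. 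Hence $\dim E_{D,\chi}=\dim E_{D,1}$ for every $\chi$, whether the common value is finite or infinite.

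The second step identifies this common value with the code dimension. By Lemma~\ref{lem:trivial_fiber_codespace}, $\iota_1(E_{D,1})=\mathcal C$. The embedding $\iota_1$ is injective, as noted after~\eqref{eq:rieffel_fiber_distribution_embedding}, and it is linear. Therefore $\dim\mathcal C=\dim E_{D,1}$.

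The third step checks that the common rank is finite, so that $K$ is a genuine integer rank. Rieffel's module $\Omega_D$ is finitely generated projective over the unital algebra $C(\widehat D)$. By Serre--Swan~\eqref{eq:serre_swan_evaluation_map}, it is the section module of a finite-rank bundle, so each fiber $E_{D,\chi}=\Omega_D/N_\chi$ is finite dimensional. Concretely, if $\xi_1,\dots,\xi_r$ generate $\Omega_D$, then their classes span each fiber, because $\Omega_D/I_\chi\Omega_D$ is spanned by the images of the generators.

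The only point needing care is this finiteness, together with the fact that $N_\chi=\overline{I_\chi\Omega_D}$ coincides with $I_\chi\Omega_D$. The latter holds for finitely generated projective modules: $I_\chi\Omega_D=I_\chi p A^n$ is closed as a direct summand piece. I would simply cite the projectivity from~\cite[Section~3]{rieffel} already recorded in the text. Constancy of rank on a possibly disconnected base $\widehat D$ is not automatic from Serre--Swan, but the transitivity argument of the first step supplies it, and this is the genuine content of the corollary.
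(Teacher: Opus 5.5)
Your proposal is correct and follows the paper's own route. The paper obtains the corollary from the unitary fiber maps $V_{z,\chi}$ of Proposition~\ref{prop:rieffel_fiber_covariance}, surjectivity of $\kappa$ in~\eqref{eq:kappa_exact_sequence}, and the identification $\iota_1(E_{D,1})=\mathcal C$ of Lemma~\ref{lem:trivial_fiber_codespace}, with finite rank coming from Serre--Swan; your aside on whether $I_\chi\Omega_D$ is closed is unnecessary, since $E_{D,\chi}$ is in any case a quotient of $\Omega_D/I_\chi\Omega_D$ and so is spanned by the images of finitely many generators.
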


\subsection{Code dimension and covolume}
\label{subsec:code_dimension_covolume}

Let $\mu$ be the self-Plancherel Haar measure on $G$: this is the product
$\mu_M\otimes\mu_{\widehat M}$, where $\mu_{\widehat M}$ is the dual Haar
measure making the Fourier transform
$L^2(M,\mu_M)\to L^2(\widehat M,\mu_{\widehat M})$
unitary~\cite[p.~270]{rieffel}.

For a lattice $D\subset G$, a measurable subset $B\subseteq G$ is called
a \emph{fundamental domain} for $D$ if the translates
\[
 \{B+w:w\in D\}
\]
form a partition of $G$; equivalently, every $x\in G$ has a unique
expression $x=b+w$ with $b\in B$ and $w\in D$.
The \emph{covolume} of $D$ is
\begin{equation}\label{eq:covolume_definition}
 s_D:=|G/D|_\mu:=\mu(B),
\end{equation}
which is independent of the choice of measurable fundamental domain $B$.

\begin{theorem}\label{thm:code_dimension}
The code dimension is
\begin{equation}\label{eq:rank_covolume_formula}
 K=\dim\mathcal C
 =\tau_D([\Omega_D])=\mathrm{Tr}_D([\Omega_D])
 =|G/D|_\mu=s_D.
\end{equation}
The dual covolumes and logical displacement group satisfy
\begin{align}
 |G/D|_\mu\,|G/D^\perp|_\mu&=1,
 \label{eq:dual_covolume_reciprocity}\\
 [D^\perp:D]&=K^2.
 \label{eq:logical_quotient_order}
\end{align}
\end{theorem}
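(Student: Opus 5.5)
We prove the chain of equalities in Eq.~\eqref{eq:rank_covolume_formula} from left to right, using the Bloch--Floquet picture for the first equalities and Rieffel's trace computation for the covolume.

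\textbf{Step 1: trace pairing equals $K$.} By Corollary~\ref{cor:vector_bundle}, $\mathscr E_D$ has constant rank $K$. Realize $\Omega_D\cong\Gamma(\mathscr E_D)$ as $p\,C(\widehat D)^N$ for a projection $p\in M_N(C(\widehat D))$. Then $\operatorname{tr}p(\chi)=\dim E_{D,\chi}=K$ pointwise, so $\tau_D([\Omega_D])=\int_{\widehat D}K\,d\chi=K$, using that $d\chi$ is normalized. The equality $\tau_D=\mathrm{Tr}_D$ on $K_0$ follows from $\mathrm{Tr}_D=\tau_D\circ\mathcal F_D$.

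\textbf{Step 2: trace pairing equals covolume.} Write the class $[\Omega_D]$ by a finite frame $\{\xi_i\}\subset\mathscr S(M)$ satisfying $\sum_i {}_{\mathfrak A_D}\langle\xi_i,\cdot\rangle\,\xi_i=\mathrm{id}$. Such a frame exists since the right algebra $\mathfrak B_D$ is unital, so $1_{\mathfrak B_D}=\sum_i\langle\xi_i,\xi_i\rangle_{\mathfrak B_D}$. The associativity $\,{}_{\mathfrak A}\langle\xi,\eta\rangle\zeta=\xi\langle\eta,\zeta\rangle_{\mathfrak B}$ then gives
\[
\mathrm{Tr}_D([\Omega_D])=\sum_i\mathrm{Tr}_D({}_{\mathfrak A_D}\langle\xi_i,\xi_i\rangle)=\sum_i\|\xi_i\|^2_{L^2}.
\]
Here the last equality is Eq.~\eqref{eq:rieffel_trace_L2_norm}, and phase corrections do not affect the value at $w=0$. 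Symmetrically, $\mathrm{Tr}_{D^\perp}(\sum_i\langle\xi_i,\xi_i\rangle_{\mathfrak B_D})=\sum_i\|\xi_i\|^2$ only up to the normalization constant relating the two traces, and that constant is the whole content of the step. Rieffel's Poisson-summation identity~\cite[Theorem~2.15 and Section~3]{rieffel} for compatible Haar measures states $\mathrm{Tr}_D\circ{}_{\mathfrak A}\langle\cdot,\cdot\rangle=s_D\,\mathrm{Tr}_{D^\perp}\circ\langle\cdot,\cdot\rangle_{\mathfrak B}$, with counting measures on $D$ and $D^\perp$. Since $\mathrm{Tr}_{D^\perp}(1)=1$, this yields $\mathrm{Tr}_D([\Omega_D])=s_D$.

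The main obstacle is this normalization, namely checking that with counting measure on $D$ and the self-Plancherel $\mu$ the constant is exactly $|G/D|_\mu$. I would first verify it directly via Poisson summation on $G$. For $\xi\in\mathscr S(M)$, the function $F(x)=|\langle\xi,\mathbf D(x)\xi\rangle|^2$, or more precisely the ambiguity function, has symplectic Fourier transform related to itself. Poisson summation $\sum_{w\in D}f(w)=s_D^{-1}\sum_{z\in D^\perp}\widehat f(z)$ with $\mu$ then yields the identity. As a sanity check, compare with the finite case of Eq.~\eqref{eq:pauli_trace_pairing_dimension}, where $\mu$ is $|M|^{-1}$ times counting measure and $s_D=|M|^2/(|M||D|)=|M|/|D|$.

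\textbf{Step 3: reciprocity and index.} Apply Step 2 to the lattice $D^\perp$, which has symplectic dual $D$ by Eq.~\eqref{eq:symplectic_double_annihilator}. Alternatively, take Poisson summation applied twice: the symplectic Fourier transform is an involution for the self-dual measure $\mu$, which forces $s_Ds_{D^\perp}=1$, giving Eq.~\eqref{eq:dual_covolume_reciprocity}. Finally, $D\subseteq D^\perp$ is a finite-index sublattice because $D^\perp/D$ is discrete in the compact group $G/D$. Hence $s_D=[D^\perp:D]\,s_{D^\perp}=[D^\perp:D]/s_D$, so $[D^\perp:D]=s_D^2=K^2$.
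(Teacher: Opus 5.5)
Your proof is correct and follows the paper's route: constant rank gives $\tau_D([\Omega_D])=K$, Rieffel's trace formula gives the covolume, Poisson/Rieffel duality gives reciprocity, and the index count (the paper's exact sequence~\eqref{eq:quotient_exact_sequence}) gives $[D^\perp:D]=K^2$. The only difference is that you re-derive Rieffel's Theorem~3.4, which the paper simply cites, via a frame with $\sum_i\langle\xi_i,\xi_i\rangle_{\mathfrak B_D}=1$ and fix the constant by Poisson summation of the $\mathcal F_\rho$-invariant function $|\langle\xi,\mathbf D(x)\xi\rangle|^2$; when doing so, keep in mind that the factor $s_D^{-1}$ must sit in the compatible $\mathfrak B_D$-inner product (Haar measure $s_D^{-1}$ times counting measure on $D^\perp$), since with bare counting measure on both sides the same frame computation would give $\mathrm{Tr}_D([\Omega_D])=1$, and write the frame as ${}_{\mathfrak A_D}\langle\cdot,\xi_i\rangle\,\xi_i$ to match the paper's linear-in-the-first-slot convention.
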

The proof is given in Appendix~\ref{appx:proof_code_dimension}.
Thus $\widehat D$ parametrizes the syndrome fibers, while $D^\perp/D$
labels the $K^2$ logical displacements.

The covolume becomes a determinant once the lattice is lifted to the
covering group $\widetilde G$ of Eq.~\eqref{eq:full_lift_covering}.
The following lemma justifies the matrix computations of
Sec.~\ref{subsec:examples_notation}.

\begin{lemma}
\label{lem:full_lift_pfaffian}
Let $M=\R^p\times\Z^q\times\Z_{\dia c}$ with $p,q,k\geq0$, and put
$C_{\dia c}:=c_1\cdots c_k$.
Let $\pi\colon\widetilde G\to G$ be the covering map of
Eq.~\eqref{eq:full_lift_covering}, with lifted coordinates
$(x,\xi\,|\,n,\theta\,|\,a,b)$.
Its kernel is the period lattice
\begin{equation}
 \Lambda_{\mathrm{per}}:=\ker\pi
 =\{0\}^{2p+q}\times\Z^q\times\dia c\,\Z^k
 \times\dia c\,\Z^k.
 \label{eq:full_lift_periods}
\end{equation}
For a lattice $D\subset G$, we call $\Lambda:=\pi^{-1}(D)$ its
\emph{full lift}.
By part~(i) below, $\Lambda=T\Z^{2(p+q+k)}$ for an invertible
\emph{stabilizer generator matrix} $T$.
Put $\varTheta:=T^{\intercal}JT$, with $J$ as in
Eq.~\eqref{eq:symplectic-form}, and call
$T_{\mathrm{dual}}:=T\varTheta^{-1}$ the \emph{logical matrix}.
\begin{enumerate}[label=(\roman*)]
 \item $\Lambda$ is a full-rank lattice in $\R^{2(p+q+k)}$ with
 $\Lambda_{\mathrm{per}}\subseteq\Lambda\subseteq\widetilde G$ and
 $\Lambda/\Lambda_{\mathrm{per}}\cong D$.
 Conversely, every full-rank lattice between $\Lambda_{\mathrm{per}}$
 and $\widetilde G$ is the full lift of a lattice in $G$.
 \item $D$ is isotropic if and only if $\varTheta$ is an integer matrix.
 \item The covolume is
 $|G/D|_\mu=|\det T|/C_{\dia c}=|\operatorname{Pf}\varTheta|$.
 In particular, $K=|\det T|/C_{\dia c}=|\operatorname{Pf}\varTheta|$
 when $D$ is isotropic.
 \item The symplectic dual
 $\Lambda^\perp:=\{v\in\R^{2(p+q+k)}:v^{\intercal}J\Lambda\subseteq\Z\}$
 equals $T_{\mathrm{dual}}\Z^{2(p+q+k)}=\pi^{-1}(D^\perp)$.
 If $D$ is isotropic, then $\ell\mapsto\pi(T_{\mathrm{dual}}\ell)$
 induces an isomorphism
 $\Z^{2(p+q+k)}/\varTheta\Z^{2(p+q+k)}\cong D^\perp/D$, and
 $\rho\bigl(\pi(T_{\mathrm{dual}}\ell),\pi(T_{\mathrm{dual}}\ell')\bigr)
 =e^{-2\pi i\,\ell^{\intercal}\varTheta^{-1}\ell'}$.
\end{enumerate}
\end{lemma}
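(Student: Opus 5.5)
Part (i) follows from the structure of the covering $\pi$. Its kernel is evidently $\Lambda_{\mathrm{per}}$: the $x,\xi$ coordinates and the rotor angular momenta $n$ are never reduced, the rotor angles are reduced modulo $\Z^q$, and the qudit coordinates modulo $\dia c\,\Z^k$. Then $\Lambda=\pi^{-1}(D)$ is a subgroup containing $\Lambda_{\mathrm{per}}$, and $\Lambda/\Lambda_{\mathrm{per}}\cong D$. For discreteness, $\pi$ is a local homeomorphism from $\widetilde G$, with the subspace topology of $H^*$, onto $G$, since $\widetilde G$ is a closed subgroup and $\pi$ is open with discrete kernel. Hence the preimage of a discrete subgroup is discrete. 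For full rank, note that $\widetilde G/\Lambda\cong G/D$ is compact. Because $\widetilde G/\Lambda_{\mathrm{per}}\cong G$ and $\widetilde G$ is cocompact in $H^*$ (the quotient is $\T^{q}\times\T^{2k}$ on the integer slots), the quotient $H^*/\Lambda$ is compact. A discrete cocompact subgroup of a real vector space is a full-rank lattice. The converse runs the same argument backwards: $D:=\pi(\Lambda)$ is discrete because $\Lambda$ is a union of $\Lambda_{\mathrm{per}}$-cosets, and cocompact as an image.

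Part (ii) is immediate from Eq.~\eqref{eq:lifted_cocycle_bicharacter}. We have $\rho(\pi v,\pi w)=e^{2\pi i v^{\intercal}Jw}$, so $D$ is isotropic exactly when $t_i^{\intercal}Jt_j\in\Z$ for all basis columns $t_i$ of $\Lambda$, using bilinearity.

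For part (iii), I would compute $\mu$ on a fundamental domain. Let $F$ be a fundamental parallelepiped for $\Lambda$ in $H^*$ with respect to Lebesgue measure $\lambda$. The map $\pi$ sends $F\cap\widetilde G$ (fibered over the integer coordinates) bijectively onto a fundamental domain of $D$ in $G$. The task is to compare the Haar measures. On $\R^{2p}$, the self-Plancherel measure is Lebesgue measure with the $e^{2\pi i x\xi}$ convention. On $\Z^q\times\T^q$ it is counting measure times Lebesgue measure of total mass $1$. On $\Z_{c}\times\Z_c$ it is $c^{-1}$ times counting measure, as recorded in Table~\ref{tab:notation_normalization}.

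I would verify $|G/D|_\mu=\lambda(H^*/\Lambda)/C_{\dia c}$ by the index formula. Take an intermediate reference lattice $\Lambda_0=\{0\}^{2p}\oplus\Z^{2q}\oplus\Z^{2k}$ inside the discrete-plus-angle directions. Equivalently, both sides are multiplicative in towers, so it suffices to check the identity on one lattice per factor type and then use $[\Lambda':\Lambda]=[D':D]$ for nested lattices, since the covolume scales by the index on both sides. For a single qudit, $D=\{0\}$ has $\mu$-covolume $c^2\cdot c^{-1}=c$, while $\Lambda_{\mathrm{per}}=c\Z^2$ has determinant $c^2$, giving $c^2/c$, which checks. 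For a rotor, $D=\{0\}\times\{0\}$ is not a lattice, so I would check instead $D=\Z\times\{0\}$ with $\Lambda=\Z^2$: both covolumes equal $1$. For an oscillator, the two sides agree trivially.

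Next comes $|\det T|^2=\det(T^{\intercal}JT)/\det J$, with $\det J=C_{\dia c}^{-2}$. This gives $|\det T|/C_{\dia c}=\sqrt{\det\varTheta}=|\operatorname{Pf}\varTheta|$. When $D$ is isotropic, Theorem~\ref{thm:code_dimension} converts the covolume into $K$. The main obstacle is getting the measure normalization right in (iii). Chief among these is the qudit factor $c^{-1}$ in $\mu$, which exactly compensates one power of $c$ coming from the period lattice.

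For part (iv), $v\in\Lambda^\perp$ means $v^{\intercal}JT\in\Z^{1\times n}$, that is, $v\in(JT)^{-\intercal}\Z^n=-J^{-1}T^{-\intercal}\Z^n$. Since $T\varTheta^{-1}=T T^{-1}J^{-1}T^{-\intercal}=J^{-1}T^{-\intercal}$, this equals $T_{\mathrm{dual}}\Z^n$ up to sign. Because $\Lambda\supseteq\Lambda_{\mathrm{per}}$ and $\Lambda_{\mathrm{per}}^\perp=\widetilde G$, where the $\xi$ slots are free and angle-pairing forces $n\in\Z$, we get $\Lambda^\perp\subseteq\widetilde G$. Then (ii) together with lift-independence gives $\Lambda^\perp=\pi^{-1}(D^\perp)$. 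If $D$ is isotropic, then $\Lambda\subseteq\Lambda^\perp$, and $T=T_{\mathrm{dual}}\varTheta$ shows that $\Lambda$ corresponds to $\varTheta\Z^n$ in the coordinates $\ell$. Hence
\[
D^\perp/D\cong\Lambda^\perp/\Lambda\cong\Z^n/\varTheta\Z^n .
\]
Finally, $T_{\mathrm{dual}}^{\intercal}JT_{\mathrm{dual}}=\varTheta^{-\intercal}\varTheta\varTheta^{-1}=-\varTheta^{-1}$, since $\varTheta$ is antisymmetric. Combined with Eq.~\eqref{eq:lifted_cocycle_bicharacter}, this gives the stated value of $\rho$.
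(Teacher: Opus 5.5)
Your parts (i), (ii), and (iv) are sound. In (i), your cocompactness route to full rank, via $H^*/\widetilde G\cong\T^{q+2k}$, is a valid alternative to the paper's argument with a linear form vanishing on $\Lambda$. In (iv), remember that pairing with the qudit periods is what forces $a,b\in\Z$.

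The gap is in (iii). Your tower reduction only shows that the ratio $C_{\dia c}\,|G/D|_\mu/|\det T|$ is unchanged between nested lattices, since both covolumes scale by $[D':D]=[\Lambda':\Lambda]$. So it propagates the identity only within a commensurability class, and a general lattice shares no finite-index sublattice with your test lattices. Already for one rotor, the lattice generated by $(n,\theta)=(1,\sqrt2\bmod 1)$ meets $\Z\times\{0\}$ only in $0$. For one oscillator and one rotor, the shear $S_{a,0}$ of Eq.~\eqref{eq:oscillator-rotor-shear} with $a=(\sqrt3,\sqrt2)$ maps the product lattice $\{(x,n;\xi,0):x,n,\xi\in\Z\}$ to an isotropic lattice with trivial pure-oscillator part: you would need $\sqrt2x-\sqrt3\xi\in\Z$ with $x,\xi\in\Z$. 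Every finite-index subgroup of a product $D_{\mathrm{osc}}\oplus D_{\mathrm{rot}}$ contains nonzero pure-oscillator elements, so this lattice is commensurable with no product. Hence checking one lattice per factor type proves (iii) only for lattices commensurable with products of your test lattices. That excludes exactly the coupled hybrid lattices the lemma is for. Your $\Lambda_0$ is also not full rank when $p>0$, so it cannot anchor an index argument either.

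What is missing is the direct Haar-measure comparison on the fundamental domain you already wrote down. Decompose $F\cap\widetilde G$ into sheets $F_\delta\times\{\delta\}$ over the values $\delta$ of the discrete coordinates $(n,a,b)$. The map $\pi$ is injective on $F\cap\widetilde G$ and acts on each sheet only by reducing $\theta$ modulo $\Z^q$. Since $\mu$ is Lebesgue on $\R^{2p}$, counting times probability Haar on $\Z^q\times\T^q$, and mass $C_{\dia c}^{-1}$ per finite point, you get $|G/D|_\mu=C_{\dia c}^{-1}\sum_\delta\operatorname{vol}(F_\delta)$. Next, add the unit cube $[0,1)^{q+2k}$ in the discrete directions. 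This turns $F\cap\widetilde G$ into another fundamental domain for $\Lambda$ in $\R^{2(p+q+k)}$, of Lebesgue measure $\sum_\delta\operatorname{vol}(F_\delta)$, which must therefore equal $|\det T|$. This is the paper's sheet argument, and it gives $|\det T|=C_{\dia c}|G/D|_\mu$ for every lattice at once. Alternatively, you could show that $D\mapsto C_{\dia c}^{-1}|\det T|$ is the covolume function of some Haar measure on $G$. Uniqueness of Haar measure would then reduce everything to the single check your qudit, rotor, and oscillator computations provide. Proving that, however, is the same sheet computation.
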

The proof is given in Appendix~\ref{appx:proof_full_lift_pfaffian}.
The sign of $\operatorname{Pf}\varTheta$ depends on the ordering of the
lifted coordinates and of the columns of $T$, so we always quote
$K=|\operatorname{Pf}\varTheta|$.

The lemma applies to every lattice in $G$ for every configuration group
$M=\R^p\times\Z^q\times\Z_{\dia c}$.
Since every finite abelian group is a product of cyclic groups, the lemma
covers every finite abelian factor.
It needs no condition on torsion or on the parity of $q$.
The period columns make this possible.
A $\Z$-basis of $D$ itself exists only when $D$ is torsion free, and it then
has $2p+q$ elements.
The Pfaffian of its Gram matrix gives $K$ in general only for $M=\R^p$.
Similarly, the graph-lattice formula of Appendix~\ref{app:graph_lattices}
needs $q=0$ and a torsion-free lattice.
The method relies on the covering group $\widetilde G$, which exists because
$M$ is built from copies of $\R$, $\Z$, and finite cyclic groups.
It does not apply directly to other locally compact abelian groups, such as
an infinite product of qubits.

Two consequences of Lemma~\ref{lem:full_lift_pfaffian} are convenient in
examples.
First, for $\lambda\in\Lambda$ the label $\pi(\lambda)$ has finite order
exactly when $\lambda\in\operatorname{span}_{\R}\Lambda_{\mathrm{per}}$,
since $\Lambda\cap\operatorname{span}_{\R}\Lambda_{\mathrm{per}}$
contains $\Lambda_{\mathrm{per}}$ with finite index.
Hence
$\operatorname{Tor}(D)\cong
(\Lambda\cap\operatorname{span}_{\R}\Lambda_{\mathrm{per}})/\Lambda_{\mathrm{per}}$.
In particular, $D$ is torsion free exactly when $T$ can be chosen with a
basis of $\Lambda_{\mathrm{per}}$ among its columns.
Second, for isotropic $D$, the alternating Smith normal form of $\varTheta$
provides $U\in\mathrm{GL}(2(p+q+k),\Z)$ with
$U^{\intercal}\varTheta U
=\bigoplus_i\bigl(\begin{smallmatrix}0&h_i\\-h_i&0\end{smallmatrix}\bigr)$
and $h_1\mid h_2\mid\cdots$.
Then $D^\perp/D\cong\bigoplus_i\Z_{h_i}^2$, and distinct pairs commute.
The two generators of the $i$th pair have commutation phase
$e^{2\pi i/h_i}$.
Thus $(\mathsf L_D,\rho_D)$ is isomorphic to the logical phase space of
qudits of dimensions $h_i$, and $K=\prod_ih_i$.

\begin{remark}
\label{rem:full_lift_gkp_concatenation}
Lemma~\ref{lem:full_lift_pfaffian} extends the concatenation picture of
Sec.~\ref{ex:mixed-qudit} to every configuration
group $M=\R^p\times\Z^q\times\Z_{\dia c}$.
We keep each oscillator and replace each rotor and each qudit by one
oscillator.
The $i$th qudit is encoded in the square GKP code with
stabilizers $\mathbf D(\sqrt{c_i},0)$ and $\mathbf D(0,\sqrt{c_i})$, via
$\ket{j}\mapsto\sum_{m\in\Z}\ket{(j+c_im)/\sqrt{c_i}}_{\mathrm{osc}}$.
A rotor is encoded in the oscillator states fixed by the single
displacement $\mathbf D(0,1)=e^{2\pi i\hat x}$, via
$\ket{n}_{\mathrm{rot}}\mapsto\ket{n}_{\mathrm{osc}}$.
This constraint confines position to the integers.
Rotor angular momentum thus becomes oscillator position, and the rotor
angle becomes oscillator momentum modulo one.
The resulting injective encoding
$V\colon\mathscr S'(M)\to\mathscr S'(\R^{p+q+k})$ satisfies
\begin{equation}
 V\,\mathbf D(\pi(v))=\mathbf D(Rv)\,V,
 \qquad
 R:=I_{2p}\oplus I_{2q}\oplus\dia c^{-1/2}\oplus\dia c^{-1/2},
 \label{eq:full_lift_concatenation}
\end{equation}
for every lift $v\in\widetilde G$.
Let $J_{\mathrm{osc}}^{(r)}:=\left(\begin{smallmatrix}0&I_r\\-I_r&0\end{smallmatrix}\right)$,
and let $J_{\mathrm{std}}:=J_{\mathrm{osc}}^{(p)}\oplus J_{\mathrm{osc}}^{(q)}
\oplus J_{\mathrm{osc}}^{(k)}$ be the standard form of $p+q+k$
oscillators.
Then $R^{\intercal}J_{\mathrm{std}}R=J$.
Likewise, $R^{\intercal}J'_{\mathrm{std}}R=J'$, where $J'_{\mathrm{std}}$ is
obtained from $J_{\mathrm{std}}$ by replacing its negative entries by zero.
The inner stabilizers $\mathbf D(Rv)$ with $v\in\Lambda_{\mathrm{per}}$,
together with the encoded stabilizers $c(\pi(v))\mathbf D(Rv)$ with
$v\in\Lambda$, generate the concatenated stabilizer group.
By the second identity and Eq.~\eqref{eq:lifted_cocycle_bicharacter}, the
phases $c(\pi(v))$ trivialize the oscillator cocycle on $R\Lambda$.
The label lattice of this group is the rescaled full lift $R\Lambda$, a GKP
lattice with symplectic Gram matrix $\varTheta$.
By Eq.~\eqref{eq:full_lift_concatenation}, the GKP code of $R\Lambda$
with phases $c(\pi(v))$ contains $V(\mathcal C)$.
Theorem~\ref{thm:code_dimension}, applied to $M$ and to
$\R^{p+q+k}$, gives both spaces the dimension
$|\det RT|=|\det T|/C_{\dia c}=K$.
Hence the two codes coincide, and
$(R\Lambda)^\perp/R\Lambda=R\Lambda^\perp/R\Lambda
\cong\Z^{2(p+q+k)}/\varTheta\Z^{2(p+q+k)}$.
Conversely, an isotropic full-rank lattice in $\R^{2(p+q+k)}$
that contains $R\Lambda_{\mathrm{per}}$ lies in
$(R\Lambda_{\mathrm{per}})^\perp=R\widetilde G$.
By Lemma~\ref{lem:full_lift_pfaffian}, it equals $R\pi^{-1}(D)$ for a
unique isotropic lattice $D\subset G$.
The case $p=q=0$ is the concatenation of Sec.~\ref{ex:mixed-qudit}.
For split rotor lattices, the case $p=k=0$ inverts the oscillator-to-rotor
construction of Sec.~\ref{ex:rotor-gkp} with $B_0=\Z^q$.

The correspondence identifies ideal codespaces, stabilizer groups, and
logical displacement groups.
It does not identify error sets.
Indeed, an oscillator displacement $\mathbf D(v)$ with
$v\in\R^{2(p+q+k)}\setminus R\widetilde G$ fails to commute with the inner
stabilizers.
It also does not identify Clifford groups.
Every $W\in\mathrm{GL}(2(p+q+k),\Z)$ with $W^{\intercal}\varTheta W=\varTheta$
gives a Gaussian symplectic map $RTWT^{-1}R^{-1}$ that preserves
$R\Lambda$.
However, $TWT^{-1}$ descends to an element of
$\operatorname{Aut}_{\mathrm{sp}}(D)$ exactly when it maps
$\Lambda_{\mathrm{per}}$ into itself.
For example, the two-mode GKP code of the ququart lattice $D_{\alpha,b}$
of Sec.~\ref{ex:osc-rotor} has a Gaussian logical Fourier gate.
In the basis $(w_1,\ldots,w_4)$ of its full lift, this gate corresponds to
\[
 W=\begin{pmatrix}0&0&0&1\\2&1&0&2\\2&0&1&-2\\-1&0&0&0\end{pmatrix},
\]
which maps the logical classes $[z_X]\mapsto-[z_Z]$ and
$[z_Z]\mapsto[z_X]$ but does not preserve $\Lambda_{\mathrm{per}}$.
In contrast, no element of $\operatorname{Aut}_{\mathrm{sp}}(D_{\alpha,b})$
implements a logical Fourier gate (Sec.~\ref{subsec:clifford_examples}).
\end{remark}

\subsection{Logical displacement operators}
\label{sec:logical_displacement_operators}

We now describe the logical operators of an arbitrary lattice stabilizer code.
With the stabilizer phase function $c$ fixed as above, the construction
requires neither lattice generators nor coordinates on $G$.  The logical
phase space $D^\perp/D$ and its commutation pairing are intrinsic to the
embedded isotropic lattice $D\subseteq G$ with the ambient pairing $\rho$;
the concrete codespace and its logical operators also use the choice of $c$.

The action of the displacement operators on the syndrome bundle was
described in Proposition~\ref{prop:rieffel_fiber_covariance}.  For
$z\in G$, recall the character
\[
 \kappa_z(w)=\overline{\rho(w,z)},
 \qquad w\in D.
\]
Displacement by $z$ induces a unitary from the fiber over $\chi$ to the fiber
over $\kappa_z\chi$,
\begin{equation}
 V_{z,\chi}:E_{D,\chi}\longrightarrow E_{D,\kappa_z\chi},
 \qquad
 [\xi]_\chi\longmapsto[\mathbf D(z)\xi]_{\kappa_z\chi}.
 \label{eq:logical_fiber_action}
\end{equation}
This is the precise sense in which displacement operators respect the
Bloch--Floquet decomposition.  It is a pointwise statement about the fibers
of the syndrome bundle.  By the exact sequence
\eqref{eq:kappa_exact_sequence}, $\kappa_z=1$ exactly when
$z\in D^\perp$.  Hence every displacement indexed by $D^\perp$ preserves
each individual syndrome fiber and acts there by
\begin{equation}
 V_{z,\chi}:E_{D,\chi}\longrightarrow E_{D,\chi}.
 \label{eq:dual_lattice_fiberwise_action}
\end{equation}
If $z\notin D^\perp$, then $\kappa_z$ is nontrivial and the displacement
moves every syndrome fiber to a different fiber.

We now specialize this fiberwise action to the trivial syndrome.  By
Lemma~\ref{lem:trivial_fiber_codespace}, the logical Hilbert space is
identified with
\[
 \mathcal C\cong E_{D,1}.
\]
For $z\in D^\perp$, Equation~\eqref{eq:logical_fiber_action} gives a unitary
endomorphism of this fiber.  We define the corresponding logical
displacement by
\begin{equation}
 \mathbf L_z:=V_{z,1}\in\mathscr U(E_{D,1}),
 \qquad
 \mathbf L_z[\xi]_1=[\mathbf D(z)\xi]_1.
 \label{eq:logical_displacement_definition}
\end{equation}
Under the distributional embedding $\iota_1$, this is precisely the
restriction of the distributional action of $\mathbf D(z)$ to the
codespace.  This preservation can also be seen directly from the Weyl
relations.  If $u\in\mathcal C$, $w\in D$, and $z\in D^\perp$, then
\begin{align}
 \widehat{\widetilde{\mathbf D}(w)}\,
 \widehat{\mathbf D(z)}u
 &=\rho(w,z)\widehat{\mathbf D(z)}
   \widehat{\widetilde{\mathbf D}(w)}u \notag\\
 &=\widehat{\mathbf D(z)}u,
 \label{eq:logical_displacement_preserves_code}
\end{align}
because $\rho(w,z)=1$.  Thus $\widehat{\mathbf D(z)}u$ again lies in
$\mathcal C$.

Conversely, if $z\notin D^\perp$, there is a $w\in D$ for which
$\rho(w,z)\neq1$.  The same calculation shows that displacement by $z$
changes the stabilizer eigenvalue and therefore moves the state out of the
trivial-syndrome fiber.  Consequently $D^\perp$ is exactly the subgroup of
phase-space displacements that centralizes the stabilizers, while $D$ itself
acts trivially on the code after the phase correction.  The quotient
$D^\perp/D$ is therefore the natural analogue of the logical Pauli group
modulo phases.  The next theorem makes this statement precise and identifies
the full algebra generated by these logical displacements.

\begin{theorem}
\label{thm:logical_displacement_algebra}
Let
\begin{equation}
 \mathsf L_D:=D^\perp/D.
 \label{eq:logical_phase_space}
\end{equation}
Then $\mathsf L_D$ is a finite abelian group of order $K^2$, where
$K=\dim\mathcal C$.  Write
$\operatorname{PU}(E_{D,1}):=\mathscr U(E_{D,1})/\mathbb T$ for the
projective unitary group, obtained by quotienting the unitary group by
scalar phases.  The projective class of $\mathbf L_z$ depends only on
$[z]\in\mathsf L_D$, and the map
\begin{equation}
 \mathsf L_D\longrightarrow\operatorname{PU}(E_{D,1}),
 \qquad
 [z]\longmapsto[\mathbf L_z]
 \label{eq:logical_projective_representation}
\end{equation}
is a faithful group homomorphism.  Equivalently, the operators
$\mathbf L_z$ give a faithful projective unitary representation of
$\mathsf L_D$.

The bicharacter $\rho$ descends to a nondegenerate alternating bicharacter
\begin{equation}
 \rho_D:\mathsf L_D\times\mathsf L_D\longrightarrow\mathbb T,
 \qquad
 \rho_D([z],[z']):=\rho(z,z'),
 \label{eq:logical_symplectic_bicharacter}
\end{equation}
and the logical displacements obey the Weyl relations
\begin{align}
 \mathbf L_z\mathbf L_{z'}
 &=\beta(z,z')\mathbf L_{z+z'},
 \label{eq:logical_projective_product}\\
 \mathbf L_z\mathbf L_{z'}
 &=\rho_D([z],[z'])
   \mathbf L_{z'}\mathbf L_z.
 \label{eq:logical_weyl_relation}
\end{align}
Finally, if $R\subset D^\perp$ is any set of representatives for
$D^\perp/D$, then
\begin{equation}
 \{\mathbf L_z:z\in R\}
 \label{eq:logical_displacement_basis_set}
\end{equation}
is an orthogonal basis of
$\operatorname{End}(E_{D,1})\cong M_K(\mathbb C)$ for the
Hilbert--Schmidt inner product.  In particular, the logical displacement
operators generate the full logical operator algebra.
\end{theorem}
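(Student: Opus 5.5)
The plan is to deduce most parts from earlier results and isolate one genuinely new step. First, $\mathsf L_D$ is abelian as a quotient of the abelian group $D^\perp$. It has order $K^2$ by Eq.~\eqref{eq:logical_quotient_order} of Theorem~\ref{thm:code_dimension}. In particular it is finite, since $D$ is isotropic, so $D\subseteq D^\perp$.

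Next, the Weyl relations. On Schwartz functions, Eq.~\eqref{eq:ccr_semi_displacement_general} gives $\mathbf D(z)\mathbf D(z')\xi=\beta(z,z')\mathbf D(z+z')\xi$. Passing to classes $[\cdot]_1$ through the definition~\eqref{eq:logical_displacement_definition} yields Eq.~\eqref{eq:logical_projective_product}. Eq.~\eqref{eq:logical_weyl_relation} then follows from Eq.~\eqref{eq:ccr_displacement_general}.

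To see that the projective class depends only on $[z]$, take $w\in D$. Then $\mathbf L_{z+w}=\overline{\beta(z,w)}\mathbf L_z\mathbf L_w$. On the codespace $\widetilde{\mathbf D}(w)$ acts as the identity, by Lemma~\ref{lem:trivial_fiber_codespace} or Eq.~\eqref{eq:rieffel_stabilizer_diagonalization} at $\chi=1$, so $\mathbf L_w=\overline{c(w)}$ is a scalar. Hence $\mathbf L_{z+w}$ is a phase multiple of $\mathbf L_z$, and Eq.~\eqref{eq:logical_projective_product} shows that $[z]\mapsto[\mathbf L_z]$ is a homomorphism into $\operatorname{PU}(E_{D,1})$. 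The same computation, with $w\in D$ and $z\in D^\perp$, gives $\rho(z+w,z'+w')=\rho(z,z')$. So $\rho_D$ is a well-defined alternating bicharacter, because $\rho(x,x)=1$ and $\rho$ is a bicharacter on $G$.

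Nondegeneracy comes from Eq.~\eqref{eq:symplectic_double_annihilator}. If $\rho(z,z')=1$ for all $z'\in D^\perp$, then $z\in(D^\perp)^\perp=D$, so $[z]=0$.

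The key step is the orthogonal basis and faithfulness. Suppose $[z]\neq0$. By nondegeneracy there is a $z'$ with $\rho_D([z],[z'])=\lambda\neq1$. Then Eq.~\eqref{eq:logical_weyl_relation} gives
\[
\operatorname{tr}(\mathbf L_z)=\operatorname{tr}(\mathbf L_{z'}^{-1}\mathbf L_z\mathbf L_{z'})=\lambda\,\operatorname{tr}(\mathbf L_z),
\]
so $\operatorname{tr}\mathbf L_z=0$. Here the trace is over the finite-dimensional space $E_{D,1}$, of dimension $K$ by Corollary~\ref{cor:vector_bundle}, and $\mathbf L_{z'}$ is unitary.

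Faithfulness follows at once: if $\mathbf L_z$ were a scalar, its trace would be a nonzero multiple of $K$, a contradiction.

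For $z_1,z_2\in R$ distinct, the product $\mathbf L_{z_1}^\dagger\mathbf L_{z_2}$ is a phase times $\mathbf L_{z_2-z_1}$. This uses $\mathbf L_{z}^\dagger=\mathbf L_z^{-1}$, which is proportional to $\mathbf L_{-z}$ by Eq.~\eqref{eq:logical_projective_product}. Since $[z_2-z_1]\neq0$, the Hilbert--Schmidt inner product $\operatorname{tr}(\mathbf L_{z_1}^\dagger\mathbf L_{z_2})$ vanishes. The norms are all equal to $K$.

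This gives $K^2$ mutually orthogonal nonzero operators in $M_K(\mathbb C)$, which has dimension $K^2$, so they form a basis and generate the full logical algebra. The only delicate point is the first step of this paragraph, which uses the identification $\mathcal C\cong E_{D,1}$ to make traces meaningful; the remaining steps are routine algebra.
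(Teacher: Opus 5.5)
Your proposal is correct and follows essentially the same route as the paper: the order from Eq.~\eqref{eq:logical_quotient_order}, the Weyl relations from the displacement multiplication law, the scalar action $\mathbf L_w=\overline{c(w)}$ for $w\in D$ giving well-definedness, nondegeneracy from $(D^\perp)^\perp=D$, vanishing traces by conjugation with a non-commuting $\mathbf L_{z'}$, and a dimension count. The only differences are minor. You derive faithfulness from the vanishing trace, whereas the paper derives it from the Weyl relation: a scalar $\mathbf L_z$ commutes with every $\mathbf L_{z'}$, which forces $z\in(D^\perp)^\perp=D$. Also, strictly it is the squared Hilbert--Schmidt norm $\operatorname{tr}(\mathbf L_z^\dagger\mathbf L_z)$ that equals $K$.
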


The proof is given in Appendix~\ref{appx:proof_logical_displacement_algebra}.

The theorem yields the usual stabilizer trichotomy for displacement
operators.  Elements of $D$ act trivially on the code up to phase; elements
of $D^\perp\setminus D$ are nontrivial logical displacements; and elements of
$G\setminus D^\perp$ move the trivial-syndrome fiber to a nontrivial syndrome
fiber.  Choosing generators of the finite logical phase space $D^\perp/D$
and evaluating their displacement operators gives generating logical
displacements.  To obtain standard paired logical Pauli generators
$X_i,Z_i$, choose generator pairs adapted to the nondegenerate alternating
pairing $\rho_D$.

\subsection{Clifford gates from lattice symmetries}
\label{sec:clifford_gates}

Section~\ref{sec:logical_displacement_operators} identified the finite
abelian group
$\mathsf L_D=D^\perp/D$, equipped with the nondegenerate alternating
bicharacter $\rho_D$, as the intrinsic logical phase space of the code.
We now ask which symmetries of the physical phase space preserve the lattice
$D$ and therefore induce Clifford transformations of $\mathsf L_D$.
For GKP codes, the relation between lattice symmetries and logical Clifford
operations, including the displacement corrections needed to preserve
stabilizer phases, is described by Burchards, Flammia, and
Conrad~\cite[Sec.~4.2]{fiber-bundle-ft}.  We formulate this construction
for the general LCA phase spaces considered here.

\subsubsection{Symplectic automorphisms and their implementers}
\label{subsec:clifford_general}

Let $\operatorname{Sp}(G,\rho)$ denote the group of continuous group
automorphisms $S:G\to G$ satisfying
\begin{equation}
 \rho(Sx,Sy)=\rho(x,y)
 \qquad(x,y\in G).
 \label{eq:ambient_symplectic_automorphism}
\end{equation}
The symplectic automorphism group of the code lattice is
\begin{equation}
 \operatorname{Aut}_{\mathrm{sp}}(D)
 :=\{S\in\operatorname{Sp}(G,\rho):S(D)=D\}.
 \label{eq:lattice_symplectic_automorphism_group}
\end{equation}
This definition depends only on the embedded lattice $D\subseteq G$ and not
on a choice of lattice generators.

\begin{proposition}
\label{prop:lattice_symmetry_logical_phase_space}
Every $S\in\operatorname{Aut}_{\mathrm{sp}}(D)$ satisfies
$S(D^\perp)=D^\perp$ and hence induces an automorphism
\begin{equation}
 \overline S:\mathsf L_D\longrightarrow\mathsf L_D,
 \qquad
 \overline S[z]=[Sz].
 \label{eq:induced_logical_symplectic_map}
\end{equation}
Moreover, $\overline S$ preserves the logical bicharacter:
\begin{equation}
 \rho_D(\overline S[z],\overline S[z'])
 =\rho_D([z],[z']).
 \label{eq:induced_map_preserves_logical_bicharacter}
\end{equation}
Consequently there is a canonical homomorphism
\begin{equation}
 \Phi_D:\operatorname{Aut}_{\mathrm{sp}}(D)
 \longrightarrow\operatorname{Sp}(\mathsf L_D,\rho_D),
 \qquad
 S\longmapsto\overline S.
 \label{eq:lattice_to_logical_symplectic_homomorphism}
\end{equation}
\end{proposition}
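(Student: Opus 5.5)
The proof is a direct verification from the definitions of $D^\perp$ and $\rho_D$. The only input beyond them is that $S$ is a bijection of $G$ with inverse also in $\operatorname{Sp}(G,\rho)$.

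First we show $S(D^\perp)\subseteq D^\perp$. Take $z\in D^\perp$ and $w\in D$. Since $S(D)=D$, we can write $w=Sw'$ with $w'\in D$. Then
\[
\rho(Sz,w)=\rho(Sz,Sw')=\rho(z,w')=1,
\]
so $Sz\in D^\perp$. Next, $S^{-1}$ is a group automorphism. Applying Eq.~\eqref{eq:ambient_symplectic_automorphism} to $S^{-1}x,S^{-1}y$ shows that $S^{-1}$ preserves $\rho$. We also need $S^{-1}$ continuous. Here I would either read ``automorphism'' in the definition of $\operatorname{Sp}(G,\rho)$ as an automorphism of topological groups, or invoke the open mapping theorem for $\sigma$-compact LCA groups; $G$ is $\sigma$-compact for our $M$. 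Thus $S^{-1}\in\operatorname{Sp}(G,\rho)$. Since also $S^{-1}(D)=D$, the same argument gives $S^{-1}(D^\perp)\subseteq D^\perp$, and hence $S(D^\perp)=D^\perp$.

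Because $S$ restricts to automorphisms of both $D^\perp$ and $D$, it descends to a group automorphism $\overline S$ of $\mathsf L_D=D^\perp/D$ with $\overline S[z]=[Sz]$. This is well defined: if $z-z'\in D$, then $Sz-Sz'=S(z-z')\in D$. The inverse of $\overline S$ is $\overline{S^{-1}}$.

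Preservation of the logical bicharacter follows from the definition~\eqref{eq:logical_symplectic_bicharacter}, which is well defined by Theorem~\ref{thm:logical_displacement_algebra}:
\[
\rho_D(\overline S[z],\overline S[z'])=\rho(Sz,Sz')=\rho(z,z')=\rho_D([z],[z']).
\]
Hence $\overline S\in\operatorname{Sp}(\mathsf L_D,\rho_D)$. Functoriality, $\overline{S_1S_2}[z]=[S_1S_2z]=\overline{S_1}\,\overline{S_2}[z]$, shows that $\Phi_D$ is a homomorphism. It is canonical because it involves no choices.

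The only step that is not purely formal is continuity of $S^{-1}$, which we need in order to apply the argument to $S^{-1}$. Even this can be avoided. The symmetric argument only requires that $S^{-1}$ be a $\rho$-preserving bijection mapping $D$ onto $D$, and continuity plays no role in it. So the inclusion $S^{-1}(D^\perp)\subseteq D^\perp$ holds regardless, and I expect no real obstacle.
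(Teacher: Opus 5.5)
Your proposal is correct and follows the paper's proof essentially verbatim: you write $w=Sv$ with $v\in D$ to get $S(D^\perp)\subseteq D^\perp$, apply the same argument to $S^{-1}$, and then check $\rho_D([Sz],[Sz'])=\rho_D([z],[z'])$ directly. Your extra remarks on well-definedness, functoriality, and the continuity of $S^{-1}$ (which, as you note, the argument does not need) spell out points the paper leaves implicit.
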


\begin{proof}
Let $x\in D^\perp$ and $w\in D$.  Since $S(D)=D$, we may write
$w=Sv$ for some $v\in D$.  Symplecticity gives
\[
 \rho(Sx,w)=\rho(Sx,Sv)=\rho(x,v)=1.
\]
Thus $Sx\in D^\perp$, and applying the same argument to $S^{-1}$ gives
$S(D^\perp)=D^\perp$.  Therefore
Equation~\eqref{eq:induced_logical_symplectic_map} is well defined.  Finally,
\[
 \rho_D([Sz],[Sz'])
 =\rho(Sz,Sz')
 =\rho(z,z')
 =\rho_D([z],[z']),
\]
which proves the remaining assertions.
\end{proof}

The Segal--Shale--Weil construction provides unitary implementers for the
symplectic automorphisms considered here
\cite[Chap.~I, \S\S10--12]{weil1964unitaires}.

\begin{proposition}
\label{prop:symplectic_implementer}
Every $S\in\operatorname{Sp}(G,\rho)$ admits a continuous phase function
$a_S:G\to\mathbb T$ and a unitary $\mathcal U_S$ on $L^2(M)$ such that
both $\mathcal U_S$ and $\mathcal U_S^\dagger$ preserve $\mathscr S(M)$
continuously and
\begin{equation}
 \mathcal U_S\mathbf D(x)\mathcal U_S^\dagger
 =a_S(x)\mathbf D(Sx),
 \qquad x\in G.
 \label{eq:clifford_implementer}
\end{equation}
For a fixed phase function $a_S$, the implementer $\mathcal U_S$ is unique
up to an overall scalar phase.
\end{proposition}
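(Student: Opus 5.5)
The plan is to build the implementer from the structure of $\operatorname{Sp}(G,\rho)$ through the Segal--Shale--Weil theory cited above \cite{weil1964unitaires}, and to get uniqueness from irreducibility. First, existence. The map $x\mapsto\mathbf D(Sx)$ is again a projective representation of $G$ on $L^2(M)$. Its multiplier is $\beta(Sx,Sy)$. Because $S$ preserves $\rho$, the bicharacter $\beta(Sx,Sy)\overline{\beta(x,y)}$ is symmetric, so its skew part is trivial. The argument of Lemma~\ref{lem:phase_triv} (Kleppner \cite[Lemma 7.2]{kleppner1965multipliers}), now applied to all of $G$, then gives a continuous $a_S\colon G\to\mathbb T$ satisfying
\[
 \beta(Sx,Sy)\,a_S(x)a_S(y)=a_S(x+y)\,\beta(x,y).
\]
With this choice, $x\mapsto a_S(x)\mathbf D(Sx)$ is a projective representation with the same cocycle $\beta$ as $\mathbf D$. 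It is irreducible, and the Stone--von Neumann--Mackey theorem for the Heisenberg group of $M$ (Weil, Chap.~I) applies. That theorem says $\mathbf D$ is the unique irreducible $\beta$-representation up to unitary equivalence, so it yields a unitary $\mathcal U_S$ with $\mathcal U_S\mathbf D(x)\mathcal U_S^\dagger=a_S(x)\mathbf D(Sx)$.

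Uniqueness is the easy part. Suppose $\mathcal U,\mathcal U'$ both satisfy the relation with the same $a_S$. Then $\mathcal U^\dagger\mathcal U'$ commutes with every $\mathbf D(x)$. The displacements act irreducibly on $L^2(M)$, by the same Stone--von Neumann theorem or directly because the commutant of the shifts and modulations is trivial. Schur's lemma therefore makes $\mathcal U^\dagger\mathcal U'$ a scalar of modulus one.

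The main obstacle is showing that $\mathcal U_S$ and $\mathcal U_S^\dagger$ preserve $\mathscr S(M)$ continuously, since abstract Stone--von Neumann gives no regularity. I would first decompose $S$. A continuous automorphism of $G=(\R^p\times\Z^q\times\Z_{\dia c})\times(\R^p\times\T^q\times\Z_{\dia c})$ must respect the identity component $\R^{2p}\times\T^q$, the torsion, and the maximal compact subgroup. So $S$ should factor, as a finite product, into generators of three kinds. The first kind acts on $M$ alone by a block automorphism $r\mapsto Ar$, with the dual transpose on $\widehat M$. The second kind is a shear $(m,s)\mapsto(m,s+\gamma(m))$, with $\gamma\colon M\to\widehat M$ a symmetric continuous homomorphism. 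The third kind consists of partial Fourier transforms on the oscillator, qudit, or combined blocks. For rotors this factorization includes the shears of Eq.~\eqref{eq:oscillator-rotor-shear}. I would verify it by a Bruhat-type reduction on the block matrix of $S$, following the real symplectic case and Weil's generators.

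For each generator the implementer is explicit and continuity is checked against the seminorms~\eqref{eq:schwartz_seminorms}. A generator of the first kind acts as $\varphi\mapsto|\det|^{1/2}\varphi\circ A$, and $A$ preserves the polynomial weights. A shear acts by multiplication with a second-degree character, a Gaussian chirp on $\R^p$ coupled to roots of unity on the discrete factors. Its derivatives grow only polynomially. A partial Fourier transform is the oscillator Fourier transform, which preserves Schwartz space, tensored with a finite DFT. The rotor factor $\Z^q$ never needs a Fourier transform to $\T^q$, because continuous automorphisms cannot exchange $\Z^q$ with $\T^q$.

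Finally, I multiply the generator implementers and their phases, absorbing the products of phases into $a_S$. The factorization step is where I expect the real work. If it proves awkward, a fallback is to use the concatenation embedding $V$ of Remark~\ref{rem:full_lift_gkp_concatenation}. That route lifts $S$ to a metaplectic operator on $\R^{p+q+k}$ that preserves the inner GKP stabilizers, and Schwartz preservation would then be imported from the metaplectic case.
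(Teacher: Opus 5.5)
Your proposal is correct, but its central step follows a different route from the paper's. On existence and uniqueness you essentially agree with the paper: your Stone--von Neumann argument is the mechanism inside the Segal--Shale--Weil construction that the paper cites, and uniqueness is Schur's lemma in both.

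Two things differ. First, the paper obtains a \emph{continuous} $a_S$ directly. It notes that $G\times\mathbb T$, with multiplication twisted by the symmetric cocycle $\delta_S(x,y)=\beta(Sx,Sy)/\beta(x,y)$, is an abelian LCA group, and it extends the identity character of $\{0\}\times\mathbb T$. Kleppner's lemma concerns Borel multipliers, so on a nondiscrete $G$ it gives only a Borel trivializing function. You therefore need one more line: $x\mapsto(x,a_S(x))$ is then a Borel homomorphism into that second-countable extension group, hence continuous.

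Second, and more substantially, the paper imports Schwartz invariance from Weil, whereas you prove it by factoring $S$. The factorization you leave open does hold, and more cheaply than a Bruhat decomposition.
\begin{itemize}
\item $S$ preserves the rotor-angle circle $\{0\}\times\T^q$, which is the identity component of the subgroup of compact elements. Hence $S$ also preserves its annihilator, the labels with zero rotor shift.
\item The induced map on that annihilator modulo $\T^q$, which is $\R^{2p}\times\Z_{\dia c}^2$, splits into a real symplectic matrix and a finite symplectic map. It is implemented by an oscillator metaplectic operator tensored with an arbitrary unitary on $\ell^2(\Z_{\dia c})$. The finite part needs no generator decomposition, since any such operator preserves $\mathscr S(M)$.
\item After removing this map and a rotor $\mathrm{GL}(q,\Z)$ block, what remains is $(u,n,\theta,f)\mapsto(u+\mu(n),n,\theta+\nu(n)+\lambda(u,f),f+\kappa(n))$. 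Symplecticity forces $\lambda$, and the symplectic constraint on $\nu$ makes the required $\gamma$ symmetric. This map is one generator of your first kind composed with one of your second.
\end{itemize}

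The product of these explicit implementers already yields a continuous $a_S$ together with $\mathcal U_S$, so for existence the Kleppner step is redundant. Any other admissible phase function differs from this one by a character $\rho(z,S\,\cdot\,)$, that is, by conjugation with $\mathbf D(z)$, which also preserves $\mathscr S(M)$. That is the precise form of your ``absorbing the phases.''

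Drop the concatenation fallback. The map $V$ sends $\mathscr S(M)$ to comb distributions, not into $\mathscr S(\R^{p+q+k})$, and the paper itself notes that the concatenation does not identify Clifford groups.

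The trade-off is as follows. The paper's proof is short and valid for any LCA group carrying a Schwartz--Bruhat space, but it outsources all regularity to Weil. Yours is self-contained and yields explicit implementers (dilations, chirps, controlled shifts, oscillator Fourier transforms). It relies on the elementary form $\R^p\times\Z^q\times\Z_{\dia c}$ and requires you to carry out the reduction above.
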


\begin{proof}
Symplecticity makes
$\delta_S(x,y):=\beta(Sx,Sy)/\beta(x,y)$ a continuous symmetric cocycle.
Thus $G\times\mathbb T$, with multiplication
$(x,t)(y,u)=(x+y,tu\delta_S(x,y))$, is an abelian LCA group.
By the Pontryagin character-extension theorem, the identity character
on $\{0\}\times\mathbb T$ extends to a continuous character
$h(x,t)=b(x)t$ of this group. Its multiplicativity gives
$b(x)b(y)=\delta_S(x,y)b(x+y)$, so $a_S=b^{-1}$ satisfies
\[
 a_S(x+y)=a_S(x)a_S(y)\delta_S(x,y).
\]
Consequently $(x,t)\mapsto(Sx,a_S(x)t)$ is an automorphism of the
Heisenberg group with multiplier $\beta$.
The Segal--Shale--Weil construction supplies its unitary implementer
and continuous Schwartz preservation
\cite[Chap.~I, \S\S10--12]{weil1964unitaires}.
Uniqueness up to scalar phase follows from irreducibility of the
displacement representation.
\end{proof}

We call such an implementer a generalized Clifford, or metaplectic,
operator associated with $S$.  The proposition asserts existence for each
$S$; it does not specify a compatible choice of implementers as $S$ varies.

There is a small phase subtlety.  Even when $S(D)=D$, the implementer
$\mathcal U_S$ need not preserve the trivial-syndrome fiber by itself,
because the stabilizers were defined using the phase correction $c$.  The
following theorem shows that a displacement always removes this mismatch.

\begin{theorem}
\label{thm:corrected_clifford_implementer}
Let $S\in\operatorname{Aut}_{\mathrm{sp}}(D)$ and choose a unitary
implementer $\mathcal U_S$ as in
Proposition~\ref{prop:symplectic_implementer}.
There exists $z_S\in G$ such that
\begin{equation}
 \mathcal U_{S,D}:=\mathbf D(z_S)\mathcal U_S
 \label{eq:corrected_clifford_operator}
\end{equation}
preserves the codespace $\mathcal C$.  Its restriction to $\mathcal C$ is a
logical Clifford gate and satisfies
\begin{equation}
 \mathcal U_{S,D}\mathbf L_x\mathcal U_{S,D}^\dagger
 =\lambda_S(x)\mathbf L_{Sx},
 \qquad x\in D^\perp,
 \label{eq:logical_clifford_conjugation}
\end{equation}
for phases $\lambda_S(x)\in\mathbb T$.  Thus its action on logical
displacement labels is precisely
\begin{equation}
 [x]\longmapsto[Sx]=\overline S[x]
 \qquad\text{in }D^\perp/D.
 \label{eq:logical_clifford_label_action}
\end{equation}
The choice of $z_S$ is unique only modulo $D^\perp$; changing it by an
element of $D^\perp$ changes $\mathcal U_{S,D}$ by a logical displacement, up to
an overall phase.
\end{theorem}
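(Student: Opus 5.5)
The plan is to compare the conjugated stabilizer group with the original one and absorb the resulting phase mismatch into a character of $D$, which a displacement can then cancel. For $w\in D$, Proposition~\ref{prop:symplectic_implementer} gives
\[
 \mathcal U_S\widetilde{\mathbf D}(w)\mathcal U_S^\dagger
 =c(w)a_S(w)\mathbf D(Sw)
 =\frac{c(w)a_S(w)}{c(Sw)}\,\widetilde{\mathbf D}(Sw).
\]
Define $\gamma(w):=c(w)a_S(w)/c(Sw)$ for $w\in D$. I will first show that $\gamma$ is a character of $D$. Using $a_S(w_1+w_2)=a_S(w_1)a_S(w_2)\delta_S(w_1,w_2)$ with $\delta_S=\beta(S\cdot,S\cdot)/\beta$, together with the coboundary identity~\eqref{eq:phase_coboundary} applied at both $(w_1,w_2)$ and $(Sw_1,Sw_2)$, all $\beta$ factors cancel, which gives $\gamma(w_1+w_2)=\gamma(w_1)\gamma(w_2)$. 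A cleaner alternative is to note that $w\mapsto\mathcal U_S\widetilde{\mathbf D}(S^{-1}w)\mathcal U_S^\dagger$ and $\widetilde{\mathbf D}$ are both genuine representations of $D$ with the same projective class, so their ratio is a homomorphism. Continuity is automatic because $D$ is discrete. Hence $\gamma\circ S^{-1}\in\widehat D$.

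Next I use the exact sequence~\eqref{eq:kappa_exact_sequence}. Since $\kappa\colon G\to\widehat D$ is surjective, there is a $z_S\in G$ with $\kappa_{z_S}$ equal to the character that cancels $\gamma$. The displacement Weyl relation gives $\mathbf D(z)\widetilde{\mathbf D}(w)\mathbf D(z)^\dagger=\rho(z,w)\widetilde{\mathbf D}(w)$. So I choose $z_S$ with $\rho(z_S,Sw)\gamma(w)=1$ for all $w\in D$; the sign and conjugation bookkeeping is routine. Then $\mathcal U_{S,D}\widetilde{\mathbf D}(w)\mathcal U_{S,D}^\dagger=\widetilde{\mathbf D}(Sw)$. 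Because $S(D)=D$, the conjugated stabilizer group is exactly $\mathcal S_D$.

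For $u\in\mathcal C$, the distributional extension~\eqref{eq:distributional_extension} applies, since $\mathcal U_{S,D}$ and its adjoint preserve $\mathscr S(M)$. It gives
\[
 \widehat{\widetilde{\mathbf D}(Sw)}\,\widehat{\mathcal U_{S,D}}u=\widehat{\mathcal U_{S,D}}\,\widehat{\widetilde{\mathbf D}(w)}u=\widehat{\mathcal U_{S,D}}u.
\]
Since $S$ is onto $D$, $\widehat{\mathcal U_{S,D}}u\in\mathcal C$, and the same argument for the inverse shows that $\mathcal U_{S,D}$ maps $\mathcal C$ onto $\mathcal C$. Because $\mathcal C=\iota_1(E_{D,1})$ is finite-dimensional and carries the fiber inner product, I also need the restriction to be unitary for that inner product. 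I would get this from Theorem~\ref{thm:rieffel_bundle_decomposition}: $\mathcal U_{S,D}$ intertwines the stabilizer representation with itself up to relabeling by $S$, so it acts on the direct integral mapping fibers to fibers, and a unitary on $L^2(M)$ does so fiberwise unitarily (a.e., then everywhere by the continuous-section picture). This identification of the distributional restriction with a unitary on $E_{D,1}$ is the step I expect to be most delicate.

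Conjugation of logical displacements then follows. For $x\in D^\perp$, Proposition~\ref{prop:symplectic_implementer} and the displacement commutation give $\mathcal U_{S,D}\mathbf D(x)\mathcal U_{S,D}^\dagger=a_S(x)\rho(z_S,Sx)\mathbf D(Sx)$. Restricting to $\mathcal C$ and using $S(D^\perp)=D^\perp$ from Proposition~\ref{prop:lattice_symmetry_logical_phase_space} yields~\eqref{eq:logical_clifford_conjugation} with $\lambda_S(x)=a_S(x)\rho(z_S,Sx)$, and hence the label action~\eqref{eq:logical_clifford_label_action}. By Theorem~\ref{thm:logical_displacement_algebra} the restriction normalizes the logical Weyl group, so it is a logical Clifford gate. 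For uniqueness, two valid choices $z_S,z_S'$ give the same character $\kappa$, so $z_S'-z_S\in D^\perp$ by exactness. Then $\mathbf D(z_S')=\beta$-phase$\cdot\mathbf D(z_S'-z_S)\mathbf D(z_S)$, which changes $\mathcal U_{S,D}$ by the logical displacement $\mathbf L_{z_S'-z_S}$ up to a phase.
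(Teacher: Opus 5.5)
Your proposal is correct and follows essentially the paper's proof: your $\gamma$ is the paper's $\eta_S\circ S$, surjectivity of $\kappa$ supplies the correcting displacement $z_S$, and $\ker\kappa=D^\perp$ gives the uniqueness clause. The differences are minor. You check directly that $\mathcal U_{S,D}$ conjugates $\mathcal S_D$ onto itself rather than citing Proposition~\ref{prop:rieffel_fiber_covariance}. The fiber unitarity you flag, which the paper leaves implicit, needs no direct-integral argument: it follows from ${}_{\mathrm R}\langle\mathcal U_{S,D}\xi,\mathcal U_{S,D}\eta\rangle(\chi)={}_{\mathrm R}\langle\xi,\eta\rangle(\chi\circ S)$ for Schwartz $\xi,\eta$, exactly as in that proposition's proof.
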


The proof is given in Appendix~\ref{appx:proof_corrected_clifford_implementer}.


\subsubsection{Examples}
\label{subsec:clifford_examples}

We now determine lattice symmetries and their logical actions for several
codes of Sec.~\ref{sec:examples}.

\paragraph{GKP and LCA codes.}
The GKP and LCA codes of Secs.~\ref{ex:gkp} and~\ref{ex:lca} give
concrete instances of the lattice symmetries of
Sec.~\ref{subsec:clifford_general}; see also the
noncommutative-torus
description in~\cite[Appendix~G]{yr64-ypxt} (see also \cite{chakluef2019metaplectic}, \cite{chaktracing2023tracing}, \cite{2505.19869}).
For a purely continuous GKP code, define
\begin{equation}
 \operatorname{Aut}(\varTheta)
 :=\{W\in\mathrm{GL}(2p,\mathbb Z):
       W^{\intercal}\varTheta W=\varTheta\}.
 \label{eq:gkp_gram_automorphisms}
\end{equation}
Since $T^{\intercal}J_{\mathrm{osc}}T=\varTheta$, each such $W$ gives
$S_W:=TWT^{-1}\in\mathrm{Sp}(2p,\mathbb R)$ with $S_W(D)=D$.
Thus $W\mapsto S_W$ identifies $\operatorname{Aut}(\varTheta)$ with
the subgroup $\operatorname{Aut}_{\mathrm{sp}}(D)\subseteq
\operatorname{Sp}(G,\rho)$ defined in
Eq.~\eqref{eq:lattice_symplectic_automorphism_group}.
Its metaplectic implementer, followed when necessary by the displacement
correction of Theorem~\ref{thm:corrected_clifford_implementer}, therefore
acts as a logical Clifford gate. For the GKP Clifford action and its
geometric fault-tolerance interpretation, see~\cite[Secs.~4.2 and~6]{fiber-bundle-ft}.
To describe this action explicitly, distinguish a physical displacement
vector $v\in D^\perp$ from its integer coordinates and its logical class.
As in Sec.~\ref{ex:gkp}, we generate the symplectic dual with the
logical matrix,
\[
 T_{\mathrm{dual}}=T\varTheta^{-1},\qquad D^\perp=T_{\mathrm{dual}}(\mathbb Z^{2p}).
\]
Thus every $v\in D^\perp$ has a unique expression $v=T_{\mathrm{dual}}\ell$, with
$\ell\in\mathbb Z^{2p}$. Here $\ell$ is the coordinate vector of $v$
in the dual-lattice basis given by the columns of $T_{\mathrm{dual}}$; equivalently,
$\ell=T^{\intercal}J_{\mathrm{osc}}v$.
Two such displacements have the same logical action up to phase exactly
when their vectors differ by a stabilizer vector $Tm$, with
$m\in\mathbb Z^{2p}$. Since $T=T_{\mathrm{dual}}\varTheta$, in these coordinates
$v\mapsto v+Tm$ becomes $\ell\mapsto\ell+\varTheta m$. Consequently,
\[
 D^\perp/D\cong\mathbb Z^{2p}/\varTheta\mathbb Z^{2p},
 \qquad [v]\longleftrightarrow[\ell].
\]
The class $[\ell]$ therefore labels a logical displacement operator
modulo phase.
Finally, $W^{\intercal}\varTheta W=\varTheta$ implies
$W\varTheta^{-1}=\varTheta^{-1}W^{-\intercal}$, so
$S_W T_{\mathrm{dual}}=T_{\mathrm{dual}}W^{-\intercal}$ and hence $S_Wv=T_{\mathrm{dual}}W^{-\intercal}\ell$.
The induced action is
\begin{equation}
 [\ell]\longmapsto[W^{-\intercal}\ell].
 \label{eq:gkp_clifford_dual_labels}
\end{equation}

For the single-mode oscillator--qudit code of Sec.~\ref{ex:lca}
(see also~\cite[Appendix G(2)]{yr64-ypxt}), we use the integers $c$, $d$,
and $\theta$ of that code and assume $d\ne0$.
A sufficient condition for such a symmetry to descend to the hybrid phase
space is then particularly simple. Choose
$W=\left(\begin{smallmatrix}w_{11}&w_{12}\\w_{21}&w_{22}\end{smallmatrix}\right)
\in\mathrm{SL}(2,\mathbb Z)$ with $d\mid w_{21}$, and set
\begin{equation}
 \begin{aligned}
 W_c&:=T_{\mathrm{fin}}WT_{\mathrm{fin}}^{-1}
      =\begin{pmatrix}w_{11}&-dw_{12}\\-w_{21}/d&w_{22}\end{pmatrix},\\
 S_W&:=\operatorname{diag}
       \bigl(T_{\cv}WT_{\cv}^{-1},W_c\bigr).
 \end{aligned}
 \label{eq:hybrid_graph_clifford_lift}
\end{equation}
Here $T_{\cv}$ and $T_{\mathrm{fin}}=\operatorname{diag}(-d,1)$ are the oscillator
and qudit blocks of the graph-lattice embedding of Sec.~\ref{ex:lca}.
The $W_c$ block of $S_W$ acts modulo $c$ on the physical qudit
phase space. The condition $d\mid w_{21}$ makes $W_c$ integral, and
$\det W_c=1$, so its reduction modulo $c$ is a symplectic automorphism.
Moreover, $S_W$ preserves the symplectic pairing and satisfies
$S_W\binom{T_{\cv}}{T_{\mathrm{fin}}}=\binom{T_{\cv}}{T_{\mathrm{fin}}}W$,
so it preserves $D$. Its implementer can be chosen as a product of an
oscillator metaplectic operator and a qudit Clifford operator, followed
by the displacement correction in
Theorem~\ref{thm:corrected_clifford_implementer}. Its logical action is
$[v]\mapsto[S_Wv]$ on $D^\perp/D$, whose order is $K^2$ with
$K=|c\theta+d|$; the modulus $c$ above describes the physical qudit.
This gives a concrete family of hybrid Clifford symmetries, including
when $\theta=0$. The explicit construction and circuit synthesis of
gates for general GKP--LCA families are beyond the scope of
this paper and will be presented in future work.

\paragraph{Oscillator--rotor ququart.}
The logical ququart Fourier gate cannot be implemented by a continuous
symplectic automorphism preserving the lattice $D_{\alpha,b}$ of
Sec.~\ref{ex:osc-rotor}.  The rotor component
$n$ changes only by a sign under any continuous automorphism of $G$.
On $D_{\alpha,b}^{\perp}$, the residue $n\bmod2b$ is well defined
modulo $D_{\alpha,b}$ and distinguishes $[z_X]$, with residue $b$,
from $[z_Z]$, with residue zero.  A Fourier gate would exchange these
logical classes up to inversion, which is impossible.
In contrast, the two-mode GKP code of the full lift of $D_{\alpha,b}$
has a Gaussian logical Fourier gate
(Remark~\ref{rem:full_lift_gkp_concatenation}).

\paragraph{Rotor--qudit code.}
No generalized Clifford operator implements a logical Fourier gate on the
code of Sec.~\ref{ex:rotor-qudit}.
For $d=2$, this gate is the logical Hadamard gate.
The argument has two steps.

First, let $\mathcal U$ be a generalized Clifford operator for some
$S\in\operatorname{Sp}(G,\rho)$, and suppose that $\mathcal U$ preserves
the codespace $\mathcal C$ of an isotropic lattice $D$.
Conjugation by $\mathcal U$ maps each phase-corrected stabilizer
$\widetilde{\mathbf D}(w)$ to a multiple of $\mathbf D(Sw)$.
This multiple acts on $\mathcal C$ as the identity, because $\mathcal U$
maps $\mathcal C$ onto itself.
A displacement acts on $\mathcal C$ as a scalar only if its label lies in
$D$, by Sec.~\ref{sec:logical_displacement_operators} and
Theorem~\ref{thm:logical_displacement_algebra}.
Hence $S(D)\subseteq D$.
The same argument for $\mathcal U^\dagger$ gives
$S\in\operatorname{Aut}_{\mathrm{sp}}(D)$.

Second, recall the logical operators
$\xl=\mathbf D(0,0\,|\,{-1},0)$ and
$\zl=\mathbf D(0,-\tfrac1{d^2}\,|\,0,1)$ of
Sec.~\ref{ex:rotor-qudit}.
Every element of $G$ with nonzero angular momentum has infinite order.
The representatives with zero angular momentum of the class
$k[\zl]+l[\xl]\in D^\perp/D$ are
$(0,\tfrac md-\tfrac k{d^2}\,|\,{-l},k)$ with $m\in\Z$.
Such an element has order dividing $d$ exactly when $d\mid k$.
Thus the classes that contain an element of order dividing $d$ form the
subgroup generated by $[\xl]$.
Every $S\in\operatorname{Aut}_{\mathrm{sp}}(D)$ preserves the orders of
elements of $G$ and maps classes onto classes.
The induced map $\overline S$ of
Eq.~\eqref{eq:induced_logical_symplectic_map} therefore preserves this
subgroup.
A logical Fourier gate would instead map $[\xl]$ to $\pm[\zl]$, which lies
outside it.
A logical Fourier gate therefore requires non-Clifford operations.
For a rotor and a qubit, a rotor cosine potential and one fixed
qubit--rotor conditional phase extend local Clifford operations to universal
control~\cite{luo2026hybridqubitrotor}.

\paragraph{Torsion-extended hybrid code.}
The logical Hadamard and phase gates $U_H$ and $U_S$ of
Sec.~\ref{ex:torsion}, given in
Eq.~\eqref{eq:hybrid-torsion-clifford-gates}, have the following
symplectic automorphisms.
In the paired coordinates $(x,\xi\,|\,n,\theta\,|\,a,b)$ of
Sec.~\ref{subsec:examples_notation}, where $(a,b)$ are the qubit shift and
phase,
\begin{align}
 S_H(x,\xi\,|\,n,\theta\,|\,a,b)
 &=\bigl(n-8\xi,\tfrac{x-n}{8}\,\big|\,n,
          \theta+\xi-\tfrac{x-n}{8}\,\big|\,a,b\bigr),\notag\\
 S_S(x,\xi\,|\,n,\theta\,|\,a,b)
 &=\bigl(x,\xi-\tfrac{x-n}{8}\,\big|\,n,
          \theta+\tfrac{x-n}{8}\,\big|\,a,b\bigr).
 \label{eq:hybrid-torsion-clifford-automorphisms}
\end{align}
Both preserve $D_2$, and the displayed implementers preserve its
stabilizer phases without an additional displacement correction.

\section{Code distance and error correction}
\label{sec:distance_error_correction}

Section~\ref{sec:logical_displacement_operators} identifies
$\mathsf L_D=D^\perp/D$ as the logical phase space and
$D^\perp\setminus D$ as the set of nontrivial logical displacement labels.
We now equip the physical phase space with a displacement cost and use
this structure to formulate distance and decoding.  Throughout, $D\subseteq
G=M\times\widehat M$ is the isotropic lattice defining the code, and
syndrome extraction is ideal. Write $K=\dim\mathcal C$ as before.
With $\kappa_z(d)=\overline{\rho(d,z)}$ as in~\eqref{eq:kappa_definition},
displacement by $z$ induces the fiber map
$V_{z,1}:E_{D,1}\to E_{D,\kappa_z}$ of~\eqref{eq:logical_fiber_action}.
The phase-corrected stabilizers act on the target fiber with eigenvalue
character $\overline{\kappa_z}$.

\subsection{Correctable displacement sets and distance}

For an error set $\mathcal E\subseteq G$, put
$\mathcal E-\mathcal E=\{e_2-e_1:e_1,e_2\in\mathcal E\}$.
The relative operator $\mathbf D(e_1)^\dagger\mathbf D(e_2)$ is a scalar
phase times $\mathbf D(e_2-e_1)$.  The stabilizer trichotomy established in
Section~\ref{sec:logical_displacement_operators} therefore gives the exact
correctability condition
\begin{equation}
 (\mathcal E-\mathcal E)\cap(D^\perp\setminus D)=\varnothing.
 \label{eq:distance_relative_error_criterion}
\end{equation}
For normalizable code subspaces this is the displacement form of the
Knill--Laflamme criterion: each relative error either changes the syndrome
or acts as a scalar on the code.  For the ideal distributional codes of
Section~\ref{sec:LSC_ncg}, we use
\eqref{eq:distance_relative_error_criterion} as the algebraic condition
for exact correction.  Theorem~\ref{thm:distance_correctable_decoder}
below identifies it with the existence of a syndrome decoder for Borel
error sets.

\begin{definition}
A function $w:G\to[0,\infty]$ is a displacement cost if
$w(0)=0$ and $w(-z)=w(z)$.  Its associated code distance is
\begin{equation}
 d_w(D):=\inf_{z\in D^\perp\setminus D}w(z),
 \qquad \inf\varnothing:=\infty.
 \label{eq:displacement_cost_distance}
\end{equation}
\end{definition}
Thus the lattice determines which displacements are logical, while the
cost specifies which errors are small.  Define the quotient cost on
stabilizer-equivalence classes by
\begin{equation}
 \overline w(z+D):=\inf_{d\in D}w(z+d).
 \label{eq:distance_quotient_cost}
\end{equation}
Its restriction to $\mathsf L_D$ gives
$d_w(D)=\inf_{\ell\in\mathsf L_D\setminus\{0\}}\overline w(\ell)$.
Indeed, choosing a representative $z_\ell\in D^\perp$ for each
$\ell\in\mathsf L_D=D^\perp/D$ gives
$D^\perp\setminus D=\bigsqcup_{\ell\ne0}(z_\ell+D)$, hence
$\inf_{z\in D^\perp\setminus D}w(z)
=\inf_{\ell\ne0}\inf_{d\in D}w(z_\ell+d)$.
Neither subadditivity nor attainment of the infimum is required for these
definitions.

\begin{proposition}
\label{prop:distance_correctable_ball}
Suppose $w$ is subadditive, so that $w(x+y)\leq w(x)+w(y)$.
For $r>0$, let $B_w(r):=\{z\in G:w(z)<r\}$ be the strict cost sublevel set.  If $2r\leq d_w(D)$,
then $B_w(r)$ satisfies the correctability criterion~\eqref{eq:distance_relative_error_criterion}.
\end{proposition}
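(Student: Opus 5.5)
The argument is a direct triangle-inequality estimate on differences of errors, with no structural input beyond the definition of $d_w(D)$.

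Fix $e_1,e_2\in B_w(r)$ and put $z:=e_2-e_1$. We must show $z\notin D^\perp\setminus D$.

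First, symmetry of the cost gives $w(-e_1)=w(e_1)<r$. Subadditivity then yields
\[
 w(z)=w\bigl(e_2+(-e_1)\bigr)\leq w(e_2)+w(-e_1)<2r\leq d_w(D).
\]
Here the strict inequality uses that both $w(e_i)$ are finite and strictly below $r$. This is exactly why the sublevel set $B_w(r)$ is taken to be strict.

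Second, suppose for contradiction that $z\in D^\perp\setminus D$. By the definition~\eqref{eq:displacement_cost_distance} of the distance as an infimum over $D^\perp\setminus D$, we would have $d_w(D)\leq w(z)$. This contradicts $w(z)<d_w(D)$. Hence $(\mathcal E-\mathcal E)\cap(D^\perp\setminus D)=\varnothing$ for $\mathcal E=B_w(r)$, which is criterion~\eqref{eq:distance_relative_error_criterion}.

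The only points needing care are the edge cases, which I would check explicitly rather than expecting a real obstacle:

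\begin{itemize}
\item If $d_w(D)=\infty$, the chain $w(z)<2r\leq\infty$ is still strict, because $2r$ is finite. The same contradiction then goes through.
\item Values $w=\infty$ never occur for elements of $B_w(r)$, so no $\infty-\infty$ ambiguities arise in the estimate.
\item Nothing about the lattice structure is used beyond the definition of $d_w(D)$. In particular, neither attainment of the infimum nor measurability of $w$ is needed at this stage.
\end{itemize}
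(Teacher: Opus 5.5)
Your proof is correct and follows the same route as the paper: symmetry plus subadditivity give $w(e_2-e_1)\leq w(e_2)+w(e_1)<2r\leq d_w(D)$, so $e_2-e_1$ cannot lie in $D^\perp\setminus D$. Your explicit check of the $d_w(D)=\infty$ case is a harmless addition that the paper leaves implicit.
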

\begin{proof}
For $e_1,e_2\in B_w(r)$, symmetry and subadditivity imply
\[
 w(e_2-e_1)\leq w(e_2)+w(e_1)<2r\leq d_w(D).
\]
Consequently $e_2-e_1$ cannot lie in $D^\perp\setminus D$.
\end{proof}

\subsection{Syndrome decoders and exact correction}

Using $G/D^\perp\cong\widehat D$ from~\eqref{eq:kappa_exact_sequence},
write the quotient map in~\eqref{eq:quotient_exact_sequence} as
\begin{equation}
 \varphi:G/D\longrightarrow\widehat D,
 \qquad \varphi(z+D)=\kappa_z,
 \qquad \pi_D:G\longrightarrow G/D,\quad \pi_D(z)=z+D.
 \label{eq:decoding_quotient_map}
\end{equation}
The kernel of $\varphi$ is $\mathsf L_D$, and every fiber
$\varphi^{-1}(\chi)$ contains $K^2$ compatible error classes, where
$K=\dim\mathcal C$.  Decoding selects one such class from the measured
syndrome.  A continuous selection is not required; a Borel selection
suffices for the probabilistic statements below.

Lemma~\ref{lem:decoding_borel_section} in
Appendix~\ref{appx:decoding_borel_section} establishes the existence of
the Borel selections needed to construct syndrome decoders.

\begin{definition}
A decoder is a Borel section
\begin{equation}
 \operatorname{Dec}:\widehat D\longrightarrow G/D,
 \qquad \varphi\circ\operatorname{Dec}=\operatorname{id}_{\widehat D}.
 \label{eq:syndrome_decoder_definition}
\end{equation}
\end{definition}
Lemma~\ref{lem:decoding_borel_section} gives a reference decoder $\sigma_0$.
Every decoder is uniquely of the form
\begin{equation}
 \operatorname{Dec}_{\ell}(\chi)=\sigma_0(\chi)+\ell(\chi),
 \qquad \ell:\widehat D\longrightarrow\mathsf L_D\ \text{Borel}.
 \label{eq:decoder_normal_form}
\end{equation}
Indeed, the difference of two sections lies in $\ker\varphi=\mathsf L_D$,
and adding any such Borel function to $\sigma_0$ gives another section.

If the true error is $z$ and $\operatorname{Dec}(\kappa_z)=u+D$, applying
$\mathbf D(-u)$ leaves, up to phase, the residual displacement
$\mathbf D(z-u)$.  The section property ensures $z-u\in D^\perp$.
By Theorem~\ref{thm:logical_displacement_algebra}, its action on $E_{D,1}$
is scalar exactly when $z-u\in D$.  Hence
\begin{equation}
 \text{correction succeeds on }z
 \quad\Longleftrightarrow\quad
 \operatorname{Dec}(\kappa_z)=\pi_D(z).
 \label{eq:decoder_success_criterion}
\end{equation}
This condition is independent of the representative $u$; on failure the
residual class $[z-u]\in\mathsf L_D$ is nonzero.  A Borel representative
$u$ of each chosen class can be selected using
Lemma~\ref{lem:decoding_borel_section} for $\pi_D$.

\begin{theorem}
\label{thm:distance_correctable_decoder}
For a Borel set $\mathcal E\subseteq G$, the following are equivalent:
\begin{enumerate}[label=(\roman*)]
 \item There exists a Borel decoder that succeeds on every error in $\mathcal E$.
 \item For $e_1,e_2\in\mathcal E$, equality $\kappa_{e_1}=\kappa_{e_2}$
 implies $e_1+D=e_2+D$.
 \item $(\mathcal E-\mathcal E)\cap(D^\perp\setminus D)=\varnothing$.
\end{enumerate}
\end{theorem}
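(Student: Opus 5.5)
I would prove the cycle (i)$\Rightarrow$(ii)$\Rightarrow$(iii)$\Rightarrow$(i), using throughout that $\kappa\colon G\to\widehat D$ is a continuous homomorphism with kernel $D^\perp$ (the exact sequence~\eqref{eq:kappa_exact_sequence}), together with the success criterion~\eqref{eq:decoder_success_criterion}.

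For (i)$\Rightarrow$(ii), let $\operatorname{Dec}$ succeed on every $e\in\mathcal E$. If $\kappa_{e_1}=\kappa_{e_2}$, then \eqref{eq:decoder_success_criterion} gives
\[
 \pi_D(e_1)=\operatorname{Dec}(\kappa_{e_1})=\operatorname{Dec}(\kappa_{e_2})=\pi_D(e_2).
\]

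For (ii)$\Rightarrow$(iii), take $e_1,e_2\in\mathcal E$ with $e_2-e_1\in D^\perp$. Then $\kappa_{e_2-e_1}=1$, so $\kappa_{e_1}=\kappa_{e_2}$ because $\kappa$ is a homomorphism. By (ii), $e_2-e_1\in D$, so $e_2-e_1\notin D^\perp\setminus D$. The reverse direction (iii)$\Rightarrow$(ii) follows from the same computation, since $\kappa_{e_1}=\kappa_{e_2}$ forces $e_2-e_1\in D^\perp$ and (iii) then forces $e_2-e_1\in D$.

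The substantive step is (ii)$\Rightarrow$(i), where a \emph{Borel} decoder has to be built. Condition (ii) says $\varphi$ restricted to $\pi_D(\mathcal E)\subseteq G/D$ is injective. So I define
\[
 \operatorname{Dec}(\chi)=\pi_D(e)\quad\text{on }\varphi(\pi_D(\mathcal E)),\qquad \text{any } e\in\mathcal E \text{ with } \kappa_e=\chi,
\]
and $\operatorname{Dec}=\sigma_0$ elsewhere, with $\sigma_0$ the reference decoder of Lemma~\ref{lem:decoding_borel_section}. This is a section of $\varphi$ by construction, and by~\eqref{eq:decoder_success_criterion} it succeeds on every $e\in\mathcal E$. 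Measurability is the main obstacle, and I would handle it with Lusin--Souslin theory.

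First, $G$, $G/D$ and $\widehat D$ are second countable locally compact, hence Polish, because $M=\R^p\times\Z^q\times\Z_{\dia c}$ is. Next, $\pi_D$ is countable-to-one, since its fibers are cosets of the countable lattice $D$, so by the Lusin--Novikov theorem $Y:=\pi_D(\mathcal E)$ is Borel. The map $\varphi$ is continuous and injective on the Borel set $Y$, so by Lusin--Souslin $Z:=\varphi(Y)$ is Borel and $(\varphi|_Y)^{-1}\colon Z\to Y$ is Borel. The decoder is therefore Borel as the piecewise combination of the Borel map $(\varphi|_Y)^{-1}$ on $Z$ and $\sigma_0$ on $\widehat D\setminus Z$.

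If one prefers to avoid Lusin--Novikov, an alternative is to write $\mathcal E$ as a union of the countably many Borel pieces $\mathcal E\cap(B+w)$, with $w\in D$ and $B$ a Borel fundamental domain. On each piece $\pi_D$ is injective and continuous, so its image is Borel, and $Y$ is then a countable union of Borel sets.
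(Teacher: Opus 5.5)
Your proof is correct, and it differs from the paper's only in how measurability is obtained. The implications (i)$\Rightarrow$(ii) and (ii)$\Leftrightarrow$(iii) match the paper's. For (ii)$\Rightarrow$(i) you build the same decoder: the correct class on the syndromes of $\mathcal E$, and $\sigma_0$ elsewhere.

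The paper applies its countable-fiber lemma (Lusin--Novikov) directly to $\kappa|_{\mathcal E}\colon\mathcal E\to\widehat D$. The fibers of this map lie in cosets of the countable group $D^\perp$, whether or not (ii) holds. The lemma then gives that $\kappa(\mathcal E)$ is Borel and supplies a Borel selector $r$. Condition (ii) enters only to guarantee that $\pi_D\circ r$ does not depend on which error $r$ selects.

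You instead factor through $G/D$. Lusin--Novikov for $\pi_D|_{\mathcal E}$ makes $Y=\pi_D(\mathcal E)$ Borel, and (ii) makes $\varphi|_Y$ injective. Lusin--Souslin then produces the decoder on $Z=\varphi(Y)$ canonically as $(\varphi|_Y)^{-1}$, with no selection involved. Your route uses (ii) structurally and gives a choice-free decoder on the correctable syndromes; the paper's route needs one descriptive-set-theory tool instead of two.

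The two steps are closer than they look. An injective map has countable fibers, so the paper's lemma applied to $\varphi|_Y$ already shows that $Z$ is Borel and gives a Borel right inverse, which by injectivity must equal $(\varphi|_Y)^{-1}$. Your fundamental-domain alternative for showing $Y$ is Borel also works, but it still relies on Lusin--Souslin for each piece.
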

\begin{proof}
A successful decoder must assign the correct class to each syndrome, so
(i) implies (ii) by~\eqref{eq:decoder_success_criterion}.  Conditions
(ii) and (iii) are equivalent because $\ker\kappa=D^\perp$.

Assume (ii) and put $\Sigma_{\mathcal E}:=\kappa(\mathcal E)$.
The restriction $\kappa|_{\mathcal E}$ has countable fibers, each contained
in a coset of $D^\perp$.  Lemma~\ref{lem:decoding_borel_section} shows that
$\Sigma_{\mathcal E}$ is Borel and provides a Borel map
$r:\Sigma_{\mathcal E}\to\mathcal E$ with $\kappa_{r(\chi)}=\chi$.
Condition (ii) ensures that $\pi_D(r(\chi))$ is the correct class for every
error in $\mathcal E$ with syndrome $\chi$.  Define
\begin{equation}
 \operatorname{Dec}(\chi)=
 \begin{cases}
  \pi_D(r(\chi)),&\chi\in\Sigma_{\mathcal E},\\
  \sigma_0(\chi),&\chi\notin\Sigma_{\mathcal E}.
 \end{cases}
 \label{eq:decoder_correctable_set_extension}
\end{equation}
This is a Borel section of $\varphi$ and succeeds on all of $\mathcal E$.
\end{proof}

\subsection{Minimum-cost decoding}

Assume henceforth that the displacement cost $w$ is Borel.  The quotient
cost~\eqref{eq:distance_quotient_cost} is then Borel: if
$r_D:G/D\to G$ is a Borel section of $\pi_D$, then
\[
 \overline w(a)=\inf_{d\in D}w(r_D(a)+d)
\]
is a countable infimum of Borel functions.

\begin{definition}
A minimum-cost decoder for $w$ is a decoder satisfying
\begin{equation}
 \operatorname{Dec}_w(\chi)\in
 \operatorname*{arg\,min}_{a\in\varphi^{-1}(\chi)}\overline w(a).
 \label{eq:minimum_cost_decoder}
\end{equation}
\end{definition}
The fiber $\varphi^{-1}(\chi)=\sigma_0(\chi)+\mathsf L_D$ is finite, so
this minimum is attained even if the infimum defining $\overline w$ is
not.  Fixing an ordering of $\mathsf L_D$ and choosing the first minimizer
of the finitely many Borel functions
$\overline w(\sigma_0(\chi)+\ell)$ yields a Borel minimum-cost decoder.

\begin{remark}
If $f:[0,\infty]\to[0,\infty]$ is continuous and strictly increasing
with $f(0)=0$, then $\overline{f\circ w}=f\circ\overline w$.
Consequently $w$ and $f\circ w$ have the same minimum-cost decoders,
although their numerical distances and subadditivity properties may
differ.  In particular, a norm and its square have the same minimizers;
the norm satisfies the subadditivity hypothesis of
Theorem~\ref{thm:minimum_cost_half_distance}, whereas its square generally does not.
\end{remark}

\begin{theorem}
\label{thm:minimum_cost_half_distance}
If $w$ is Borel, symmetric, and subadditive, every minimum-cost decoder
succeeds on each error $z$ satisfying
\begin{equation}
 w(z)<\frac{d_w(D)}{2}.
 \label{eq:minimum_cost_half_distance}
\end{equation}
\end{theorem}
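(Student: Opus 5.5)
The plan is to reduce the claim to the success criterion~\eqref{eq:decoder_success_criterion}. Fix an error $z$ with $w(z)<d_w(D)/2$ and put $\chi:=\kappa_z$. Correction succeeds exactly when $\operatorname{Dec}_w(\chi)=\pi_D(z)$. The true class $\pi_D(z)$ lies in the fiber $\varphi^{-1}(\chi)$, and its quotient cost satisfies $\overline w(\pi_D(z))\le w(z)<d_w(D)/2$, taking $d=0$ in~\eqref{eq:distance_quotient_cost}. Let $a=\operatorname{Dec}_w(\chi)$ be the minimizer chosen by the decoder. Then $\overline w(a)\le\overline w(\pi_D(z))<d_w(D)/2$ by~\eqref{eq:minimum_cost_decoder}. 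It remains to show that $a=\pi_D(z)$.

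Suppose instead that $a\neq\pi_D(z)$. Both classes lie in the same fiber of $\varphi$, so their difference $\ell:=a-\pi_D(z)$ is a nonzero element of $\ker\varphi=\mathsf L_D$. I will bound $\overline w(\ell)$ by $\overline w(a)+\overline w(\pi_D(z))$ using subadditivity on representatives. The infimum defining $\overline w(a)$ need not be attained, so I work with $\epsilon$-approximations. For $\epsilon>0$, choose $u\in a$ with $w(u)<\overline w(a)+\epsilon$. Then $u-z$ is a representative of $\ell$, and symmetry with subadditivity give
\[
 w(u-z)\le w(u)+w(-z)=w(u)+w(z)<\overline w(a)+w(z)+\epsilon<d_w(D)+\epsilon .
\]
To get a strict inequality below $d_w(D)$, I will use the slack between the two strict inequalities $\overline w(a)<d_w(D)/2$ and $w(z)<d_w(D)/2$. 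Choosing $\epsilon$ smaller than $d_w(D)-\overline w(a)-w(z)>0$ yields $w(u-z)<d_w(D)$. If $d_w(D)=\infty$, take any $\epsilon$; the same bound shows $w(u-z)$ is finite, which suffices.

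Because $u-z\in D^\perp\setminus D$, this contradicts the definition $d_w(D)=\inf_{D^\perp\setminus D}w$. Hence $a=\pi_D(z)$, and the decoder succeeds on $z$. The only delicate point is the non-attainment of the infimum in $\overline w$, and the $\epsilon$-slack above handles it. The Borel hypothesis is used only to guarantee that a minimum-cost decoder exists, which was established above, and plays no role in the inequality itself. Equivalently, one can cite Proposition~\ref{prop:distance_correctable_ball} for the one-point set $\{z\}$ together with the argmin property, but the direct argument above is shorter.
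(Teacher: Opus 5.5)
Your argument is correct and is essentially the paper's proof: you take an $\varepsilon$-near-minimizing representative $u$ of the chosen class, observe that $u-z\in D^\perp\setminus D$, and contradict the definition of $d_w(D)$ via symmetry and subadditivity, treating $d_w(D)=\infty$ separately. The paper instead bounds by $2w(z)+\varepsilon$ using $\overline w(a)\le w(z)$, which makes no real difference.

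One small correction to your closing aside: Proposition~\ref{prop:distance_correctable_ball} applied to the one-point set $\{z\}$ is vacuous, since $\{z\}-\{z\}=\{0\}$. What works is applying it to the two-point set $\{z,u\}$, which lies in a sublevel set $B_w(r)$ with $2r\le d_w(D)$ once $\varepsilon$ is small.
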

\begin{proof}
Suppose the decoder chooses $u+D\neq z+D$ over $\kappa_z$.  Then
$u-z\in D^\perp\setminus D$, while minimality gives
$\overline w(u+D)\leq\overline w(z+D)\leq w(z)<d_w(D)/2$.
For every $\varepsilon>0$ there is a $d\in D$ with
$w(u+d)<\overline w(u+D)+\varepsilon$.  Since
$u+d-z\in D^\perp\setminus D$, we obtain
\[
 d_w(D)\leq w(u+d-z)
 \leq w(u+d)+w(z)<2w(z)+\varepsilon.
\]
If $d_w(D)<\infty$, choose $\varepsilon$ small enough to contradict
$2w(z)<d_w(D)$.  If $d_w(D)=\infty$, the right-hand side is finite,
which is again a contradiction.  Thus the chosen class is $z+D$.
\end{proof}

For Borel subadditive costs, Proposition~\ref{prop:distance_correctable_ball}
and Theorem~\ref{thm:distance_correctable_decoder} guarantee some decoder
on each $B_w(r)$ with $2r\leq d_w(D)$. The preceding theorem shows that
every minimum-cost decoder succeeds throughout that set, including all
choices made when costs tie.

\subsection{Maximum-likelihood decoding}

Let $\mathbb P$ be a Borel probability measure on $G$ describing random
displacement noise.  Define
\begin{align}
 \operatorname{Succ}(\operatorname{Dec})
 &:=\{z\in G:\operatorname{Dec}(\kappa_z)=\pi_D(z)\},
 \label{eq:decoder_success_set}\\
 P_{\mathrm{succ}}(\operatorname{Dec};\mathbb P)
 &:=\mathbb P(\operatorname{Succ}(\operatorname{Dec})),\qquad
 P_{\mathrm{opt}}(D;\mathbb P)
 :=\sup_{\operatorname{Dec}}P_{\mathrm{succ}}(\operatorname{Dec};\mathbb P).
 \label{eq:decoder_success_probability}
\end{align}
A decoder $\operatorname{Dec}$ is called optimal for $\mathbb P$ if
$P_{\mathrm{succ}}(\operatorname{Dec};\mathbb P)=P_{\mathrm{opt}}(D;\mathbb P)$.

The laws of the error class and syndrome are, respectively,
$\overline{\mathbb P}:=(\pi_D)_*\mathbb P$ and
$\nu:=\varphi_*\overline{\mathbb P}=\kappa_*\mathbb P$.
Disintegration on the compact metrizable spaces $G/D$ and $\widehat D$
gives conditional probability measures $\mathbb P_\chi$, supported on
$\varphi^{-1}(\chi)$ for $\nu$-almost every $\chi$, such that
\begin{equation}
 \overline{\mathbb P}(A)
 =\int_{\widehat D}\mathbb P_\chi
       (A\cap\varphi^{-1}(\chi))\,d\nu(\chi)
 \label{eq:decoder_noise_disintegration}
\end{equation}
for Borel $A\subseteq G/D$.  Here $\mathbb P_\chi$ is the conditional
distribution of the error class $z+D$ given the syndrome $\kappa_z=\chi$;
thus $\mathbb P_\chi(\{a\})$ is the posterior probability of the class
$a\in\varphi^{-1}(\chi)$, defined for $\nu$-almost every $\chi$.
In the parametrization
$\sigma_0(\chi)+\mathsf L_D$, the finitely many conditional masses
$\mathbb P_\chi(\{\sigma_0(\chi)+\ell\})$ may be chosen Borel in $\chi$.

The success set is
$\pi_D^{-1}(\operatorname{im}\operatorname{Dec})$.
A Borel section is injective and has Borel image, and its image meets each
fiber of $\varphi$ in exactly its chosen point.  Therefore
\begin{equation}
 P_{\mathrm{succ}}(\operatorname{Dec};\mathbb P)
 =\overline{\mathbb P}(\operatorname{im}\operatorname{Dec})
 =\int_{\widehat D}\mathbb P_\chi(\{\operatorname{Dec}(\chi)\})\,d\nu(\chi).
 \label{eq:decoder_conditional_success}
\end{equation}

\begin{theorem}
\label{thm:maximum_likelihood_decoder}
The optimum $P_{\mathrm{opt}}(D;\mathbb P)$ is attained.  A decoder $\operatorname{Dec}$ is
optimal if and only if, for $\nu$-almost every syndrome $\chi$,
\begin{equation}
 \operatorname{Dec}(\chi)\in
 \operatorname*{arg\,max}_{a\in\varphi^{-1}(\chi)}\mathbb P_\chi(\{a\}).
 \label{eq:maximum_likelihood_decoder}
\end{equation}
In particular,
\begin{equation}
 P_{\mathrm{opt}}(D;\mathbb P)
 =\int_{\widehat D}\max_{a\in\varphi^{-1}(\chi)}
       \mathbb P_\chi(\{a\})\,d\nu(\chi).
 \label{eq:optimal_decoding_probability}
\end{equation}
\end{theorem}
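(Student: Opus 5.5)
The proof will be a pointwise (fiberwise) maximization inside the integral formula \eqref{eq:decoder_conditional_success}. The integrand $\mathbb P_\chi(\{\operatorname{Dec}(\chi)\})$ is pointwise bounded by
\[
 m(\chi):=\max_{a\in\varphi^{-1}(\chi)}\mathbb P_\chi(\{a\}).
\]
This maximum exists because each fiber $\varphi^{-1}(\chi)=\sigma_0(\chi)+\mathsf L_D$ is finite, of cardinality $K^2$ by Theorem~\ref{thm:code_dimension}. Integrating against $\nu$ then gives
\[
 P_{\mathrm{succ}}(\operatorname{Dec};\mathbb P)\leq\int_{\widehat D}m(\chi)\,d\nu(\chi)
\]
for every decoder. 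Taking the supremum yields $P_{\mathrm{opt}}\leq\int m\,d\nu$.

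The second step is to construct a Borel decoder that attains this bound. Fix an ordering $\ell_1,\dots,\ell_{K^2}$ of $\mathsf L_D$. By the remark preceding the theorem, the finitely many functions
\[
 g_j(\chi):=\mathbb P_\chi(\{\sigma_0(\chi)+\ell_j\})
\]
can be chosen Borel in $\chi$. Define $\operatorname{Dec}_{\mathrm{ML}}(\chi):=\sigma_0(\chi)+\ell_{j(\chi)}$, where $j(\chi)$ is the first index with $g_j(\chi)=\max_i g_i(\chi)$. The maximum $m=\max_i g_i$ is Borel, and each set $\{\chi:j(\chi)=j\}$ is a finite Boolean combination of sets $\{g_i=m\}$, so it is Borel. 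Hence $\ell(\chi):=\ell_{j(\chi)}$ is a Borel map $\widehat D\to\mathsf L_D$. By the normal form \eqref{eq:decoder_normal_form}, $\operatorname{Dec}_{\mathrm{ML}}$ is a Borel section of $\varphi$, so it is a decoder, and its success probability equals $\int m\,d\nu$. This proves both attainment and \eqref{eq:optimal_decoding_probability}. The null set on which the disintegration is ill-defined is harmless, because on it $\sigma_0$ can be used arbitrarily without changing the integral.

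For the characterization, let $\operatorname{Dec}$ be any decoder. The difference
\[
 h(\chi):=m(\chi)-\mathbb P_\chi(\{\operatorname{Dec}(\chi)\})
\]
is nonnegative. It is also Borel: write $\operatorname{Dec}=\sigma_0+\ell$ with $\ell$ Borel, so that $\mathbb P_\chi(\{\operatorname{Dec}(\chi)\})=\sum_j \mathbf 1_{\ell(\chi)=\ell_j}\,g_j(\chi)$. Then $\operatorname{Dec}$ is optimal if and only if $\int h\,d\nu=0$. For a nonnegative function this holds if and only if $h=0$ $\nu$-almost everywhere, which is exactly condition \eqref{eq:maximum_likelihood_decoder}.

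The main obstacle is measurability. One must ensure that the conditional masses, and hence $m$ and the integrand for an arbitrary decoder, are Borel, so that the integrals make sense and the almost-everywhere argument is valid. I would handle this with the finite parametrization $\sigma_0+\mathsf L_D$ described above, together with the Borel choice of conditional masses already asserted before the theorem. Everything else reduces to finite maximization.
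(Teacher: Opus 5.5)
Your proposal is correct and follows the paper's proof. The paper orders $\mathsf L_D$, takes the first maximizer of the Borel conditional masses to obtain a decoder attaining the integral in \eqref{eq:decoder_conditional_success}, and argues that failing to maximize on a set of positive $\nu$-measure strictly lowers that integral. You spell out the upper bound and the measurability of the nonnegative deficit $h$, both of which the paper leaves implicit.
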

\begin{proof}
Order $\mathsf L_D$ and choose the first maximizer of its finitely many
Borel conditional masses.  This gives a Borel decoder; on a Borel
$\nu$-null exceptional set, use $\sigma_0$.  Equation~\eqref{eq:decoder_conditional_success}
shows that it attains the displayed integral.  Conversely, failure to
maximize on a set of positive $\nu$-measure strictly decreases the integral.
\end{proof}

This rule maximizes the posterior probability of the error class given
the syndrome; it is conventionally called degenerate maximum-likelihood
decoding. Minimum-cost decoding uses the chosen displacement geometry,
whereas this rule uses the conditional mass of each entire
stabilizer-equivalence class. For continuous noise, these are conditional
probabilities, even though individual classes may have zero unconditional
probability.

\subsection{GKP, Pauli, and Lee-distance specializations}

In the formulas below, a minimum over an empty set of nontrivial logical
labels is understood as $\infty$, consistently with
\eqref{eq:displacement_cost_distance}. This includes the case $K=1$.

\subsubsection{Euclidean GKP distance and closest-vector decoding}

In the purely continuous case of Sec.~\ref{ex:gkp}, take
$M=\R^p$, $G=\R^{2p}$, and $D=T\Z^{2p}$, with $T$ invertible and
$J=J_{\mathrm{osc}}$.  The Euclidean cost gives
\begin{equation}
 d_{\mathrm{GKP}}(D)
 =\min_{z\in D^\perp\setminus D}\|z\|_2,
 \label{eq:gkp_euclidean_distance}
\end{equation}
with value $\infty$ when there is no nontrivial logical class.  This is the
usual GKP displacement distance in our phase-space normalization
\cite{Conrad2022gottesmankitaev}.

To express a decoder in the column-basis convention of
Sec.~\ref{subsec:examples_notation}, represent $\chi\in\widehat D$
by $s_\chi\in\R^{2p}/\Z^{2p}$ through
$\chi(Tn)=e^{2\pi i n^{\intercal}s_\chi}$.  Then
\begin{equation}
 s_{\kappa_z}=-T^{\intercal}Jz\pmod{\Z^{2p}}.
 \label{eq:gkp_decoder_syndrome_coordinates}
\end{equation}
The measured phases of the phase-corrected stabilizers have the opposite sign,
$T^{\intercal}Jz$, because their character is $\overline{\kappa_z}$.
Choose the Borel representative $\widetilde s_\chi\in[0,1)^{2p}$ and set
\begin{equation}
 \eta(\chi):=-(T^{\intercal}J)^{-1}\widetilde s_\chi,
 \qquad \kappa_{\eta(\chi)}=\chi.
 \label{eq:gkp_decoder_reference_error}
\end{equation}
The compatible error classes are
$\eta(\chi)+\lambda+D$, indexed by $\lambda+D\in\mathsf L_D$.
For centered isotropic Gaussian displacement noise with density
$p_\sigma(z)\propto e^{-\|z\|_2^2/(2\sigma^2)}$, their conditional
likelihoods are proportional to
\begin{equation}
 \mathcal L_\sigma(\lambda+D\mid\chi)
 :=\sum_{d\in D}p_\sigma(\eta(\chi)+\lambda+d).
 \label{eq:gkp_decoder_theta_likelihood}
\end{equation}
The maximum-likelihood decoder maximizes this translated-lattice
Gaussian sum over $\mathsf L_D$, as in the theta-series formulation
of~\cite{Conrad2022gottesmankitaev}.

The minimum-cost decoder instead minimizes
\begin{equation}
 \min_{\lambda+D\in\mathsf L_D}\ \min_{d\in D}
       \|\eta(\chi)+\lambda+d\|_2
 =\min_{y\in D^\perp}\|\eta(\chi)+y\|_2.
 \label{eq:gkp_decoder_closest_vector}
\end{equation}
Thus a closest vector $y$ in $D^\perp$ to $-\eta(\chi)$ yields the
chosen class $\eta(\chi)+y+D$.  For each fixed syndrome $\chi$, the smallest
norm terms dominate~\eqref{eq:gkp_decoder_theta_likelihood} as $\sigma\to0$;
if the minimizing logical class is unique, there exists $\sigma_0(\chi)>0$
such that the maximum-likelihood choice agrees with this rule for
$0<\sigma<\sigma_0(\chi)$.  At finite noise,
the full sums, including multiplicities, determine the likelihood.

\subsubsection{Binary Pauli distance and the $A_2$ integer lift}

For $M=\Z_2^n$ and the finite stabilizer lattice
$D\subseteq G=\Z_2^n\times\Z_2^n$ of Section~\ref{sec:PSC_basic}, define
the symplectic Hamming weight of $z=(a,b)$ by
\begin{equation}
 \operatorname{sw}(a,b):=\#\{i:(a_i,b_i)\neq(0,0)\}.
 \label{eq:pauli_distance_weight}
\end{equation}
Taking $w=\operatorname{sw}$ recovers the ordinary stabilizer distance
\begin{equation}
 d_{\mathrm{Pauli}}(D)
 =\min_{z\in D^\perp\setminus D}\operatorname{sw}(z).
 \label{eq:pauli_hamming_distance}
\end{equation}
Since this weight is subadditive, Theorem~\ref{thm:minimum_cost_half_distance}
gives correction of all Pauli errors of weight at most
$\lfloor(d_{\mathrm{Pauli}}(D)-1)/2\rfloor$ when the distance is finite.

Use the integer-lift notation of
Sec.~\ref{ex:mixed-qudit}, with $k=n$ and
$\dia c=2\dia 1_n$: $\Lambda=\pi_{\dia c}^{-1}(D)$ and
$\Lambda^\perp=\pi_{\dia c}^{-1}(D^\perp)$, where the lifted form is
$J_{\mathrm{fin}}=\frac12\left(\begin{smallmatrix}0&\dia 1_n\\-\dia 1_n&0\end{smallmatrix}\right)$.
For $v=(a,b)\in\Z^{2n}$, define
\begin{equation}
 Q_{A_2}(a,b):=\sum_{i=1}^n(a_i^2+b_i^2-a_i b_i).
 \label{eq:pauli_A2_quadratic_cost}
\end{equation}
Each nonzero binary pair has a lift among $(1,0),(0,1),(1,1)$ of
quadratic cost one; every nonzero integral pair has cost at least one.
Coordinatewise minimization therefore gives
\begin{equation}
 \min_{v\in\pi_{\dia c}^{-1}(z)}Q_{A_2}(v)=\operatorname{sw}(z),
 \qquad
 d_{\mathrm{Pauli}}(D)
 =\min_{v\in\Lambda^\perp\setminus\Lambda}Q_{A_2}(v).
 \label{eq:pauli_A2_lift_distance}
\end{equation}
\begin{remark}
\label{rem:pauli_eisenstein_interpretation}
Let $\omega=e^{2\pi i/3}$.  Under $(a,b)\mapsto a+b\omega\in\Z[\omega]$,
the identity $|a+b\omega|^2=a^2+b^2-ab$ identifies each one-qubit
block with the Eisenstein, or hexagonal, lattice.
Since $t^2+t+1$ is irreducible over $\mathbb F_2$, the prime $2$ is
inert in $\Z[\omega]$, and
$\Z[\omega]/2\Z[\omega]\cong\mathbb F_4$.
Its three nonzero classes have unit-norm representatives
$1,\omega,1+\omega$, explaining geometrically why $X$, $Y$, and $Z$
all have unit Pauli cost.
This realizes the description of binary stabilizer codes by additive
codes over $GF(4)$~\cite{calderbank1998quantum} in the integer-lift geometry,
in analogy with Construction~A over the Eisenstein
integers~\cite{conway2013sphere}.
These additive codes need not be $GF(4)$-linear.
\end{remark}

The quadratic cost reproduces the integer Pauli distance
exactly.  Its square root is a norm, and hence subadditive, with the same
minimizers but distance $\sqrt{d_{\mathrm{Pauli}}(D)}$ on the lift;
the usual Pauli correction radius uses $\operatorname{sw}$ on the finite
phase space.

\subsubsection{Qudit Lee distance and the $\ell^1$ integer lift}

For $M=\Z_{\dia c}$ as in
Sec.~\ref{ex:mixed-qudit}, define the Lee magnitude
of $a_i\in\Z_{c_i}$ by
\begin{equation}
 |a_i|_{\mathrm L,c_i}:=\min_{m\in\Z}|\widetilde a_i+m c_i|,
 \label{eq:qudit_lee_magnitude}
\end{equation}
where $\widetilde a_i$ is any integer representative.  For
$z=(a,b)\in\Z_{\dia c}\times\Z_{\dia c}$, put
\begin{equation}
 \operatorname{wt}_{\mathrm{Lee}}(a,b)
 :=\sum_{i=1}^k\bigl(|a_i|_{\mathrm L,c_i}+|b_i|_{\mathrm L,c_i}\bigr),
 \qquad
 d_{\mathrm{Lee}}(D)
 :=\min_{z\in D^\perp\setminus D}\operatorname{wt}_{\mathrm{Lee}}(z).
 \label{eq:qudit_lee_distance}
\end{equation}
This symmetric, subadditive cost measures cyclic shifts in each coordinate.
For equal dimensions $c_i=d$ it is the usual qudit Lee cost.  Retaining
$\Lambda=\pi_{\dia c}^{-1}(D)$ and its dual with respect to $J_{\mathrm{fin}}$,
coordinatewise minimization gives
\begin{equation}
 \operatorname{wt}_{\mathrm{Lee}}(z)
 =\min_{v\in\pi_{\dia c}^{-1}(z)}\|v\|_1,
 \qquad
 d_{\mathrm{Lee}}(D)
 =\min_{v\in\Lambda^\perp\setminus\Lambda}\|v\|_1.
 \label{eq:qudit_lee_lift_distance}
\end{equation}
Thus the same integer-lift construction accommodates unequal local
dimensions, with the geometry supplied by the coordinatewise Lee cost.
\section{Weight enumerators}
\label{sec:weight_enumerators}

The distance of Section~\ref{sec:distance_error_correction} records the
least cost of a nontrivial logical displacement.  Weight enumerators
retain the distribution of costs and their multiplicities.  We formulate
them on the same phase space $G=M\times\widehat M$, using the
displacements $\mathbf D(z)$ and commutator $\rho$ of
Section~\ref{sec:GSC}.  This places the quantum enumerators of
Shor and Laflamme~\cite{shor1997macwilliams}, their nonbinary extension
by Rains~\cite{rains1998weight}, and the continuous-variable distributions
of Burchards~\cite{burchards2025macwilliams} in a common Fourier framework.
For finite stabilizers the enumerators count labels in $D$ and $D^\perp$;
for ideal GLCA codes the corresponding objects are lattice counting
measures.  The latter definition does not require a projection onto a
distributional codespace.

\subsection{Phase-space operator enumerators}
\label{subsec:we_phase_space}

Throughout this section we use the self-Plancherel Haar measure
$\mu$ defined in Subsection~\ref{subsec:code_dimension_covolume}.
In particular, on a finite phase space with $N=\dim \ell^2(M)=|M|$,
this is $N^{-1}$ times counting measure.  Define the symplectic Fourier
transform~\cite[p.~270]{rieffel} by
\begin{equation}
 (\mathcal F_\rho f)(z)
 :=\int_G f(u)\rho(u,z)\,d\mu(u).
 \label{eq:we_symplectic_fourier}
\end{equation}
With this measure it is unitary on $L^2(G)$ and
$\mathcal F_\rho^2=I$.  These are the usual Plancherel and inversion
theorems under the identification of $G$ with its dual induced by
$\rho$; the alternating bicharacter makes the square the identity.

For trace-class operators $O_1,O_2$ on $\mathcal H=L^2(M)$, set
\begin{align}
 \Xi_O(z)&:=\operatorname{Tr}(O\mathbf D(z)),
 \label{eq:we_characteristic}\\
 a_{12}(z)&:=\Xi_{O_1}(z)\overline{\Xi_{O_2}(z)},
 &
 b_{12}(z)&:=\operatorname{Tr}\bigl(
   \mathbf D(z)O_1\mathbf D(z)^\dagger O_2^\dagger\bigr).
 \label{eq:we_full_enumerators}
\end{align}
The characteristic function has the same displacement convention as
Section~\ref{subsec:fe_characteristic} below.  Both enumerators are unchanged
by replacing $\mathbf D(z)$ with any phase multiple of itself.

The standard Weyl--Plancherel identity, stated in
\cite[Lemma~2.1]{spriet2026kirkwood}, reads in our displacement convention
\begin{equation}
 \int_G\Xi_{O_1}(u)\overline{\Xi_{O_2}(u)}\,
       d\mu(u)
 =\operatorname{Tr}(O_1O_2^\dagger).
 \label{eq:we_weyl_plancherel}
\end{equation}
Its application to the conjugated operator in
\eqref{eq:we_full_enumerators} gives the phase-space form of quantum
MacWilliams duality:
\begin{equation}
 b_{12}=\mathcal F_\rho a_{12},\qquad
 a_{12}=\mathcal F_\rho b_{12}.
 \label{eq:we_operator_duality}
\end{equation}
This follows from the relation
$\Xi_{\mathbf D(z)O_1\mathbf D(z)^\dagger}(u)
=\rho(u,z)\Xi_{O_1}(u)$.  The trace-class hypotheses ensure that both
functions are continuous and integrable: Weyl--Plancherel and
Cauchy--Schwarz control $a_{12}$, while nuclear decompositions give
$\|b_{12}\|_1\leq\|O_1\|_1\|O_2\|_1$.  One can also check that
\begin{equation}
 \int_G a_{12}\,d\mu
 =\operatorname{Tr}(O_1O_2^\dagger),\qquad
 \int_G b_{12}\,d\mu
 =\operatorname{Tr}(O_1)\overline{\operatorname{Tr}(O_2)}.
 \label{eq:we_operator_masses}
\end{equation}

\subsection{Weight measures and error detection}
\label{subsec:we_weight_measures}

Let $w:G\to[0,\infty)$ be a finite-valued Borel displacement cost in
the sense of Section~\ref{sec:distance_error_correction}: $w(z)$ is the
weight assigned to the displacement label $z$, and finite-valued means
that $w(z)<\infty$ for every $z$, not that $w$ has finite range.
The functions $a_{12}$ and $b_{12}$ retain information about each
individual displacement.  To form weight enumerators, we collect this
information according to its cost.  Define
finite complex Borel measures on $[0,\infty)$ by pushforward:
\begin{equation}
 \mathcal A^w_{12}:=w_*(a_{12}\mu),\qquad
 \mathcal B^w_{12}:=w_*(b_{12}\mu).
 \label{eq:we_weight_measures}
\end{equation}
As before, the subscript $12$ records the pair of operators $(O_1,O_2)$,
and $a_{12}\mu$ denotes the measure with density $a_{12}$
relative to the self-Plancherel Haar measure $\mu$.  By definition, the pushforward $w_*$ assigns
to a Borel set $E\subseteq[0,\infty)$ the mass of all labels whose
cost lies in $E$:
\[
 \mathcal A^w_{12}(E)
 =\int_{w^{-1}(E)}a_{12}(z)\,d\mu(z),\qquad
 \mathcal B^w_{12}(E)
 =\int_{w^{-1}(E)}b_{12}(z)\,d\mu(z).
\]

For a bounded Borel function $h$ on $[0,\infty)$, the bracket notation
means integration against the weight measure:
\[
 \mathcal A^w_{12}[h]
 :=\int_{[0,\infty)}h(t)\,d\mathcal A^w_{12}(t)
 =\int_G h(w(z))a_{12}(z)\,d\mu(z),
\]
and similarly for $\mathcal B^w_{12}$.  These measures need not be
probability measures, or even positive for general $O_1,O_2$.
Costs taking the value
$\infty$ can instead be treated by pushforward to $[0,\infty]$.

For $h\circ w\in\mathscr S(G)$, the Schwartz space on $G$, the
evenness of $w$ and Fourier duality give
\[
 \mathcal B^w_{12}[h]
 =\int_G a_{12}(z)\bigl(\mathcal F_\rho(h\circ w)\bigr)(z)
       \,d\mu(z).
\]
Here Fourier duality refers to the identity
$b_{12}=\mathcal F_\rho a_{12}$ in~\eqref{eq:we_operator_duality}.
Thus a sufficient condition for a closed identity in the weight variable is
\begin{equation}
 \mathcal F_\rho(h\circ w)=\widetilde h\circ w
 \quad\Longrightarrow\quad
 \mathcal B^w_{12}[h]=\mathcal A^w_{12}[\widetilde h],
 \label{eq:we_weight_mask_duality}
\end{equation}
where $\widetilde h$ is bounded and Borel.

The motivation is the same as for classical weight enumerators.
For a binary code $Q\subseteq\{0,1\}^n$, let
$\nu_Q=\sum_{q\in Q}\delta_q$ be its counting measure and let
$\operatorname{wt}_{\mathrm H}(q)$ count the nonzero coordinates of $q$.
Then
\[
 (\operatorname{wt}_{\mathrm H})_*\nu_Q
 =\sum_{j=0}^n A_j(Q)\delta_j,\qquad
 A_j(Q)=\#\{q\in Q:\operatorname{wt}_{\mathrm H}(q)=j\},
\]
where $\delta_j$ is unit point mass at $j$.  Its masses are exactly
the coefficients of the usual weight-enumerator polynomial
\[
 W_Q(X,Y)=\sum_{j=0}^n A_j(Q)X^{n-j}Y^j.
\]
Equation~\eqref{eq:we_weight_measures} applies this same construction
to the operator densities $a_{12}$ and $b_{12}$.  For qubit codes and
symplectic Hamming cost, its point masses recover the quantum weight
enumerators of Shor and Laflamme~\cite{shor1997macwilliams}, with the
normalization specified in~\eqref{eq:we_finite_coefficients} in
Subsection~\ref{subsec:we_finite}.
The measure formulation also accommodates continuously varying costs,
for which there need not be a finite list of weight coefficients or a
polynomial.  In both settings it lets us organize error detection by
cost thresholds, as follows.

Suppose now that $P$ is the orthogonal projection onto a
$K$-dimensional subspace of $\mathcal H$, and write $a_P,b_P$ for
$O_1=O_2=P$.  The standard positivity and error-detection relation for
quantum enumerators~\cite{shor1997macwilliams,rains1998weight}, in its
pointwise form, is
\begin{align}
 a_P(z)&=|\operatorname{Tr}(P\mathbf D(z))|^2,
 & b_P(z)&=\|P\mathbf D(z)P\|_{\mathrm{HS}}^2,
 \label{eq:we_projection_functions}\\
 Kb_P(z)-a_P(z)
 &=K\left\|P\mathbf D(z)P
       -\frac{\operatorname{Tr}(P\mathbf D(z))}{K}P
     \right\|_{\mathrm{HS}}^2\geq0.
 \label{eq:we_detection_defect}
\end{align}
The right side measures the traceless part of the compressed error.
Its vanishing is exactly the scalar-compression condition for detection
used in Section~\ref{sec:distance_error_correction}. In particular, $\mathcal A^w_P$, $\mathcal B^w_P$, and
\begin{equation}
 \mathcal E^w_P:=K\mathcal B^w_P-\mathcal A^w_P
 \label{eq:we_defect_measure}
\end{equation}
are positive measures, with total masses $K$, $K^2$, and $K^3-K$,
respectively.  Their underlying functions satisfy $a_P(0)=K^2$ and
$b_P(0)=K$.

For a continuous cost, the detection distance of this normalizable
code is therefore
\begin{equation}
 d_w(P):=\inf\{w(z):P\mathbf D(z)P\notin\C P\}
       =\inf\operatorname{supp}\mathcal E^w_P,
 \label{eq:we_projection_distance}
\end{equation}
where the infimum of the empty set is $\infty$. For $K=1$ every compression is scalar and this
detection distance is $\infty$.  The relative-error criterion of
Section~\ref{sec:distance_error_correction} expresses correctability
through detection of all relative displacements; its half-distance
conclusion applies when $w$ is also subadditive.

\subsection{Finite-system specializations}
\label{subsec:we_finite}

\subsubsection{Complete and Hamming enumerators}

Use $M=\Z_{\dia c}$, $N=C_{\dia c}=\prod_i c_i$, and $z=(a,b)$ as in
Sec.~\ref{ex:mixed-qudit}, with local labels
$z_i=(a_i,b_i)\in\Z_{c_i}^2$.  Extend the symplectic Hamming weight
of~\eqref{eq:pauli_distance_weight} to these labels by counting the
sites with $z_i\neq0$.  For a rank-$K$ projection define the normalized
Shor--Laflamme coefficients
\begin{equation}
 A_j:=\frac1{K^2}\sum_{\operatorname{sw}(z)=j}a_P(z),\qquad
 B_j:=\frac1K\sum_{\operatorname{sw}(z)=j}b_P(z).
 \label{eq:we_finite_coefficients}
\end{equation}
The coefficients $A_j$ and $B_j$ equal the corresponding point masses
of~\eqref{eq:we_weight_measures} multiplied by $C_{\dia c}/K^2$ and
$C_{\dia c}/K$, respectively.  In this normalization the familiar relations are
\begin{equation}
 A_0=B_0=1,\qquad 0\leq A_j\leq B_j,\qquad
 \sum_jA_j=\frac{C_{\dia c}}K,\qquad
 \sum_jB_j=C_{\dia c}K.
 \label{eq:we_finite_constraints}
\end{equation}
For $K\geq2$, the first $j$ with $A_j<B_j$ is the Hamming detection
distance~\cite{shor1997macwilliams,rains1998weight}.

Introduce a variable $X_{i,v}$ for each site $i$ and each
$v\in\Z_{c_i}^2$, and set
\begin{equation}
 A_P^{\mathrm{comp}}(X):=\frac1{K^2}\sum_{z\in G}a_P(z)
           \prod_{i=1}^kX_{i,z_i},\qquad
 B_P^{\mathrm{comp}}(X):=\frac1K\sum_{z\in G}b_P(z)
           \prod_{i=1}^kX_{i,z_i}.
 \label{eq:we_complete_enumerators}
\end{equation}
Each monomial $\prod_iX_{i,z_i}$ records the local displacement label
at every site, so these complete enumerators retain more information
than total Hamming weight.  Let
$\rho_i((a,b),(a',b'))=e^{2\pi i(ab'-ba')/c_i}$ be the symplectic
character on site $i$.  Their MacWilliams identity is
\begin{equation}
 \widehat X_{i,u}:=\sum_{v\in\Z_{c_i}^2}\rho_i(u,v)X_{i,v},
 \qquad
 B_P^{\mathrm{comp}}(X)
 =\frac{K}{C_{\dia c}}A_P^{\mathrm{comp}}(\widehat X).
 \label{eq:we_complete_macwilliams}
\end{equation}
Here $u,v\in\Z_{c_i}^2$ are local displacement labels, and
$A_P^{\mathrm{comp}}(\widehat X)$ means replacing every variable
$X_{i,u}$ by the displayed linear combination $\widehat X_{i,u}$.
The substitution acts independently at each site because
$\rho(u,z)=\prod_i\rho_i(u_i,z_i)$, so the Fourier sum
in~\eqref{eq:we_operator_duality} factors over the sites.
The prefactor $K/C_{\dia c}$ combines the Fourier normalization
$1/C_{\dia c}$ with the normalizations $1/K^2$ and $1/K$ of the two
enumerators. 

If all $c_i=d$, the substitution $X_{i,0}=X$ and $X_{i,v}=Y$ for
$v\neq0$ gives homogeneous polynomials
$A_P(X,Y)=\sum_jA_jX^{k-j}Y^j$ and $B_P(X,Y)$.
The known nonbinary quantum MacWilliams identity~\cite{rains1998weight}
takes the form
\begin{equation}
 B_P(X,Y)=\frac{K}{d^k}
       A_P\bigl(X+(d^2-1)Y,X-Y\bigr).
 \label{eq:we_hamming_macwilliams}
\end{equation}
For $n=k$ qubits this is precisely the Shor--Laflamme transform
$B_P(X,Y)=K2^{-n}A_P(X+3Y,X-Y)$
\cite{shor1997macwilliams}.

\subsubsection{Lee enumerators}
\label{subsubsec:we_lee_enumerators}
For arbitrary local dimensions, the Lee cost
of~\eqref{eq:qudit_lee_distance} defines
\begin{equation}
 A_P^{\mathrm{Lee}}(t):=\frac1{K^2}\sum_{z\in G}
                 a_P(z)t^{\operatorname{wt}_{\mathrm{Lee}}(z)},\qquad
 B_P^{\mathrm{Lee}}(t):=\frac1K\sum_{z\in G}
                 b_P(z)t^{\operatorname{wt}_{\mathrm{Lee}}(z)}.
 \label{eq:we_lee_enumerators}
\end{equation}
These are obtained from the complete enumerators by substituting
$X_{i,(a,b)}=t^{|a|_{\mathrm L,c_i}+|b|_{\mathrm L,c_i}}$.
The coefficients of $B_P^{\mathrm{Lee}}-A_P^{\mathrm{Lee}}$ are
nonnegative, and the smallest exponent with a nonzero coefficient is
the Lee detection distance (taken as $\infty$ if the difference
vanishes).  Moreover, the complete enumerator $A_P^{\mathrm{comp}}$
always determines $B_P^{\mathrm{Lee}}$ through the exact substitution
\begin{equation}
 B_P^{\mathrm{Lee}}(t)=\frac K{C_{\dia c}}
 A_P^{\mathrm{comp}}\left(
  \left\{\sum_{v=(a,b)\in\Z_{c_i}^2}\rho_i(u,v)
        t^{|a|_{\mathrm L,c_i}+|b|_{\mathrm L,c_i}}\right\}_{i,u}
                      \right).
 \label{eq:we_lee_complete_transform}
\end{equation}

However, Fourier duality does not in general give a closed transform
using only Lee-weight sums: it requires the Fourier transform of each
shell indicator to be constant on every Lee-weight
shell~\cite{gluesing2015fourier}.  For example, in $G=\Z_5^2$ the
Lee-weight-one shell is $S_1=\{(\pm1,0),(0,\pm1)\}$, and
\begin{equation}
 (\mathcal F_\rho\mathbf1_{S_1})(a,b)
 =\frac25\left(\cos\frac{2\pi a}5+\cos\frac{2\pi b}5\right).
 \label{eq:we_lee_counterexample}
\end{equation}
At $(2,0)$ and $(1,1)$, both of Lee weight two, its values are
$(3-\sqrt5)/10$ and $(\sqrt5-1)/5$, respectively, violating this
condition.  Thus no transform from Lee-weight sums alone works for
all phase-space functions or all operator pairs; the latter can realize
point masses in $a_{12}$ by taking $O_1=O_2$ to be a displacement.
This obstruction does not rule out identities for particular code
families.  The complete enumerator supplies sufficient information
for the dual substitution above; a suitable refinement of the Lee-weight
data may also suffice.
Even for qubits, Lee and Hamming enumerators differ: a $Y$ label has
Lee weight two and symplectic Hamming weight one.

\subsection{Euclidean enumerators}
\label{subsec:we_euclidean}

For $M=\R^p$ with $p\geq1$, use the canonical coordinates
$z=(m,s)$ of Sec.~\ref{subsec:examples_notation}, so that
$\rho(u,z)=e^{2\pi i u^{\intercal}J_{\mathrm{osc}}z}$ and
$d\mu=dz$.  For the radius $r=\|z\|_2$, write
\begin{equation}
 A_{12}^{\mathrm{rad}}(r):=\int_{\|z\|_2=r}a_{12}(z)\,d\sigma_r(z),
 \qquad
 B_{12}^{\mathrm{rad}}(r):=\int_{\|z\|_2=r}b_{12}(z)\,d\sigma_r(z),
 \label{eq:we_radial_densities}
\end{equation}
where $d\sigma_r$ is Euclidean surface area, rather than probability
measure on the sphere.  Thus
$\mathcal A^{\|\cdot\|_2}_{12}(dr)=A_{12}^{\mathrm{rad}}(r)\,dr$,
and similarly for $\mathcal B$.  Radial functions are preserved by
$\mathcal F_\rho$, since $J_{\mathrm{osc}}$ is orthogonal.  Burchards'
continuous-variable quantum MacWilliams identity
\cite[Theorem~4]{burchards2025macwilliams}, expressed in these
coordinates, is
\begin{equation}
 B_{12}^{\mathrm{rad}}(r)
 =2\pi r^p\int_0^\infty
       \frac{J_{p-1}(2\pi r t)}{t^{p-1}}
           A_{12}^{\mathrm{rad}}(t)\,dt,\qquad r>0,
 \label{eq:we_radial_macwilliams}
\end{equation}
where $J_{p-1}$ is the Bessel function of the first kind.
To derive this from~\eqref{eq:we_operator_duality}, substitute
$b_{12}(z)=\int_{\R^{2p}}a_{12}(v)\rho(v,z)\,dv$ into the definition
of $B_{12}^{\mathrm{rad}}(r)$ and interchange the integrals.
This is justified by $a_{12}\in L^1$ and the finite surface area of the
sphere.
Since $J_{\mathrm{osc}}$ is orthogonal, the standard spherical integral is
\[
 \int_{\|z\|_2=r}\rho(v,z)\,d\sigma_r(z)
 =2\pi r^p\frac{J_{p-1}(2\pi r\|v\|_2)}{\|v\|_2^{p-1}},
\]
with the value at $v=0$ understood by continuity.  This kernel depends
only on $t=\|v\|_2$.  Polar integration therefore replaces
$a_{12}(v)\,dv$ by $A_{12}^{\mathrm{rad}}(t)\,dt$, giving
\eqref{eq:we_radial_macwilliams}.

\subsection{Intrinsic lattice enumerators and theta series}
\label{subsec:we_lattice}

Return to an isotropic lattice $D\subseteq G$.  The preceding operator
definitions apply to finite-rank projections in $L^2(M)$, whereas the
ideal code fiber need not be a subspace of $L^2(M)$.  Its intrinsic
enumerators are instead defined directly from $D$ and $D^\perp$.
For any lattice $L\subseteq G$ put
\begin{equation}
 \delta_L:=\sum_{z\in L}\delta_z,\qquad
 \mathcal W_L[f]:=\sum_{z\in L}f(z).
 \label{eq:we_lattice_distribution}
\end{equation}
These define a counting measure and its action on
Schwartz test functions.  If $w$ has finite sublevel sets on
both $D$ and $D^\perp$, the intrinsic weight measures
\begin{equation}
 \mathcal A_D^w:=w_*\delta_D,\qquad
 \mathcal B_D^w:=w_*\delta_{D^\perp}
 \label{eq:we_intrinsic_measures}
\end{equation}
are locally finite.  Isotropy makes their difference positive:
\begin{equation}
 \mathcal B_D^w-\mathcal A_D^w
 =\sum_{z\in D^\perp\setminus D}\delta_{w(z)},\qquad
 d_w(D)=\inf\operatorname{supp}
             (\mathcal B_D^w-\mathcal A_D^w).
 \label{eq:we_intrinsic_distance}
\end{equation}
The last equality is the enumerator interpretation of
\eqref{eq:displacement_cost_distance}; the stabilizer labels have
been removed by the subtraction.

In the finite case these are exactly the usual counting enumerators
of a stabilizer and its normalizer.  Indeed, for the code projection
associated with the phase-corrected stabilizers, trace orthogonality
gives the standard support formulas
\cite{shor1997macwilliams,rains1998weight}
\begin{equation}
 a_P=K^2\mathbf1_D,\qquad b_P=K\mathbf1_{D^\perp}.
 \label{eq:we_stabilizer_supports}
\end{equation}
Thus $A_j$ and $B_j$ count labels of weight $j$ in $D$ and
$D^\perp$, respectively.  For any displacement cost $w$,
$\mathcal A_P^w=(K^2/N)\mathcal A_D^w$ and
$\mathcal B_P^w=(K/N)\mathcal B_D^w$.  These factors distinguish
the operator measures from the intrinsic counts and give
\[
 \mathcal E_P^w=\frac{K^2}{N}
     \bigl(\mathcal B_D^w-\mathcal A_D^w\bigr),\qquad
 d_w(P)=d_w(D).
\]
For example, the classical binary linear code $C=\ker H$ of
Section~\ref{sec:PSC_basic} has $D=\{0\}\times C^\perp$ and
$D^\perp=C\times\widehat M$.  Its enumerators are
$A_P(X,Y)=W_{C^\perp}(X,Y)$ and $B_P(X,Y)=W_C(X+Y,2Y)$.
The qubit case of~\eqref{eq:we_hamming_macwilliams} therefore reduces to
the classical binary MacWilliams identity~\cite{macwilliams1977theory}.

Recall $s_D=|G/D|_\mu$ from~\eqref{eq:covolume_definition}; by
Theorem~\ref{thm:code_dimension}, $s_D=K$ for a stabilizer code.
The classical Poisson summation theorem for LCA groups
\cite[Theorem~5.7(iii) and Remark~5.10 in the
\href{https://arxiv.org/abs/1608.04566v2}{arXiv:1608.04566v2} version]{jakobsen2017feichtinger}
takes the symplectic form
\begin{equation}
 \mathcal W_D[f]
 =s_D^{-1}\mathcal W_{D^\perp}[\mathcal F_\rho f],\qquad
 \mathcal F_\rho\delta_D=s_D^{-1}\delta_{D^\perp}.
 \label{eq:we_poisson}
\end{equation}
The distributional equality uses Fourier duality on test functions;
the reflection in its transpose disappears because $\delta_D$ is even.
Together with~\eqref{eq:dual_covolume_reciprocity}, it yields
\begin{equation}
 \mathcal B_D^w[h]
 =s_D\sum_{z\in D}\bigl(\mathcal F_\rho(h\circ w)\bigr)(z)
 \label{eq:we_lattice_mask_duality}
\end{equation}
whenever $h\circ w\in\mathscr S(G)$.  Under the factorization condition
in~\eqref{eq:we_weight_mask_duality}, this gives a closed identity
involving only the weight measures:
\[
 \mathcal B_D^w[h]=s_D\mathcal A_D^w[\widetilde h].
\]

For a full Euclidean lattice $L\subset\R^{2p}$ define
\begin{equation}
 \vartheta_L(t):=\sum_{z\in L}e^{-\pi t\|z\|_2^2},
 \qquad t>0.
 \label{eq:we_lattice_theta}
\end{equation}
The standard theta transformation, obtained by applying Poisson
summation to the Gaussian $e^{-\pi t\|z\|_2^2}$, whose symplectic
Fourier transform is $t^{-p}e^{-\pi\|u\|_2^2/t}$, is
\begin{equation}
 \vartheta_{D^\perp}(t)=s_Dt^{-p}\vartheta_D(1/t).
 \label{eq:we_theta_duality}
\end{equation}
The series $\vartheta_{D^\perp}-\vartheta_D$ counts precisely the
nontrivial logical vectors.  When $D^\perp\setminus D\neq\varnothing$,
its least squared-length shell is $d_{\mathrm{GKP}}(D)^2$
from~\eqref{eq:gkp_euclidean_distance}; if $D=D^\perp$, the series
vanishes and $d_{\mathrm{GKP}}(D)=\infty$.
This recovers the theta-series distance construction of Conrad et
al.~\cite[Section~5.3, Eqs.~(83)--(88)]{Conrad2022gottesmankitaev},
with $\vartheta_L(t)=\Theta_L(it)$ in their notation.

\begin{remark}
\label{rem:we_integer_lifts}
Let $P$ be the finite stabilizer code projection for $D$, and use 
$\Lambda=\pi_{\dia c}^{-1}(D)$ from
Sec.~\ref{ex:mixed-qudit}.  Duality for $J_{\mathrm{fin}}$ gives
$\Lambda^\perp=\pi_{\dia c}^{-1}(D^\perp)$
by~\eqref{eq:mixed-qudit-dual-basis}.
By~\eqref{eq:we_stabilizer_supports}, the normalized enumerators
$A_P,B_P$ and their complete versions count labels in $D,D^\perp$,
respectively.  Fix $0<q<1$.
For $n=k$ qubits and the $A_2$ cost~\eqref{eq:pauli_A2_quadratic_cost}, put
\[
 \psi_r(q):=\sum_{(a,b)\in r+2\Z^2}q^{a^2+b^2-ab},
 \qquad r\in\Z_2^2.
\]
The symmetries $(a,b)\mapsto(b,a)$ and $(a,b)\mapsto(a,a-b)$ give
\[
 \begin{aligned}
 \psi_{(0,0)}(q)&=:\psi_0(q)=1+O(q^4),\\
 \psi_{(1,0)}(q)=\psi_{(0,1)}(q)=\psi_{(1,1)}(q)
 &=:\psi_1(q)=2q+O(q^3).
 \end{aligned}
\]
Consequently, decomposition into residue classes yields
\begin{equation}
 \begin{aligned}
 \sum_{v\in\Lambda}q^{Q_{A_2}(v)}
 &=A_P\bigl(\psi_0(q),\psi_1(q)\bigr),\\
 \sum_{v\in\Lambda^\perp}q^{Q_{A_2}(v)}
 &=B_P\bigl(\psi_0(q),\psi_1(q)\bigr).
 \end{aligned}
 \label{eq:we_A2_lift_substitution}
\end{equation}
Conrad et al.'s Construction~A substitution uses binary Hamming weight
on $2n$ coordinates, assigning weight two to $Y=(1,1)$
\cite[Section~5.3, Eqs.~(90)--(91)]{Conrad2022gottesmankitaev}.
Here each of $X,Z,Y$ has minimum $A_2$ cost one, so the least logical
exponent is the Pauli distance for arbitrary qubit stabilizer codes
by~\eqref{eq:pauli_A2_lift_distance}; the corresponding recovery of the
distance from the weight distribution
in~\cite[Corollary~3]{Conrad2022gottesmankitaev} is stated for CSS codes.
For $d=d_{\mathrm{Pauli}}(D)<\infty$, put
$N_d:=B_d-A_d=\#\{z\in D^\perp\setminus D:\operatorname{sw}(z)=d\}$.
Then~\eqref{eq:we_A2_lift_substitution} gives
\[
 \sum_{v\in\Lambda^\perp\setminus\Lambda}q^{Q_{A_2}(v)}
 =(B_P-A_P)(\psi_0(q),\psi_1(q))
 =2^dN_d\,q^d+O(q^{d+1})\qquad(q\downarrow0).
\]
Each nonzero local residue class in $\Z_2^2$ has two minimum-cost lifts,
giving the factor $2^d$.

Returning to $k$ qudits of local dimensions $c_i$, the Lee residue series are
\[
 f_{c,r}(q):=\sum_{m\in\Z}q^{|r+cm|}
 =\begin{cases}
 (1+q^c)/(1-q^c),&r=0,\\
 (q^r+q^{c-r})/(1-q^c),&1\leq r<c.
 \end{cases}
\]
Set $X_{i,(a,b)}(q):=f_{c_i,a}(q)f_{c_i,b}(q)$.
The complete enumerators of~\eqref{eq:we_complete_enumerators} give
\begin{equation}
 \begin{aligned}
 \sum_{v\in\Lambda}q^{\|v\|_1}
 &=A_P^{\mathrm{comp}}\bigl(X(q)\bigr),\\
 \sum_{v\in\Lambda^\perp}q^{\|v\|_1}
 &=B_P^{\mathrm{comp}}\bigl(X(q)\bigr).
 \end{aligned}
 \label{eq:we_Lee_lift_substitution}
\end{equation}
Subtracting,
\[
 \sum_{v\in\Lambda^\perp\setminus\Lambda}q^{\|v\|_1}
 =\bigl(B_P^{\mathrm{comp}}-A_P^{\mathrm{comp}}\bigr)\bigl(X(q)\bigr).
\]
Its least exponent is $d_{\mathrm{Lee}}(D)$ when finite,
by~\eqref{eq:qudit_lee_lift_distance}, and the corresponding coefficient
counts all minimum-cost integer lifts.
\end{remark}
\section{Finite-energy codewords}
\label{sec:finite_energy}

When $M$ has a continuous or an infinite discrete factor, ideal codewords
in $\mathcal C$ may lie in $\mathscr S'(M)$ outside $L^2(M)$, as do the
combs of Eq.~\eqref{eq:examples-css-codewords}. This section constructs finite-energy
representatives and an asymptotically orthonormal encoding; an exact
isometry is obtained by a final finite-dimensional Gram correction.
We damp the oscillator and rotor energies with
a one-parameter family $\mathbf E_\Delta$ of positive smoothing operators,
apply it to the ideal codewords, and show that a \emph{single} scalar
$\nu_\Delta$, computed from the covolume and torsion data of $D$ and the partition
function of $\mathbf E_\Delta^2$, renormalizes the whole regularized code
basis at once, with an error that is exponentially small in $\Delta^{-1}$,
for an ideal encoding normalized by the factorization
$\operatorname{Enc}\operatorname{Enc}^\dagger=\Pi_D$ of the stabilizer form
in Proposition~\ref{prop:fe_encoding}.
Throughout, $D\subseteq G$ is an isotropic lattice, $K=\dim\mathcal C$ as in
Theorem~\ref{thm:code_dimension}, and $\mathsf L_D=D^\perp/D$ has order $K^2$.
For the explicit formulas we use Lebesgue measure on $\mathbb R^p$ and
counting measure on $\mathbb Z^q$ and $\mathbb Z_{\dia c}$, with their Pontryagin-dual
measures on $\widehat M$. Their product is the canonical self-Plancherel
measure on $G$ used in Section~\ref{subsec:code_dimension_covolume}. In particular, the finite
coordinate kets are orthonormal and
$\operatorname{Tr}\mathbf 1_{\ell^2(\mathbb Z_{\dia c})}=|\mathbb Z_{\dia c}|=C_{\dia c}$.

\subsection{The Gaussian regularizer}
\label{subsec:fe_regularizer}

Write $M=\mathbb R^p\times\mathbb Z^q\times\mathbb Z_{\dia c}$ as in
Section~\ref{sec:GSC}, so that
\[
 L^2(M)=L^2(\mathbb R^p)\otimes\ell^2(\mathbb Z^q)\otimes\ell^2(\mathbb Z_{\dia c}).
\]
Let $h_k$, $k\in\mathbb Z_{\geq0}^p$, be the tensor Hermite basis of
$L^2(\mathbb R^p)$, with vacuum $h_0(x)=\pi^{-p/4}e^{-|x|^2/2}$, and set
\begin{equation}
 \mathbf N:=\tfrac12\bigl(-\nabla_x^2+|x|^2-p\bigr),
 \qquad
 \mathbf Q_j\delta_n:=n_j\delta_n,
 \label{eq:fe_energy_operators}
\end{equation}
where $\nabla_x^2=\sum_{j=1}^p\partial_{x_j}^2$ is the Laplacian on
$\mathbb R^p$, so that $\mathbf N h_k=|k|h_k$ with $|k|=k_1+\cdots+k_p$.
Here $\mathbf N$ is the total oscillator number operator of the $p$ modes.
The operators $\mathbf Q_j=\widehat{\mathbf L}_j$ are the rotor angular
momenta introduced in Sec.~\ref{subsec:examples_notation}, so
$|\mathbf Q|^2=\sum_j\mathbf Q_j^2$ is the total angular-momentum energy of the
$q$ rotors.  The vectors
\begin{equation}
 e_{k,n,w}:=h_k\otimes\delta_n\otimes\delta_w,
 \qquad
 (k,n,w)\in\mathbb Z_{\geq0}^p\times\mathbb Z^q\times\mathbb Z_{\dia c},
 \label{eq:fe_basis}
\end{equation}
form an orthonormal basis of $L^2(M)$.

\begin{definition}
\label{def:fe_regularizer}
For $\Delta>0$ set
\begin{equation}
 \mathbf E_\Delta
 :=e^{-\Delta\mathbf N}\otimes e^{-\pi\Delta|\mathbf Q|^2/2}\otimes
   \mathbf 1_{\ell^2(\mathbb Z_{\dia c})},
 \qquad
 \mathbf E_\Delta e_{k,n,w}=a_\Delta(k,n)\,e_{k,n,w},
 \label{eq:fe_regularizer}
\end{equation}
with $a_\Delta(k,n)=e^{-\Delta|k|-\pi\Delta|n|^2/2}$.
\end{definition}

The generators of the damping are quadratic in the physical variables:
$\mathbf N$ in the
oscillator quadratures and $|\mathbf Q|^2$ in the rotor angular momenta.  The
resulting effective ranges of $(x,\xi)$ and of $n$ are both of order
$\Delta^{-1/2}$, which is what makes the Gaussian estimates of
Section~\ref{subsec:fe_characteristic} uniform across the two infinite
factors.  The finite factor $\mathbb Z_{\dia c}$ needs no damping.

By Proposition~\ref{prop:fe_smoothing} in
Appendix~\ref{appx:proof_fe_smoothing}, $\mathbf E_\Delta$ extends to
$\mathscr S'(M)$ by transposition, and every $u\in\mathscr S'(M)$ satisfies
\begin{equation}
 \mathbf E_\Delta u\in\mathscr S(M)\subseteq L^2(M),
 \qquad
 \mathbf E_\Delta u\longrightarrow u
 \ \text{ in }\mathscr S'(M)
 \quad(\Delta\downarrow0).
 \label{eq:fe_smoothing_convergence}
\end{equation}
In particular every $\mathbf E_\Delta u$, once normalized, has finite
expectation of $\mathbf N+|\mathbf Q|^2$ and of all of its powers: the
coefficients decay rapidly, while the eigenvalues grow polynomially.  This is
the precise sense in which the regularized codewords produced below have
finite energy.

\subsection{The normalized Weyl characteristic function}
\label{subsec:fe_characteristic}

Set
\begin{equation}
 \mathbf A_\Delta:=\mathbf E_\Delta^2
 =e^{-2\Delta\mathbf N}\otimes e^{-\pi\Delta|\mathbf Q|^2}\otimes
   \mathbf 1_{\ell^2(\mathbb Z_{\dia c})},
 \qquad
 \mathcal Z_\Delta:=\operatorname{Tr}\mathbf A_\Delta .
 \label{eq:fe_partition_operator}
\end{equation}
The diagonal series factors, so
\begin{equation}
 \mathcal Z_\Delta=C_{\dia c}\,(1-e^{-2\Delta})^{-p}\,\vartheta_q(\Delta),
 \qquad
 \vartheta_q(\Delta):=\sum_{n\in\mathbb Z^q}e^{-\pi\Delta|n|^2}.
 \label{eq:fe_partition_function}
\end{equation}
Both factors are finite for every $\Delta>0$; hence $\mathbf A_\Delta$ is
trace class and $\mathbf E_\Delta$ is Hilbert--Schmidt.  Poisson summation
\cite[Eqs.~1.8.14 and 1.8.16]{DLMF} in the Fourier convention
$\widehat f(\xi)=\int f(x)e^{-2\pi ix\cdot\xi}\,dx$ gives
\begin{equation}
 \vartheta_q(\Delta)
 =\Delta^{-q/2}\sum_{\ell\in\mathbb Z^q}e^{-\pi|\ell|^2/\Delta}
 =\Delta^{-q/2}\bigl(1+O(e^{-\pi/\Delta})\bigr),
 \qquad
 \mathcal Z_\Delta\sim C_{\dia c}\,(2\Delta)^{-p}\Delta^{-q/2}.
 \label{eq:fe_partition_asymptotics}
\end{equation}
Empty products and sums over trivial factors have value $1$.

\begin{definition}
For $z\in G$ put
\begin{equation}
 \Xi_\Delta(z)
 :=\frac{\operatorname{Tr}\bigl(\mathbf A_\Delta\mathbf D(z)\bigr)}
        {\mathcal Z_\Delta}.
 \label{eq:fe_characteristic_function}
\end{equation}
\end{definition}

In the notation of~\eqref{eq:we_characteristic} in the weight-enumerator
Section~\ref{sec:weight_enumerators}, this is
$\Xi_\Delta=\mathcal Z_\Delta^{-1}\Xi_{\mathbf A_\Delta}
=\Xi_{\mathbf A_\Delta/\mathcal Z_\Delta}$, the characteristic function
of the normalized density operator $\mathbf A_\Delta/\mathcal Z_\Delta$.

Since $\mathcal Z_\Delta^{-1}\mathbf A_\Delta$ is a density operator and
$\mathbf D(z)$ is unitary, $\Xi_\Delta(0)=1$ and $|\Xi_\Delta(z)|\leq1$.
Because $\mathbf A_\Delta$ is a tensor product and
$\mathbf D(z)=\mathbf M_s\mathbf T_m$ factors along
$M=\mathbb R^p\times\mathbb Z^q\times\mathbb Z_{\dia c}$, writing
$z=(x,n,w;\xi,\theta,\chi)$ gives
\begin{equation}
 \Xi_\Delta(z)
 =\Xi^{\mathbb R}_\Delta(x,\xi)\,
  \Xi^{\mathbb Z}_\Delta(n,\theta)\,
  \Xi^{\mathrm{fin}}(w,\chi).
 \label{eq:fe_characteristic_factorization}
\end{equation}

The vanishing of the rotor and finite factors
(Lemma~\ref{lem:fe_three_factors} in
Appendix~\ref{appx:proof_fe_three_factors}) implies that
$\Xi_\Delta(z)=0$ outside the closed subgroup
\begin{equation}
 H:=\{z=(x,n,w;\xi,\theta,\chi)\in G: n=0,\ w=0,\ \chi=1\}
 \cong\mathbb R^{2p}\times\mathbb T^q.
 \label{eq:fe_contributing_subgroup}
\end{equation}
Thus only points of $H$ can contribute nonzero terms to sums of
$\Xi_\Delta$ over a lattice.  Define an extended radius
\begin{equation}
 \operatorname{rad}(z)^2
 :=|x|^2+|\xi|^2+\operatorname{dist}(\theta,0)^2 \quad(z\in H),
 \qquad
 \operatorname{rad}(z):=+\infty \quad(z\notin H),
 \label{eq:fe_radius}
\end{equation}
with $e^{-\infty}=0$.  Lemma~\ref{lem:fe_three_factors} then supplies
constants $C,c>0$, independent of $z$ and of $0<\Delta\leq1$, with
\begin{equation}
 |\Xi_\Delta(z)|\leq C\,e^{-c\operatorname{rad}(z)^2/\Delta}.
 \label{eq:fe_pointwise_decay}
\end{equation}
This is pointwise decay away from the identity; $\Xi_\Delta$ is \emph{not} an
approximate identity for Haar measure, and pointwise decay alone does not
control an infinite lattice sum.  The next lemma supplies the missing
summable majorant.

\begin{lemma}
\label{lem:fe_lattice_sums}
Let $\Gamma\subseteq G$ be a closed discrete subgroup.  Then there is $a>0$
with
\begin{equation}
 \sum_{\gamma\in\Gamma\setminus\{0\}}|\Xi_\Delta(\gamma)|
 =O(e^{-a/\Delta}),
 \label{eq:fe_lattice_sum}
\end{equation}
and for every fixed $u\in G\setminus\Gamma$ there is $a_u>0$ with
\begin{equation}
 \sum_{\gamma\in\Gamma}|\Xi_\Delta(u+\gamma)|=O(e^{-a_u/\Delta}).
 \label{eq:fe_coset_sum}
\end{equation}
Both series converge absolutely for every $\Delta>0$.
\end{lemma}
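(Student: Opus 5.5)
The plan is to reduce both sums to Gaussian lattice sums over the subgroup $H\cong\mathbb R^{2p}\times\mathbb T^q$ of Eq.~\eqref{eq:fe_contributing_subgroup}. By Lemma~\ref{lem:fe_three_factors}, $\Xi_\Delta$ vanishes off $H$, so only $\Gamma_H:=\Gamma\cap H$ contributes to \eqref{eq:fe_lattice_sum}, and only $(u+\Gamma)\cap H$ contributes to \eqref{eq:fe_coset_sum}. The latter set is either empty or a single coset $u'+\Gamma_H$ with $u'\in H$. Since $H$ is closed, $\Gamma_H$ is a closed discrete subgroup of $H$.

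The key structural step is to lift to the universal cover. Let $p_H:\mathbb R^{2p+q}\to H$ reduce the angle coordinates modulo $\mathbb Z^q$. Then $\widetilde\Gamma:=p_H^{-1}(\Gamma_H)$ is a discrete subgroup of $\mathbb R^{2p+q}$ containing $\{0\}\times\mathbb Z^q$, hence a discrete subgroup of a Euclidean space. Each $\gamma\in\Gamma_H$ has a lift $v\in\widetilde\Gamma$ with $\operatorname{rad}(\gamma)=\|v\|_2$, namely the lift whose angle entries lie in $[-\tfrac12,\tfrac12]$. Distinct $\gamma$ have distinct such lifts.

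From the pointwise bound \eqref{eq:fe_pointwise_decay}, the sum in \eqref{eq:fe_lattice_sum} is at most
\[
 C\sum_{v\in\widetilde\Gamma\setminus\{0\}}e^{-c\|v\|_2^2/\Delta}.
\]
Discreteness gives a minimal length $r_0>0$ on $\widetilde\Gamma\setminus\{0\}$, and the standard packing count $\#\{v\in\widetilde\Gamma:\|v\|\le R\}\le C'(1+R/r_0)^{2p+q}$ holds. Summing in radial shells gives, for $\Delta\le1$,
\[
 \sum_{v\neq0}e^{-c\|v\|^2/\Delta}\le e^{-c r_0^2/(2\Delta)}\sum_{v}e^{-c\|v\|^2/2}=O(e^{-a/\Delta}),\qquad a=cr_0^2/2,
\]
using $\|v\|^2/\Delta\ge \|v\|^2/2+\|v\|^2/(2\Delta)$. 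This is exactly the summable majorant that the pointwise bound alone lacks.

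For \eqref{eq:fe_coset_sum}, assume the contributing set is nonempty. The lifts of $u'+\Gamma_H$ then form a translate $\tilde u+\widetilde\Gamma$ that misses $0$, because $u\notin\Gamma$. A discrete coset's distance to $0$ is attained, so it is some $r_u>0$. The same shell argument, centered at the nearest point, gives $O(e^{-cr_u^2/(2\Delta)})$.

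Absolute convergence for every $\Delta>0$ follows similarly. Lemma~\ref{lem:fe_three_factors} is stated for $\Delta\le1$, so for larger $\Delta$ one either uses its analogue at that $\Delta$ with $\Delta$-dependent constants, or directly the explicit Gaussian form of $\Xi^{\mathbb R}_\Delta$ and the theta-type $\Xi^{\mathbb Z}_\Delta$. Either way, each term is a Gaussian in $\|v\|$ and the lattice-point count is polynomial.

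I expect the main obstacle to be the lifting step. One must check that $\widetilde\Gamma$ is genuinely discrete in $\mathbb R^{2p+q}$, and that choosing minimal angle representatives does not merge or lose terms. The minimal length $r_0$ must also be positive even though the torus directions wrap. Discreteness of $\Gamma_H$ in $H$, together with the local homeomorphism $p_H$, handles this. Torsion of $\Gamma$ lying in $\{0\}\times\mathbb T^q$ is harmless, since it lifts to points of $\widetilde\Gamma$ that are bounded away from $0$.
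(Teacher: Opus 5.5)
Your proof is correct and follows essentially the same route as the paper. Both arguments restrict to $H$, extract a positive gap from discreteness, bound the lattice-point count polynomially so that a fixed Gaussian sum is finite, and split $e^{-c\operatorname{rad}^2/\Delta}\le e^{-c\delta^2/(2\Delta)}e^{-c\operatorname{rad}^2/2}$ for $\Delta\le1$; the only difference is that you count points by lifting to $\mathbb R^{2p+q}$ and using Euclidean packing, whereas the paper counts directly in $H$ by covering $\{\operatorname{rad}<L\}$ with translates of $[0,1]^{2p}\times\mathbb T^q$, each of which meets the set in boundedly many points.
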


The proof is given in Appendix~\ref{appx:proof_fe_lattice_sums}.

\subsection{The stabilizer form and the normalized ideal encoding}
\label{subsec:fe_encoding}

We recall the following well-known fact about closed discrete subgroups
of elementary LCA groups.
\begin{lemma}
\label{lem:fe_torsion}
A closed discrete subgroup $\Gamma\subseteq G$ is finitely generated; in
particular $D\cong\mathbb Z^r\oplus\operatorname{Tor}(D)$ with
$\operatorname{Tor}(D)$ finite.  We write
\begin{equation}
 \sigma_D:=|\operatorname{Tor}(D)|.
 \label{eq:fe_torsion_order}
\end{equation}
\end{lemma}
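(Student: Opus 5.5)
The plan is to reduce to the structure theory of closed subgroups of $G\cong\R^{2p}\times(\Z^q\times\T^q)\times\Z_{\dia c}^2$ via the covering group $\widetilde G$ of Eq.~\eqref{eq:full_lift_covering}. Let $\Gamma\subseteq G$ be closed and discrete, and set $\widetilde\Gamma:=\pi^{-1}(\Gamma)\subseteq\widetilde G\subseteq\R^{2(p+q+k)}$. The first step is to check that $\widetilde\Gamma$ is discrete. The kernel $\Lambda_{\mathrm{per}}$ is discrete, and $\pi$ is a local homeomorphism near $0$, since $\widetilde G$ is a closed subgroup of a real vector space and $\pi$ is a covering. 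Hence a neighborhood $U$ of $0$ in $\widetilde G$ maps homeomorphically onto a neighborhood of $0$ in $G$ meeting $\Gamma$ only at $0$. So $U\cap\widetilde\Gamma\subseteq\ker\pi$, and shrinking $U$ gives $U\cap\widetilde\Gamma=\{0\}$.

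A discrete subgroup of a real vector space is a free abelian group of finite rank, so $\widetilde\Gamma\cong\Z^s$ with $s\le 2(p+q+k)$. Then $\Gamma=\pi(\widetilde\Gamma)\cong\widetilde\Gamma/\Lambda_{\mathrm{per}}$ is a quotient of a finitely generated abelian group, hence finitely generated. This is the only place where the special form of $M$ enters, and it mirrors part~(i) of Lemma~\ref{lem:full_lift_pfaffian}, which treats the cocompact case.

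The structure theorem for finitely generated abelian groups then gives $D\cong\Z^r\oplus\operatorname{Tor}(D)$ with $\operatorname{Tor}(D)$ finite. This applies to $D$ because an isotropic lattice is by definition discrete, and discrete subgroups of locally compact groups are closed. One can also describe the torsion concretely, as after Lemma~\ref{lem:full_lift_pfaffian}: $\operatorname{Tor}(D)\cong(\widetilde D\cap\operatorname{span}_\R\Lambda_{\mathrm{per}})/\Lambda_{\mathrm{per}}$, a quotient of two lattices of equal rank in the same real span, so it is finite. This description is optional.

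The main obstacle is the discreteness of the full lift, specifically that $\pi$ is open and injective near $0$. This holds because $\ker\pi=\Lambda_{\mathrm{per}}$ is discrete in $\widetilde G$ and $\pi$ is a continuous surjective homomorphism of second-countable locally compact groups, hence open by the open mapping theorem. Once that is in place, the remaining steps are standard.
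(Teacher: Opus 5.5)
Your argument is correct. The paper states this lemma without proof, calling it a well-known fact. Your route is the same lifting argument the paper uses in part~(i) of Lemma~\ref{lem:full_lift_pfaffian}: the neighborhood $U_\epsilon$ gives $\pi^{-1}(D)\cap U_\epsilon=\{0\}$, a step that uses only discreteness and not cocompactness, and you then apply the structure theorem; your optional torsion description is the one recorded right after that lemma.
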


\begin{lemma}
\label{lem:fe_matrix_coefficients}
For every integer $L\geq0$ there are $R_L\in\mathbb Z_{\geq0}$ and $C_L>0$
such that, for all $\varphi,\psi\in\mathscr S(M)$ and all
$z=(x,n,w;\xi,\theta,\chi)\in G$,
\begin{equation}
 |\langle\varphi,\mathbf D(z)\psi\rangle|
 \leq C_L\,q_{R_L}(\varphi)q_{R_L}(\psi)\,
   (1+|x|+|\xi|+|n|)^{-L},
 \label{eq:fe_matrix_coefficient_decay}
\end{equation}
uniformly in $(w,\theta,\chi)$.  Consequently, for every closed discrete
$\Gamma\subseteq G$ there are $R,C$ with
\begin{equation}
 \sum_{\gamma\in\Gamma}|\langle\varphi,\mathbf D(\gamma)\psi\rangle|
 \leq C\,q_R(\varphi)q_R(\psi).
 \label{eq:fe_matrix_coefficient_summability}
\end{equation}
\end{lemma}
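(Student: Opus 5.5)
The proof has two parts. First we establish the pointwise decay estimate~\eqref{eq:fe_matrix_coefficient_decay} by reducing to the three tensor factors. Then we obtain the summability~\eqref{eq:fe_matrix_coefficient_summability} by counting lattice points.

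\emph{Step 1: reduction to a rapidly decaying scalar integrand.} Write $\mathbf D(z)=\mathbf M_s\mathbf T_m$ with $m=(x,n,w)$ and $s=(\xi,\theta,\chi)$, and expand
\[
 \langle\varphi,\mathbf D(z)\psi\rangle
 =\sum_{n'\in\Z^q}\sum_{f\in\Z_{\dia c}}\int_{\R^p}\overline{\varphi(y,n',f)}\,
 \langle (y,n',f),s\rangle\,\psi(y+x,n'+n,f+w)\,dy .
\]
The finite sum over $f$ costs at most the factor $C_{\dia c}$. The character factor has modulus one, so $(w,\theta,\chi)$ drop out, which gives the claimed uniformity.

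\emph{Step 2: decay in $(x,n)$.} Use the elementary bound $1+|x|+|n|\leq(1+|y|+|n'|)(1+|y+x|+|n'+n|)$. Bound $|\varphi|\leq q_{R}(\varphi)(1+|y|+|n'|)^{-R}$, and similarly for $\psi$ at the shifted point. This gives a factor $(1+|x|+|n|)^{-L}$ times a convergent sum-integral of $(1+|y|+|n'|)^{-(R-L)}$, provided $R-L>p+q$.

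\emph{Step 3: decay in $\xi$.} This is the step I expect to be the main obstacle, because it needs derivatives while keeping the $(x,n)$ decay. Integrate by parts in $y$ using $e^{2\pi i\xi\cdot y}=(1+4\pi^2|\xi|^2)^{-N}(1-\Delta_y)^Ne^{2\pi i\xi\cdot y}$. The derivatives then land on $\overline{\varphi(y,\cdot)}\psi(y+x,\cdot)$. By Leibniz, each term is a product of derivatives of order at most $2N$, each controlled by $q_R$ with $R\geq 2N$. Applying Step 2 to each such product gives decay $(1+|\xi|)^{-2N}(1+|x|+|n|)^{-L}$.

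Combining Steps 2 and 3 with $2N\geq L$ and $(1+|x|+|\xi|+|n|)\leq(1+|x|+|n|)(1+|\xi|)$ yields~\eqref{eq:fe_matrix_coefficient_decay} with $R_L=2L+p+q+1$. When $p=0$, Step 3 is vacuous.

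\emph{Step 4: summation over $\Gamma$.} The quotient map $G\to\R^{2p}\times\Z^q$, $\gamma\mapsto(x,\xi,n)$, has compact kernel $\{0\}\times\T^q\times\Z_{\dia c}\times\widehat{\Z_{\dia c}}$. Since $\Gamma$ is discrete, its intersection with any compact set is finite. Consequently the number of $\gamma\in\Gamma$ with $|x|+|\xi|+|n|\leq r$ is $O((1+r)^{2p+q})$: the preimage of a ball is compact, so finite, and a covering argument by unit cells gives the polynomial growth. The same growth bound follows from Lemma~\ref{lem:fe_torsion}, via finite generation and a standard volume comparison.

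Finally, choose $L>2p+q$ in~\eqref{eq:fe_matrix_coefficient_decay} and sum by dyadic shells. This gives~\eqref{eq:fe_matrix_coefficient_summability} with $R=R_L$ and $C=C_L\sum_\gamma(1+|x_\gamma|+|\xi_\gamma|+|n_\gamma|)^{-L}<\infty$.
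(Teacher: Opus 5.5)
Your proposal is correct and follows essentially the same route as the paper's proof. Like the paper, you expand the coefficient as a sum over $\Z^q\times\Z_{\dia c}$ of integrals over $\R^p$, get decay in $(x,n)$ from $1+|x|+|n|\leq(1+|y|+|n'|)(1+|y+x|+|n'+n|)$, get decay in $\xi$ by integrating by parts with $(1-\nabla_y^2)^N$, and obtain summability by counting points of $\Gamma$ in shells modulo the compact factor $\T^q\times\Z_{\dia c}\times\widehat{\Z_{\dia c}}$; the only cosmetic difference is that your dyadic shells need $L>2p+q$, whereas the paper's unit shells use $L>2p+q+1$.
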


The proof is given in Appendix~\ref{appx:proof_fe_matrix_coefficients}.

\begin{definition}
\label{def:fe_averaging_form}
Let $\Pi_D:\mathscr S(M)\to\mathscr S'(M)$ be defined by
\begin{equation}
 \langle\Pi_D\psi,\varphi\rangle
 :=\frac{1}{\sigma_D}
   \sum_{s\in D}\langle\varphi,\widetilde{\mathbf D}(s)\psi\rangle .
 \label{eq:fe_averaging_form}
\end{equation}
\end{definition}

This is Rieffel's $C(\widehat D)$-valued inner product
from~\eqref{eq:fourier_transformed_rieffel_inner_product}, evaluated at
the trivial character $1\in\widehat D$ and divided by $\sigma_D$:
\[
 \langle\Pi_D\psi,\varphi\rangle
 =\frac{1}{\sigma_D}\,{}_{\mathrm R}\langle\psi,\varphi\rangle(1).
\]
The order of the arguments reflects the left-module inner-product
convention; the equality follows by reindexing $s\mapsto-s$ and using
$\widetilde{\mathbf D}(-s)=\widetilde{\mathbf D}(s)^\dagger$.

The series converges absolutely by
Lemma~\ref{lem:fe_matrix_coefficients}.  Absolute convergence also follows
from Rieffel's inner product taking values in $\mathscr S(D)\subset\ell^1(D)$.
It defines a jointly continuous
sesquilinear form, and reindexing it gives
$\widetilde{\mathbf D}(s)\Pi_D=\Pi_D$ for $s\in D$, so
$\operatorname{ran}\Pi_D\subseteq\mathcal C$; and since $\rho(s,z)=1$ for
$s\in D$, $z\in D^\perp$,
\begin{equation}
 \mathbf D(z)\,\Pi_D\,\mathbf D(z)^\dagger=\Pi_D
 \qquad(z\in D^\perp).
 \label{eq:fe_form_dual_invariance}
\end{equation}
For infinite $D$ the form $\Pi_D$ is not a bounded operator on $L^2(M)$ and
its square is not defined; we never use either.  The factor $\sigma_D^{-1}$
fixes a normalization, and only that choice: writing
$D\cong\mathbb Z^r\oplus\operatorname{Tor}(D)$, the sum
\eqref{eq:fe_averaging_form} is a bare Fourier comb in the $r$ free
directions and the \emph{averaged} projection
$\sigma_D^{-1}\sum_{s\in\operatorname{Tor}(D)}\widetilde{\mathbf D}(s)$ onto
the trivial torsion character in the remaining ones.  For finite $D$ it is the
usual orthogonal projection onto $\mathcal C$.

\begin{proposition}
\label{prop:fe_encoding}
Let $U:\mathbb C^K\to E_{D,1}$ be unitary.  Then
\[
 \operatorname{Enc}:=\sigma_D^{-1/2}\iota_1 U:
 \mathbb C^{K}\longrightarrow
 \mathcal C\subseteq\mathscr S'(M),
\]
is injective and satisfies
\begin{equation}
 \operatorname{Enc}\operatorname{Enc}^\dagger=\Pi_D
 \qquad\text{as maps }\mathscr S(M)\to\mathscr S'(M),
 \label{eq:fe_encoding_normalization}
\end{equation}
where, for $\Psi_j:=\operatorname{Enc}(e_j)$ and an orthonormal logical basis
$(e_j)$, $(\operatorname{Enc}^\dagger\varphi)_j=\overline{\langle\Psi_j,\varphi\rangle}$.
Equivalently, for all $\varphi,\psi\in\mathscr S(M)$,
\begin{equation}
 \sum_{j=1}^{K}\langle\Psi_j,\varphi\rangle
 \overline{\langle\Psi_j,\psi\rangle}
 =\frac{1}{\sigma_D}\sum_{s\in D}
   \langle\varphi,\widetilde{\mathbf D}(s)\psi\rangle.
 \label{eq:fe_encoding_factorization}
\end{equation}
\end{proposition}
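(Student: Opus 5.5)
The plan is to reduce the identity~\eqref{eq:fe_encoding_factorization} to Rieffel's fiberwise inner product at the trivial character. First, injectivity. $U$ is unitary and $\iota_1$ is injective by~\eqref{eq:rieffel_fiber_distribution_embedding}, so $\operatorname{Enc}$ is injective. Its range lies in $\mathcal C$ by Lemma~\ref{lem:trivial_fiber_codespace}.

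Next, the left-hand side of~\eqref{eq:fe_encoding_factorization}. Set $v_j:=U e_j$, an orthonormal basis of $E_{D,1}$. By the definition of $\iota_1$,
\[
 \langle\Psi_j,\varphi\rangle=\sigma_D^{-1/2}\langle[\varphi]_1,v_j\rangle_1 .
\]
Summing over $j$ and using Parseval in the finite-dimensional Hilbert space $E_{D,1}$ (Corollary~\ref{cor:vector_bundle} gives $\dim E_{D,1}=K$), the left-hand side becomes
\[
 \sigma_D^{-1}\langle[\psi]_1,[\varphi]_1\rangle_1
 =\sigma_D^{-1}\,{}_{\mathrm R}\langle\varphi,\psi\rangle(1),
\]
by~\eqref{eq:rieffel_fiber_scalar_inner_product}. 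The argument order here needs care: the scalar inner product is conjugate-linear in its first slot, and that order must match the conjugations in~\eqref{eq:fe_encoding_factorization}.

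Then we identify the right-hand side. By~\eqref{eq:fourier_transformed_rieffel_inner_product} and~\eqref{eq:fourier_transform_convention}, ${}_{\mathrm R}\langle\varphi,\psi\rangle(1)=\sum_{w\in D}{}_{\mathfrak A_D}\langle\varphi,\psi\rangle(w)$. Each term equals $\langle\widetilde{\mathbf D}(w)\psi,\varphi\rangle_{L^2}$ in the transported convention. We reindex $w\mapsto -w$ and use $\widetilde{\mathbf D}(-w)=\widetilde{\mathbf D}(w)^\dagger$, exactly as in the remark after Definition~\ref{def:fe_averaging_form}. The sum then becomes $\sum_{s\in D}\langle\varphi,\widetilde{\mathbf D}(s)\psi\rangle$, possibly up to complex conjugation. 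If a conjugation appears, it is absorbed by swapping the roles of $\varphi$ and $\psi$, and the sesquilinear structure of both sides checks this. Evaluating the Fourier transform pointwise at $\chi=1$ is legitimate because Rieffel's inner product lies in $\mathscr S(D)\subset\ell^1(D)$; Lemma~\ref{lem:fe_matrix_coefficients} gives the same absolute convergence.

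Finally, the operator form~\eqref{eq:fe_encoding_normalization} is a restatement. With $(\operatorname{Enc}^\dagger\varphi)_j=\overline{\langle\Psi_j,\varphi\rangle}$, the quantity $\langle\operatorname{Enc}\operatorname{Enc}^\dagger\psi,\varphi\rangle$ expands to $\sum_j\langle\Psi_j,\varphi\rangle\overline{\langle\Psi_j,\psi\rangle}$ in the anti-dual pairing. Comparing with Definition~\ref{def:fe_averaging_form} then gives~\eqref{eq:fe_encoding_normalization}.

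I expect the main obstacle to be the conjugation and argument-order bookkeeping, not the analysis. The pieces involved are the anti-dual pairing, the reversed-argument fiber inner product, the transported left inner product with the phases $c(w)$, and the $\overline{\chi(w)}$ Fourier convention. I would fix a single test case, such as a finite Pauli $D$ where $\Pi_D$ is the code projector, to confirm the signs and the phase $c$ come out consistently.
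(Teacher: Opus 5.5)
Your route is the paper's: injectivity of $\iota_1$, Parseval in $E_{D,1}$, the fiber inner product \eqref{eq:rieffel_fiber_scalar_inner_product} at $\chi=1$, and the reindexing $w\mapsto-w$ noted after Definition~\ref{def:fe_averaging_form}. Your Parseval step, however, has its arguments reversed.

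Since $\langle\Psi_j,\varphi\rangle=\sigma_D^{-1/2}\langle[\varphi]_1,v_j\rangle_1$ and $\overline{\langle\Psi_j,\psi\rangle}=\sigma_D^{-1/2}\langle v_j,[\psi]_1\rangle_1$, the sum over $j$ equals $\sigma_D^{-1}\langle[\varphi]_1,[\psi]_1\rangle_1$. This is $\sigma_D^{-1}\,{}_{\mathrm R}\langle\psi,\varphi\rangle(1)=\sigma_D^{-1}\sum_{w\in D}\langle\widetilde{\mathbf D}(w)\varphi,\psi\rangle$. After $w\mapsto-w$ it is exactly the right side of \eqref{eq:fe_encoding_factorization}, with no conjugation.

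Your order $\langle[\psi]_1,[\varphi]_1\rangle_1$ instead yields $\sigma_D^{-1}\sum_{s\in D}\langle\psi,\widetilde{\mathbf D}(s)\varphi\rangle$, which is the complex conjugate of the right side. That mismatch cannot be ``absorbed by swapping $\varphi$ and $\psi$.'' Both sides are conjugate-linear in $\varphi$, linear in $\psi$, and conjugate-symmetric under $\varphi\leftrightarrow\psi$. A swap therefore conjugates both sides at once and turns the false identity into an equivalent false identity. The sesquilinearity check you mention is what exposes the slip. Use it to fix the order in the Parseval step, after which the argument goes through as in the paper.
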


\begin{proof}
The map $\iota_1$ is injective and has range $\mathcal C$ by
Lemma~\ref{lem:trivial_fiber_codespace}.  Since $(Ue_j)$ is an orthonormal
basis of $E_{D,1}$, the embedding~\eqref{eq:rieffel_fiber_distribution_embedding}
and the fiber inner product~\eqref{eq:rieffel_fiber_scalar_inner_product} give
\[
 \begin{aligned}
 \sum_{j=1}^K\langle\Psi_j,\varphi\rangle
     \overline{\langle\Psi_j,\psi\rangle}
 &=\sigma_D^{-1}\langle[\varphi]_1,[\psi]_1\rangle_1\\
 &=\sigma_D^{-1}\,{}_{\mathrm R}\langle\psi,\varphi\rangle(1)
 =\langle\Pi_D\psi,\varphi\rangle.
 \end{aligned}
\]
This proves~\eqref{eq:fe_encoding_normalization}.
\end{proof}

We call an ideal encoding \emph{normalized} when it satisfies the
factorization~\eqref{eq:fe_encoding_normalization}. This convention fixes
the scale of the distributional codewords. For codewords outside $L^2(M)$,
it specifies a normalization without assigning a Hilbert-space norm.
Such an encoding exists for every isotropic lattice under consideration
and is fixed up to a unitary change of logical basis.
Section~\ref{subsec:fe_examples} gives explicit combs in this normalization
for the rotor GKP family.

\subsection{Asymptotic isometry}
\label{subsec:fe_isometry}

Throughout this subsection choose a normalized encoding as in
Proposition~\ref{prop:fe_encoding}.
Define the regularized encoding and its Gram matrix,
\begin{equation}
 \operatorname{Enc}_\Delta:=\mathbf E_\Delta\operatorname{Enc}
 :\mathbb C^{K}\longrightarrow\mathscr S(M)\subseteq L^2(M),
 \qquad
 \mathsf G_\Delta:=\operatorname{Enc}_\Delta^\dagger\operatorname{Enc}_\Delta .
 \label{eq:fe_regularized_encoding}
\end{equation}
These are ordinary Hilbert-space objects: the columns of
$\operatorname{Enc}_\Delta$ are the finite-energy codewords
$\Psi_j^\Delta:=\mathbf E_\Delta\Psi_j\in\mathscr S(M)$, and
$\mathsf G_\Delta$ is the positive definite $K\times K$ matrix of their inner
products. The exact stabilizer and decoding results of
Section~\ref{sec:distance_error_correction} concern the ideal code; this
finite-$\Delta$ encoding gives an approximate realization of that code.

\begin{lemma}
\label{lem:fe_trace_expansion}
For $\tilde u\in G$ the expression
$\mathsf T_{\Delta,\tilde u}
 :=\mathbf E_\Delta\mathbf D(\tilde u)^\dagger\Pi_D\mathbf E_\Delta$
extends to an operator of rank at most $K$ on $L^2(M)$, and
\begin{equation}
 \operatorname{Tr}\mathsf T_{\Delta,\tilde u}
 =\frac{\mathcal Z_\Delta}{\sigma_D}
  \sum_{s\in D}\zeta_{\tilde u}(s)\,\Xi_\Delta(s-\tilde u),
 \qquad |\zeta_{\tilde u}(s)|=1,
 \label{eq:fe_trace_expansion}
\end{equation}
the series converging absolutely.  For $\tilde u=0$ one has
$\zeta_0(0)=1$.
\end{lemma}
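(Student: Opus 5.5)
The plan is to compute the trace in the eigenbasis $e_{k,n,w}$ of $\mathbf E_\Delta$ and then substitute the averaging-form series of Definition~\ref{def:fe_averaging_form} term by term. First, the rank bound. By Proposition~\ref{prop:fe_encoding}, $\Pi_D=\operatorname{Enc}\operatorname{Enc}^\dagger$ as maps $\mathscr S(M)\to\mathscr S'(M)$. Hence, on $\mathscr S(M)$,
\[
 \mathsf T_{\Delta,\tilde u}
 =\bigl(\mathbf E_\Delta\mathbf D(\tilde u)^\dagger\operatorname{Enc}\bigr)
  \bigl(\operatorname{Enc}^\dagger\mathbf E_\Delta\bigr).
\]
The first factor maps $\mathbb C^K$ into $\mathscr S(M)$, because $\mathbf D(\tilde u)^\dagger$ extends to $\mathscr S'(M)$ and $\mathbf E_\Delta$ smooths distributions, as in Eq.~\eqref{eq:fe_smoothing_convergence}. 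The second factor is $\varphi\mapsto(\overline{\langle\Psi_j,\mathbf E_\Delta\varphi\rangle})_j=(\langle\Psi_j^\Delta,\varphi\rangle_{L^2})_j$ up to conjugation conventions, and it is bounded on $L^2(M)$ since $\Psi_j^\Delta\in L^2(M)$. The composition therefore extends to a bounded operator of rank at most $K$, whose trace is $\sum_j\langle\Psi_j^\Delta,\mathbf E_\Delta\mathbf D(\tilde u)^\dagger\Psi_j\rangle$.

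Second, I expand this trace in the orthonormal basis $e_{k,n,w}$, which lies in $\mathscr S(M)$. This gives $\operatorname{Tr}\mathsf T=\sum_{\alpha}a_\Delta(\alpha)\langle\Pi_D\mathbf E_\Delta e_\alpha,\mathbf D(\tilde u)e_\alpha\rangle$ (with the pairing orientation fixed by Eq.~\eqref{eq:fe_averaging_form}). Inserting Eq.~\eqref{eq:fe_averaging_form} yields
\[
 \frac1{\sigma_D}\sum_\alpha\sum_{s\in D}a_\Delta(\alpha)^2\,
 c(s)\,\langle\mathbf D(\tilde u)e_\alpha,\mathbf D(s)e_\alpha\rangle .
\]
By Eq.~\eqref{eq:ccr_semi_displacement_general}, $\mathbf D(\tilde u)^\dagger\mathbf D(s)$ is a phase times $\mathbf D(s-\tilde u)$. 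Call that phase, multiplied by $c(s)$, $\zeta_{\tilde u}(s)$; it has modulus one. For $\tilde u=0$ we have $\zeta_0(0)=c(0)=1$. Summing over $\alpha$ then gives $\operatorname{Tr}(\mathbf A_\Delta\mathbf D(s-\tilde u))=\mathcal Z_\Delta\Xi_\Delta(s-\tilde u)$, which is Eq.~\eqref{eq:fe_trace_expansion}.

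The main obstacle is justifying the interchange of $\sum_\alpha$ and $\sum_s$, and the absolute convergence. For the final series, absolute convergence follows from Lemma~\ref{lem:fe_lattice_sums} applied to $\Gamma=D$. If $\tilde u\in D$, reindex and use Eq.~\eqref{eq:fe_lattice_sum} plus the single term at the origin; if $\tilde u\notin D$, use Eq.~\eqref{eq:fe_coset_sum}. For the interchange, I would bound the double series $\sum_\alpha\sum_s a_\Delta(\alpha)^2|\langle e_\alpha,\mathbf D(s-\tilde u)e_\alpha\rangle|$. Here I would use Lemma~\ref{lem:fe_matrix_coefficients}, which gives decay in $s$ of the form $q_R(e_\alpha)^2(1+|x|+|\xi|+|n|)^{-L}$. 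The Schwartz seminorms $q_R(e_\alpha)$ of Hermite functions and of the deltas $\delta_n$ grow only polynomially in $|k|$ and $|n|$. The weights $a_\Delta(\alpha)^2$ decay exponentially in $|k|$ and in $|n|^2$, so the double series converges, and Fubini applies. A subtle point is the non-free directions of $D$ (torsion or $\mathbb T^q$ components), where there is no decay in $(\theta,\chi)$; but there only finitely many lattice points share the same $(x,\xi,n)$, since $D$ is discrete and the fiber over fixed $(x,\xi,n)$ is compact. The summability statement Eq.~\eqref{eq:fe_matrix_coefficient_summability} already encodes this.
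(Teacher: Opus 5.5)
Your proposal is correct and follows the paper's proof essentially step for step. It uses the factorization through $\operatorname{Enc}$ for the rank bound, the diagonal expansion in $e_{k,n,w}$, the Fubini justification via Lemma~\ref{lem:fe_matrix_coefficients} (polynomial growth of seminorms against the weights $a_\Delta^2$), and Lemma~\ref{lem:fe_lattice_sums} for absolute convergence. The one cosmetic difference is that the paper sums $|\langle\mathbf D(\tilde u)e_\alpha,\widetilde{\mathbf D}(s)e_\alpha\rangle|$ over the subgroup $D$, using continuity of $\mathbf D(\tilde u)$ on $\mathscr S(M)$, so that Eq.~\eqref{eq:fe_matrix_coefficient_summability} applies verbatim. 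Your sum over the coset $D-\tilde u$ instead needs the same shell-counting argument repeated for a translate of $D$.
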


The proof is given in Appendix~\ref{appx:proof_fe_trace_expansion}.

\begin{theorem}
\label{thm:fe_asymptotic_isometry}
For a normalized encoding as in Proposition~\ref{prop:fe_encoding}, put
\begin{equation}
 \nu_\Delta:=\frac{K\sigma_D}{\mathcal Z_\Delta}
 =\frac{K\sigma_D}{C_{\dia c}}\,
   \frac{(1-e^{-2\Delta})^{p}}{\vartheta_q(\Delta)}
 \ \sim\ \frac{K\sigma_D}{C_{\dia c}}\,(2\Delta)^{p}\Delta^{q/2}.
 \label{eq:fe_normalization_constant}
\end{equation}
Then there are $a,C>0$ and $\Delta_0>0$ such that
\begin{equation}
 \bigl\|\nu_\Delta\,\mathsf G_\Delta-\mathbf 1\bigr\|_{\mathrm{op}}
 \leq Ce^{-a/\Delta}
 \qquad(0<\Delta\leq\Delta_0),
 \label{eq:fe_gram_estimate}
\end{equation}
equivalently, for all $\varphi,\psi\in\mathbb C^{K}$,
\begin{equation}
 \bigl|\nu_\Delta\langle\operatorname{Enc}_\Delta\varphi,
   \operatorname{Enc}_\Delta\psi\rangle-\langle\varphi,\psi\rangle\bigr|
 \leq Ce^{-a/\Delta}\|\varphi\|\,\|\psi\|.
 \label{eq:fe_isometry_estimate}
\end{equation}
\end{theorem}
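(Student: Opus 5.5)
We reduce the Gram estimate to a trace computation on the logical space, using that the logical displacements form an orthogonal basis of $\operatorname{End}(E_{D,1})$ (Theorem~\ref{thm:logical_displacement_algebra}).

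\emph{Step 1: expand the Gram matrix in logical displacements.} Choose representatives $R\subset D^\perp$ of $\mathsf L_D$ with $0\in R$. By Theorem~\ref{thm:logical_displacement_algebra}, the Hermitian $K\times K$ matrix $\mathsf G_\Delta$ can be expanded as
\[
 \mathsf G_\Delta=\frac1K\sum_{\tilde u\in R}\operatorname{Tr}\bigl(\mathsf G_\Delta\,\mathbf L_{\tilde u}^{U\dagger}\bigr)\,\mathbf L_{\tilde u}^U,
\]
where $\mathbf L_{\tilde u}^U:=U^{-1}\mathbf L_{\tilde u}U$ acts on $\mathbb C^K$. Each coefficient is a trace of the kind computed in Lemma~\ref{lem:fe_trace_expansion}. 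Indeed, $\operatorname{Enc}\,\mathbf L^U_{\tilde u}=\widehat{\mathbf D(\tilde u)}\operatorname{Enc}$ by Eq.~\eqref{eq:fiber_embedding_covariance}. Cyclicity of the trace and $\operatorname{Enc}\operatorname{Enc}^\dagger=\Pi_D$ (Proposition~\ref{prop:fe_encoding}) then give
\[
 \operatorname{Tr}\bigl(\mathsf G_\Delta\mathbf L^{U\dagger}_{\tilde u}\bigr)
 =\operatorname{Tr}\bigl(\mathbf E_\Delta\mathbf D(\tilde u)^\dagger\Pi_D\mathbf E_\Delta\bigr)
 =\operatorname{Tr}\mathsf T_{\Delta,\tilde u}.
\]
This is justified on the rank-$K$ extension, testing against the Schwartz vectors $\mathbf E_\Delta e_{k,n,w}$ and using that $\mathbf E_\Delta$ is Hilbert--Schmidt. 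A possible phase, depending on conventions, is harmless because only moduli matter below.

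\emph{Step 2: isolate the leading term.} By Eq.~\eqref{eq:fe_trace_expansion},
\[
 \nu_\Delta\operatorname{Tr}\mathsf T_{\Delta,\tilde u}
 =K\sum_{s\in D}\zeta_{\tilde u}(s)\,\Xi_\Delta(s-\tilde u).
\]
\begin{itemize}
\item For $\tilde u=0$, the $s=0$ term equals $K\,\zeta_0(0)\Xi_\Delta(0)=K$. The remaining terms are bounded by $K\sum_{s\in D\setminus\{0\}}|\Xi_\Delta(s)|=O(e^{-a/\Delta})$, by Eq.~\eqref{eq:fe_lattice_sum} with $\Gamma=D$.
\item For $\tilde u\in R\setminus\{0\}$, we have $\tilde u\notin D$, so Eq.~\eqref{eq:fe_coset_sum} with $u=-\tilde u$ bounds the whole sum by $O(e^{-a_{\tilde u}/\Delta})$.
\end{itemize}
Since $R$ is finite, take $a=\min(a,\min_{\tilde u}a_{\tilde u})>0$. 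Then $\nu_\Delta\mathsf G_\Delta=\mathbf 1+\sum_{\tilde u}\epsilon_{\tilde u}(\Delta)\mathbf L^U_{\tilde u}$ with $|\epsilon_{\tilde u}|\le Ce^{-a/\Delta}$. Each $\mathbf L^U_{\tilde u}$ is unitary, hence has operator norm one. Summing at most $K^2$ terms gives Eq.~\eqref{eq:fe_gram_estimate}, and Eq.~\eqref{eq:fe_isometry_estimate} is its sesquilinear restatement. The closed form of $\nu_\Delta$ and its asymptotics follow from Eqs.~\eqref{eq:fe_partition_function}--\eqref{eq:fe_partition_asymptotics}.

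\emph{Main obstacle.} The delicate part is Step 1, not the estimates, which are already packaged in Lemmas~\ref{lem:fe_lattice_sums} and~\ref{lem:fe_trace_expansion}. It requires two things:
\begin{itemize}
\item checking that the distributional identity $\operatorname{Enc}\operatorname{Enc}^\dagger=\Pi_D$ survives sandwiching between $\mathbf E_\Delta$ and $\mathbf D(\tilde u)^\dagger$, so that the trace of $\mathsf T_{\Delta,\tilde u}$ equals the finite-dimensional trace on $\mathbb C^K$;
\item confirming that $\mathbf L^U_{\tilde u}$ is the pullback of $\widehat{\mathbf D(\tilde u)}$ restricted to $\mathcal C$.
\end{itemize}
I would verify the trace identity by writing $\operatorname{Tr}\mathsf T_{\Delta,\tilde u}=\sum_{j}\langle \Psi^\Delta_j,\mathbf E_\Delta\mathbf D(\tilde u)^\dagger\mathbf E_\Delta^{-1}\cdots\rangle$ directly in the orthonormal basis~\eqref{eq:fe_basis}. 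Concretely, I would expand $\Pi_D$ through Eq.~\eqref{eq:fe_encoding_factorization} with $\varphi,\psi$ chosen among the Schwartz vectors $\mathbf E_\Delta e_{k,n,w}$. Absolute convergence, by Lemma~\ref{lem:fe_matrix_coefficients} and the trace-class property of $\mathbf A_\Delta$, then justifies interchanging the sums.
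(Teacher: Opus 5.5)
Your proposal is correct and follows the paper's proof essentially step for step: you expand $\mathsf G_\Delta$ in the unitary logical displacement basis, identify each coefficient with $\operatorname{Tr}\mathsf T_{\Delta,\tilde u}$ via $\operatorname{Enc}\operatorname{Enc}^\dagger=\Pi_D$, and bound the identity and non-identity coefficients with Eqs.~\eqref{eq:fe_lattice_sum} and~\eqref{eq:fe_coset_sum}. The step you flag as the main obstacle is settled in one line, with no basis expansion and no unbounded $\mathbf E_\Delta^{-1}$: $\mathsf T_{\Delta,\tilde u}=(\mathbf E_\Delta\mathbf D(\tilde u)^\dagger\operatorname{Enc})\operatorname{Enc}_\Delta^\dagger$ is a finite-rank composition through $\mathbb C^K$, so $\operatorname{Tr}_{L^2(M)}(AB)=\operatorname{Tr}_{\mathbb C^K}(BA)$ applies.
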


The proof is given in Appendix~\ref{appx:proof_fe_asymptotic_isometry}.

For $M=\mathbb R^p$, this recovers the asymptotic-isometry result of
~\cite[Theorem~5.2]{mayrand2026complex}: the normalization
reduces to $\nu_\Delta=K(1-e^{-2\Delta})^p$, matching their
$c(\beta)=\det(\Lambda)(1-e^{-2\beta})^n$ with
$\beta=\Delta$, $n=p$, and $\det(\Lambda)=K$.
The present theorem extends this estimate to mixed oscillator, rotor,
and finite factors.

\begin{remark}
\label{rem:fe_interpretation}
The map $\sqrt{\nu_\Delta}\operatorname{Enc}_\Delta$ is an asymptotic
isometry.  In a logical orthonormal basis the finite-energy codewords
$\Psi_j^\Delta$ therefore have the \emph{same} squared norm
$\mathcal Z_\Delta/(K\sigma_D)$ to relative accuracy $O(e^{-a/\Delta})$, and
after rescaling by $\sqrt{\nu_\Delta}$, distinct columns have inner
products $O(e^{-a/\Delta})$. Thus the rescaled Gram matrix agrees with the
identity to every algebraic order in $\Delta$.  For
each fixed $\Delta>0$ an exact isometry is
$\operatorname{Enc}_\Delta\mathsf G_\Delta^{-1/2}$; the content of
Theorem~\ref{thm:fe_asymptotic_isometry} is that the scalar
$\sqrt{\nu_\Delta}$ already reproduces this matrix correction to
exponential relative accuracy.  Replacing $\nu_\Delta$ by its leading
asymptotics $K\sigma_D C_{\dia c}^{-1}(2\Delta)^p\Delta^{q/2}$ still gives
$\lim_{\Delta\downarrow0}
 \nu_\Delta\langle\operatorname{Enc}_\Delta\varphi,
 \operatorname{Enc}_\Delta\psi\rangle=\langle\varphi,\psi\rangle$,
but introduces a relative $O(\Delta)$ error when $p>0$.  The theorem asserts
the existence of some $a>0$ for each fixed code; it does not claim the
optimal exponent, nor a rate uniform over a family of lattices, which would
require tracking the quadratic form in \eqref{eq:fe_oscillator_factor}
against the geometry of the contributing cosets.
\end{remark}

\subsection{Examples}
\label{subsec:fe_examples}

\subsubsection*{Planar-rotor GKP codes}

For the single-rotor GKP lattice
$D=KN\Z\times(\tfrac1N\Z/\Z)$ of
Sec.~\ref{ex:rotor-gkp}, one has $\sigma_D=N$ and
$\mathcal Z_\Delta=\vartheta_1(\Delta)$.  Averaging the finite modulation
subgroup projects onto $N\Z$, so the combs~\eqref{eq:rotor-css-codewords}
satisfy
\begin{equation}
 \Pi_D=\sum_{j=0}^{K-1}\ket{j_L}\bra{j_L},
 \qquad
 \operatorname{Enc}(e_j)=\ket{j_L},
 \label{eq:fe_rotor_encoding}
\end{equation}
verifying~\eqref{eq:fe_encoding_normalization}.  Applying the regularizer
gives the finite-energy states
\begin{equation}
 \Psi_j^\Delta:=\mathbf E_\Delta\ket{j_L}
 =\sum_{k\in\Z}e^{-\frac\pi2\Delta N^2(j+Kk)^2}
   \ket{N(j+Kk)}_{\mathrm{rot}}.
 \label{eq:fe_rotor_finite_energy_codeword}
\end{equation}
Their disjoint supports make them exactly orthogonal, and Poisson
summation gives
\begin{equation}
 \begin{aligned}
 \|\Psi_j^\Delta\|^2
 &=\frac{1}{KN\sqrt\Delta}\sum_{\ell\in\Z}
   e^{-\pi\ell^2/(K^2N^2\Delta)}e^{2\pi i\ell j/K}\\
 &=\frac{1}{KN\sqrt\Delta}
   \bigl(1+O(e^{-\pi/(K^2N^2\Delta)})\bigr).
 \end{aligned}
 \label{eq:fe_rotor_norm}
\end{equation}
Thus Theorem~\ref{thm:fe_asymptotic_isometry} yields
\begin{equation}
 \nu_\Delta=\frac{K\sigma_D}{\mathcal Z_\Delta}
 =\frac{KN}{\vartheta_1(\Delta)}\sim KN\sqrt\Delta,
 \qquad
 \nu_\Delta\langle\Psi_i^\Delta,\Psi_j^\Delta\rangle
 =\delta_{ij}+O(e^{-\pi/(K^2N^2\Delta)}).
 \label{eq:fe_rotor_normalization}
\end{equation}
The Gaussian angular-momentum envelope is the planar-rotor
regularization studied by Xu, Wang, and Albert~\cite{rotorclifford},
up to the damping-parameter convention and choice of logical basis.
Albert, Covey, and Preskill~\cite{albert2020robust} use analogous
angular-momentum damping for rigid-rotor codes on $\mathrm{SO}(3)$, whose
nonabelian configuration space lies beyond the present LCA setting.

\subsubsection*{The torsion-extended hybrid qubit}

For the $K=2$ code of Sec.~\ref{ex:torsion},
apply $\mathbf E_\Delta$ to the ideal combs in
\eqref{eq:hybrid-torsion-logical-qubit} and normalize.  The resulting
finite-energy wavefunctions are
\begin{equation}
 \psi_{k,\Delta}(x,n,j)\propto
 e^{-\pi\Delta n^2/2}\,\mathbf1_{\{j=n\bmod2\}}
 \sum_{\ell\in\Z}M_\Delta(x,4\ell+2k+n),
 \qquad k=0,1,
 \label{eq:fe_hybrid_qubit_codewords}
\end{equation}
where, for our normalized Hermite functions, Mehler's
formula~\cite[Eq.~18.18.28]{DLMF} gives the kernel of
$e^{-\Delta\mathbf N}$ in the completed-square form
\begin{equation}
 M_\Delta(x,y)
 =\frac{1}{\sqrt{\pi(1-e^{-2\Delta})}}\,
  e^{-\frac12(\tanh\Delta)y^2}\,
  e^{-\frac12(\coth\Delta)(x-y\operatorname{sech}\Delta)^2}.
 \label{eq:fe_mehler_kernel}
\end{equation}
It smooths and envelopes the oscillator comb, while the factor
$e^{-\pi\Delta n^2/2}$ damps the rotor angular momentum and the physical
qubit is unchanged.  By Proposition~\ref{prop:fe_smoothing}, these states
lie in $\mathscr S(M)$ and have finite expectation of
$\mathbf N+\mathbf Q^2$ for every $\Delta>0$.

\section{Conclusion \& Future work}

We formalize stabilizer codes on general phase spaces --- generalized locally compact Abelian (GLCA) codes --- encompassing stabilizer codes on oscillators, rotors, and qudits, as well as their various hybrid combinations. 
We also provide several interesting examples of hybrid codes that are supported, in part, by rotor factors.
Several of these examples, like oscillator-qudit LCA codes, couple their subsystems intrinsically: no generalized Clifford operator transforms them into separate subsystem codes.

LCA codes were shown numerically to provide advantages in their error-correcting capabilities vs GKP codes.
It would be interesting to investigate whether GLCA codes inherit similar advantages and under what conditions, especially given that well-controlled rotor systems are not yet readily available or scalable.

The hybrid Golay code of Sec.~\ref{ex:hybrid-golay} builds one hybrid lattice from the binary and quaternary Golay codes.
Its signed-permutation symmetry group is the double cover \(2.M_{12}\) of the sporadic Mathieu group \(M_{12}\).
It would be interesting to catalog other sporadic or exceptional hybrid lattices over various combinations of factor types.

\begin{acknowledgments}
S.C.\@ acknowledges funding from the Anusandhan National Research Foundation (ANRF), Government of India, under the
Prime Minister Early Career Research Grant (File no. ANRF/ECRG/2024/006557/PMS).
V.V.A.\@ acknowledges NSF
grants OMA-2120757 (QLCI) and CIF-2330909.

Parts of this manuscript were prepared with the assistance of ChatGPT 5.6,
developed by OpenAI, and Claude Code, developed by Anthropic. These agents
were used to verify proofs and edit the writing of some sections, in
accordance with the authors' instructions. All ideas presented in this work
originate with the authors. All content, claims, and
conclusions have been reviewed and verified by the authors to ensure
accuracy and originality. Certain products, commercial and otherwise, are
mentioned in this publication. These mentions are for informational
purposes only, and do not imply recommendation or endorsement by NIST.
\end{acknowledgments}

\vspace{.2cm}

\bibliography{references}

\newpage
\appendix
\onecolumngrid


\newpage
 \section*{Appendices}

\section{Proofs and technical details}

\subsection{Proof of Theorem~\ref{thm:rieffel_bundle_decomposition}}
\label{appx:proof_rieffel_bundle_decomposition}

\begin{proof}
Equations~\eqref{eq:rieffel_fiber_norm},
\eqref{eq:canonical_trace_haar}, and \eqref{eq:rieffel_trace_L2_norm} give
\begin{align}
 \|\mathcal U_D^0\xi\|_{L^2(\widehat D,\mathscr E_D)}^2
 &=\int_{\widehat D}\|[\xi]_\chi\|^2\,d\chi \notag\\
 &=\tau_D\!\left({}_{\mathrm R}\langle\xi,\xi\rangle\right)
 =\|\xi\|_{L^2(M)}^2.
 \label{eq:rieffel_fiberization_isometry}
\end{align}
Thus $\mathcal U_D^0$ is an isometry for the scalar $L^2$-norm.
Since $\mathscr S(M)$ is dense in $L^2(M)$, it extends uniquely to an
isometry
\[
 \mathcal U_D:L^2(M)\longrightarrow
 L^2(\widehat D,\mathscr E_D).
\]
The range of $\mathcal U_D$ is closed.  Indeed, if
$\mathcal U_Dx_n\to y$, then
\[
 \|x_n-x_m\|_{L^2(M)}
 =\|\mathcal U_Dx_n-\mathcal U_Dx_m\|_{L^2(\widehat D,\mathscr E_D)},
\]
so $(x_n)$ is Cauchy.  Completeness gives $x_n\to x\in L^2(M)$, and
continuity gives $y=\mathcal U_Dx$.  It remains to prove that this closed
range is dense.

Set
\[
 \mathcal A:=\mathcal U_D^0(\mathscr S(M)),
 \qquad
 \mathcal B:=\Gamma(\mathscr E_D).
\]
Since $\mathscr S(M)$ is dense in $\Omega_D$ in the module norm, the
Serre--Swan identification~\eqref{eq:serre_swan_evaluation_map} shows that
$\mathcal A$ is uniformly dense in $\mathcal B$.  Uniform convergence implies
$L^2$-convergence because Haar measure on $\widehat D$ is normalized:
\[
 \|b-a\|_{L^2(\widehat D,\mathscr E_D)}^2
 =\int_{\widehat D}\|b(\chi)-a(\chi)\|^2\,d\chi
 \leq\|b-a\|_\infty^2.
\]
Consequently,
\begin{equation}
 \mathcal B\subseteq
 \overline{\mathcal A}^{\,L^2(\widehat D,\mathscr E_D)}.
 \label{eq:uniform_density_implies_L2_density}
\end{equation}
It is well known that $\mathcal B$ is dense in
$L^2(\widehat D,\mathscr E_D)$.  
Equation~\eqref{eq:uniform_density_implies_L2_density} now shows that
$\mathcal A$, and hence the range of $\mathcal U_D$, is dense in $L^2(\widehat D,\mathscr E_D)$.  Since that
range is also closed, it equals the whole target space.  Thus $\mathcal U_D$
is the unitary in \eqref{eq:rieffel_direct_integral}.

It remains to prove the two intertwining formulas.  Let
$f\in C(\widehat D)$ and $\xi\in\Omega_D$.  Since
$f-f(\chi)1\in I_\chi$, the definition
$E_{D,\chi}=\Omega_D/\overline{I_\chi\Omega_D}$ gives
\begin{equation}
 [f\cdot\xi]_\chi-f(\chi)[\xi]_\chi
 =[(f-f(\chi)1)\cdot\xi]_\chi=0.
 \label{eq:evaluation_turns_module_action_scalar}
\end{equation}
Thus evaluation at $\chi$ turns the $C(\widehat D)$-module action into
multiplication by the scalar $f(\chi)$.

Now let $a\in\mathscr S(D)$ and $\xi\in\mathscr S(M)$, so that
$a\cdot\xi\in\mathscr S(M)$. Under the
Fourier identification, the action of $a$ is the action of
$\widehat a=\mathcal F_Da\in C(\widehat D)$.  Therefore
\begin{align}
 \bigl(\mathcal U_D(a\cdot\xi)\bigr)(\chi)
 &=[a\cdot\xi]_\chi \notag\\
 &=[\widehat a\cdot\xi]_\chi \notag\\
 &=\widehat a(\chi)[\xi]_\chi \notag\\
 &=\widehat a(\chi)(\mathcal U_D\xi)(\chi),
 \label{eq:fiberwise_action_calculation}
\end{align}
where the third equality is
\eqref{eq:evaluation_turns_module_action_scalar}.  This proves
\eqref{eq:rieffel_fiberwise_action} on the dense subspace
$\mathscr S(M)$ for every $\chi$, because all terms in
\eqref{eq:fiberwise_action_calculation} are continuous sections.
For a general $\xi\in L^2(M)$, choose
$\xi_n\in\mathscr S(M)$ with $\xi_n\to\xi$ in $L^2(M)$.  The action of $a$
is bounded, multiplication by $\widehat a$ is bounded on the direct integral,
and $\mathcal U_D$ is unitary.  Passing to the limit proves
\eqref{eq:rieffel_fiberwise_action} in the direct-integral sense, hence for
almost every $\chi$, initially for $a\in\mathscr S(D)$.
For arbitrary $a\in\mathrm C^*(D)$, choose $a_n\in\mathscr S(D)$
with $\|a_n-a\|_{\mathrm C^*(D)}\to0$. Their actions converge in
operator norm on $L^2(M)$, and $\widehat a_n\to\widehat a$ uniformly.
Passing to the limit therefore proves the intertwining formula for
every $a\in\mathrm C^*(D)$.

The trace identity also gives $\|\xi\|_2\leq\|\xi\|_{\Omega_D}$
on $\mathscr S(M)$. Approximation in the module norm thus converges
both in $L^2(M)$ and uniformly as sections, identifying $\mathcal U_D\xi$
with the section $\chi\mapsto[\xi]_\chi$ for $\xi\in\Omega_D$.
This identification is injective because Haar measure has full support.
For such $\xi$, both sides of the intertwining formula are continuous
sections, so their almost-everywhere equality holds at every $\chi$.

Finally take $a=u_w$.  By the phase-corrected action
\eqref{eq:phase_corrected_AD_action},
$u_w\cdot\xi=\widetilde{\mathbf D}(w)\xi$, while
\eqref{eq:fourier_transform_convention} gives
$\widehat u_w(\chi)=\overline{\chi(w)}$.  Substitution into
\eqref{eq:rieffel_fiberwise_action} yields
\[
 \bigl(\mathcal U_D\widetilde{\mathbf D}(w)\xi\bigr)(\chi)
 =\overline{\chi(w)}(\mathcal U_D\xi)(\chi),
\]
which is precisely \eqref{eq:rieffel_stabilizer_diagonalization}.
As above, this last identity holds for every $\chi$ when
$\xi\in\Omega_D$, and for almost every $\chi$ when $\xi$ is only in
$L^2(M)$.
\end{proof}

\subsection{Proof of Lemma~\ref{lem:trivial_fiber_codespace}}
\label{appx:proof_trivial_fiber_codespace}

\begin{proof}
Equation~\eqref{eq:rieffel_fiber_generalized_eigenvalue} gives one inclusion.
For the converse, Rieffel's smooth projectivity result
\cite[Propositions~3.7]{rieffel}, with the module actions interchanged,
gives finite dual frames $x_j,y_j\in\mathscr S(M)$ such that
\begin{equation}
 \xi=\sum_{j=1}^N a_j(\xi)\cdot x_j,
 \qquad
 a_j(\xi):={}_{\mathfrak{A}_D}\!\langle\xi,y_j\rangle\in\mathscr S(D).
 \label{eq:rieffel_smooth_dual_frame}
\end{equation}
If $u\in\mathcal C$ and $a=\sum_{w\in D}a(w)u_w\in\mathscr S(D)$,
then conjugate-linearity in the test function gives
$\langle u,a\cdot x\rangle=\sum_{w\in D}\overline{a(w)}
\langle u,\widetilde{\mathbf D}(w)x\rangle$, since the series for
$a\cdot x$ converges in the Schwartz topology.
The invariance of $u$ implies
$\langle u,\widetilde{\mathbf D}(w)x\rangle=\langle u,x\rangle$
for every $w\in D$.  Since $1$ is the trivial character,
$\widehat a(1)=\sum_{w\in D}a(w)$, and hence
$\langle u,a\cdot x\rangle=\overline{\widehat a(1)}\langle u,x\rangle$.
Hence
\begin{align}
 \langle u,\xi\rangle
 &=\sum_{j=1}^N\overline{\widehat{a_j(\xi)}(1)}
 \langle u,x_j\rangle \notag\\
 &=\sum_{j=1}^N
 \langle[\xi]_1,[y_j]_1\rangle\langle u,x_j\rangle.
 \label{eq:codespace_fiber_bound}
\end{align}
This is bounded by $C_u\|[\xi]_1\|$ and depends only on $[\xi]_1$.
Consequently,
\[
 L_u([\xi]_1):=\overline{\langle u,\xi\rangle}
\]
defines a bounded linear functional on $E_{D,1}$.  By the Riesz
representation theorem, there is a $v\in E_{D,1}$ such that
$L_u([\xi]_1)=\langle v,[\xi]_1\rangle$.  Taking complex conjugates gives
$\langle u,\xi\rangle=\langle[\xi]_1,v\rangle
=\langle\iota_1(v),\xi\rangle$, and hence $u=\iota_1(v)$.
\end{proof}

\subsection{Proof of Proposition~\ref{prop:rieffel_fiber_covariance}}
\label{appx:proof_rieffel_fiber_covariance}

\begin{proof}
For $w\in D$, the Weyl relation
\[
 \widetilde{\mathbf D}(w)\mathbf D(z)
 =\rho(w,z)\mathbf D(z)\widetilde{\mathbf D}(w)
\]
shows that conjugation by $\mathbf D(z)$ multiplies the $w$-th Fourier
coefficient of the $C(\widehat D)$-valued inner product by the character
$\kappa_z(w)=\overline{\rho(w,z)}$.  With the Fourier convention in
\eqref{eq:fourier_transform_convention}, this gives
\begin{equation}
 \bigl({}_{\mathrm R}\langle
 \mathbf D(z)\xi,\mathbf D(z)\eta\rangle\bigr)(\kappa_z\chi)
 =\bigl({}_{\mathrm R}\langle\xi,\eta\rangle\bigr)(\chi),
\end{equation}
first for $\xi,\eta\in\mathscr S(M)$ and then, by continuity, for
$\xi,\eta\in\Omega_D$.  In particular,
$\mathbf D(z)N_\chi=N_{\kappa_z\chi}$: one inclusion follows from the
displayed identity, and the reverse inclusion follows by applying the same
argument to $\mathbf D(z)^\dagger$.  Hence the formula in
\eqref{eq:rieffel_fiber_covariance} is independent of the representative
$\xi$.  The identity also gives
\[
 \langle V_{z,\chi}[\xi]_\chi,V_{z,\chi}[\eta]_\chi\rangle_{\kappa_z\chi}
 =\langle[\xi]_\chi,[\eta]_\chi\rangle_\chi.
\]
Thus $V_{z,\chi}$ is an isometry, and the map induced by
$\mathbf D(z)^\dagger$ is its inverse; therefore it is unitary.

We next check compatibility with the distributional embeddings.  Using
\eqref{eq:rieffel_fiber_distribution_embedding}, the preceding covariance
identity, and the definition of the distributional action, one obtains, for
every $v\in E_{D,\chi}$,
\begin{equation}
 \iota_{\kappa_z\chi}\!\left(V_{z,\chi}v\right)
 =\widehat{\mathbf D(z)}\,\iota_\chi(v).
\end{equation}
This is the asserted intertwining in $\mathscr S'(M)$.

Finally, let $\chi\in\widehat D$.  Surjectivity of $\kappa$ provides
$z\in G$ with $\kappa_z=\chi$.  If $u\in\mathcal C$, then for every $w\in D$,
\[
 \widehat{\widetilde{\mathbf D}(w)}\,
 \widehat{\mathbf D(z)}u
 =\rho(w,z)\widehat{\mathbf D(z)}
   \widehat{\widetilde{\mathbf D}(w)}u
 =\overline{\chi(w)}\widehat{\mathbf D(z)}u.
\]
Thus displacement by $z$ sends the codespace into the $\chi$-eigenspace.
Conversely, applying $\widehat{\mathbf D(z)}^{-1}$ to any vector in that
eigenspace gives a vector fixed by every
$\widehat{\widetilde{\mathbf D}(w)}$.  Lemma~\ref{lem:trivial_fiber_codespace}
and \eqref{eq:fiber_embedding_covariance} now identify these two spaces with
$E_{D,1}$ and $E_{D,\chi}$, respectively, proving
\eqref{eq:all_fibers_full_eigenspaces}.
\end{proof}

\subsection{Continuity of the Schwartz inclusion into Rieffel's module}
\label{appx:schwartz_heisenberg_continuity}

We prove here the continuity assertion used in
Section~\ref{a_sec:glca_ncg}.  Recall that
$M=\mathbb R^p\times\mathbb Z^q\times\mathbb Z_{\dia c}$ and that
$D\subset M\times\widehat M$ is an isotropic lattice.  In particular, $D$ is
a finitely generated discrete abelian group.  Throughout this appendix we use
the phase trivialization of Lemma~\ref{lem:phase_triv}: thus
$\mathfrak{A}_D$ is identified with the untwisted algebra $\mathrm C^*(D)$, and its canonical
unitary $u_w$ acts as the genuine representation
$\widetilde{\mathbf D}(w)=c(w)\mathbf D(w)$. Recall the Schwartz algebra
$\mathscr S(D)$ introduced in Section~\ref{a_sec:glca_ncg}.
Both $\mathscr S(M)$ and
$\mathscr S(D)$ are Fr\'echet spaces,
the inclusion $\mathscr S(M)\hookrightarrow L^2(M)$ is continuous, and
\begin{equation}
 \mathscr S(D)\hookrightarrow\ell^1(D)
 \label{eq:schwartz_D_l1_inclusion}
\end{equation}
is continuous.  Rieffel's construction
\cite[Section~2]{rieffel} also gives
\begin{equation}
 (\xi,\eta)\longmapsto
 \bigl(w\longmapsto{}_{\mathfrak{A}_D}\!\langle\xi,\eta\rangle(w)\bigr)
 =\bigl(w\longmapsto
 \langle\widetilde{\mathbf D}(w)\eta,\xi\rangle_{L^2(M)}\bigr)
 \label{eq:schwartz_rieffel_coefficient_map}
\end{equation}
as a well-defined sesquilinear map from
$\mathscr S(M)\times\mathscr S(M)$ to $\mathscr S(D)$.

We first record the continuity hidden in this well-definedness statement.
Fix $\eta\in\mathscr S(M)$ and denote by $T_\eta$ the linear map in the first
variable in \eqref{eq:schwartz_rieffel_coefficient_map}.  Its graph is closed.
Indeed, suppose that $\xi_n\to\xi$ in $\mathscr S(M)$ and
$T_\eta\xi_n\to a$ in $\mathscr S(D)$.  For every $w\in D$, continuity of
$\mathscr S(M)\hookrightarrow L^2(M)$, Cauchy--Schwarz, and unitarity of
$\widetilde{\mathbf D}(w)$ give
\begin{align}
 \left|(T_\eta\xi_n)(w)-(T_\eta\xi)(w)\right|
 &=\left|\left\langle
 \widetilde{\mathbf D}(w)\eta,\xi_n-\xi
 \right\rangle_{L^2(M)}\right| \notag\\
 &\leq\|\widetilde{\mathbf D}(w)\eta\|_2\,\|\xi_n-\xi\|_2 \notag\\
 &=\|\eta\|_2\,\|\xi_n-\xi\|_2\longrightarrow0.
 \label{eq:coefficient_pointwise_L2_limit}
\end{align}
On the other hand, evaluation at $w$ is continuous on $\mathscr S(D)$ by
\eqref{eq:schwartz_D_l1_inclusion}, since
$|b(w)|\leq\|b\|_{\ell^1(D)}$.  Hence
$(T_\eta\xi_n)(w)\to a(w)$, and uniqueness of pointwise limits yields
$a=T_\eta\xi$.  The closed graph theorem
\cite[Theorem~2.15]{RudinFunctionalAnalysis} now shows that $T_\eta$ is
continuous.  The same argument applies in the second variable.  After
composing that variable with complex conjugation, we obtain a separately
continuous bilinear map between Frechet spaces.  The separate-to-joint
continuity theorem \cite[Theorem~2.17]{RudinFunctionalAnalysis}, followed by
\eqref{eq:schwartz_D_l1_inclusion}, shows that the coefficient map into
$\ell^1(D)$ is jointly continuous.

\begin{proposition}
\label{prop:schwartz_heisenberg_continuity}
The identity inclusion $\mathscr S(M)\hookrightarrow\Omega_D$ is continuous.
\end{proposition}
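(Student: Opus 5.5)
The plan is to bound the module norm on $\Omega_D$ by a continuous seminorm of $\mathscr S(M)$. Recall that the norm of Rieffel's module is defined through the left inner product. For $\xi\in\mathscr S(M)$ it reads
\[
 \|\xi\|_{\Omega_D}^2=\bigl\|{}_{\mathfrak A_D}\langle\xi,\xi\rangle\bigr\|_{\mathrm C^*(D)}.
\]
Equivalently, one may use the right $\mathfrak B_D$-valued inner product, since the two norms agree on an equivalence bimodule. I will use the left one.

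Since $\mathrm C^*(D)$ is the enveloping algebra of $\ell^1(D)$, its norm is dominated by the $\ell^1$ norm: $\|a\|_{\mathrm C^*(D)}\le\|a\|_{\ell^1(D)}$. Under the Fourier picture this is just $\sup_\chi|\widehat a(\chi)|\le\sum_w|a(w)|$. Therefore
\[
 \|\xi\|_{\Omega_D}^2\le\sum_{w\in D}\bigl|\langle\widetilde{\mathbf D}(w)\xi,\xi\rangle_{L^2(M)}\bigr|.
\]

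The key input is the joint continuity established just above. The coefficient map
\[
 (\xi,\eta)\mapsto{}_{\mathfrak A_D}\langle\xi,\eta\rangle\in\ell^1(D)
\]
is jointly continuous on the Fréchet space $\mathscr S(M)\times\mathscr S(M)$. Hence there are a seminorm index $R$ and a constant $C$ with
\[
 \bigl\|{}_{\mathfrak A_D}\langle\xi,\eta\rangle\bigr\|_{\ell^1(D)}\le C\,q_R(\xi)q_R(\eta).
\]
Alternatively, Lemma~\ref{lem:fe_matrix_coefficients}, Eq.~\eqref{eq:fe_matrix_coefficient_summability}, gives this bound directly, though it appears later. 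Taking $\eta=\xi$ yields $\|\xi\|_{\Omega_D}\le C^{1/2}q_R(\xi)$. This is exactly continuity of the linear inclusion, because a linear map from a Fréchet space into a normed space is continuous once it is bounded by a single continuous seminorm.

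The only delicate point is justifying that the $\mathrm C^*$-norm computed in the completed module agrees with the norm of the inner product. This holds by definition of the completion in Rieffel's construction. It also requires that the identification $\mathfrak A_D\cong\mathrm C^*(D)$ via the phase trivialization is isometric, which it is, since it is a $*$-isomorphism of $\mathrm C^*$-algebras. The $\ell^1$-domination of the $\mathrm C^*$-norm then completes the argument, so no further estimate is needed beyond the closed-graph continuity already proved.
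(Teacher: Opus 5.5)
Your proposal is correct and follows the paper's proof: both bound the $\mathrm C^*(D)$-norm of ${}_{\mathfrak A_D}\langle\xi,\xi\rangle$ by its $\ell^1(D)$-norm and then use the joint continuity of the coefficient map into $\ell^1(D)$ obtained from the closed-graph and separate-to-joint continuity arguments. The only difference is that the paper concludes sequentially, using metrizability of $\mathscr S(M)$, whereas you extract the explicit bound $\|\xi\|_{\Omega_D}\le C^{1/2}q_R(\xi)$. Your alternative via Eq.~\eqref{eq:fe_matrix_coefficient_summability} is also non-circular, since that lemma is proved by direct estimates.
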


\begin{proof}
For every $b\in\ell^1(D)$, the triangle inequality and unitarity of the
canonical generators give
\begin{equation}
 \left\|\sum_{w\in D}b(w)u_w\right\|_{\mathrm C^*(D)}
 \leq\sum_{w\in D}|b(w)|.
 \label{eq:group_Cstar_l1_bound}
\end{equation}
Indeed, this estimate holds in every unitary representation of $D$, and
therefore in the universal norm.

Now suppose that $\xi_n\to0$ in the Schwartz topology.  Joint continuity of
the coefficient map proved above gives
\begin{equation}
 \bigl(w\longmapsto
 \langle\widetilde{\mathbf D}(w)\xi_n,\xi_n\rangle_{L^2(M)}\bigr)
 \longrightarrow0
 \qquad\text{in }\ell^1(D).
 \label{eq:diagonal_coefficients_l1_zero}
\end{equation}
Using \eqref{eq:group_Cstar_l1_bound} and the definition of the module norm,
we therefore obtain
\begin{align}
 \|\xi_n\|_{\Omega_D}^2
 &=\left\|{}_{\mathfrak{A}_D}\!\langle\xi_n,\xi_n\rangle\right\|_{\mathfrak{A}_D}
 \notag\\
 &\leq\sum_{w\in D}
 \left|\langle\widetilde{\mathbf D}(w)\xi_n,
 \xi_n\rangle_{L^2(M)}\right|
 \longrightarrow0.
 \label{eq:module_norm_sequence_zero}
\end{align}
Thus $\xi_n\to0$ in $\Omega_D$.  Since $\mathscr S(M)$ is Frechet and hence
metrizable, sequential continuity is equivalent to continuity.  This proves
that the linear inclusion $\mathscr S(M)\hookrightarrow\Omega_D$ is
continuous.
\end{proof}

\subsection{Proof of Theorem~\ref{thm:code_dimension}}
\label{appx:proof_code_dimension}
\begin{proof}
A projection $P$ representing $\Omega_D$ has pointwise rank $K$, so
$\tau_D([\Omega_D])=\int_{\widehat D}\operatorname{Tr}P(\chi)\,d\chi=K$.
Fourier transformation identifies the trace pairings, and Rieffel's trace
formula~\cite[Theorem~3.4]{rieffel}, applied with the two
Morita-equivalent algebras interchanged,
gives $\mathrm{Tr}_D([\Omega_D])=|G/D|_\mu$.
The dual-covolume identity~\eqref{eq:dual_covolume_reciprocity}
follows directly from~\cite[p.~277]{rieffel}, with the same
self-Plancherel Haar normalization.

Finally, isotropy gives the exact sequence
\begin{equation}\label{eq:quotient_exact_sequence}
 0\longrightarrow D^\perp/D\longrightarrow G/D
 \longrightarrow G/D^\perp\longrightarrow0,
\end{equation}
and the finite kernel contributes its index to the quotient volume:
\begin{equation}\label{eq:logical_quotient_covolume}
 [D^\perp:D]
 =\frac{|G/D|_\mu}{|G/D^\perp|_\mu}=K^2.
\end{equation}
\end{proof}

\subsection{Proof of Lemma~\ref{lem:full_lift_pfaffian}}
\label{appx:proof_full_lift_pfaffian}

\begin{proof}
Split $\widetilde G=\widetilde G_{\mathrm c}\times\widetilde G_{\mathrm d}$,
where $\widetilde G_{\mathrm c}\cong\R^{2p+q}$ carries the continuous
coordinates $(x,\xi,\theta)$ and $\widetilde G_{\mathrm d}\cong\Z^{q+2k}$
carries the discrete coordinates $(n,a,b)$.
The map $\pi$ is an open continuous homomorphism with discrete kernel, so
$G\cong\widetilde G/\Lambda_{\mathrm{per}}$ as topological groups.
By Eq.~\eqref{eq:lifted_cocycle_bicharacter},
$\rho(\pi(v),\pi(w))=e^{2\pi i\,v^{\intercal}Jw}$ for all
$v,w\in\widetilde G$.

(i) For $0<\epsilon<1/2$, let $U_\epsilon\subset\widetilde G$ be the set
of points with $n=a=b=0$ and $|x|,|\xi|,|\theta|<\epsilon$.
Its image $\pi(U_\epsilon)$ is an open neighborhood of $0$, so
discreteness of $D$ gives $\pi(U_\epsilon)\cap D=\{0\}$ for small
$\epsilon$.
Hence $\Lambda\cap U_\epsilon\subseteq\Lambda_{\mathrm{per}}\cap U_\epsilon=\{0\}$.
Since $U_\epsilon$ is the intersection of $\widetilde G$ with an open subset
of $\R^{2(p+q+k)}$, the group $\Lambda$ is discrete.
Suppose that a nonzero linear form $f$ on $\R^{2(p+q+k)}$ vanishes on
$\Lambda$.
Then $f$ vanishes on $\Lambda_{\mathrm{per}}$, hence on the $\theta$, $a$,
and $b$ directions.
It therefore descends to a continuous homomorphism $\bar f\colon G\to\R$
that vanishes on $D$.
Compactness of $G/D$ makes $\bar f(G)$ a compact subgroup of $\R$,
so $\bar f=0$.
Then $f$ vanishes on $\widetilde G$, which spans $\R^{2(p+q+k)}$, a
contradiction.
Thus $\Lambda$ is a full-rank lattice, and $\pi|_\Lambda$ maps onto $D$
with kernel $\Lambda_{\mathrm{per}}$.
Conversely, let $\Lambda'=T'\Z^{2(p+q+k)}$ be a full-rank lattice with
$\Lambda_{\mathrm{per}}\subseteq\Lambda'\subseteq\widetilde G$.
Then $\pi^{-1}(\pi(\Lambda'))=\Lambda'+\Lambda_{\mathrm{per}}=\Lambda'$.
If $\Lambda'\cap U_\epsilon=\{0\}$, then
$\pi(\Lambda')\cap\pi(U_\epsilon)=\{0\}$ (since
$\Lambda_{\mathrm{per}}\subseteq\Lambda'$), so $\pi(\Lambda')$ is discrete.
It is also cocompact, since $G=\pi(\mathcal B')+\pi(\Lambda')$ for the
relatively compact set
$\mathcal B':=T'[0,1)^{2(p+q+k)}\cap\widetilde G$.

(ii) For $v=Tm$ and $w=Tm'$ with $m,m'\in\Z^{2(p+q+k)}$, we have
$\rho(\pi(v),\pi(w))=e^{2\pi i\,m^{\intercal}\varTheta m'}$.
Since $\pi(\Lambda)=D$, isotropy of $D$ is equivalent to
$m^{\intercal}\varTheta m'\in\Z$ for all $m,m'\in\Z^{2(p+q+k)}$.
Taking $m$ and $m'$ to be standard basis vectors shows that this holds
exactly when $\varTheta$ is an integer matrix.

(iii) The set $\mathcal B_T:=T[0,1)^{2(p+q+k)}\cap\widetilde G$ is a
fundamental domain for $\Lambda$ in $\widetilde G$.
If $\pi(v)-\pi(w)\in D$ for $v,w\in\mathcal B_T$, then $v-w\in\Lambda$, so
$v=w$.
Thus $\pi$ maps $\mathcal B_T$ bijectively onto a fundamental domain for
$D$ in $G$.
The self-Plancherel measure $\mu$ is Lebesgue measure on $\R^{2p}$,
counting measure times probability Haar measure on
$\Z^q\times\T^q$, and mass $C_{\dia c}^{-1}$ on each point of
$\Z_{\dia c}\times\widehat{\Z_{\dia c}}$.
For $\delta\in\widetilde G_{\mathrm d}$, put
$\mathcal B_\delta:=\{u\in\widetilde G_{\mathrm c}:(u,\delta)\in\mathcal B_T\}$.
The restriction of $\pi$ to $\mathcal B_T$ is injective, and it acts on
the continuous coordinates of each sheet
$\widetilde G_{\mathrm c}\times\{\delta\}$ only by reducing $\theta$
modulo $\Z^q$.
Eq.~\eqref{eq:covolume_definition} therefore gives
\begin{equation}
 |G/D|_\mu=\mu(\pi(\mathcal B_T))
 =C_{\dia c}^{-1}\sum_{\delta\in\widetilde G_{\mathrm d}}
 \operatorname{vol}(\mathcal B_\delta).
 \label{eq:full_lift_covolume_sheets}
\end{equation}
Adding $[0,1)$ to every discrete coordinate turns $\mathcal B_T$ into a
fundamental domain for $\Lambda$ in $\R^{2(p+q+k)}$, of Lebesgue measure
$\sum_{\delta}\operatorname{vol}(\mathcal B_\delta)$.
Comparing with $T[0,1)^{2(p+q+k)}$ yields
$|\det T|=C_{\dia c}|G/D|_\mu$.
Moreover, $\operatorname{Pf}\varTheta=\det(T)\operatorname{Pf}J$ and
$|\operatorname{Pf}J|=\det\dia c^{-1}=C_{\dia c}^{-1}$.
For isotropic $D$, Theorem~\ref{thm:code_dimension} identifies
$|G/D|_\mu$ with $K$.

(iv) Pairing $v=(x,\xi\,|\,n,\theta\,|\,a,b)\in\R^{2(p+q+k)}$ with the
periods $e_{\theta_i}$, $c_je_{a_j}$, and $c_je_{b_j}$ gives $n_i$,
$-b_j$, and $a_j$, respectively.
Hence $\Lambda_{\mathrm{per}}^\perp=\widetilde G$, and
$\Lambda\supseteq\Lambda_{\mathrm{per}}$ implies
$\Lambda^\perp\subseteq\widetilde G$.
For $v\in\widetilde G$, the formula for $\rho$ shows that
$v\in\Lambda^\perp$ if and only if $\pi(v)\in D^\perp$.
Solving $T^{\intercal}Jv\in\Z^{2(p+q+k)}$ gives
$\Lambda^\perp=J^{-1}T^{-\intercal}\Z^{2(p+q+k)}
=T\varTheta^{-1}\Z^{2(p+q+k)}$.
If $D$ is isotropic, then $\pi$ maps $\Lambda^\perp$ onto $D^\perp$ with
$\pi^{-1}(D)=\Lambda$, so $D^\perp/D\cong\Lambda^\perp/\Lambda$.
Since $T_{\mathrm{dual}}\ell\in\Lambda$ exactly when
$\ell\in\varTheta\Z^{2(p+q+k)}$, this quotient is
$\Z^{2(p+q+k)}/\varTheta\Z^{2(p+q+k)}$.
Its order $\det\varTheta=(\det T)^2\det J=K^2$ agrees with
Eq.~\eqref{eq:logical_quotient_order}.
The phase formula follows from
$T_{\mathrm{dual}}^{\intercal}JT_{\mathrm{dual}}=-\varTheta^{-1}$.
Finally, the logical matrix of $T_{\mathrm{dual}}$ is
$T_{\mathrm{dual}}(-\varTheta^{-1})^{-1}=-T$.
Hence $(\Lambda^\perp)^\perp=\Lambda$, consistent with
Eq.~\eqref{eq:symplectic_double_annihilator}.
\end{proof}

\subsection{Proof of Theorem~\ref{thm:logical_displacement_algebra}}
\label{appx:proof_logical_displacement_algebra}

\begin{proof}
Fix the stabilizer phase function $c$ used to define the code fiber $E_{D,1}$
and the logical displacements $\mathbf L_z$.
The order formula $|D^\perp/D|=K^2$ is
Equation~\eqref{eq:logical_quotient_order}.  The multiplication law for the
displacement operators gives
\[
 \mathbf L_z\mathbf L_{z'}
 =\beta(z,z')\mathbf L_{z+z'},
\]
and exchanging $z$ and $z'$ gives
Equation~\eqref{eq:logical_weyl_relation}.

If $w\in D$, then the phase-corrected stabilizer
$\widetilde{\mathbf D}(w)=c(w)\mathbf D(w)$ acts as the identity on
$E_{D,1}$.  Therefore
\begin{equation}
 \mathbf L_w=c(w)^{-1}\mathbf 1_{E_{D,1}}.
 \label{eq:lattice_displacement_scalar_on_code}
\end{equation}
For $z\in D^\perp$ and $w\in D$, Equation~\eqref{eq:logical_projective_product}
then implies
\begin{equation}
 \mathbf L_{z+w}
 =\beta(z,w)^{-1}c(w)^{-1}\mathbf L_z.
 \label{eq:logical_representative_phase}
\end{equation}
Thus two representatives of the same coset give logical operators that
differ only by an overall phase, proving that
\eqref{eq:logical_projective_representation} is well defined.

The value of $\rho(z,z')$ is unchanged if either argument is shifted by an
element of $D$, because both $z$ and $z'$ lie in $D^\perp$.  Hence the
descended bicharacter $\rho_D$ in
\eqref{eq:logical_symplectic_bicharacter} is well defined and independent of
$c$.  Its radical is
trivial: if $z\in D^\perp$ satisfies $\rho(z,z')=1$ for every
$z'\in D^\perp$, then
\[
 z\in(D^\perp)^\perp=D,
\]
where the last equality is the symplectic double-annihilator identity
\eqref{eq:symplectic_double_annihilator}.  This proves nondegeneracy.  It also proves faithfulness of the
projective representation.  Indeed, if $\mathbf L_z$ were scalar, then
Equation~\eqref{eq:logical_weyl_relation} would force
$\rho(z,z')=1$ for all $z'\in D^\perp$, and hence $z\in D$.

It remains to identify the algebra generated by these operators.  Let
$z\in D^\perp\setminus D$.  Nondegeneracy provides $y\in D^\perp$ such that
$\rho(y,z)\neq1$.  Conjugating $\mathbf L_z$ by the unitary $\mathbf L_y$
and using Equation~\eqref{eq:logical_weyl_relation} gives
\[
 \operatorname{Tr}(\mathbf L_z)
 =\operatorname{Tr}(\mathbf L_y\mathbf L_z\mathbf L_y^{-1})
 =\rho(y,z)\operatorname{Tr}(\mathbf L_z),
\]
so $\operatorname{Tr}(\mathbf L_z)=0$.  It follows that logical
displacements belonging to distinct cosets are Hilbert--Schmidt orthogonal:
\begin{equation}
 \operatorname{Tr}(\mathbf L_z^\dagger\mathbf L_{z'})=0
 \qquad(z+D\neq z'+D).
 \label{eq:logical_displacement_orthogonality}
\end{equation}
There are $|D^\perp/D|=K^2$ such operators, while
$\dim_{\mathbb C}\operatorname{End}(E_{D,1})=K^2$.  They therefore form an
orthogonal basis of the full logical operator algebra.
\end{proof}

\subsection{Proof of Theorem~\ref{thm:corrected_clifford_implementer}}
\label{appx:proof_corrected_clifford_implementer}

\begin{proof}
Choose $\mathcal U_S$ and $a_S$ using
Proposition~\ref{prop:symplectic_implementer}.
For $w\in D$, define
\begin{equation}
 \eta_S(w)
 :=c(S^{-1}w)a_S(S^{-1}w)c(w)^{-1}.
 \label{eq:clifford_syndrome_character}
\end{equation}
Equations~\eqref{eq:tilde_displacement} and
\eqref{eq:clifford_implementer} give
\begin{equation}
 \mathcal U_S\widetilde{\mathbf D}(S^{-1}w)\mathcal U_S^\dagger
 =\eta_S(w)\widetilde{\mathbf D}(w).
 \label{eq:clifford_stabilizer_phase_mismatch}
\end{equation}
Both sides, without the scalar $\eta_S(w)$, are genuine unitary
representations of $D$.  It follows that $\eta_S\in\widehat D$ is a
character.

If $u\in\mathcal C$, Equation~\eqref{eq:clifford_stabilizer_phase_mismatch}
implies
\[
 \widehat{\widetilde{\mathbf D}(w)}\,\widehat{\mathcal U_S}u
 =\overline{\eta_S(w)}\,\widehat{\mathcal U_S}u.
\]
Therefore $\mathcal U_S$ sends the trivial-syndrome fiber to the fiber with
syndrome $\eta_S$.  The map $\kappa:G\to\widehat D$ in
Equation~\eqref{eq:kappa_exact_sequence} is surjective, so we may choose
$z_S\in G$ such that
\begin{equation}
 \kappa_{z_S}=\eta_S^{-1}.
 \label{eq:clifford_displacement_correction}
\end{equation}
By Proposition~\ref{prop:rieffel_fiber_covariance}, displacement by $z_S$
then sends the $\eta_S$-fiber back to the trivial fiber.  Hence
$\mathcal U_{S,D}=\mathbf D(z_S)\mathcal U_S$ preserves $\mathcal C$.

Now let $x\in D^\perp$.  Proposition~\ref{prop:lattice_symmetry_logical_phase_space}
gives $Sx\in D^\perp$.  Conjugating $\mathbf D(x)$ first by
$\mathcal U_S$ and then by $\mathbf D(z_S)$ gives, up to a scalar phase,
$\mathbf D(Sx)$.  Restriction to the code fiber therefore yields
Equation~\eqref{eq:logical_clifford_conjugation}, and hence the action
\eqref{eq:logical_clifford_label_action}.  Since the logical displacements
form the logical Weyl system, this proves that
$\mathcal U_{S,D}|_{\mathcal C}$ is a logical Clifford gate.

Finally, if $z'_S$ is another choice satisfying
\eqref{eq:clifford_displacement_correction}, then
$\kappa_{z'_S-z_S}=1$.  Thus $z'_S-z_S\in\ker\kappa=D^\perp$, and the two
corrected implementers differ, up to phase, by the corresponding logical
displacement.
\end{proof}

%
%
\subsection{Normal form for one oscillator and one rotor}
\label{appx:proof_oscillator_rotor_normal_form}

This subsection and the next two prove the results on symplectic
decomposability for one oscillator and one rotor used in
Sec.~\ref{ex:osc-rotor}.
As in that section, we write points of the phase space
$G\cong\R^2\oplus(\Z\times\T)$ as $(x,n;\xi,\theta)$, listing both
configuration coordinates before both momentum coordinates.
The commutation phase is $\rho(z,z')=e^{2\pi i\omega(z,z')}$, where
$\omega(z,z')=x\xi'-x'\xi+n\theta'-n'\theta$ in $\T$.
For oscillator coordinates $u=(x,\xi)$, we write
$\sigma(u,u')=x\xi'-x'\xi\in\R$.
Symplectic decomposability refers to
Eq.~\eqref{eq:oscillator-rotor-decoupling}.
Here $K=|G/D|_\mu$ denotes the ideal code dimension of
Theorem~\ref{thm:code_dimension}.

\begin{lemma}
\label{lem:oscillator-rotor-normal-form}
Let $D\subset G$ be an isotropic lattice, and let
$p(x,n;\xi,\theta)=n$.  Define $r,t\geq1$ by
\begin{equation}
 p(D)=r\Z,\qquad t=|\operatorname{Tor}(D)|.
 \label{eq:oscillator-rotor-invariants}
\end{equation}
Then $t\mid r$.  Put $m=r/t$.  After a continuous symplectic
automorphism, $D$ has the form
\begin{equation}
 \begin{split}
 D=\bigl\{(\lambda_x,kr;\lambda_\xi,
       \widetilde\gamma(\lambda)/r+j/t):\ &\lambda\in\Lambda_{\mathrm{osc}},\\[-2pt]
       &k,j\in\Z\bigr\},
 \end{split}
 \label{eq:oscillator-rotor-normal-form}
\end{equation}
where $\Lambda_{\mathrm{osc}}\subset\R^2$ is a rank-two lattice with
$\sigma(\Lambda_{\mathrm{osc}},\Lambda_{\mathrm{osc}})\subset\Z$, and
$\gamma:\Lambda_{\mathrm{osc}}\to\Z_m$ is a homomorphism.
Here $\widetilde\gamma(\lambda)$ is any integer lift of
$\gamma(\lambda)$; changing it does not change the set in
Eq.~\eqref{eq:oscillator-rotor-normal-form}.
For a positively oriented basis $\lambda_1,\lambda_2$ of $\Lambda_{\mathrm{osc}}$,
\begin{equation}
 N:=\sigma(\lambda_1,\lambda_2)=|\R^2/\Lambda_{\mathrm{osc}}|\in\Z_{>0},
 \qquad K=mN.
 \label{eq:oscillator-rotor-dimension-factorization}
\end{equation}
\end{lemma}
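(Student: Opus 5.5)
The plan is to read off the structure of $D$ from the projection $p$ and its kernel, and then to remove the $\theta$-dependence on the rotor-shift part by an oscillator--rotor shear.

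\emph{Step 1: the kernel of $p$ and the torsion.} Since $G$ has the open subgroup $\R^2\times\{0\}\times\T$ and $G/D$ is compact, $p(D)\subseteq\Z$ is a nonzero subgroup, so $p(D)=r\Z$ with $r\ge1$. Put $D_0:=D\cap\ker p\subseteq\R^2\times\T$; this is a lattice there. Consider the projection $q$ of $D_0$ to oscillator coordinates $u=(x,\xi)$. The image $q(D_0)$ is discrete: otherwise small oscillator parts would appear together with $\theta$ in the compact group $\T$, contradicting discreteness of $D_0$. The image is also cocompact in $\R^2$, so it is a rank-two lattice $\Lambda_{\mathrm{osc}}$. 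Its kernel $\ker q|_{D_0}$ is a finite subgroup of $\T$, hence $\frac1t\Z/\Z$ for some $t$. Every finite-order element of $D$ has $n=0$ and oscillator part $0$, so $\operatorname{Tor}(D)=\ker q|_{D_0}$ has order $t$.

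\emph{Step 2: divisibility and the shear.} Pick $g=(u_0,r;\theta_0)\in D$. Isotropy against the torsion generator $(0,0;0,1/t)$ gives $r\cdot\frac1t\in\Z$, so $t\mid r$. Next apply the symplectic automorphism $(x,n;\xi,\theta)\mapsto(x+nu_0^x/r\cdot(-1),\dots)$, that is, the shear of Eq.~\eqref{eq:oscillator-rotor-shear} type $u\mapsto u-(n/r)u_0$ with the compensating $\theta\mapsto\theta+\tfrac{n}{2r^2}\sigma(\cdot)\dots$ chosen so that $\omega$ is preserved. This moves $g$ to $(0,r;\theta_0')$, and a rotation $\theta\mapsto\theta-n\theta_0'/r$ (a shear in $\T$, continuous and symplectic once combined with $x$-independence) then makes $g=(0,r;0,0)$. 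I expect this step to be the main obstacle. The difficulty is verifying that these shears are genuinely continuous symplectic automorphisms of $G$, in particular that they are well defined in $\T$, and that they preserve $D_0$ up to relabeling.

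\emph{Step 3: the form of $D_0$.} Each $\lambda\in\Lambda_{\mathrm{osc}}$ lifts to $(\lambda;\vartheta(\lambda))\in D_0$, where $\vartheta$ is well defined modulo $\frac1t\Z$. Isotropy of $D_0$ with $g=(0,r;0,0)$ gives $r\vartheta(\lambda)\in\Z$. Hence $\vartheta(\lambda)=\widetilde\gamma(\lambda)/r$ modulo $1/t$, and $\gamma$ is a homomorphism to $\Z_r/(r/t)\Z\cdot\ldots$, that is, into $\Z_m$ once taken modulo $r/t=m$. Isotropy within $D_0$ forces $\sigma(\Lambda_{\mathrm{osc}},\Lambda_{\mathrm{osc}})\subset\Z$, because the $\theta$ parts pair only with $n=0$. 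Since $D=D_0+\Z g$, we obtain exactly Eq.~\eqref{eq:oscillator-rotor-normal-form}.

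\emph{Step 4: the dimension.} For the factorization $K=mN$, I compute the covolume $|G/D|_\mu$ through Theorem~\ref{thm:code_dimension}. A fundamental domain is the product of a fundamental parallelogram of $\Lambda_{\mathrm{osc}}$ (area $N$), the set $\{0,\dots,r-1\}$ in $n$, and an arc of length $1/t$ in $\theta$. This gives $\mu=N\cdot r\cdot\frac1t=mN$.
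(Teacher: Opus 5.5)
Your outline is the paper's proof. Compactness gives $p(D)=r\Z$, and $\operatorname{Tor}(D)=D\cap(\{0\}\times\T)=\langle(0,0;0,1/t)\rangle$. Isotropy with an element over $n=r$ gives $t\mid r$. A shear sends that element to $(0,r;0,0)$. Isotropy of $D\cap\{n=0\}$ against it puts the angles in $\frac1r\Z/\Z$ and defines $\gamma$, and the fundamental domain is $F_{\Lambda_{\mathrm{osc}}}\times\{0,\dots,r-1\}\times[0,1/t)$. Proving discreteness of $\Lambda_{\mathrm{osc}}$ before the shear, via compactness of $\T$, is a fine substitute for the paper's finite-angle argument after it.

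\emph{The gap is Step~2.} You leave it open, and the compensating term you sketch is wrong. A correction of the form $\frac{n}{2r^2}\sigma(\cdot)$ is bilinear in $(n,u)$, so the map it defines is not even a homomorphism of $G$. The correction must be linear in $u=(x,\xi)$ and independent of $n$. Writing points as $(u,n;\theta)$, take, for $a\in\R^2$ and $c\in\T$,
\[
S_{a,c}(u,n;\theta)=\bigl(u+na,\,n;\,\theta+\sigma(u,a)+cn\bigr),
\]
which is the paper's shear \eqref{eq:oscillator-rotor-shear}.
\begin{itemize}
\item It is well defined and continuous, since $\sigma(u,a)\in\R$ is reduced mod $1$ and $cn\in\T$.
\item Its inverse is $S_{-a,-c}$.
\item It preserves $\omega$. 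Expanding gives $\sigma(u+na,u'+n'a)=\sigma(u,u')+n'\sigma(u,a)-n\sigma(u',a)$. These cross terms cancel the terms $n\sigma(u',a)-n'\sigma(u,a)$ produced by the angles, and $ncn'-n'cn=0$.
\end{itemize}
Take $a=-u_0/r$ and any $c$ with $rc=-\theta_0$; note that $\theta_0/r$ itself needs a chosen real lift. Since $\sigma(u_0,u_0)=0$, this gives $S_{a,c}(g)=(0,r;0,0)$.

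With this map your other worries disappear.
\begin{itemize}
\item $S_{a,c}$ fixes $n$, so $S(D)\cap\{n=0\}=S(D_0)$. On $\{n=0\}$ it only adds $\sigma(u,a)$ to the angle. Hence $\Lambda_{\mathrm{osc}}$ and the torsion generator $(0,0;0,1/t)$ are unchanged, and Step~3 runs verbatim on $S(D)$.
\item In Step~4 you compute the covolume of $S(D)$, so you must also show that this equals the $K$ of the original lattice. Either note that $S_{a,c}$ preserves the self-Plancherel measure: for fixed $n$ it translates $u$ and then, for each $u$, translates $\theta$. Or note that $[D^\perp:D]=K^2$ is preserved, since $S(D)^\perp=S(D^\perp)$.
\item The fundamental-domain claim needs its short justification. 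Reduce $u$ with an element of $S(D_0)$, then $n$ with $(0,r;0,0)$, then $\theta$ with the torsion generator; uniqueness follows in the same order.
\item Changing the integer lift $\widetilde\gamma$ by a multiple of $m$ shifts the angle by a multiple of $1/t$, which is absorbed into $j$.
\end{itemize}
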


Here $\operatorname{Tor}(D)$ denotes the torsion subgroup of $D$, consisting
of its finite-order elements; see Lemma~\ref{lem:fe_torsion}.

\begin{proof}
The map $G/D\to\Z/p(D)$ is surjective.  Compactness therefore gives
$p(D)=r\Z$ for some $r\geq1$.  The intersection of $D$ with the
compact circle subgroup is finite and equals its torsion subgroup, so
\[
 \operatorname{Tor}(D)=\langle q\rangle,
 \qquad q=(0,0;0,1/t).
\]
Choose $v=(x_0,r;\xi_0,\theta_0)\in D$.  Isotropy with $q$ implies
$r/t\in\Z$, proving $t\mid r$.

For $a=(a_x,a_\xi)\in\R^2$ and $c\in\T$, define
\begin{equation}
 \begin{split}
 S_{a,c}(x,n;\xi,\theta)
 :=\bigl(x+na_x,n;\xi+na_\xi,\qquad\\[-2pt]
       \theta-a_x\xi+a_\xi x+cn\bigr).
 \end{split}
 \label{eq:oscillator-rotor-shear}
\end{equation}
Direct substitution shows that $S_{a,c}$ preserves $\omega$; its inverse
is $S_{-a,-c}$.  It also preserves the product Haar measure: for each
fixed $n$, it translates the real coordinates and then the circle
coordinate.  Taking $a=-(x_0,\xi_0)/r$ and choosing $c$ with
$rc=-\theta_0$ sends $v$ to $(0,r;0,0)$ and fixes $q$.
We henceforth work with this transformed lattice.

Set $H=D\cap\{n=0\}$.  Then
$D=H+\Z(0,r;0,0)$.  Isotropy with $(0,r;0,0)$ forces the angle
coordinate of each element of $H$ into $r^{-1}\Z/\Z$.
Let $\Lambda_{\mathrm{osc}}$ be the oscillator projection of $H$.
Its kernel is $\langle q\rangle$.  Since the allowed angle coordinates
form a finite set, discreteness of $D$ implies discreteness of
$\Lambda_{\mathrm{osc}}$.  The oscillator projection of all of $D$ is also
$\Lambda_{\mathrm{osc}}$, so $\R^2/\Lambda_{\mathrm{osc}}$ is a continuous image of the compact
space $G/D$.  Thus $\Lambda_{\mathrm{osc}}$ is a rank-two lattice.

For a lift $(\lambda_x,0;\lambda_\xi,\theta_\lambda)\in H$, choose
a real representative of $\theta_\lambda$ and set
\[
 \gamma(\lambda)=r\theta_\lambda\pmod m.
\]
This is well defined: changing the lift by $q$ changes the integer
$r\theta_\lambda$ by $m$, and changing the circle representative
changes it by $r$.  It defines a homomorphism and gives
Eq.~\eqref{eq:oscillator-rotor-normal-form}.
Pairing two elements of $H$ shows that
$\sigma(\Lambda_{\mathrm{osc}},\Lambda_{\mathrm{osc}})\subset\Z$, so the oriented area $N$ is a
positive integer.

A fundamental domain in the reordered coordinates $(u;n,\theta)$ is
\[
 F_{\Lambda_{\mathrm{osc}}}\times\{0,\ldots,r-1\}\times[0,1/t),
\]
where $F_{\Lambda_{\mathrm{osc}}}$ is a half-open fundamental parallelogram of $\Lambda_{\mathrm{osc}}$.
Indeed, first reduce $u$ using a lift in $H$, then reduce $n$ and
$\theta$ using the two pure rotor generators.  Consequently
$|G/D|_\mu=Nr/t=Nm$.
Equation~\eqref{eq:rank_covolume_formula} proves the dimension formula.
\end{proof}

\subsection{Coprime splitting for one oscillator and one rotor}
\label{appx:proof_oscillator_rotor_coprime_splitting}

\begin{lemma}
\label{lem:oscillator-rotor-coprime-splitting}
In Lemma~\ref{lem:oscillator-rotor-normal-form}, if
$\gcd(N,m)=1$, then $D$ is symplectically decomposable.
\end{lemma}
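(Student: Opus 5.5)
The plan is to remove the coupling homomorphism $\gamma$ by one more shear $S_{a,c}$ of Eq.~\eqref{eq:oscillator-rotor-shear}, applied to the normal form of Lemma~\ref{lem:oscillator-rotor-normal-form}. After that shear, $D$ should be the direct sum of a pure oscillator lattice and a pure rotor lattice.

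First I record how the shear acts. On an element with $n=0$ it changes only the angle:
\[
\theta\mapsto\theta-a_x\lambda_\xi+a_\xi\lambda_x=\theta-\sigma(a,\lambda).
\]
On the rotor generator $(0,r;0,0)$ it gives $(ra_x,r;ra_\xi,cr)$. Write $b:=ra$ and take $b\in\Lambda_{\mathrm{osc}}$. Then $\sigma(b,\lambda)\in\Z$ for all $\lambda\in\Lambda_{\mathrm{osc}}$. The $n=0$ element over $\lambda$ then acquires the angle
\[
\frac{\widetilde\gamma(\lambda)-\sigma(b,\lambda)}{r}+\frac jt .
\]
If $\sigma(b,\lambda)\equiv\gamma(\lambda)\pmod m$ for every $\lambda$, this angle lies in $\tfrac mr\Z+\tfrac1t\Z=\tfrac1t\Z$, because $m/r=1/t$. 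So the $n=0$ part of the image becomes $\Lambda_{\mathrm{osc}}\times\langle q\rangle$. In particular it contains $(\lambda,0;\lambda_\xi,0)$ for every $\lambda$.

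The key step, and the only place $\gcd(N,m)=1$ enters, is solving $\sigma(b,\cdot)\equiv\gamma\pmod m$ with $b\in\Lambda_{\mathrm{osc}}$. In a positively oriented basis $\lambda_1,\lambda_2$, the form $\sigma$ on $\Lambda_{\mathrm{osc}}$ is $N$ times the standard form. Hence $b\mapsto\sigma(b,\cdot)$ maps $\Lambda_{\mathrm{osc}}$ onto $N\operatorname{Hom}(\Lambda_{\mathrm{osc}},\Z)$. Let $\gamma_i$ be integer lifts of $\gamma(\lambda_i)$, and let $N'$ be an inverse of $N$ modulo $m$. Then
\[
b:=N'\bigl(\gamma_2\lambda_1-\gamma_1\lambda_2\bigr)
\]
gives $\sigma(b,\lambda_1)=-NN'\gamma_1$ and $\sigma(b,\lambda_2)=NN'\gamma_2$. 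This matches $\gamma$ up to a sign, which I absorb by replacing $b$ with $\pm b$ componentwise; equivalently, I solve the $2\times2$ congruence system directly, which is possible because $N$ is a unit mod $m$. I expect this sign and orientation bookkeeping to be the only delicate point. Without coprimality the congruence generally has no solution, which matches $D_{\alpha,b}$, where $N=m=2$.

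Finally I take $a=b/r$ and apply $S_{a,c}$, which is a continuous symplectic automorphism. The image of $(0,r;0,0)$ is $(b,r;\,\ast,cr)$ for some angle coordinate $\ast$. Subtracting the image element $(b,0;b_\xi,0)$ leaves $(0,r;0,\theta')$ for an explicit $\theta'\in\T$. I choose $c$ with $rc$ cancelling $\theta'$; this choice does not affect the $n=0$ computation above, since $c$ enters only multiplied by $n$. The transformed lattice is then generated by $\Lambda_{\mathrm{osc}}\times\{0\}$ together with $(0,r;0,0)$ and $q=(0,0;0,1/t)$. That is,
\[
D\cong\Lambda_{\mathrm{osc}}\oplus\bigl(r\Z\times\tfrac1t\Z/\Z\bigr),
\]
which is the decomposition of Eq.~\eqref{eq:oscillator-rotor-decoupling}. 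As a check, the two factors have covolumes $N$ and $r/t=m$, whose product is $K=mN$, consistent with Lemma~\ref{lem:oscillator-rotor-normal-form}.
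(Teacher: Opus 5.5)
Your proposal is correct and follows essentially the same route as the paper: choose $\nu\in\Lambda_{\mathrm{osc}}$ with $\sigma(\nu,\cdot)\equiv\gamma\pmod m$, which is possible because $N$ is a unit modulo $m$, apply $S_{\nu/r,0}$, and subtract torsion and the pure oscillator element over $\nu$ to reach $\Lambda_{\mathrm{osc}}\oplus\langle(r,0),(0,1/t)\rangle$. One small slip: your explicit $b=N'(\gamma_2\lambda_1-\gamma_1\lambda_2)$ gives $\sigma(b,\lambda_1)=+NN'\gamma_1$, so it solves the congruence as written and needs no sign correction (flipping one coefficient's sign would break it), and $c=0$ already works, as in the paper.
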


\begin{proof}
If $m=1$, the residual homomorphism $\gamma$ is zero and the normal
form already splits.  Otherwise, reduce the integral oscillator pairing
modulo $m$ to obtain
\begin{equation}
 \begin{split}
 B_m:\Lambda_{\mathrm{osc}}/m\Lambda_{\mathrm{osc}}&\longrightarrow
       \operatorname{Hom}(\Lambda_{\mathrm{osc}},\Z_m),\\
 B_m(\nu)(\lambda)&=\sigma(\nu,\lambda)\pmod m.
 \end{split}
 \label{eq:oscillator-rotor-reduced-pairing}
\end{equation}
In the oriented basis of the preceding lemma, its matrix is
\[
 \begin{pmatrix}0&-N\\N&0\end{pmatrix}\quad\text{over }\Z_m.
\]
Its determinant $N^2$ is a unit modulo $m$ when $\gcd(N,m)=1$.
Choose $\nu\in\Lambda_{\mathrm{osc}}$ with $B_m(\nu)=\gamma$ and apply
$S_{\nu/r,0}$.  A kernel lift over $\lambda$ acquires angle
\[
 \frac{\widetilde\gamma(\lambda)-\sigma(\nu,\lambda)}{r}
 \in\frac1t\Z\pmod1.
\]
Subtracting a torsion element therefore makes it a pure oscillator
element.  The generator $(0,r;0,0)$ becomes
$(\nu_x,r;\nu_\xi,0)$; subtracting the pure oscillator element
over $\nu$ makes it pure rotor again.  Hence
\begin{equation}
 S_{\nu/r,0}(D)
 =\Lambda_{\mathrm{osc}}\oplus\langle(r,0),(0,1/t)\rangle.
 \label{eq:oscillator-rotor-coprime-product}
\end{equation}
\end{proof}

\subsection{Proof of Proposition~\ref{prop:oscillator-rotor-nondecoupling}}
\label{appx:proof_oscillator_rotor_nondecoupling}

We use the lattice $D_{\alpha,b}$ and the logical label $z_X$ of
Sec.~\ref{ex:osc-rotor}, in the coordinates $(x,n;\xi,\theta)$ of
Appendix~\ref{appx:proof_oscillator_rotor_normal_form}.

\begin{proof}
The identity-component quotient is $G/G^\circ\cong\Z$.  Every
continuous automorphism induces multiplication by $\pm1$ on this
quotient.  It therefore preserves the component spacing $r=2b$.
It also preserves the torsion order $t=b$.

Suppose $S(D_{\alpha,b})=D_{\mathrm{osc}}\oplus D_{\mathrm{rot}}$.
The oscillator factor is torsion-free and lies in $G^\circ$, so
$D_{\mathrm{rot}}$ has component projection $2b\Z$ and torsion order
$b$.  A rotor lattice $E$ with projection $r\Z$ and torsion order
$t$ has generators $(r,\varphi)$ and $(0,1/t)$ for some
$\varphi\in\T$.  Reducing the integer coordinate and then the angle
gives a fundamental domain
$\{0,\ldots,r-1\}\times[0,1/t)$, and hence
\[
 |(\Z\times\T)/E|=r/t.
\]
Thus $K_{\mathrm{rot}}=2$ by Theorem~\ref{thm:code_dimension}.

Symplecticity implies $S(D^\perp)=S(D)^\perp$, by the same argument
as in Proposition~\ref{prop:lattice_symmetry_logical_phase_space}.
Hence $S$ preserves the logical quotient and its order.  By
Eq.~\eqref{eq:logical_quotient_order}, the transformed code still has
dimension four.  The quotient for a direct sum is the product of the
factor quotients, so $K_{\mathrm{osc}}K_{\mathrm{rot}}=4$ and
$K_{\mathrm{osc}}=2$.

Theorem~\ref{thm:logical_displacement_algebra} equips each factor's
logical group of order four with a nondegenerate alternating
bicharacter.  Such a group cannot be cyclic: if $g$ generates
$\mathsf L_E$ for either factor lattice $E$, alternation and bilinearity
give $\rho_E(ag,cg)=\rho_E(g,g)^{ac}=1$ for all $a,c\in\Z$.
Each factor's logical group must therefore be $\Z_2\oplus\Z_2$,
and a decomposed code would have
\[
 \mathsf L_{D_{\mathrm{osc}}\oplus D_{\mathrm{rot}}}
 \cong(\Z_2)^4.
\]
Every element of this group has order at most two, contradicting the
order-four class $[z_X]$.  The lower bound on the dimension follows
from Theorem~\ref{thm:squarefree-oscillator-rotor-decoupling}.
\end{proof}

\subsection{Proofs for the rotor--qudit code}
\label{appx:proof_rotor_qudit}

We now prove Proposition~\ref{prop:rotor-qudit-nondecoupling} and
Corollary~\ref{cor:rotor-qudit-state}.
Write $G=G_R\oplus G_Q$ with $G_R=\Z\times\T$ and $G_Q=\Z_d\times\Z_d$.
We write its elements as $(n,\theta\,|\,a,b)$, as in
Sec.~\ref{ex:rotor-qudit}.
As in Sec.~\ref{subsec:examples_coupling}, an isotropic lattice $D\subset G$ is
\emph{symplectically decomposable} if some $S\in\operatorname{Sp}(G,\rho)$
satisfies $S(D)=D_{\mathrm{rot}}\oplus D_{\mathrm{fin}}$ with
$D_{\mathrm{rot}}\subset G_R$ and $D_{\mathrm{fin}}\subset G_Q$.
The proofs rest on two lemmas.
The first one constrains how a continuous automorphism of $G$ can mix
the two factors.

\begin{lemma}\label{lem:rotor-qudit-block-structure}
Let $S$ be a continuous automorphism of $G$, written in blocks as
\[
 S(g_R\,|\,g_Q)=\bigl(A_{RR}g_R+A_{RQ}g_Q\,\big|\,A_{QR}g_R+A_{QQ}g_Q\bigr),
 \qquad g_R\in G_R,\quad g_Q\in G_Q,
\]
with continuous homomorphisms $A_{XY}\colon G_Y\to G_X$.  Then:
\begin{enumerate}
\item[(a)] $A_{QR}(n,\theta)=n\,v$ with $v:=A_{QR}(1,0)\in G_Q$.
In particular, $A_{QR}$ vanishes on the circle $\{0\}\times\T$, and
$A_{QR}(n,\theta)=0$ whenever $d\mid n$.
\item[(b)] $A_{RQ}(G_Q)\subseteq\{0\}\times(\tfrac1d\Z/\Z)$.
\item[(c)] $A_{RR}(\{0\}\times\T)\subseteq\{0\}\times\T$, and the
angular-momentum component of $S(n,\theta\,|\,a,b)$ is $\pm n$.
\item[(d)] $A_{QQ}$ is an automorphism of $G_Q$.
\end{enumerate}
\end{lemma}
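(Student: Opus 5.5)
The plan is to treat each block map separately, using only continuity and the group-theoretic structure of the factors $G_R=\Z\times\T$ and $G_Q=\Z_d\times\Z_d$. Continuous homomorphisms between these groups are very rigid. A homomorphism out of $\Z$ is determined by the image of $1$. A continuous homomorphism from the connected circle $\T$ into a discrete group is trivial. The only continuous homomorphisms from $\T$ to $\Z\times\T$ land in the identity component $\{0\}\times\T$. I will also use the standard fact that $\rho$-preservation is not needed for (a)--(c), only continuity and bijectivity.

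For (a), restrict $A_{QR}$ to the two factors of $G_R$. On $\{0\}\times\T$ it is a continuous homomorphism from a connected group to the finite discrete group $G_Q$, so it is trivial. On $\Z\times\{0\}$ it is determined by $v=A_{QR}(1,0)$. Additivity then gives $A_{QR}(n,\theta)=nv$. Since $G_Q$ has exponent $d$, $dv=0$, so $A_{QR}$ vanishes when $d\mid n$.

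For (b), the image of the finite group $G_Q$ under $A_{RQ}$ is a finite subgroup of $\Z\times\T$. Every element of $G_Q$ has order dividing $d$, so its image has order dividing $d$. The $d$-torsion of $\Z\times\T$ is $\{0\}\times(\tfrac1d\Z/\Z)$.

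For (c), the first claim follows from the connectedness argument: $A_{RR}$ maps the connected circle into the identity component $\{0\}\times\T$. Equivalently, $S$ maps the identity component $G^\circ=\{0\}\times\T\times\{0\}$ onto itself, since automorphisms preserve identity components. For the angular-momentum component, consider the quotient $G/(G^\circ\oplus G_Q)\cong\Z$. This subgroup is characterized intrinsically as the set of elements whose multiples by $d$ lie in $G^\circ$, that is, the preimage of the torsion of $G/G^\circ\cong\Z\times G_Q$. Hence $S$ preserves it and induces an automorphism of $\Z$, which is $\pm1$. Combining this with (b), which shows that $A_{RQ}$ contributes no angular momentum, the $n$-component of $S(n,\theta\,|\,a,b)$ is $\pm n$.

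For (d), I expect the main obstacle, because injectivity of $A_{QQ}$ is not immediate from the blocks alone. Suppose $A_{QQ}g_Q=0$. Then $S(0\,|\,g_Q)=(A_{RQ}g_Q\,|\,0)$, which lies in $\{0\}\times\tfrac1d\Z/\Z\subset G^\circ$ by (b). By (c), $S^{-1}$ maps $G^\circ$ into $G^\circ$, so $(0\,|\,g_Q)=S^{-1}(A_{RQ}g_Q\,|\,0)\in G^\circ$. This forces $g_Q=0$. Hence $A_{QQ}$ is injective on the finite group $G_Q$, and therefore bijective.
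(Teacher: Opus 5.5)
Your proof is correct. Parts (a), (b), and the first half of (c) match the paper's arguments. For the $\pm n$ claim, your quotient $G/(G^\circ\oplus G_Q)\cong\Z$ is an intrinsic repackaging of the paper's computation. The paper writes $A_{RR}(1,0)=(f,\varphi)$, uses (b) to see that the angular momentum of $S(n,\theta\,|\,a,b)$ is $fn$, and gets $f=\pm1$ from surjectivity of $S$. Your appeal to (b) at that point is redundant, since the quotient map already reads off $n$.

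The genuine difference is in (d).

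\begin{itemize}
\item The paper proves \emph{surjectivity} of $A_{QQ}$. Given $g_Q'$, it takes a preimage of $(0,0\,|\,g_Q')$ under $S$, uses (c) to force $n=0$, and uses (a) to kill $A_{QR}(0,\theta)$. This needs only that $S$ is a continuous bijective homomorphism.
\item Your \emph{injectivity} argument needs $S^{-1}(G^\circ)\subseteq G^\circ$, i.e.\ $S(G^\circ)=G^\circ$ rather than the $S(G^\circ)\subseteq G^\circ$ that connectedness gives.
\end{itemize}

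Your justification, ``automorphisms preserve identity components,'' presumes $S$ is a homeomorphism, whereas the lemma only assumes a continuous automorphism. The conclusion does hold here, for either of two reasons:
\begin{itemize}
\item by the open mapping theorem for the $\sigma$-compact group $G$;
\item more elementarily, because $S|_{G^\circ}$ is an injective continuous endomorphism of $\T$, hence $\theta\mapsto\pm\theta$, which is onto.
\end{itemize}
You should state one of these explicitly.

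The same point arises in your (c), but it is harmless there. The inclusion $S(N)\subseteq N$ already gives a surjective endomorphism of $G/N\cong\Z$, which must be $\pm1$. The paper's route in (d) simply avoids the extra step.
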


\begin{proof}
(a) The restriction of $A_{QR}$ to the connected group $\{0\}\times\T$ is
a continuous homomorphism into the discrete group $G_Q$, so it is trivial.
Hence $A_{QR}(n,\theta)=A_{QR}(n,0)=nv$, and $nv=0$ when $d\mid n$ because
$G_Q$ has exponent $d$.
(b) Every $g_Q\in G_Q$ satisfies $dg_Q=0$, so $A_{RQ}(g_Q)$ has order
dividing $d$.
A finite-order element $(n,\theta)$ of $G_R$ has $n=0$, and then
$d\theta=0$.
(c) The image of $\{0\}\times\T$ under $A_{RR}$ is connected and contains
$0$, so it lies in the identity component $\{0\}\times\T$ of $G_R$.  Write
$A_{RR}(1,0)=(f,\varphi)$.  By (b) and the first part of (c), the
angular-momentum component of $S(n,\theta\,|\,a,b)$ is $fn$.  Since $S$ is
surjective, $f\Z=\Z$, so $f=\pm1$.
(d) Let $g_Q'\in G_Q$ and choose $(n,\theta\,|\,a,b)$ with
$S(n,\theta\,|\,a,b)=(0,0\,|\,g_Q')$.  By (c), $\pm n=0$, so $n=0$, and
then (a) gives $g_Q'=A_{QQ}(a,b)$.  Thus $A_{QQ}$ is surjective, hence
bijective because $G_Q$ is finite.
\end{proof}

The second lemma turns the torsion of $D$ into an obstruction.

\begin{lemma}\label{lem:rotor-qudit-obstruction}
Let $D\subset G$ be an isotropic lattice with the following two
properties.
(i) Every $g\in D$ with $dg=0$ lies in the circle
$\{0\}\times\T\times\{0\}$.
(ii) $D$ contains an element $g_0=(n_0,\theta_0\,|\,a_0,b_0)$ with
$d\mid n_0$ and $(a_0,b_0)\neq0$.
Then no continuous automorphism $S$ of $G$ satisfies
$S(D)=D_{\mathrm{rot}}\oplus D_{\mathrm{fin}}$ with subgroups
$D_{\mathrm{rot}}\subseteq G_R$ and $D_{\mathrm{fin}}\subseteq G_Q$.  In
particular, $D$ is not symplectically decomposable.
\end{lemma}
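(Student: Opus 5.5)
Argue by contradiction: suppose a continuous automorphism $S$ of $G$ satisfies $S(D)=D_{\mathrm{rot}}\oplus D_{\mathrm{fin}}$. I will show that $S(g_0)$ must have nonzero qudit component, and then that this component is forced into $D_{\mathrm{fin}}$, where hypothesis (i) forbids it. The tools are the block structure of Lemma~\ref{lem:rotor-qudit-block-structure} and the fact that $S$, being an automorphism, preserves the property $dg=0$.

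\textbf{Step 1: the qudit component of $S(g_0)$ is nonzero.} Write $g_0=(n_0,\theta_0\,|\,a_0,b_0)$. Its qudit component is
\[
 A_{QR}(n_0,\theta_0)+A_{QQ}(a_0,b_0).
\]
By part~(a) of Lemma~\ref{lem:rotor-qudit-block-structure}, $A_{QR}(n_0,\theta_0)=0$ because $d\mid n_0$. So the qudit component equals $A_{QQ}(a_0,b_0)$. By part~(d), $A_{QQ}$ is an automorphism, and $(a_0,b_0)\neq0$ by hypothesis~(ii), so this element $h:=A_{QQ}(a_0,b_0)$ is nonzero.

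\textbf{Step 2: $h$ comes from a $d$-torsion element of $D$ off the circle.} Since $S(g_0)$ lies in the direct sum, its qudit component $h$ lies in $D_{\mathrm{fin}}$. Hence $(0,0\,|\,h)\in S(D)$. Set $g:=S^{-1}(0,0\,|\,h)\in D$. Because $dh=0$ in $G_Q$, we have $dg=0$. Hypothesis~(i) then forces $g=(0,\theta\,|\,0,0)$ for some angle $\theta$.

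\textbf{Step 3: contradiction.} Compute $S(g)$ using parts~(c) and~(a). Part~(c) gives $A_{RR}(0,\theta)\in\{0\}\times\T$, and part~(a) gives $A_{QR}(0,\theta)=0$. So $S(g)$ has zero qudit component. But $S(g)=(0,0\,|\,h)$ with $h\neq0$, which is a contradiction.

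The "in particular" follows because every element of $\operatorname{Sp}(G,\rho)$ is a continuous automorphism. The main delicacy is Step 2: one must check that the relation $dg=0$ transfers through $S^{-1}$, which is immediate since $S^{-1}$ is a group homomorphism. The step that actually blocks the decoupling is that the circle's image under $S$ has no qudit part (parts~(a) and~(c) together).
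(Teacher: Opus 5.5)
Your proof is correct and uses the same ingredients as the paper's: parts (a) and (d) of Lemma~\ref{lem:rotor-qudit-block-structure} show that $S(g_0)$ has nonzero qudit component, and hypothesis (i) together with injectivity of $S$ shows that the $d$-torsion of $S(D)$, including $\{0\}\oplus D_{\mathrm{fin}}$, has zero qudit component. The paper first deduces $D_{\mathrm{fin}}=0$ in general and then reaches the contradiction through $S(g_0)$, whereas you pull back the specific element $(0,0\,|\,h)$; this is only a reordering, and your use of part (c) in Step 3 is unnecessary, since only the qudit component matters.
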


\begin{proof}
Suppose $S(D)=D_{\mathrm{rot}}\oplus D_{\mathrm{fin}}$.  Every element of
$\{0\}\oplus D_{\mathrm{fin}}\subseteq S(D)$ is annihilated by $d$.  Since
$S$ is an injective homomorphism, the elements of $S(D)$ annihilated by
$d$ are the images of the elements of $D$ annihilated by $d$, which lie
in the circle by (i).  By Lemma~\ref{lem:rotor-qudit-block-structure}(a),
$S(0,\theta\,|\,0,0)=(A_{RR}(0,\theta)\,|\,0,0)$ has zero qudit component.
Hence $D_{\mathrm{fin}}=0$, and every element of $S(D)$ has zero qudit
component.  But the qudit component of $S(g_0)$ is
$A_{QR}(n_0,\theta_0)+A_{QQ}(a_0,b_0)=A_{QQ}(a_0,b_0)$ by
Lemma~\ref{lem:rotor-qudit-block-structure}(a), and it is nonzero by
Lemma~\ref{lem:rotor-qudit-block-structure}(d).  This is a contradiction.
\end{proof}

\begin{proof}[Proof of Proposition~\ref{prop:rotor-qudit-nondecoupling}]
Let $w_1=(d,0\,|\,1,0)$ and $w_2=(0,\tfrac1d\,|\,0,0)$ be the generators of
$D$ in Sec.~\ref{ex:rotor-qudit}.
Every element of $D$ is $kw_1+lw_2=(kd,\tfrac ld\,|\,k,0)$ with $k\in\Z$
and $l\in\Z_d$.  If $d(kw_1+lw_2)=0$, the angular-momentum component
$kd^2$ vanishes, so $k=0$ and the element lies in the circle.
This is property~(i) of Lemma~\ref{lem:rotor-qudit-obstruction}.
The generator $w_1$ has property~(ii), because $(1,0)\neq0$ in
$\Z_d\times\Z_d$ for $d\geq2$.
Lemma~\ref{lem:rotor-qudit-obstruction} therefore applies.
\end{proof}

\begin{proof}[Proof of Corollary~\ref{cor:rotor-qudit-state}]
Let $D'=\langle w_1,z\rangle$ with $z=(0,-\tfrac1{d^2}\,|\,0,1)$, the label
of the logical operator $\zl$.
Since $z$ is a logical label, $D'$ is isotropic.  The Heisenberg cocycle
$\beta$ is trivial on the generators, since
$\beta(w_1,z)=e^{2\pi i(-d/d^2+1/d)}=1$, so $c=1$.  Every element of $D'$
has the form $kw_1+lz=(kd,-\tfrac{l}{d^2}\,|\,k,l)$, so $z$ has order
$d^2$ and $dz=-w_2$.  A fundamental domain is
$\{0,\ldots,d-1\}\times[0,\tfrac1{d^2})\times\Z_d\times\Z_d$, of measure
$d\cdot\frac1{d^2}\cdot d^2\cdot\frac1d=1$, so $K=1$.
Equivalently, $T'=(w_1,z,de_a,de_b)$ has
$|\det T'|/d=1=|\operatorname{Pf}(T'^{\intercal}JT')|$, where
$de_a=(0,0\,|\,d,0)$ and $de_b=(0,0\,|\,0,d)$ are the qudit periods.
If $d(kw_1+lz)=0$, then $k=0$ and $d\mid l$, so the element lies in
$\langle dz\rangle=\langle w_2\rangle$, inside the circle.  The generator
$w_1$ has property~(ii) of Lemma~\ref{lem:rotor-qudit-obstruction}, so
$D'$ is not symplectically decomposable.

For the state, let
$\operatorname{Stab}(u)=\{g\in G:\mathbf D(g)u\in\C u\}$
for a nonzero generalized state $u$.  If
$\mathbf D(g)\ket{\psi_d}=\lambda\ket{\psi_d}$, commuting $\mathbf D(g)$
past $\mathbf D(w)$ for $w\in D'$ gives $\rho(w,g)=1$.  Hence
$D'\subseteq\operatorname{Stab}(\ket{\psi_d})\subseteq D'^{\perp}=D'$,
where the last equality holds because $[D'^\perp:D']=K^2=1$.
A generalized Clifford operator $\mathcal U$ for
$S\in\operatorname{Sp}(G,\rho)$, possibly followed by a displacement,
satisfies $\operatorname{Stab}(\mathcal Uu)=S(\operatorname{Stab}(u))$ by
Eq.~\eqref{eq:clifford_implementer}.  A product of
nonzero factors has
$\operatorname{Stab}(u_{\mathrm{rot}}\otimes u_{\mathrm{fin}})
=\operatorname{Stab}(u_{\mathrm{rot}})\oplus\operatorname{Stab}(u_{\mathrm{fin}})$,
because $u_1\otimes u_2\in\C(u_1'\otimes u_2')$ forces $u_1\in\C u_1'$ and
$u_2\in\C u_2'$.  A Clifford map from $\ket{\psi_d}$ to a product would
therefore split $S(D')$, contradicting
Lemma~\ref{lem:rotor-qudit-obstruction}.
\end{proof}

\begin{remark}
\label{rem:rotor-qudit-symplectic-maps}
The obstruction is arithmetic rather than a lack of mixing symplectic
maps.  Every $S\in\operatorname{Sp}(G,\rho)$ has the form
\[
 S(n,\theta\,|\,a,b)=\bigl(\epsilon n,\;
 \varphi n+\epsilon\theta-\epsilon\,\sigma_Q\bigl(v,A_{QQ}(a,b)\bigr)\;\big|\;
 nv+A_{QQ}(a,b)\bigr),
\]
with $\varphi\in\T$, $v\in\Z_d^2$, $A_{QQ}\in\mathrm{SL}(2,\Z_d)$, and a
single sign $\epsilon=\pm1$ that multiplies both $n$ and $\theta$.
Here $\sigma_Q((a,b),(a',b'))=(ab'-a'b)/d$ is the qudit part of the
exponent of $\rho$.
For instance,
$\ket n_{\mathrm{rot}}\ket j_{\mathrm{fin}}\mapsto
\ket n_{\mathrm{rot}}\ket{j+n}_{\mathrm{fin}}$ implements the case
$\epsilon=1$, $\varphi=0$, $A_{QQ}=I$, and $v=(1,0)$.
These shears are the rotor--qudit
counterparts of the oscillator--rotor shears $S_{a,c}$ of
Eq.~\eqref{eq:oscillator-rotor-shear}.
Consider
$D_{r,t}=\langle(r,0\,|\,1,0),(0,\tfrac1t\,|\,0,0)\rangle$ with $t\mid r$,
which has $K=rd/t$.
When $\gcd(r,d)=1$, the shear with $v=(-r^{-1}\bmod d,0)$ maps $D_{r,t}$ to
$\langle(r,0),(0,\tfrac1t)\rangle\oplus\{0\}$.
When $h:=\gcd(r,d)>1$, Lemma~\ref{lem:rotor-qudit-obstruction} applies with
$g_0=\tfrac dh(r,0\,|\,1,0)$.
The code of Sec.~\ref{ex:rotor-qudit} is the case $r=t=d$.
For $d=2$, the parametrization above coincides, up to conventions, with
the one-rotor, one-qubit case of the hybrid phase-space classification of
Luo et al.~\cite{luo2026hybridqubitrotor}.
Lemma~\ref{lem:rotor-qudit-block-structure} and this remark generalize
that case to qudits of any dimension $d$.
\end{remark}

\subsection{Proof of Proposition~\ref{prop:hybrid-intrinsic-nondecoupling}}
\label{appx:proof_hybrid_intrinsic}
Write points of $G$ as $z=(x,n,j;\xi,\theta,b)$, as in
Sec.~\ref{ex:torsion}, and let $f=(j,b)\in\Z_2^2$ collect the qubit
labels.
Let $S\in\operatorname{Sp}(G,\rho)$ and $D:=D_{\mathrm{int}}$.

\textit{Step 1: action on discrete labels.}
The identity component of $G$ is
$G_0=\{(x,0,0;\xi,\theta,0)\}\cong\R^2\times\T$, and
$G/G_0\cong\Z\times\Z_2^2$ through the labels $(n,f)$.
Being a continuous automorphism, $S$ preserves $G_0$ and induces an
automorphism of $\Z\times\Z_2^2$.
This automorphism preserves the torsion subgroup $\{0\}\times\Z_2^2$, and
every homomorphism from $\Z_2^2$ to $\Z$ vanishes.
Hence $S$ acts on these labels as
\begin{equation}
 (n,f)\mapsto(\epsilon n,\,Ff+nh),
 \qquad \epsilon=\pm1,\quad F\in\mathrm{GL}(2,\Z_2),\quad h\in\Z_2^2.
 \label{eq:hybrid-intrinsic-discrete-action}
\end{equation}
Since $S$ preserves commutation phases, it also maps $D^\perp$ onto
$S(D)^\perp$.
Hence $S(D)^\perp/S(D)\cong D^\perp/D\cong\Z_4^2$.

\textit{Step 2: stabilizers without rotor shift.}
Let $H:=D\cap\{n=0\}$.
An element $m_1w_1+m_2w_2+m_3w_3$ has rotor shift $2m_3$, so
$H=\langle w_1,w_2\rangle$.
By Eq.~\eqref{eq:hybrid-intrinsic-discrete-action}, $S$ preserves the
condition $n=0$, so $S(H)=S(D)\cap\{n=0\}$.
The qubit labels of $S(w_2)$ are $F(0,1)\neq0$.
Thus $S(H)$ contains an element with nonzero qubit labels.

\textit{Step 3: oscillator versus rotor and qubit.}
Suppose that $S(D)=D_1\oplus D_2$, with $D_1$ in the oscillator phase
space and $D_2$ in the rotor--qubit phase space.
Then $S(D)\cap\{n=0\}=D_1\oplus(D_2\cap\{n=0\})$.
The group $D_2\cap\{n=0\}$ is a discrete subgroup of the compact group
$\{0\}\times\T\times\Z_2^2$, hence finite.
It is trivial because $S(D)\cong D$ is torsion free by
Eq.~\eqref{eq:hybrid-intrinsic-group-structure}.
Thus $S(H)=D_1$ has no qubit labels, which contradicts Step~2.

\textit{Step 4: qubit versus oscillator and rotor.}
Suppose that $S(D)=D_1\oplus D_2$, with $D_1$ in the qubit phase space
and $D_2$ in the oscillator--rotor phase space.
Then $D_1$ is finite, hence trivial.
Every element of $S(D)$ then has zero qubit labels, which again
contradicts Step~2.

\textit{Step 5: rotor versus oscillator and qubit.}
Suppose that $S(D)=D_1\oplus D_2$, with $D_1$ in the rotor phase space
and $D_2$ in the oscillator--qubit phase space.
By Eq.~\eqref{eq:hybrid-intrinsic-discrete-action}, the rotor shifts of
$S(D)$ form $2\Z$, and they all come from $D_1$.
Since $D_1$ is torsion free, it meets $\{0\}\times\T$ trivially and is
generated by a single label $(2,\varphi)$.
The symplectic shear $(n,\theta)\mapsto(n,\theta-\varphi n/2)$ maps
$D_1$ onto $2\Z\times\{0\}$, whose logical group is $\Z_2^2$ by
Eq.~\eqref{eq:rotor-css-logical-group}.
The logical group of $S(D)$ is the product of the logical groups of
$D_1$ and $D_2$, and it has order $K^2=16$.
Hence $D_2^\perp/D_2$ has order four.
By the alternating Smith normal form described after
Lemma~\ref{lem:full_lift_pfaffian}, it is $\Z_2^2$.
The logical group of $S(D)$ is then $\Z_2^4$, which has no element of
order four.
This contradicts $S(D)^\perp/S(D)\cong\Z_4^2$ from Step~1.

A decomposition into three single-system lattices splits, in particular,
the oscillator from the rotor and the qubit, which Step~3 excludes.
This proves Proposition~\ref{prop:hybrid-intrinsic-nondecoupling}.

\subsection{Borel selections for syndrome decoders}
\label{appx:decoding_borel_section}

\begin{lemma}
\label{lem:decoding_borel_section}
Let $f:X\to Y$ be a Borel map between standard Borel spaces with at most
countable fibers.  Then $f(X)$ is Borel, and $f$ admits a Borel right
inverse on $f(X)$.
\end{lemma}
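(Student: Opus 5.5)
The plan is to reduce to the case where $f$ is injective on the pieces of a countable Borel partition, then use the Lusin–Souslin theorem. Since every fiber of $f$ is countable, the Lusin–Novikov uniformization theorem applies to the Borel set $\operatorname{graph}(f)=\{(x,f(x))\}\subseteq X\times Y$. Viewed as a subset of $Y\times X$ by swapping coordinates, it has countable sections over every $y$. Lusin–Novikov then gives two conclusions. First, the projection $f(X)$ is Borel. Second, the graph is a countable union $\bigcup_k \Gamma_k$ of Borel graphs of partial functions $g_k\colon Y_k\to X$ with Borel domains $Y_k\subseteq Y$. Each $g_k$ is Borel as a map, because a function with Borel graph between standard Borel spaces is Borel.

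I would then fix the right inverse by a first-index rule. Set $r(y):=g_k(y)$, where $k$ is the least index with $y\in Y_k$. The sets $Y_k\setminus\bigcup_{j<k}Y_j$ are Borel and partition $f(X)$, so $r$ is Borel on $f(X)$. Because $(y,g_k(y))$ lies in the swapped graph, we have $f(r(y))=y$, and $r$ is a right inverse.

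If I want to avoid quoting Lusin–Novikov in full, there is a more hands-on route. Using a Borel isomorphism, embed $X$ into $[0,1]$ or $\mathbb N^{\mathbb N}$ with a countable base $\{U_n\}$. For each $n$, the set $A_n:=\{x\in U_n:\ x\text{ is the unique point of }f^{-1}(f(x))\cap U_n\}$ needs to be shown Borel. Each restriction $f|_{A_n}$ is then injective, and Lusin–Souslin makes $f(A_n)$ Borel with a Borel inverse. Every point lies in some $A_n$, since fibers are countable and hence points can be separated by basic open sets only in the discrete-fiber case. That caveat is exactly why I would rather cite the theorem.

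The main obstacle is the Borel measurability of the image $f(X)$. Images of Borel sets are in general only analytic, so some structural theorem, countable fibers via Lusin–Novikov, is unavoidable here. The remaining steps, piecing local inverses together by least index, are routine. For the applications in the paper I will also check the hypotheses. The maps $\kappa|_{\mathcal E}$, $\pi_D$, and $\varphi$ are continuous, hence Borel. Their fibers lie in cosets of the countable lattices $D^\perp$ or $D$, or, for $\varphi$, are finite. All the spaces involved are Polish.
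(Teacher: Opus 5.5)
Your main argument is correct and is essentially the paper's proof: apply Lusin--Novikov to the transposed graph $\{(y,x):f(x)=y\}\subseteq Y\times X$, read off that $f(X)$ is Borel and is covered by the Borel domains of countably many partial Borel sections, and take the first available section on the disjointified domains. The ``hands-on'' alternative is unnecessary and, as you suspected, fails as stated: a non-isolated point of a countable fiber, such as the limit point of a convergent sequence inside the fiber, lies in no $A_n$, so citing the theorem as you do is the right choice.
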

\begin{proof}
The Borel relation $\{(y,x):f(x)=y\}\subseteq Y\times X$ has countable
vertical sections.  The Lusin--Novikov uniformization theorem
\cite[Theorem~18.10]{kechris1995classical} expresses it as a countable
union of graphs of partial Borel maps with Borel domains.  Their domains
cover $f(X)$; choosing the first available value on their disjointized
domains gives the required right inverse.
\end{proof}

The elementary LCA groups considered here are second-countable, so $G$
and its quotients by closed subgroups are standard Borel spaces.  The
discrete groups $D$ and $D^\perp$ are countable.  Thus the lemma applies
to $\pi_D$, $\kappa$, and $\varphi$, as well as to their restrictions
to Borel sets.

\subsection{Hermite coefficients of Schwartz functions}
\label{appx:proof_fe_coefficients}

Put $\omega_{k,n}:=1+|k|+|n|$ for the basis~\eqref{eq:fe_basis}.

\begin{proposition}
\label{prop:fe_coefficients}
The map $\varphi\mapsto(\varphi_{k,n,w})$,
$\varphi_{k,n,w}:=\langle e_{k,n,w},\varphi\rangle$, is a topological
isomorphism of $\mathscr S(M)$ onto the space of rapidly decreasing sequences
indexed by $\mathbb Z_{\geq0}^p\times\mathbb Z^q\times\mathbb Z_{\dia c}$.  Its Fr\'echet
topology is equivalently given by the seminorms
\begin{equation}
 p_R(\varphi):=\sup_{k,n,w}\omega_{k,n}^{\,R}\,|\varphi_{k,n,w}|,
 \qquad R\in\mathbb Z_{\geq0}.
 \label{eq:fe_coefficient_seminorms}
\end{equation}
The coefficient map $u\mapsto(u_{k,n,w})$, with
$u_{k,n,w}:=\langle u,e_{k,n,w}\rangle$, identifies the continuous anti-dual
$\mathscr S'(M)$ with the sequences satisfying
\begin{equation}
 |u_{k,n,w}|\leq C\,\omega_{k,n}^{\,R_0}
 \qquad\text{for some }C>0,\ R_0\in\mathbb Z_{\geq0}.
 \label{eq:fe_distribution_growth}
\end{equation}
The corresponding functional is uniquely given by the absolutely
convergent series
$\langle u,\varphi\rangle=\sum_{k,n,w}u_{k,n,w}\overline{\varphi_{k,n,w}}$.
\end{proposition}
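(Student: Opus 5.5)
We prove Proposition~\ref{prop:fe_coefficients} by splitting $M=\mathbb R^p\times\mathbb Z^q\times\mathbb Z_{\dia c}$ into its three factors and treating each separately. The finite factor contributes only a finite index $w$, and the rotor factor is already a sequence space. The work is in the oscillator factor, where we use the classical Hermite characterization of $\mathscr S(\mathbb R^p)$ (Simon's $N$-representation theorem): $\varphi\in\mathscr S(\mathbb R^p)$ if and only if its Hermite coefficients decay faster than any power of $1+|k|$.

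For the oscillator factor we use the number operator $\mathbf N$ of Eq.~\eqref{eq:fe_energy_operators}, which satisfies $\mathbf Nh_k=|k|h_k$. For the rotor factor we use multiplication by $1+|n|$. Three steps then follow.

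\emph{Step 1.} We bound $p_R$ by the seminorms $q_{R'}$ of Eq.~\eqref{eq:schwartz_seminorms}. Since $\mathbf N$ is self-adjoint,
\[
 \omega_{k,n}^R|\varphi_{k,n,w}|\leq C_R\,\bigl|\langle e_{k,n,w},(1+\mathbf N+|\mathbf Q|)^R\varphi\rangle\bigr|,
\]
and Cauchy--Schwarz bounds the right side by the $L^2$ norm of $(1+\mathbf N+|\mathbf Q|)^R\varphi$. This is a differential operator in $x$ with polynomial coefficients of degree at most $2R$, combined with multiplication by powers of $|n|$. Its $L^2$ norm is therefore controlled by a $q_{R'}$-seminorm, using one extra weight $(1+|x|+|n|)^{-(p+q+1)/2}$ to make the sum over $n$ and the integral over $x$ converge.

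\emph{Step 2.} We prove the converse bound. Given a rapidly decreasing sequence, we form $\varphi=\sum\varphi_{k,n,w}e_{k,n,w}$. We need the standard polynomial estimate
\[
 q_R(h_k)\leq C_R(1+|k|)^{s(R)},
\]
which follows from the ladder-operator identities $x_j=(a_j+a_j^\dagger)/\sqrt2$ and $\partial_j=(a_j-a_j^\dagger)/\sqrt2$ together with $\sup|h_k|\leq1$. Given this, the series converges in every $q_R$, and $q_R(\varphi)\leq C\,p_{R''}(\varphi)$. Combined with Step 1, this gives the topological isomorphism, and the open mapping theorem confirms that the two seminorm families are equivalent.

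\emph{Step 3.} We treat the dual space. A continuous anti-linear functional $u$ satisfies $|u(\varphi)|\leq C\,p_{R_0}(\varphi)$ for some $R_0$. Evaluating on the basis vectors $e_{k,n,w}$ gives the growth bound~\eqref{eq:fe_distribution_growth}. Conversely, polynomially growing coefficients paired with rapidly decreasing ones give an absolutely convergent series, which defines a continuous functional. Uniqueness follows because finite linear combinations of the $e_{k,n,w}$ are dense in $\mathscr S(M)$, which is Step 2.

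The main obstacle is the Hermite estimate in Step 2, namely tracking how $x$-weights and $x$-derivatives act on $h_k$ with only polynomial loss in $k$. We plan to handle it by expanding each weight and derivative in ladder operators and using a Sobolev embedding to pass from $L^2$ bounds to the supremum norms in $q_R$.
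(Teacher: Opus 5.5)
Your proposal is correct and follows essentially the same route as the paper. Both arguments use the slicewise Hermite characterization of Schwartz functions, polynomial growth of $q_R(h_k)$ for the converse, and evaluation on $e_{k,n,w}$ plus density of finite Hermite sums for the dual. The one difference is the Schwartz-to-Hermite bound: the paper cites Keyl et al.\ and notes that the constants are independent of $(n,w)$, whereas you prove it directly from $(1+\mathbf N+|\mathbf Q|)^R e_{k,n,w}=\omega_{k,n}^R e_{k,n,w}$ and Cauchy--Schwarz, which handles the rotor weights at the same time.
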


\begin{proof}
For a single slice $\varphi(\cdot,n,w)\in\mathscr S(\mathbb R^p)$ the
equivalence between the seminorms \eqref{eq:schwartz_seminorms} and the
weighted Hermite seminorms is \cite[Sec.~2.2, Eqs.~(12)--(13) and
Prop.~2.3]{keyl2016schwartz}; the constants there depend only on $p$ and $R$,
hence not on $(n,w)$.  Inserting the polynomial weights in $n$ and taking the
maximum over the finite set $\mathbb Z_{\dia c}$ gives \eqref{eq:fe_coefficient_seminorms}.
Conversely a rapidly decreasing sequence has a Hermite series converging in
every seminorm \eqref{eq:schwartz_seminorms}, because the $q_R$-seminorms of
$h_k$ grow at most polynomially in $|k|$.

Continuity of $u$ gives $|\langle u,\varphi\rangle|\leq Cp_{R_0}(\varphi)$ for
some $R_0$, and $\varphi=e_{k,n,w}$ yields
\eqref{eq:fe_distribution_growth}.  Conversely, if
\eqref{eq:fe_distribution_growth} holds, choose an integer $L>p+q$; then
\[
 \sum_{k,n,w}|u_{k,n,w}\varphi_{k,n,w}|
 \leq C\,p_{R_0+L}(\varphi)\sum_{k,n,w}\omega_{k,n}^{-L}<\infty,
\]
which is a continuous conjugate-linear functional.  Finite truncations of the
Hermite series converge in $\mathscr S(M)$, so the coefficients determine $u$.
\end{proof}

\subsection{Smoothing by the Gaussian regularizer}
\label{appx:proof_fe_smoothing}

\begin{proposition}
\label{prop:fe_smoothing}
For every $\Delta>0$ the operator $\mathbf E_\Delta$ is a positive injective
contraction on $L^2(M)$; it acts continuously on $\mathscr S(M)$, and its
transpose extension to $\mathscr S'(M)$, defined by
$\langle\mathbf E_\Delta u,\varphi\rangle:=\langle u,\mathbf E_\Delta\varphi\rangle$,
is injective and maps $\mathscr S'(M)$ continuously into $\mathscr S(M)$.
Moreover, \eqref{eq:fe_smoothing_convergence} holds for every
$u\in\mathscr S'(M)$.
\end{proposition}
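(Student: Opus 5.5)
Positivity and injectivity on $L^2(M)$ are read off the diagonal action in the orthonormal basis $e_{k,n,w}$ of Eq.~\eqref{eq:fe_basis}. By Definition~\ref{def:fe_regularizer}, $\mathbf E_\Delta e_{k,n,w}=a_\Delta(k,n)e_{k,n,w}$ with $0<a_\Delta(k,n)\leq1$. Hence $\mathbf E_\Delta$ is a self-adjoint positive contraction with trivial kernel. All remaining statements are then handled in the coefficient picture of Proposition~\ref{prop:fe_coefficients}. There, $\mathscr S(M)$ is the space of rapidly decreasing sequences with seminorms $p_R$, and $\mathscr S'(M)$ is the space of polynomially bounded sequences.

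On $\mathscr S(M)$ the bound $a_\Delta\leq1$ gives $p_R(\mathbf E_\Delta\varphi)\leq p_R(\varphi)$, so $\mathbf E_\Delta$ is continuous there. Because $a_\Delta$ is real, the transpose extension acts on coefficients by $(\mathbf E_\Delta u)_{k,n,w}=a_\Delta(k,n)u_{k,n,w}$. This extension is injective because $a_\Delta>0$. To show that it maps into $\mathscr S(M)$, use the Gaussian decay
\[
a_\Delta(k,n)=e^{-\Delta|k|-\pi\Delta|n|^2/2}\leq C_{R,\Delta}\,\omega_{k,n}^{-R}
\quad\text{for every }R .
\]
If $|u_{k,n,w}|\leq C\omega_{k,n}^{R_0}$, this gives $p_R(\mathbf E_\Delta u)\leq C\,C_{R+R_0,\Delta}$. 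The map into $\mathscr S(M)$ is therefore well defined.

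For continuity into $\mathscr S(M)$, I would put the strong dual topology on $\mathscr S'(M)$. A bounded set $B\subset\mathscr S(M)$ has $\sup_{\varphi\in B}p_{R+L}(\varphi)<\infty$ for every $R$. The pairing is
\[
\langle \mathbf E_\Delta u,\varphi\rangle=\sum u_{k,n,w}\,a_\Delta\,\overline{\varphi_{k,n,w}} .
\]
Using the Gaussian decay of $a_\Delta$, I would control $p_R(\mathbf E_\Delta u)$ by the supremum of $|\langle u,\psi\rangle|$ over the bounded set $\psi\in\{\omega^{R}a_\Delta e_{k,n,w}\}$, which is the required seminorm estimate. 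A slicker alternative is the closed graph theorem: $\mathscr S'(M)$ with its strong topology is ultrabornological, and the graph of $\mathbf E_\Delta$ is closed because coefficient evaluation is continuous on both spaces.

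For the convergence $\mathbf E_\Delta u\to u$ in $\mathscr S'(M)$ as $\Delta\downarrow0$, fix $\varphi\in\mathscr S(M)$. Then
\[
\langle\mathbf E_\Delta u-u,\varphi\rangle=\sum(a_\Delta-1)\,u_{k,n,w}\overline{\varphi_{k,n,w}} .
\]
Each term tends to zero, and the terms are dominated by $2|u_{k,n,w}\varphi_{k,n,w}|$, which is summable by Proposition~\ref{prop:fe_coefficients}. Dominated convergence gives weak-$*$ convergence. The same argument gives uniform convergence on bounded sets, hence strong convergence, by splitting into $\omega\leq\Lambda$ and $\omega>\Lambda$ and using $1-a_\Delta\leq\Delta(|k|+\pi|n|^2/2)$.

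The main obstacle is settling which topology on $\mathscr S'(M)$ is meant for the word ``continuously'' and handling it cleanly. I would rely on the closed-graph route if the direct seminorm estimate becomes messy.
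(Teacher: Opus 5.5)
Your proposal is correct and follows the paper's proof essentially step for step. Both use the coefficient picture of Proposition~\ref{prop:fe_coefficients}, the identity $p_R(\mathbf E_\Delta u)=\sup_\psi|\langle u,\psi\rangle|$ over the same bounded set $\{\omega_{k,n}^{R}a_\Delta(k,n)e_{k,n,w}\}$ for strong-dual continuity, and dominated convergence for $\mathbf E_\Delta u\to u$. The only differences are cosmetic: the paper gets uniformity on a bounded set directly from the majorant $CC_0\sum_{k,n,w}|a_\Delta(k,n)-1|\,\omega_{k,n}^{-L}$ with $C_0=\sup_{\varphi}p_{R_0+L}(\varphi)$, you instead pass through weak-$*$ convergence and an explicit splitting, and your closed-graph alternative is not needed.
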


\begin{proof}
Since $0<a_\Delta\leq1$ we get $p_R(\mathbf E_\Delta\varphi)\leq p_R(\varphi)$.
The coefficients of $\mathbf E_\Delta u$ are $a_\Delta u$, and
\eqref{eq:fe_distribution_growth} gives, for every $R$,
\[
 \sup_{k,n,w}\omega_{k,n}^{\,R}\,|a_\Delta(k,n)u_{k,n,w}|
 \leq C\sup_{k,n}\omega_{k,n}^{\,R+R_0}
   e^{-\Delta|k|-\pi\Delta|n|^2/2}<\infty,
\]
because exponential and Gaussian decay beat every polynomial.  Hence
$\mathbf E_\Delta u\in\mathscr S(M)$ by
Proposition~\ref{prop:fe_coefficients}, and injectivity holds because no
multiplier vanishes.  For continuity into $\mathscr S(M)$, note that the set
$\mathcal B_{R,\Delta}:=\{\omega_{k,n}^{\,R}a_\Delta(k,n)e_{k,n,w}\}$
is bounded in $\mathscr S(M)$, since its $p_L$-seminorms are bounded by
$\sup\omega^{R+L}a_\Delta<\infty$, and that
\[
 p_R(\mathbf E_\Delta u)
 =\sup_{\varphi\in\mathcal B_{R,\Delta}}|\langle u,\varphi\rangle|,
\]
which is a strong dual seminorm.

For the convergence, let $\mathcal B_0\subseteq\mathscr S(M)$ be bounded,
let $L>p+q$, and put
$C_0:=\sup_{\varphi\in\mathcal B_0}p_{R_0+L}(\varphi)<\infty$.  Then
\[
 \sup_{\varphi\in\mathcal B_0}
 |\langle\mathbf E_\Delta u-u,\varphi\rangle|
 \leq CC_0\sum_{k,n,w}|a_\Delta(k,n)-1|\,\omega_{k,n}^{-L}
 \longrightarrow0
\]
by dominated convergence, since $|a_\Delta-1|\leq1$ and each summand tends to
$0$.
\end{proof}

\subsection{Factors of the normalized characteristic function}
\label{appx:proof_fe_three_factors}

\begin{lemma}
\label{lem:fe_three_factors}
With the Hermite normalization of Section~\ref{subsec:fe_regularizer},
\begin{equation}
 \Xi^{\mathbb R}_\Delta(x,\xi)
 =e^{-\pi ix\cdot\xi}
  \exp\!\left[-\frac{\coth\Delta}{4}
   \bigl(|x|^2+4\pi^2|\xi|^2\bigr)\right],
 \qquad
 |\Xi^{\mathbb R}_\Delta(x,\xi)|
 \leq e^{-(|x|^2+|\xi|^2)/(4\Delta)} .
 \label{eq:fe_oscillator_factor}
\end{equation}
For the rotor factor,
\begin{equation}
 \Xi^{\mathbb Z}_\Delta(n,\theta)=0 \quad(n\neq0),
 \qquad
 \Xi^{\mathbb Z}_\Delta(0,\theta)
 =\frac{\sum_{m\in\mathbb Z^q}e^{-\pi\Delta|m|^2}e^{2\pi im\cdot\theta}}
        {\vartheta_q(\Delta)}
 =\frac{\sum_{\ell\in\mathbb Z^q}e^{-\pi|\ell-\theta|^2/\Delta}}
        {\sum_{\ell\in\mathbb Z^q}e^{-\pi|\ell|^2/\Delta}},
 \label{eq:fe_rotor_factor}
\end{equation}
and there is $C_q<\infty$ with
\begin{equation}
 |\Xi^{\mathbb Z}_\Delta(0,\theta)|
 \leq C_q\,e^{-\pi\operatorname{dist}(\theta,0)^2/(2\Delta)}
 \qquad(0<\Delta\leq1).
 \label{eq:fe_rotor_bound}
\end{equation}
Finally $\Xi^{\mathrm{fin}}(0,1)=1$ and $\Xi^{\mathrm{fin}}(w,\chi)=0$ otherwise.
\end{lemma}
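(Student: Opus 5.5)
The plan is to compute each factor of Eq.~\eqref{eq:fe_characteristic_factorization} directly. For each factor, I will take the trace of the tensor component of $\mathbf A_\Delta$ times the corresponding displacement and then divide by that factor's share of $\mathcal Z_\Delta$.

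For the finite factor, $\mathbf A_\Delta$ acts as the identity, so $\Xi^{\mathrm{fin}}(w,\chi)=C_{\dia c}^{-1}\operatorname{Tr}(\mathbf M_\chi\mathbf T_w)$. If $w\neq0$, the shift $\mathbf T_w$ permutes the basis kets without fixed points, so the trace vanishes. If $w=0$, the trace is $\sum_f\langle f,\chi\rangle$, which is $C_{\dia c}$ for $\chi=1$ and zero otherwise by character orthogonality.

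The rotor factor is handled the same way. The operator $e^{-\pi\Delta|\mathbf Q|^2}$ is diagonal in $\delta_m$, and $\mathbf T_n$ moves $\delta_m$ to $\delta_{m-n}$, so the trace vanishes for $n\neq0$. For $n=0$ the trace is $\sum_m e^{-\pi\Delta|m|^2}e^{2\pi i m\cdot\theta}$. Dividing by $\vartheta_q(\Delta)$ gives the first expression in Eq.~\eqref{eq:fe_rotor_factor}. Poisson summation, exactly as in Eq.~\eqref{eq:fe_partition_asymptotics} but with a shifted Gaussian, then gives the second expression: the factor $\Delta^{-q/2}$ cancels between numerator and denominator.

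For the bound~\eqref{eq:fe_rotor_bound}, I choose the representative of $\theta$ in $[-\tfrac12,\tfrac12]^q$, so that $|\theta|=\operatorname{dist}(\theta,0)$. The denominator is at least $1$. I then bound the numerator using $|\ell-\theta|^2\geq\tfrac12|\theta|^2+\tfrac12|\ell-\theta|^2$. After factoring out $e^{-\pi|\theta|^2/(2\Delta)}$, what remains is $\sum_\ell e^{-\pi|\ell-\theta|^2/(2\Delta)}$. For $\Delta\leq1$ and $\theta$ in the box, this sum is at most a constant $C_q$, since $|\ell-\theta|\geq|\ell|/2$ once $|\ell|\geq1$ in sup norm.

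The oscillator factor is the main computation. It factors over the $p$ modes, so I will do it for one mode. I take the trace against the Mehler kernel of $e^{-2\Delta\mathbf N}$, namely $M_{2\Delta}(y,y')$ as in Eq.~\eqref{eq:fe_mehler_kernel}. The trace is
\[
\int e^{2\pi i\xi y}M_{2\Delta}(y+x,y)\,dy,
\]
which is a Gaussian integral. I would organize it by symmetrizing $y\mapsto y-x/2$; this change of variables produces the phase $e^{-\pi i x\xi}$. Completing the square then gives the exponent $-\tfrac14\coth\Delta\,(x^2+4\pi^2\xi^2)$. The hyperbolic identities needed to reach this form, $\coth 2\Delta+\operatorname{csch}2\Delta=\coth\Delta$ and its companion, are the main place errors can creep in, so I would verify the normalization at $x=\xi=0$ against $(1-e^{-2\Delta})^{-1}$.

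An alternative I would use to cross-check is to write $e^{-2\Delta\mathbf N}$ as a Gaussian mixture of coherent-state projectors (a thermal state) and use the known thermal characteristic function, adapted to the convention $\mathbf D=\mathbf M_\xi\mathbf T_x$.

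The final inequality uses $\coth\Delta\geq1/\Delta$ and $4\pi^2\geq1$. These give $\tfrac14\coth\Delta\,(x^2+4\pi^2\xi^2)\geq(x^2+\xi^2)/(4\Delta)$.
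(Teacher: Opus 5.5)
Your proof is correct. The finite and rotor factors follow the paper's argument essentially verbatim. In particular, your splitting $|\ell-\theta|^2\geq\tfrac12|\theta|^2+\tfrac12|\ell-\theta|^2$ followed by $|\ell-\theta|\geq|\ell|/2$ gives the same majorant $e^{-\pi\operatorname{dist}(\theta,0)^2/(2\Delta)}\sum_\ell e^{-\pi|\ell|^2/(8\Delta)}$ that the paper obtains by averaging the two inequalities.

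The oscillator factor is where you genuinely differ. The paper stays in the Hermite basis, where $\mathbf A_\Delta$ is diagonal. It writes $\mathbf D(x,\xi)=e^{-\pi i x\xi}D(\eta)$ with $D(\eta)=e^{\eta b^\dagger-\overline\eta b}$ and $\eta=(-x+2\pi i\xi)/\sqrt2$. It then inserts the Fock matrix elements $\langle h_j,D(\eta)h_j\rangle=e^{-|\eta|^2/2}L_j(|\eta|^2)$ and sums them with the Laguerre generating function at $\varrho=e^{-2\Delta}$. This returns $e^{-\frac12\coth\Delta\,|\eta|^2}$ with the normalization $(1-\varrho)$ already built in.

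Your position-space computation against the Mehler kernel also works. After $y\mapsto y-x/2$, the exponent of $M_{2\Delta}(y+x/2,y-x/2)$ is $-y^2\tanh\Delta-\tfrac14x^2\coth\Delta$, by exactly the two hyperbolic identities you name. The $y$-integral then contributes $\sqrt{\pi\coth\Delta}\,e^{-\pi^2\xi^2\coth\Delta}$. Combined with the kernel's prefactor, this gives $\sqrt{\coth\Delta/(1-e^{-4\Delta})}=(1-e^{-2\Delta})^{-1}$, which is the partition function, so your planned normalization check succeeds.

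The trade-off is as follows.
\begin{itemize}
\item The paper's route gets the trace for free, since it sums a diagonal operator over an orthonormal basis. It does, however, import the Laguerre matrix-element formula and a Baker--Campbell--Hausdorff comparison to produce the phase.
\item Your route uses only Gaussian integrals and the kernel the paper already displays, and it makes the origin of $e^{-\pi i x\cdot\xi}$ transparent.
\item Its cost is one routine justification you should state: that $\operatorname{Tr}(\mathbf A_\Delta\mathbf D(z))$ equals the integral of its kernel along the diagonal. This follows, for instance, by writing $\mathbf A_\Delta\mathbf D(z)=\mathbf E_\Delta\bigl(\mathbf E_\Delta\mathbf D(z)\bigr)$ as a product of two Hilbert--Schmidt operators with Gaussian kernels. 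Fubini's theorem together with the semigroup property of the Mehler kernel then gives your formula.
\end{itemize}
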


\begin{proof}
Oscillator factor.  In one mode let $b=(x+\partial_x)/\sqrt2$ and
$D(\eta)=e^{\eta b^\dagger-\overline\eta b}$.  Comparing with
$\mathbf D(x,\xi)=\mathbf M_\xi\mathbf T_x$ and
$\mathbf T_x\psi(t)=\psi(t+x)$ gives
$\mathbf D(x,\xi)=e^{-\pi ix\xi}D(\eta)$ with
$\eta=(-x+2\pi i\xi)/\sqrt2$.  The Fock diagonal element is
$\langle h_j,D(\eta)h_j\rangle=e^{-|\eta|^2/2}L_j(|\eta|^2)$ with $L_j$
the Laguerre polynomial, and for $\varrho=e^{-2\Delta}$ the generating
function $\sum_j\varrho^jL_j(t)=(1-\varrho)^{-1}
\exp(-\varrho t/(1-\varrho))$ \cite[Eq.~18.12.13, $\alpha=0$]{DLMF} gives
\[
 (1-\varrho)\sum_{j\geq0}\varrho^j\langle h_j,D(\eta)h_j\rangle
 =\exp\!\left[-\frac{1+\varrho}{2(1-\varrho)}|\eta|^2\right]
 =e^{-\frac12\coth\Delta\,|\eta|^2}.
\]
Since $|\eta|^2=(x^2+4\pi^2\xi^2)/2$, taking the product over the $p$ modes
proves the identity in \eqref{eq:fe_oscillator_factor}.  The bound follows
from $\tanh\Delta\leq\Delta$, i.e.\ $\coth\Delta\geq\Delta^{-1}$, together
with $4\pi^2\geq1$.

Rotor factor.  $\mathbf T_n\delta_m=\delta_{m-n}$, while
$e^{-\pi\Delta|\mathbf Q|^2}$ and $\mathbf M_\theta$ are diagonal; so every
diagonal entry of the product vanishes unless $n=0$, and for $n=0$ the
diagonal entries give the first ratio in \eqref{eq:fe_rotor_factor}.  Poisson
summation applied to the modulated Gaussian gives the second ratio.  For the
bound, let $t\in[-1/2,1/2]^q$ represent $\theta$ and
$\delta:=|t|=\operatorname{dist}(\theta,0)$.  For every $\ell\in\mathbb Z^q$
one has $|\ell-t|^2\geq\delta^2$ and, coordinatewise from
$|\ell_j-t_j|\geq|\ell_j|/2$, also $|\ell-t|^2\geq|\ell|^2/4$; averaging the
two gives $|\ell-t|^2\geq\tfrac12\delta^2+\tfrac18|\ell|^2$.  The denominator
in \eqref{eq:fe_rotor_factor} is at least $1$, whence
\[
 |\Xi^{\mathbb Z}_\Delta(0,\theta)|
 \leq e^{-\pi\delta^2/(2\Delta)}\sum_{\ell}e^{-\pi|\ell|^2/(8\Delta)}
 \leq e^{-\pi\delta^2/(2\Delta)}
   \underbrace{\sum_{\ell}e^{-\pi|\ell|^2/8}}_{=:C_q<\infty}
\]
for $0<\Delta\leq1$.

Finite factor.  Again $\mathbf T_w\delta_g=\delta_{g-w}$ has no diagonal
entry unless $w=0$, and then the trace is $\sum_{g\in\mathbb Z_{\dia c}}\chi(g)$,
which is $C_{\dia c}$ for $\chi=1$ and $0$ otherwise, by character
orthogonality.  Divide by
$\operatorname{Tr}\mathbf 1_{\ell^2(\mathbb Z_{\dia c})}=C_{\dia c}$.
\end{proof}

\subsection{Proof of Lemma~\ref{lem:fe_lattice_sums}}
\label{appx:proof_fe_lattice_sums}

\begin{proof}
By \eqref{eq:fe_pointwise_decay} only the sets
$X:=(\Gamma\setminus\{0\})\cap H$ and $X:=(u+\Gamma)\cap H$ contribute; an
empty $X$ gives the zero sum.  On $H$ the function $\operatorname{rad}$ is
the distance to the identity for the product metric of
$\mathbb R^{2p}\times\mathbb T^q$.  Since $\Gamma$ is discrete and $u+\Gamma$
is closed and avoids $0$, in either case there is a gap $\delta_X>0$ with
$\operatorname{rad}(z)\geq\delta_X$ on $X$.

Next, $X$ is contained in a translate of the closed discrete group
$\Gamma\cap H$, so a fixed Gaussian sum over $X$ is finite.  Indeed, let
$Q:=[0,1]^{2p}\times\mathbb T^q\subseteq H$.  Any translate of $Q$ meets $X$
in at most $B:=|(\Gamma\cap H)\cap(Q-Q)|<\infty$ points, because subtracting
one of them from the others injects into this finite set; and the region
$\{\operatorname{rad}<L\}$ is covered by at most $C'(1+L)^{2p}$ translates of
$Q$.  Summing over the shells $j\leq\operatorname{rad}(z)<j+1$ gives
\begin{equation}
 S_X:=\sum_{z\in X}e^{-c\operatorname{rad}(z)^2/2}
 \leq B\,C'\sum_{j\geq0}(j+2)^{2p}e^{-cj^2/2}<\infty.
 \label{eq:fe_shell_sum}
\end{equation}
Finally, for $0<\Delta\leq1$ split the exponent,
\[
 e^{-c\operatorname{rad}(z)^2/\Delta}
 =e^{-c\operatorname{rad}(z)^2/(2\Delta)}
  e^{-c\operatorname{rad}(z)^2/(2\Delta)}
 \leq e^{-c\delta_X^2/(2\Delta)}\,
  e^{-c\operatorname{rad}(z)^2/2},
\]
so that $\sum_{z\in X}|\Xi_\Delta(z)|\leq CS_Xe^{-c\delta_X^2/(2\Delta)}$.
This proves both estimates with $a=c\delta_X^2/2$.  For an arbitrary fixed
$\Delta>0$, the identity in~\eqref{eq:fe_oscillator_factor} and the inequalities
$\coth\Delta\geq1$, $4\pi^2\geq1$, and
$|\Xi^{\mathbb Z}_\Delta|\leq1$ bound the summand on $H$ by
$e^{q/16}e^{-\operatorname{rad}(z)^2/4}$, since
$\operatorname{dist}(\theta,0)^2\leq q/4$; the same shell argument gives
absolute convergence.
\end{proof}

\subsection{Proof of Lemma~\ref{lem:fe_matrix_coefficients}}
\label{appx:proof_fe_matrix_coefficients}

\begin{proof}
The coefficient is a finite sum over $g\in\mathbb Z_{\dia c}$ and a sum over
$m\in\mathbb Z^q$
of integrals
\[
 \int_{\mathbb R^p}\overline{\varphi(t,m,g)}\,
 e^{2\pi i(t\cdot\xi+m\cdot\theta)}\chi(g)\,
 \psi(t+x,m+n,g+w)\,dt .
\]
Decay in $(x,n)$ follows from
$1+|x|+|n|\leq(1+|t|+|m|)(1+|t+x|+|m+n|)$ and the rapid decay of
$\varphi,\psi$.  Decay in $\xi$ follows by integrating by parts with
$(1-\nabla_t^2)^Je^{2\pi it\cdot\xi}=(1+4\pi^2|\xi|^2)^Je^{2\pi it\cdot\xi}$;
the product rule produces only finitely many derivatives of $\varphi$ and
$\psi$.  Choosing $J$ and the decay orders large enough leaves a summable and
integrable weight $(1+|t|+|m|)^{-p-q-1}$, which gives
\eqref{eq:fe_matrix_coefficient_decay}.  All compact and finite phases have
modulus $1$.  Finally, in
$G\cong\mathbb R^{2p}\times\mathbb Z^q\times K_{\mathrm{comp}}$, where
$K_{\mathrm{comp}}:=\mathbb T^q\times\mathbb Z_{\dia c}\times\widehat{\mathbb Z_{\dia c}}$
is compact,
the counting argument of
Lemma~\ref{lem:fe_lattice_sums} bounds the number of points of $\Gamma$ of
size at most $j+1$ by $C(j+2)^{2p+q}$; summing
\eqref{eq:fe_matrix_coefficient_decay} over unit shells with
$L>2p+q+1$ proves \eqref{eq:fe_matrix_coefficient_summability}.
\end{proof}

\subsection{Proof of Lemma~\ref{lem:fe_trace_expansion}}
\label{appx:proof_fe_trace_expansion}

\begin{proof}
Equation~\eqref{eq:fe_encoding_normalization} gives
\[
 \mathsf T_{\Delta,\tilde u}
 =\bigl(\mathbf E_\Delta\mathbf D(\tilde u)^\dagger
   \operatorname{Enc}\bigr)\operatorname{Enc}_\Delta^\dagger,
\]
whose first factor has Schwartz columns by
Proposition~\ref{prop:fe_smoothing}; so the product is bounded of rank at
most $K$ and its trace is defined.  For $\tilde u=0$ it is
$\operatorname{Enc}_\Delta\operatorname{Enc}_\Delta^\dagger$, of rank $K$.

To compute the trace, abbreviate $i=(k,n,w)$, $e_i=e_{k,n,w}$,
$a_i=a_\Delta(k,n)$, $\omega_i=\omega_{k,n}$.  By
\eqref{eq:fe_averaging_form},
\[
 \langle e_i,\mathsf T_{\Delta,\tilde u}e_i\rangle
 =\frac{a_i^2}{\sigma_D}
  \sum_{s\in D}\langle\mathbf D(\tilde u)e_i,
    \widetilde{\mathbf D}(s)e_i\rangle .
\]
Lemma~\ref{lem:fe_matrix_coefficients}, continuity of the fixed operator
$\mathbf D(\tilde u)$ on $\mathscr S(M)$, and
Proposition~\ref{prop:fe_coefficients} give constants $J,C$ depending on
$D$ and $\tilde u$ with
$\sum_{s\in D}|\langle\mathbf D(\tilde u)e_i,
 \widetilde{\mathbf D}(s)e_i\rangle|\leq C\omega_i^{\,J}$, whence
\begin{equation}
 \sum_i\sum_{s\in D}a_i^2\,
  |\langle\mathbf D(\tilde u)e_i,\widetilde{\mathbf D}(s)e_i\rangle|
 \leq C\sum_{k,n,w}
   e^{-2\Delta|k|-\pi\Delta|n|^2}\omega_{k,n}^{\,J}<\infty .
 \label{eq:fe_fubini_bound}
\end{equation}
Absolute convergence of the double series justifies summing over $s$ before
summing over $i$; absolute convergence of the scalar traces alone would not.
By \eqref{eq:ccr_semi_displacement_general} and
\eqref{eq:tilde_displacement},
$\mathbf D(\tilde u)^\dagger\widetilde{\mathbf D}(s)
 =\zeta_{\tilde u}(s)\mathbf D(s-\tilde u)$ for a unimodular
$\zeta_{\tilde u}(s)$, and since $\mathbf E_\Delta$ is Hilbert--Schmidt and
$\mathbf D(z)$ bounded, cyclicity is legitimate in
\[
 \operatorname{Tr}\bigl(\mathbf E_\Delta\mathbf D(z)\mathbf E_\Delta\bigr)
 =\operatorname{Tr}\bigl(\mathbf A_\Delta\mathbf D(z)\bigr)
 =\mathcal Z_\Delta\,\Xi_\Delta(z).
\]
Applying this term by term gives \eqref{eq:fe_trace_expansion}; absolute
convergence of that series is Lemma~\ref{lem:fe_lattice_sums}, with the
$s=\tilde u$ term separated off when $\tilde u\in D$.
\end{proof}

\subsection{Proof of Theorem~\ref{thm:fe_asymptotic_isometry}}
\label{appx:proof_fe_asymptotic_isometry}

\begin{proof}
For a normalized encoding, the logical operators of
Theorem~\ref{thm:logical_displacement_algebra} become $K\times K$ matrices:
fix representatives $\tilde u\in D^\perp$ of $u\in\mathsf L_D$, with
$\tilde 0=0$. Then
\eqref{eq:logical_displacement_definition} reads
\begin{equation}
 \widehat{\mathbf D(\tilde u)}\operatorname{Enc}
 =\operatorname{Enc}\,\mathbf L_u,
 \qquad
 \mathbf L_0=\mathbf 1,
 \qquad
 \operatorname{Tr}(\mathbf L_u^\dagger\mathbf L_v)=K\delta_{u,v}.
 \label{eq:fe_logical_intertwining}
\end{equation}
Indeed, the invariance~\eqref{eq:fe_form_dual_invariance} and the
factorization $\Pi_D=\operatorname{Enc}\operatorname{Enc}^\dagger$ imply
$\mathbf L_u\mathbf L_u^\dagger=\mathbf1$: injectivity of
$\operatorname{Enc}$ and surjectivity of $\operatorname{Enc}^\dagger$ allow
cancellation of the two encoding maps. Thus these matrices are unitary,
and the trace orthogonality of
\eqref{eq:logical_displacement_orthogonality} gives the last identity.
Hence every $B_0\in M_K(\mathbb C)$ expands as
\begin{equation}
 B_0=\frac1K\sum_{u\in\mathsf L_D}
   \operatorname{Tr}(\mathbf L_u^\dagger B_0)\,\mathbf L_u .
 \label{eq:fe_logical_expansion}
\end{equation}

\emph{Step 1: the logical Weyl coefficients.}  Using only bounded finite-rank
operators on $L^2(M)$ and $K\times K$ matrices,
\begin{align}
 t_u(\Delta):=\operatorname{Tr}
   \bigl(\mathbf L_u^\dagger\mathsf G_\Delta\bigr)
 &=\operatorname{Tr}\bigl(\operatorname{Enc}_\Delta
    \mathbf L_u^\dagger\operatorname{Enc}_\Delta^\dagger\bigr)
 =\operatorname{Tr}\bigl(\mathbf E_\Delta
    \mathbf D(\tilde u)^\dagger\Pi_D\mathbf E_\Delta\bigr) \notag\\
 &=\frac{\mathcal Z_\Delta}{\sigma_D}
   \sum_{s\in D}\zeta_{\tilde u}(s)\,\Xi_\Delta(s-\tilde u),
 \label{eq:fe_logical_coefficient}
\end{align}
where the first equality is
$\operatorname{Tr}_{\mathbb C^K}(BA)=\operatorname{Tr}_{L^2(M)}(AB)$
for finite-rank compositions, with both traces unnormalized; the second uses
\eqref{eq:fe_logical_intertwining} and
\eqref{eq:fe_encoding_normalization}, and the third is
Lemma~\ref{lem:fe_trace_expansion}.  Note that
$\operatorname{Tr}(\mathbf D(\tilde u)^\dagger\Pi_D\mathbf A_\Delta)$ is
never treated as an ordinary trace on $L^2(M)$ without smoothing on both
sides.

\emph{Step 2: separating the identity coefficient.}  For $u=0$ take
$\tilde u=0$.  The $s=0$ term of \eqref{eq:fe_logical_coefficient} equals
$1$, and \eqref{eq:fe_lattice_sum} bounds the rest, so
\[
 t_0(\Delta)=\operatorname{Tr}\mathsf G_\Delta
 =\frac{\mathcal Z_\Delta}{\sigma_D}
   \bigl(1+O(e^{-a_0/\Delta})\bigr).
\]
For $u\neq0$ the representative $\tilde u$ lies in
$D^\perp\setminus D$, so the coset $D-\tilde u$ avoids $0$ and
\eqref{eq:fe_coset_sum} gives
$|t_u(\Delta)|\leq\mathcal Z_\Delta\sigma_D^{-1}C_ue^{-a_u/\Delta}$.

\emph{Step 3: reassembling the matrix.}  There are only $K^2$ cosets; let
$a>0$ be no larger than any of the finitely many rates $a_u$, including
$a_0$, and write
$t_u(\Delta)=\mathcal Z_\Delta\sigma_D^{-1}(\delta_{u,0}+r_u(\Delta))$ with
$|r_u(\Delta)|\leq C_ue^{-a/\Delta}$.  Substituting into
\eqref{eq:fe_logical_expansion},
\[
 \mathsf G_\Delta
 =\frac{\mathcal Z_\Delta}{K\sigma_D}
  \Bigl(\mathbf 1+\sum_{u\in\mathsf L_D}r_u(\Delta)\mathbf L_u\Bigr),
 \qquad
 \Bigl\|\sum_{u}r_u(\Delta)\mathbf L_u\Bigr\|_{\mathrm{op}}
 \leq\sum_{u}C_u\,e^{-a/\Delta},
\]
because every $\mathbf L_u$ is unitary.  This is
\eqref{eq:fe_gram_estimate}; taking matrix elements gives
\eqref{eq:fe_isometry_estimate}, and
\eqref{eq:fe_partition_function} gives the explicit form of $\nu_\Delta$.
\end{proof}

\section{Graph lattices and LCA codes}
\label{app:graph_lattices}

This appendix recalls the graph-lattice construction of the LCA codes of
Ref.~\cite{yr64-ypxt} in the notation of Sec.~\ref{sec:GSC}.
A graph lattice projects isomorphically onto a full-rank lattice in the
oscillator phase space.
Graph lattices therefore require at least one oscillator.
Each graph lattice is isomorphic to $\Z^{2p}$ and hence torsion-free.
Its syndrome space $\widehat D\cong\T^{2p}$ is therefore a torus.
Theorem~\ref{thm:main_dual} constructs such a lattice together with its
symplectic dual.
Equation~\eqref{eq:logical-dimension-pfaffian} then expresses the code
dimension as the Pfaffian of the $2p\times2p$ matrix $\varTheta-Z$.
This formula needs $q=0$ and a torsion-free lattice, whereas
Lemma~\ref{lem:full_lift_pfaffian} applies to every lattice.
Here $T$ and $T_{\mathrm{dual}}$ denote the embedding maps of $D$ and
$D^\perp$ defined below, each with $2p$ columns.
Adjoining the $2k$ qudit periods $c_je_{a_j}$ and $c_je_{b_j}$ to the
columns of $T$ gives a square stabilizer generator matrix in the sense of
Lemma~\ref{lem:full_lift_pfaffian}.
Its Gram matrix has the graph-lattice Gram matrix $\varTheta$ as its
upper-left block, as for the single-mode code of Sec.~\ref{ex:lca}.

\subsection{Graph lattices and the LCA construction}\label{app:lca-intro}

For the graph-lattice construction below we specialize to $q=0$, so that
$M=\mathbb R^p\times\mathbb Z_{\dia c}$, $n=2p\geq2$, and $J_{\cv}=J_{\mathrm{osc}}$. Let $\mathscr{T}_n$ be the space of $n\times n$ real antisymmetric matrices, which we call \emph{symplectic Gram matrices}.

Following~\cite[Def.~2.1]{li2004strong}, \cite[Def.~4.1]{rieffel} and the related discussion, let $L=\R^n$ and regard $\Z^n$ as the standard lattice in $L^*$. An element $\varTheta\in\mathscr T_n$ determines an element of $\bigwedge^2L$. An \emph{embedding map} for $\varTheta$ is a linear map $T:L^*\to H^*$, with $H^*$ as in Eq.~\eqref{eq:h} of Sec.~\ref{sec:GSC} for $q=0$, satisfying the following conditions:

\begin{enumerate}
   \item  $T\left(\mathbb{Z}^n\right) \subseteq \mathbb{R}^p \times \mathbb{R}^{* p}  \times \mathbb{Z}^k \times \mathbb{Z}^k$. Then we can think of $T\left(\mathbb{Z}^n\right)$ as in $G$ via composing $\left.T\right|_{\mathbb{Z}^n}$ with the natural covering map $\pi\colon\mathbb{R}^p \times \mathbb{R}^{* p}  \times \mathbb{Z}^k \times \mathbb{Z}^k \rightarrow G$ of Eq.~\eqref{eq:full_lift_covering} for $q=0$.
   \item $T\left(\mathbb{Z}^n\right)$ is a lattice in $G$.
   \item[(2$'$)] Equivalently, the map
   \[
      \widetilde{T}:=\gamma\circ T:L^*\longrightarrow
      \mathbb{R}^p\times\mathbb{R}^{*p}
   \]
   is invertible, where $\gamma$ is the projection of $H^*$ onto
   $\mathbb{R}^p\times\mathbb{R}^{*p}$.  In the present setting $q=0$,
   so there is no $\mathbb{R}^q$ factor.
   \item The form $J$ on $H^*$ is pulled back by $T$ to the form $\varTheta$ on $L^*$, i.e. 
   \begin{equation}
   \label{eq:embedding-condition}
       T^{\intercal} J T= \varTheta~.
   \end{equation}
\end{enumerate}

Set $D=T(\mathbb Z^n)$. We call $D$ the \emph{graph lattice} associated with the embedding map $T$. By definition of $T$, it is a lattice in
$$G=\mathbb{R}^p \times \widehat{\mathbb{R}^{p}}  \times\left(\mathbb{Z}_{c_1} \times \cdots \times \mathbb{Z}_{c_k}\right) \times\left(\mathbb{Z}_{c_1} \times \cdots \times \mathbb{Z}_{c_k}\right).
$$ 

The covolume of $D$ can be read off directly from the real
projection $\widetilde T$.  Indeed, set
\[
 G_{\mathrm{osc}}:=\mathbb R^p\times\mathbb R^{*p},\qquad
 G_{\mathrm{fin}}:=\Z_{\dia c}\times\widehat{\Z_{\dia c}},\qquad
 \Lambda_{\mathrm{osc}}:=\widetilde T(\mathbb Z^{2p})\subseteq G_{\mathrm{osc}}.
\]
Since $\widetilde T$ is invertible, the projection
$D\to\Lambda_{\mathrm{osc}}$ is an isomorphism. Thus $D$ is the graph
of a homomorphism $\Lambda_{\mathrm{osc}}\to G_{\mathrm{fin}}$; that is, there is a unique homomorphism
$\varphi:\Lambda_{\mathrm{osc}}\to G_{\mathrm{fin}}$ such that
$D=\{(\lambda,\varphi(\lambda)):\lambda\in\Lambda_{\mathrm{osc}}\}$. Moreover,
\[
 \widetilde T\bigl([0,1)^{2p}\bigr)\times G_{\mathrm{fin}}
\]
is a measurable fundamental domain for $D$ in $G$. Thus, by the
definition of covolume in~\eqref{eq:covolume_definition}, using the
self-Plancherel Haar measure $\mu$ on $G$,
\begin{equation}
\label{eq:embedding-lattice-covolume}
 \begin{aligned}
 |G/D|_\mu
 &=\mu\!\left(\widetilde T\bigl([0,1)^{2p}\bigr)\times G_{\mathrm{fin}}\right)\\
 &=\lvert\Z_{\dia c}\rvert\,\lvert\det\widetilde T\rvert
 =\left(\prod_{j=1}^k c_j\right)\lvert\det\widetilde T\rvert.
 \end{aligned}
\end{equation}

When $\varTheta$ is integral, Eq.~\eqref{eq:embedding-condition} shows that $D$ is isotropic. The resulting GLCA code is an \emph{LCA code}. These codes were introduced in~\cite{yr64-ypxt}. When the finite factor is absent, so that $M=\mathbb R^p$, these reduce to GKP codes~\cite{10.1103/physreva.64.012310}.

\subsection{Dual graph lattices and the $\mathrm{SO}(n, n \mid \mathbb{Z})$ action}
\label{app:general-lca}

Let $\mathrm{O}(n, n \mid \mathbb{R})$ be the group of linear transformations of the space $\mathbb{R}^{2 n}$ preserving the quadratic form $x_1 x_{n+1}+x_2 x_{n+2}+\cdots+x_n x_{2 n}$, and $\mathrm{SO}(n, n \mid \mathbb{Z})$ be the subgroup of $\mathrm{O}(n, n \mid \mathbb{R})$ consisting of matrices with integer entries and determinant 1. We can write an element $g$ of $\mathrm{O}(n, n \mid \mathbb{R})$ as
\begin{equation}
g=\left(\begin{array}{ll}
\mathsf{A} & \mathsf{B} \\
\mathsf{C} & \mathsf{D}
\end{array}\right),
\end{equation}
where $\mathsf{A}, \mathsf{B}, \mathsf{C}, \mathsf{D}$ are $n \times n$ matrices satisfying
\begin{equation}
\mathsf{A}^{\intercal} \mathsf{C}+\mathsf{C}^{\intercal} \mathsf{A}=0=\mathsf{B}^{\intercal} \mathsf{D}+\mathsf{D}^{\intercal} \mathsf{B}, \quad \mathsf{A}^{\intercal} \mathsf{D}+\mathsf{C}^{\intercal} \mathsf{B}=\dia{1}_n .
\end{equation}

We now define a (partial) action of $\mathrm{SO}(n, n \mid \mathbb{Z})$ on the space of symplectic Gram matrices $\mathscr{T}_n$ as
\begin{equation}
g \cdot \varTheta:=(\mathsf{A} \varTheta+\mathsf{B})(\mathsf{C} \varTheta+\mathsf{D})^{-1}
\end{equation}
whenever $\mathsf{C} \varTheta+\mathsf{D}$ is invertible. It is known that for each $g \in \mathrm{SO}(n, n \mid \mathbb{Z}),$ this action is defined on a dense open subset of $\mathscr{T}_n$~\cite{rieffel1999morita,li2004strong}.

For the embeddings constructed by the following theorem, a suitable
$g$ relates the symplectic Gram matrices $\varTheta$ and
$-g\cdot\varTheta$ of a graph lattice and its symplectic dual.
When $\varTheta$ is integral, the lattice defined by $T$ is a stabilizer
lattice, and the columns of the dual embedding matrix $T_{\mathrm{dual}}$ specify
generators of the logical displacements modulo stabilizers and phases.
The following theorem restates Theorem~2 of Ref.~\cite{yr64-ypxt}.

\begin{theorem} \label{thm:main_dual}
Let $Z\in\mathscr T_n$ have rational entries. Then there exist $g\in\mathrm{SO}(n,n\mid\mathbb Z)$ and a finite abelian group $\Z_{\dia c}$ such that, for every $\varTheta\in\mathscr T_n$ with $\varTheta-Z$ invertible, there are embedding maps $T,T_{\mathrm{dual}}:L^*\to H^*$ for the symplectic Gram matrices $\varTheta$ and $-g\cdot\varTheta$, respectively. Here $M=\R^p\times\Z_{\dia c}$, $H^*=\R^p\times\R^{*p}\times\R^k\times\R^{*k}$, and
\[
T_{\mathrm{dual}}(\Z^n)=T(\Z^n)^\perp.
\]
Moreover, writing $\widetilde T=\gamma\circ T$ for the continuous projection of $T$, the construction determines $\widetilde T$, up to left multiplication by a $J_{\cv}$-symplectic matrix, through
\begin{equation}
\widetilde T^{\intercal}J_{\cv}\widetilde T=\varTheta-Z.
\label{eq:projected-embedding-identity}
\end{equation}
\end{theorem}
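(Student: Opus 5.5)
The plan is to build $T$ explicitly, splitting it into a rational ``finite'' part that realizes $Z$ and a real ``continuous'' part that realizes $\varTheta-Z$. Then $T_{\mathrm{dual}}$ comes from the same integer data, and equality of lattices is proved by an inclusion plus a covolume comparison. Write $T=\binom{\widetilde T}{T_{\mathrm{fin}}}$ with $\widetilde T=\gamma\circ T$. Since $J=\operatorname{diag}(J_{\cv},J_{\mathrm{fin}})$, condition~\eqref{eq:embedding-condition} reads
\[
 \widetilde T^{\intercal}J_{\cv}\widetilde T+T_{\mathrm{fin}}^{\intercal}J_{\mathrm{fin}}T_{\mathrm{fin}}=\varTheta .
\]
So it suffices to find an integer $T_{\mathrm{fin}}$ and moduli $\dia c$ with $T_{\mathrm{fin}}^{\intercal}J_{\mathrm{fin}}T_{\mathrm{fin}}=Z$, \emph{independently of $\varTheta$}, and then solve~\eqref{eq:projected-embedding-identity} for $\widetilde T$.

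\emph{Step 1: the finite part.} Let $N$ be a common denominator of $Z$, so $NZ$ is an integral alternating matrix. The alternating Smith normal form (as used after Lemma~\ref{lem:full_lift_pfaffian}) gives $U\in\mathrm{GL}(n,\Z)$ with
\[
 U^{\intercal}(NZ)U=\bigoplus_{i=1}^{p}\bigl(\begin{smallmatrix}0&h_i\\-h_i&0\end{smallmatrix}\bigr).
\]
Write $h_i/N=a_i/c_i$ in lowest terms, with $c_i\ge1$; if $h_i=0$, take $a_i=0$ and $c_i=1$. Put $k=p$, $\dia c=(c_1,\dots,c_p)$, and let $T_{\mathrm{fin}}:=R\,U^{-1}$, where $R$ maps the $i$th pair of coordinates by $\operatorname{diag}(a_i,1)$. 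This $T_{\mathrm{fin}}$ is an integer matrix. A $2\times2$ computation gives $\operatorname{diag}(a,1)^{\intercal}J_{\mathrm{fin}}^{(c)}\operatorname{diag}(a,1)=\tfrac{a}{c}J$, hence $T_{\mathrm{fin}}^{\intercal}J_{\mathrm{fin}}T_{\mathrm{fin}}=Z$. This fixes $\Z_{\dia c}$ and all integer data once and for all.

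\emph{Step 2: the continuous part.} For any $\varTheta$ with $\varTheta-Z$ invertible, $\varTheta-Z$ is a nondegenerate real alternating form. By the linear Darboux theorem there is an invertible $\widetilde T$ with $\widetilde T^{\intercal}J_{\cv}\widetilde T=\varTheta-Z$, and any two such choices differ by left multiplication by a $J_{\cv}$-symplectic matrix, which is the stated uniqueness. Now set $T=\binom{\widetilde T}{T_{\mathrm{fin}}}$ and check the conditions. Condition~1 holds because $T_{\mathrm{fin}}$ is integral. Condition~(2$'$) holds because $\widetilde T$ is invertible, which by the covolume argument around~\eqref{eq:embedding-lattice-covolume} makes $T(\Z^n)$ a lattice. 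Condition~3 holds by Steps~1 and~2 together.

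\emph{Step 3: the dual embedding and $g$.} This step is the main obstacle: $T_{\mathrm{dual}}$ must again be a graph embedding with only $n$ columns, and the $g$ relating the Gram matrices must not depend on $\varTheta$. I would take the ansatz $T_{\mathrm{dual}}=\binom{\widetilde T X}{T'_{\mathrm{fin}}}$. Here $T'_{\mathrm{fin}}$ is built from the same $U$ and from inverses of the $a_i$ modulo $c_i$ (Bezout coefficients $a_ia_i'+c_ib_i=1$), and $X=(\varTheta-Z)^{-1}Y$ for an integer matrix $Y$ assembled from $U$, the $c_i$, and the $b_i$.

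The first requirement is isotropy against $D$: $T^{\intercal}JT_{\mathrm{dual}}=Y+T_{\mathrm{fin}}^{\intercal}J_{\mathrm{fin}}T'_{\mathrm{fin}}$ must be an integer matrix, which pins down $Y$ blockwise in the Smith basis. The Gram matrix $T_{\mathrm{dual}}^{\intercal}JT_{\mathrm{dual}}$ is then a linear-fractional expression in $\varTheta$ with integer coefficients, and I would read off $\mathsf A,\mathsf B,\mathsf C,\mathsf D$ from it. One must check that $g$ lies in $\mathrm{SO}(n,n\mid\Z)$, i.e.\ that it preserves the split form and has determinant one; this should reduce to the Bezout relations blockwise, as in the Rieffel and Li action.

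\emph{Step 4: equality of lattices.} Step~3 gives $T_{\mathrm{dual}}(\Z^n)\subseteq T(\Z^n)^\perp$. To upgrade this to equality, I would compare covolumes. By~\eqref{eq:embedding-lattice-covolume}, $|G/T_{\mathrm{dual}}(\Z^n)|_\mu=C_{\dia c}|\det(\widetilde TX)|$. By~\eqref{eq:dual_covolume_reciprocity}, which holds for any lattice, $|G/D^\perp|_\mu=1/(C_{\dia c}|\det\widetilde T|)$. Equality holds exactly when $C_{\dia c}^2|\det X|\,|\det\widetilde T|^2=1$. Since $|\det\widetilde T|^2=|\det(\varTheta-Z)|$, this is equivalent to $|\det Y|=C_{\dia c}^{-2}$ after the normalization of $Y$ in Step~3; I expect this determinant bookkeeping to be the last thing to get right there. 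A sublattice of equal covolume coincides with the lattice, which gives $T_{\mathrm{dual}}(\Z^n)=T(\Z^n)^\perp$.
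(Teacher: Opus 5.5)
Your Steps~1 and~2 are fine. They give a valid embedding $T$ for $\varTheta$ with $\widetilde T^{\intercal}J_{\cv}\widetilde T=\varTheta-Z$, and the uniqueness clause is the linear Darboux theorem. (The paper does not reprove this statement; it quotes it from Ref.~\cite{yr64-ypxt}.)

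The gap is Step~3. This is where the real content lies: a single $g\in\mathrm{SO}(n,n\mid\Z)$ that works for every $\varTheta$. You leave it as an ansatz, and as written that ansatz cannot succeed.

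\begin{itemize}
\item[(a)] If $Y$ is integral, isotropy forces $T_{\mathrm{fin}}^{\intercal}J_{\mathrm{fin}}T'_{\mathrm{fin}}$ to be integral.
\item[(b)] Since $\gcd(a_i,c_i)=1$, your $T_{\mathrm{fin}}$ is surjective modulo $\dia c$. So $D$ projects onto the whole finite phase space, and each column of $T'_{\mathrm{fin}}$ must then be a period.
\item[(c)] Then $T_{\mathrm{dual}}(\Z^n)$ has no qudit component. If it equalled $D^\perp$, you would get $D=(D^\perp)^\perp\supseteq\{0\}\times(\Z_{\dia c}\times\widehat{\Z_{\dia c}})$. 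This is impossible for a torsion-free graph lattice when $C_{\dia c}>1$.
\end{itemize}

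Your own Step~4 exposes the same obstruction. It requires $|\det Y|=C_{\dia c}^{-2}$, which no integer matrix satisfies unless $C_{\dia c}=1$. So $Y$ must carry denominators $c_i$, and the integrality of the resulting $g$ is exactly what has to be proved.

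The missing step is to solve for $D^\perp$ directly instead of guessing it.

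First reduce to $U=I$. The element $h_U:=\operatorname{diag}(U^{\intercal},U^{-1})\in\mathrm{SO}(n,n\mid\Z)$ acts by $\varTheta\mapsto U^{\intercal}\varTheta U$. So you may take $Z=\bigoplus_i\tfrac{a_i}{c_i}J_2$ and $T_{\mathrm{fin}}=R$, and put $g=g_0h_U$ at the end.

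Next, characterize the lift of $D^\perp$. A lift $(\widetilde Tx,f)$ lies over $D^\perp$ if and only if $y:=(\varTheta-Z)x\equiv-R^{\intercal}J_{\mathrm{fin}}f\pmod{\Z^n}$. For $f_i=(\alpha_i,\beta_i)$ this reads $y_{2i-1}\equiv-a_i\beta_i/c_i$ and $y_{2i}\equiv\alpha_i/c_i$. By coprimality, $y$ ranges over exactly $C^{-1}\Z^n$ with $C:=\bigoplus_ic_iI_2$, and $y$ fixes $f$ modulo periods.

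Hence the following $T_{\mathrm{dual}}$ generates $D^\perp$ exactly, with no covolume count needed. Its oscillator part is $\widetilde T(\varTheta-Z)^{-1}C^{-1}$, and on qudit $i$ its columns are $(0,-a_i')$ and $(1,0)$. So $Y=C^{-1}$. Its Gram matrix is
\[
 -C^{-1}(\varTheta-Z)^{-1}C^{-1}+\textstyle\bigoplus_i\tfrac{a_i'}{c_i}J_2=-g_0\cdot\varTheta,
 \qquad
 g_0=\begin{pmatrix}-\bigoplus_ia_i'J_2&\bigoplus_ib_iI_2\\ \bigoplus_ic_iI_2&-\bigoplus_ia_iJ_2\end{pmatrix}.
\]

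It remains to check that $g_0$ is admissible:
\begin{itemize}
\item[(a)] The Bezout relation enters only in the $\mathsf B$ block, which equals $C^{-1}-\bigoplus_i\tfrac{a_ia_i'}{c_i}I_2=\bigoplus_ib_iI_2$.
\item[(b)] The $\mathrm O(n,n)$ relations reduce blockwise to $a_ia_i'+b_ic_i=1$.
\item[(c)] Each $4\times4$ block has determinant $\det(-I_2)=1$.
\item[(d)] Since $\mathsf C\varTheta+\mathsf D=C(\varTheta-Z)$, the action $g_0\cdot\varTheta$ is defined exactly when $\varTheta-Z$ is invertible.
\end{itemize}
As a check, for $Z=0$ the matrix $g_0$ is the block swap of the GKP example.
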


The theorem therefore supplies both the stabilizer lattice and its symplectic dual for a broad class of LCA codes.

The same construction also determines the logical dimension. Let $\varTheta$ be an integral matrix as in Theorem~\ref{thm:main_dual}, write $D=T(\Z^n)$, and let $K$ denote the dimension of the logical Hilbert space. Combining the rank formula~\eqref{eq:rank_covolume_formula} with the graph-lattice covolume formula~\eqref{eq:embedding-lattice-covolume} gives
\begin{equation}
K
=\left(\prod_{j=1}^k c_j\right)|\det\widetilde T|.
\label{eq:logical-dimension-from-graph-covolume}
\end{equation}
On the other hand, taking determinants in~\eqref{eq:projected-embedding-identity} and using $\det J_{\cv}=1$ yields
\[
\det(\varTheta-Z)=(\det\widetilde T)^2.
\]
Since $\det A=\operatorname{Pf}(A)^2$ for every even-dimensional antisymmetric matrix $A$, we obtain
\begin{equation}
K=\left(\prod_{j=1}^k c_j\right)
\bigl|\operatorname{Pf}(\varTheta-Z)\bigr|.
\label{eq:logical-dimension-pfaffian}
\end{equation}
With the compatible orientation convention for which $\operatorname{Pf}(\varTheta-Z)>0$, the absolute-value signs may be omitted. Thus~\eqref{eq:logical-dimension-pfaffian} agrees with the formula in~\cite[Appendix~E]{yr64-ypxt}.
Here $\mathrm{Pf}(A)$ denotes the Pfaffian of the antisymmetric matrix $A$.
Recall that the Pfaffian of a $2p \times 2p$ antisymmetric matrix $A = (a_{ij})$ is given by
\begin{equation}
\operatorname{Pf}(A)
=
\sum_{\pi \in S_{2p}} \operatorname{sgn}(\pi)
\prod_{s=1}^{p} a_{\pi(2s-1)\, \pi(2s)},
\end{equation}
where the sum runs over all permutations $\pi \in S_{2p}$ satisfying $\pi(2s-1) < \pi(2s)$ for all $1 \le s \le p$ and $\pi(1) < \pi(3) < \cdots < \pi(2p - 1)$. Here $\operatorname{sgn}(\pi)\in\{+1,-1\}$ is the signature of $\pi$.

For example, the usual GKP codes arise by taking $M=\R^p$, so that $n=2p$, the finite factor is absent, and $Z=0$; see~\cite{10.1103/physreva.64.012310,Conrad2022gottesmankitaev,10.1103/prxquantum.3.010335}. Let $\varTheta\in\mathscr T_n$ be integral and nondegenerate. Theorem~\ref{thm:main_dual} then applies with
\[
g= \begin{pmatrix}
0 & \dia{1}_{2p} \\
\dia{1}_{2p} & 0
\end{pmatrix} \in \mathrm{SO}(n, n \mid \mathbb{Z}).
\]
Thus $D=T(\Z^n)$ is a GKP lattice in $\R^p\times\widehat{\R^p}$, its logical dimension is $|\operatorname{Pf}(\varTheta)|$, and $D^\perp=T_{\mathrm{dual}}(\Z^n)$ supplies the logical displacement operators.

The single-mode code of Sec.~\ref{ex:lca} has $n=2$,
$\varTheta=\theta J_{\mathrm{osc}}$, and $Z=-(d/c)J_{\mathrm{osc}}$.
Equation~\eqref{eq:logical-dimension-pfaffian} then gives
$K=c\,|\theta+d/c|=|c\theta+d|$, in agreement with the full-lift
computation there.

The integer lifts of Sec.~\ref{ex:mixed-qudit} have a finite
counterpart of this split-orthogonal duality.  Under the fractional-linear
$\mathrm{SO}(2k,2k\mid\Z)$ action, the block-swap element
$g_0=\left(\begin{smallmatrix}0&I_{2k}\\I_{2k}&0\end{smallmatrix}\right)$
sends $\varTheta$ to $\varTheta^{-1}$, so the dual Gram matrix
$-\varTheta^{-1}$ of the integer lift in
Eq.~\eqref{eq:mixed-qudit-dual-basis} is
$-g_0\cdot\varTheta$.  The minus sign is part of our symplectic-duality
convention.

\section{$\mathrm C^*$-algebras, Gelfand--Naimark theorem and direct integrals}\label{appx:c_star}

\subsection*{Group $\mathrm C^*$-algebras}

A (concrete) $\mathrm C^*$-algebra is a complex subalgebra
$\mathfrak{A}\subseteq\mathcal B(\mathcal H)$ of the bounded operators on a Hilbert
space $\mathcal H$, closed in the operator norm and under the adjoint
$a\mapsto a^\dagger$. It is \emph{unital} if it has a multiplicative
identity and \emph{commutative} if $ab=ba$ for all
$a,b\in\mathfrak{A}$.

Let $H$ be a discrete abelian group. A normalized $2$-cocycle is a map
$\omega:H\times H\to\mathbb T$ satisfying
\[
 \omega(h,k)\omega(h+k,l)=\omega(h,k+l)\omega(k,l),
 \qquad \omega(0,h)=\omega(h,0)=1.
\]
The twisted left regular unitaries on $\ell^2(H)$ are
\[
 (L_\omega(h)f)(k)=\omega(h,k-h)f(k-h),
 \qquad L_\omega(h)L_\omega(k)=\omega(h,k)L_\omega(h+k).
\]
The \emph{twisted group $\mathrm C^*$-algebra} $\mathrm C^*(H,\omega)$ is the
operator-norm closure of the linear span of these unitaries in
$\mathcal B(\ell^2(H))$. Since $H$ is abelian, this also agrees with the
universal $\mathrm C^*$-algebra generated by unitaries satisfying the displayed
relations. The choice $\omega=1$ gives the ordinary group $\mathrm C^*$-algebra
$\mathrm C^*(H)$. In the main text, the choices $(H,\omega)=(D,\beta)$ and
$(D^\perp,\overline\beta)$ give $\mathfrak{A}_D$ and $\mathfrak{B}_D$, respectively.

For each $h\in H$, the \emph{canonical unitary} $u_h\in\mathrm C^*(H,\omega)$
is the distinguished generator associated with the group element $h$;
in the regular realization above, it is $L_\omega(h)$, or equivalently
the image of the point mass $\delta_h$ in the twisted group algebra.
These generators satisfy
\[
 u_0=1,\qquad u_hu_k=\omega(h,k)u_{h+k},\qquad
 u_h^\dagger u_h=u_hu_h^\dagger=1,
\]
and their finite linear combinations are norm dense in
$\mathrm C^*(H,\omega)$. Thus ``canonical'' refers to the specified
generators indexed by group elements. A unital representation of the algebra
sends them to concrete unitaries obeying the same relations.
In the untwisted case $\omega=1$, their multiplication reduces to
$u_hu_k=u_{h+k}$.

The commutative Gelfand--Naimark theorem states that every unital
commutative $\mathrm C^*$-algebra $\mathfrak{A}$ is isometrically
$*$-isomorphic to $C(X)$,
where $X$ is its compact Hausdorff space of characters and the isomorphism
is $a\mapsto(x\mapsto x(a))$~\cite[Section~2.3.5]{BratteliRobinson1987}.
For a discrete abelian group $H$, this gives
\[
 \mathrm C^*(H)\cong C(\widehat H).
\]
With the Fourier convention of~\eqref{eq:fourier_transform_convention},
the identification is
\[
 a\longmapsto\widehat a,\qquad
 \widehat a(\chi)=\sum_{h\in H}a(h)\overline{\chi(h)},
 \qquad \widehat u_h(\chi)=\overline{\chi(h)}.
\]
The sum is initially defined for $a\in\ell^1(H)$ and extends to the
$\mathrm C^*$-completion. In particular, for an isotropic lattice $D$,
Lemma~\ref{lem:phase_triv} removes the multiplier and identifies
$\mathfrak{A}_D\cong \mathrm C^*(D)\cong C(\widehat D)$. Thus the stabilizer algebra acts
by continuous functions on the syndrome space $\widehat D$.

\subsection*{Direct integrals of syndrome fibers}\label{appx:direct_integral}

Only the direct integral associated with a finite-rank Hermitian vector
bundle, as in Appendix~\ref{sec:app_prelim}, is needed here. Let $\mathscr E\to X$ be such a bundle over a
compact metrizable space, and let $\mu$ be a Borel probability measure.
 The direct integral
\[
 L^2(X,\mathscr E;\mu)
 :=\int_X^\oplus E_x\,d\mu(x)
\]
consists of measurable sections $s$ with
$\int_X\|s(x)\|_x^2\,d\mu(x)<\infty$, identified when they agree
$\mu$-almost everywhere. Its scalar inner product is
\[
 \langle s,t\rangle
 =\int_X\langle s(x),t(x)\rangle_x\,d\mu(x),
\]
conjugate-linear in the first argument and linear in the second.
Continuous sections are dense in this Hilbert space, and
$\|s\|_{L^2}\leq\sup_{x\in X}\|s(x)\|_x$ for continuous $s$.

For $f\in L^\infty(X,\mu)$, multiplication
$(M_fs)(x)=f(x)s(x)$ is a bounded diagonal operator.
In Section~\ref{a_sec:glca_ncg}, $X=\widehat D$, $E_x=E_{D,\chi}$,
and $\mu$ is normalized Haar measure. The Bloch--Floquet transform
identifies $L^2(M)$ with this direct integral and sends the stabilizer
$\widetilde{\mathbf D}(w)$ to multiplication by $\overline{\chi(w)}$.
Pointwise evaluation of an arbitrary $L^2$ section is not defined at a
prescribed character when that singleton has measure zero. The canonical
continuous sections supplied by $\Omega_D\cong\Gamma(\mathscr E_D)$
instead define every fiber, including the codespace at $\chi=1$.

\section{Projective modules, vector bundles, and trace pairing}
\label{sec:app_prelim}

A module is finitely generated projective if it is a direct summand of
a finite free module. For a compact Hausdorff space $X$, the Serre--Swan
theorem identifies these modules over $C(X)$ with modules
$\Gamma(\mathscr E)$ of continuous sections of finite-rank complex vector
bundles~\cite[Theorem~1.6.3]{Rosenberg1994}. Here a vector bundle is a
locally trivial family of finite-dimensional vector spaces, and the
module action is $(f\cdot s)(x)=f(x)s(x)$. A Hermitian metric equips
its fibers with scalar inner products. Concretely, for some projection
$P=P^2=P^\dagger\in M_n(C(X))$,
\[
 \Gamma(\mathscr E)\cong P\,C(X)^n,
 \qquad E_x\cong\operatorname{Ran}P(x).
\]
The rank is locally constant, but can differ between connected components.
For the syndrome bundle, Corollary~\ref{cor:vector_bundle} establishes
constant rank on all of $\widehat D$ by displacement covariance.

Let $\mathfrak{A}$ be a unital $\mathrm C^*$-algebra with a tracial state $\tau$.
Extend it using the ordinary, unnormalized matrix trace:
\[
 \tau_n((a_{ij})):=\sum_{i=1}^n\tau(a_{ii}).
\]
A finitely generated projective right module has the form
$P\mathfrak{A}^n$; for a left module the corresponding model is
$\mathfrak{A}^nP$, with row vectors.
Its \emph{trace pairing}, or $\tau$-dimension, is
\begin{equation}\label{eq:appendix_module_trace_pairing}
 \tau([\mathcal E]):=\tau_n(P).
\end{equation}
This depends only on the module isomorphism class: equivalent projection
representatives satisfy $P=V^\dagger V$ and $Q=VV^\dagger$ for a
rectangular matrix $V$, so traciality gives $\tau_n(P)=\tau_m(Q)$.
The brackets $[\mathcal E]$ denote that isomorphism class; the notation
$\tau_*([P])$ in Section~\ref{sec:PSC_basic} denotes the same pairing.
For commutative $\mathfrak{A}$, left and right scalar actions agree, so the column
model $P C(X)^n$ also describes the left section modules used in the paper.

If $\tau(f)=\int_X f\,d\mu$ for a probability measure $\mu$, then
\[
 \tau([\Gamma(\mathscr E)])
 =\int_X\operatorname{Tr}P(x)\,d\mu(x)
 =\int_X\dim E_x\,d\mu(x).
\]
Thus the pairing equals the fiber dimension whenever the rank is constant,
without a connectedness assumption on $X$. For $\mathfrak{A}_D\cong C(\widehat D)$,
character orthogonality identifies the canonical trace
$\mathrm{Tr}_D(a)=a(0)$ with
$\tau_D(\widehat a)=\int_{\widehat D}\widehat a(\chi)\,d\chi$.
Consequently,
\[
 \mathrm{Tr}_D([\Omega_D])
 =\tau_D([\Omega_D])
 =\operatorname{rank}\mathscr E_D
 =\dim\mathcal C,
\]
which is the trace pairing used in Theorem~\ref{thm:code_dimension}.

\section{Morita equivalence and inner-product conventions}\label{appx:morita}

We use scalar Hilbert-space inner products linear in the second argument,
as in $\langle\xi,\eta\rangle_{L^2(M)}
=\int_M\overline{\xi(t)}\eta(t)\,d\mu_M(t)$.
A left Hilbert $\mathfrak{A}$-module has an $\mathfrak{A}$-valued
sesquilinear inner product
${}_{\mathfrak{A}}\langle\xi,\eta\rangle$ linear in the first argument;
a right Hilbert $\mathfrak{B}$-module has a $\mathfrak{B}$-valued inner
product $\langle\xi,\eta\rangle_{\mathfrak{B}}$ linear in the second.
Their module rules are
\begin{align*}
 {}_{\mathfrak{A}}\langle a\cdot\xi,\eta\rangle
 &=a\,{}_{\mathfrak{A}}\langle\xi,\eta\rangle,&
 {}_{\mathfrak{A}}\langle\xi,a\cdot\eta\rangle
 &={}_{\mathfrak{A}}\langle\xi,\eta\rangle a^\dagger,\\
 \langle\xi,\eta\cdot b\rangle_{\mathfrak{B}}
 &=\langle\xi,\eta\rangle_{\mathfrak{B}} b,&
 \langle\xi\cdot b,\eta\rangle_{\mathfrak{B}}
 &=b^\dagger\langle\xi,\eta\rangle_{\mathfrak{B}}.
\end{align*}
Both inner products are conjugate symmetric and positive definite, and
the module is complete in the norm given by the square root of the norm
of its inner product with itself. It is \emph{full} when the closed
linear span of its inner-product values is the whole coefficient algebra.

An $\mathfrak{A}$--$\mathfrak{B}$ equivalence bimodule is a full left
Hilbert $\mathfrak{A}$-module and a full right Hilbert
$\mathfrak{B}$-module with commuting actions, satisfying
\begin{align*}
 \langle a\cdot\xi,\eta\rangle_{\mathfrak{B}}
 &=\langle\xi,a^\dagger\cdot\eta\rangle_{\mathfrak{B}},\\
 {}_{\mathfrak{A}}\langle\xi\cdot b,\eta\rangle
 &={}_{\mathfrak{A}}\langle\xi,\eta\cdot b^\dagger\rangle,\\
 {}_{\mathfrak{A}}\langle\xi,\eta\rangle\cdot\zeta
 &=\xi\cdot\langle\eta,\zeta\rangle_{\mathfrak{B}}.
\end{align*}
The two module norms agree. The algebras are \emph{strongly Morita
equivalent} when such a bimodule exists. If both algebras are unital,
the equivalence bimodule is finitely generated projective on both
sides~\cite[Section~3]{rieffel}.

In Section~\ref{a_sec:glca_ncg}, the equivalence bimodule is Rieffel's module $\Omega_D$,
with $\mathfrak{A}=\mathfrak{A}_D=\mathrm C^*(D,\beta)$ and
$\mathfrak{B}=\mathfrak{B}_D=\mathrm C^*(D^\perp,\overline\beta)$. The actions
$u_w\cdot\xi=\mathbf D(w)\xi$ and
$\xi\cdot u_z=\mathbf D(z)^\dagger\xi$ have the corresponding coefficient
inner products
\[
 {}_{\mathfrak{A}_D}\langle\xi,\eta\rangle(w)
 =\langle\mathbf D(w)\eta,\xi\rangle_{L^2(M)},
 \qquad
 \langle\xi,\eta\rangle_{\mathfrak{B}_D}(z)
 =\langle\xi,\mathbf D(z)\eta\rangle_{L^2(M)},
\]
with the compatible Haar measures of Rieffel's construction
\cite[Theorem~2.15]{rieffel}. These are linear in $\xi$ on the left and
in $\eta$ on the right, exactly as required. After phase trivialization
and Fourier transformation, the scalar fiber inner product is therefore
\[
 \langle[\xi]_\chi,[\eta]_\chi\rangle_\chi
 ={}_{\mathrm R}\langle\eta,\xi\rangle(\chi),
\]
with the arguments reversed as in
\eqref{eq:rieffel_fiber_scalar_inner_product}. This makes the fiber
inner product linear in its second argument and agrees with the
direct-integral and distributional conventions of the main text.

\end{document}